\documentclass[12pt]{article}

\usepackage{amsfonts,amssymb,amsmath,exscale,relsize,enumitem,amsthm,mathtools, physics}
\usepackage{graphicx}
\usepackage{float}
\usepackage[thinlines]{easytable}
\usepackage{csquotes}
\usepackage{setspace}
\usepackage{booktabs}
\usepackage{enumitem}

\usepackage[authoryear,round]{natbib}

\usepackage[
colorlinks=true,
citecolor=blue,
linkcolor=blue,
urlcolor=blue,
pagebackref=true
]{hyperref}

\usepackage{varioref}
\usepackage{subcaption}
\usepackage{xcolor}
\usepackage{setspace}
\usepackage{appendix}
\usepackage[margin=1.35in]{geometry}
\usepackage{setspace} %\onehalfspacing        % 1.5 spacing
\DeclareMathOperator*{\argmax}{argmax}

\newtheorem{theorem}{Theorem}

\newtheorem{corollary}{Corollary}

\newtheorem{lemma}{Lemma}

\newtheorem{proposition}{Proposition}

\mathchardef\mhyphen="2D

\numberwithin{equation}{section}
\title{Covariance-Adaptive Residualization and Stagewise Calibration for Dependent Multiple Testing}

\author{%
	Prasenjit Ghosh\thanks{Department of Statistics, Texas A\&M University, College Station, TX 77843, USA. Email: \texttt{prasenjit@stat.tamu.edu}}%
	\hspace{1em} % reduce the gap
	Arijit Chakrabarti\thanks{Applied Statistics Unit, Indian Statistical Institute, Kolkata - 700108, India. Email: \texttt{arc@isical.ac.in}}%
}

\date{} % suppress date

\begin{document}
	
	\maketitle

\begin{abstract}
	
In this paper, we study the problem of simultaneous hypothesis testing for a large collection of multivariate Gaussian means under arbitrary covariance dependence. Building upon the Maximum Residual Down (MRD) procedure of \citet{COHEN_SACK_XU_2009}, we investigate a systematic stagewise calibration strategy for covariance-adaptive residual-based multiple testing based on the generalized step-down critical constants of \citet{GBS2009}. The proposed methodology retains the covariance-adaptive residualization mechanism of MRD while replacing the original model-dependent threshold specification with a simple, principled, and systematic stagewise calibration rule. Since the resulting procedure belongs to the general class of monotone residual-based step-down procedures studied by \citet{GC_ADMISSIBILITY_2026}, its admissibility under the vector loss formulation follows immediately from their general theory.

We further derive alternative representations of the MRD residual statistics that express all active residuals at each stage in terms of a single active precision matrix. Besides substantially reducing the computational burden of implementation, these representations reveal a direct connection between covariance-adaptive residualization and the evolving geometry of the active precision matrix.

An extensive simulation study under a broad range of dependence structures, including equicorrelated, Toeplitz, heterogeneous block, factor, fractional Gaussian noise, and sparse precision-matrix models, demonstrates that the proposed methodology frequently achieves substantially lower normalized misclassification risk than several widely used marginal testing procedures. Under many structured dependence models, the proposed calibration also exhibits remarkably strong signal-recovery performance, simultaneously maintaining false discovery rates close to the nominal level, extremely small false non-discovery rates, powers approaching one, and average numbers of rejections close to the expected number of true signals.

Taken together, the theoretical and empirical findings suggest that covariance-adaptive residualization and stagewise calibration play fundamentally different but complementary roles in dependent multiple testing. Beyond providing a practical calibration strategy for the MRD framework, the proposed methodology offers new insights into the complementary roles of covariance-adaptive residualization and stagewise calibration in exploiting dependence information for large-scale multiple testing under arbitrary covariance structures.

\end{abstract}

\bigskip

\noindent
\textbf{Keywords:}
Multiple testing; covariance dependence; step-down procedures; maximum residual down (MRD); false discovery rate; support recovery; sparse signals.

\medskip

\noindent
\textbf{MSC 2020:}
Primary 62J15, 62C15, 62F03;
Secondary 62H20, 62P99.	
	
\section{Introduction}
\label{sec:introduction}

Multiple hypothesis testing has become one of the central problems in modern statistical
inference, particularly in large-scale applications arising in genomics, bioinformatics,
neuroimaging, astronomy, economics, finance, medicine, and related scientific disciplines.
In such settings, a large number of hypotheses must be tested simultaneously in order to
identify a relatively small number of meaningful signals hidden among a large collection
of null effects. The resulting multiplicity problem has led to a rich literature on procedures
controlling global measures of type I error, including the family-wise error rate (FWER)
and the false discovery rate (FDR); see, for example, \citet{BH1995}, \citet{BY2001},
\citet{STO_2002}, \citet{STS_2004}, \citet{SKS2006}, \citet{BKY2006},  \citet{SKS2008}, \citet{GBS2009},
\citet{BLROQ2009}, \citet{SUN_CAI_2009}, \citet{NR2012}, and \citet{GHS2014}, only to name a few.

A substantial portion of the classical multiple testing literature relies on marginal test statistics or their associated \(p\)-values and was originally developed under the assumption that the underlying test statistics are independent. In many contemporary applications, however, substantial dependence arises naturally through spatial, temporal, biological, financial, network-based, and other scientific mechanisms. Such dependence can fundamentally influence the behavior of multiple testing procedures. Several authors have demonstrated that strong dependence may induce highly unstable performance in traditional marginal \(p\)-value based methods, resulting in considerable variability in both false discoveries and missed discoveries; see, for example, \citet{QBKY2005}, \citet{QKY2005}, \citet{GGQY2007}, \citet{KY2007} and \citet{QXGY2007}. In addition, \citet{Efron_2007} argued that ignoring dependence can lead to substantially distorted inferential conclusions in highly correlated settings.

These challenges have motivated extensive research on multiple testing procedures that remain valid under various forms of dependence; see, for example, \citet{BY2001}, \citet{ROM_SHK_WOLF}, \citet{SKS2008}, \citet{LEEK_STO_2008}, \citet{BLROQ2009}, \citet{FRI_KLO_CAU_2009}, \citet{SUN_CAI_2009}, and the references therein. While such developments have substantially expanded the scope of multiple testing methodology, many of these developments primarily focus on preserving validity under dependence rather than exploiting dependence as a source of inferential gain. This naturally raises a more fundamental question: can the dependence structure itself be exploited to improve multiple-testing performance?

Dependence often contains valuable information about the underlying signal configuration, and exploiting such information may substantially improve inferential efficiency. This viewpoint motivates procedures that incorporate covariance information directly into the construction of the test statistics, thereby using dependence as an inferential resource rather than merely as a nuisance to be adjusted for. From this perspective, dependence may be viewed as a \enquote{blessing} rather than a \enquote{curse}; see \citet{GRW2006}, and \citet{HJ2010}. The present paper is motivated by this latter viewpoint and investigates whether covariance information can be exploited systematically through a covariance-adaptive multiple-testing procedure that directly incorporates dependence into the testing mechanism.

%
%While error-rate control is an important objective, it does not by itself guarantee
%desirable decision-theoretic performance. In large-scale testing problems, one is often
%interested not only in controlling false discoveries, but also in accurately classifying
%null and non-null hypotheses. This motivates the study of loss functions that account
%for both false rejections and false non-rejections. In this direction, \citet{COHEN_SACK_XU_2009}
%showed that several commonly used multiple testing procedures based on marginal
%statistics, including procedures in the Benjamini--Hochberg tradition, may be
%inadmissible under dependence with respect to natural vector-valued loss functions.
%Such results demonstrate that classical error-control properties and decision-theoretic
%admissibility capture distinct aspects of multiple testing performance.

Among the relatively few procedures that explicitly incorporate covariance information, the Maximum Residual Down (MRD) procedure of \citet{COHEN_SACK_XU_2009} occupies a particularly prominent position. Rather than relying solely on marginal
statistics, MRD constructs adaptive residual statistics obtained from the conditional distribution of each coordinate given the remaining active coordinates. By sequentially
removing hypotheses exhibiting the strongest residual evidence against the null, MRD incorporates dependence information directly into the testing mechanism. The resulting statistics depend explicitly on the covariance matrix and evolve adaptively as hypotheses are removed during the step-down process. Such covariance-adaptive procedures naturally motivate performance criteria beyond traditional error-rate control, since their ability to exploit dependence may influence not only type I errors but also overall classification accuracy.

While error-rate control remains an important objective, it does not by itself provide a complete assessment of performance. In large-scale multiple testing problems, one is often interested not only in controlling some overall measure of false discoveries but also in accurately identifying the underlying signal configuration. This naturally motivates decision-theoretic criteria that account simultaneously for false rejections and false non-rejections. In this direction, \citet{COHEN_SACK_2005_B}, \citet{COHEN_KOL_SACK_2007},  \citet{COHEN_SACK_2007}, and \citet{COHEN_SACK_2008} showed that several widely used procedures based on marginal statistics, including the celebrated Benjamini--Hochberg (BH) method and its variants, may be inadmissible under dependence with respect to natural vector-valued loss functions. The inadmissibility of a multiple testing procedure under the vector loss formulation implies that it will be inadmissible with respect to the total loss function. These findings demonstrate that classical error-control properties and decision-theoretic admissibility capture distinct aspects of multiple-testing performance and highlight the importance of evaluating procedures using criteria that extend beyond error-rate control alone.

These considerations underscore a broader point. Procedures possessing similar error-rate control properties may nevertheless differ substantially in their ability to recover true signals, avoid missed discoveries, and minimize overall classification errors. Consequently, evaluating multiple testing procedures solely through a particular type I error criterion may provide an incomplete assessment of their practical performance.

%
%More broadly, control of a particular type I error criterion, while undoubtedly important, does not by itself guarantee desirable overall performance in large-scale testing problems. Procedures possessing similar error-rate control properties may differ substantially in their ability to recover true signals, avoid missed discoveries, or minimize overall classification errors. This observation motivates the study of multiple testing procedures from a broader decision-theoretic perspective and highlights the importance of evaluating criteria beyond error-rate control alone.

The MRD framework also has a strong decision-theoretic foundation. The original MRD procedure proposed 
by \citet{COHEN_SACK_XU_2009} was shown to be admissible under a natural vector-valued loss formulation. More recently, \citet{GC_ADMISSIBILITY_2026} established admissibility for a very broad class of monotone residual-based step-down
procedures under arbitrary covariance dependence. Thus, admissibility is not merely a property of one particular choice of critical constants, but rather a structural feature of
a large class of residual-based testing procedures. This observation is particularly important for the present work, since the GBS-calibrated MRD procedure proposed in this paper remains admissible under the vector loss formulation and therefore inherits the same decision-theoretic guarantees.

The decision-theoretic perspective adopted by \citet{COHEN_SACK_XU_2009} also provides a natural motivation for evaluating procedures through misclassification risk. Admissibility is defined with respect to classification-type loss functions that penalize both false rejections and false non-rejections, making the total number of classification errors a fundamental measure of performance. Moreover, the inadmissibility of several widely used procedures under the vector loss formulation implies the existence of alternative procedures with uniformly better risk performance under the corresponding decision-theoretic criterion. These considerations motivate our emphasis on misclassification risk, alongside traditional measures such as FDR, power, and false non-discovery rate, throughout the simulation study.

%The decision-theoretic perspective adopted by \citet{COHEN_SACK_XU_2009}
%also provides a natural motivation for evaluating procedures in terms of
%misclassification risk. Admissibility is defined with respect to
%classification-type loss functions that penalize both false rejections
%and false non-rejections. Consequently, the total number of
%classification errors emerges as a fundamental measure of overall
%performance. Moreover, the inadmissibility of several widely used
%procedures under the vector loss formulation implies the existence of
%alternative procedures that uniformly improve their risk performance
%under the corresponding loss function. Furthermore, inadmissibility
%under the vector loss formulation immediately implies inadmissibility
%under the corresponding total loss obtained by summing the component
%losses. Consequently, admissibility has direct implications for overall
%classification accuracy. These considerations motivate our emphasis on
%misclassification risk, alongside traditional measures such as FDR,
%power, and false non-discovery rate, throughout the simulation study.

Despite its attractive theoretical properties, practical implementation of the original MRD procedure requires specification of a sequence of stagewise critical constants. In the original work of \citet{COHEN_SACK_XU_2009}, these constants were chosen largely through numerical considerations and tailored to the particular dependence settings under investigation. The primary focus of the original MRD methodology was the construction and decision-theoretic analysis of covariance-adjusted residual statistics rather than the development of a broadly applicable calibration framework. Consequently, the original MRD procedure did not provide a unified principle for selecting critical constants across arbitrary covariance structures, leaving practical implementation tied to problem-specific choices. This naturally raises the question of whether one can develop a systematic calibration strategy while retaining the covariance-adaptive and admissible nature of the MRD framework.

To address this issue, the present paper investigates a covariance-adaptive calibration of the MRD procedure based on the generalized step-down critical constants proposed by \citet{GBS2009}. Although these critical constants were originally introduced as part of an adaptive step-down procedure possessing false discovery rate control under independence, subsequent theoretical developments have revealed considerably stronger optimality properties. In particular, recent work has established that the GBS procedure attains asymptotic Bayes optimality under sparsity \citep{GhoshChakrabarti2026GBSABOS} and, under suitable sparsity conditions, sharp asymptotic minimaxity in sparse Gaussian multiple testing problems \citep{Ghosh2026GBSMinimaxity}. Taken together, these results suggest that the GBS stagewise calibration provides a remarkably effective mechanism for balancing false discoveries and missed discoveries in high-dimensional sparse inference. The central question addressed in the present work is whether these attractive calibration properties can be successfully integrated with covariance-adaptive residualization to develop a systematic, decision-theoretically valid multiple testing procedure under arbitrary covariance dependence.

Viewed more broadly, the present paper investigates the interaction between two complementary components of covariance-aware multiple testing procedures. Covariance-adaptive residualization determines how dependence information is accumulated and propagated through the sequential testing process, whereas stagewise calibration determines how the accumulated evidence is translated into sequential testing decisions. Although these two components play complementary roles in covariance-aware multiple testing, they are typically developed independently. The central premise of the present work is that both play fundamental roles in determining the overall operating characteristics of a covariance-aware multiple testing procedure. The proposed MRD--GBS methodology combines these complementary ideas by integrating covariance-adaptive residualization with the stagewise calibration strategy of \citet{GBS2009}, thereby yielding a systematic covariance-aware multiple testing framework that integrates covariance-adaptive residualization with principled stagewise calibration under arbitrary covariance dependence.

From a methodological perspective, the underlying philosophy of the GBS calibration is that hypotheses surviving a long sequence of rejections have already undergone substantial screening and therefore need not be penalized as severely as hypotheses encountered at the beginning of the testing process. Consequently, the stagewise critical values become progressively less conservative as the step-down sequence evolves, allowing the evidence accumulated during sequential testing to be translated into progressively less conservative testing decisions. Since the MRD framework itself proceeds through a sequential elimination mechanism based on covariance-adaptive residual statistics, the GBS calibration provides a particularly natural and principled mechanism for allocating statistical evidence across successive stages of the residual-based testing procedure. The resulting methodology therefore combines two complementary ingredients: covariance-adaptive residualization for exploiting dependence information and stagewise GBS calibration for systematically translating the accumulated evidence into sequential testing decisions.

Our contributions in this paper are threefold. First, we introduce a systematic calibration strategy for the MRD procedure based on the generalized step-down critical constants of \citet{GBS2009}. The proposed calibration replaces the model-dependent threshold specification required by the original MRD formulation with a unified stagewise rule that is applicable across a broad range of dependence structures. As a result, the procedure retains the covariance-adaptive residualization mechanism of MRD while eliminating the need for covariance-specific threshold tuning. Moreover, since the proposed method remains within the general class of monotone residual-based step-down procedures considered by \citet{GC_ADMISSIBILITY_2026}, admissibility follows immediately from their general theory. Consequently, the resulting methodology is not merely a computational or empirical modification of MRD, but rather a decision-theoretically valid residual-based multiple testing procedure under arbitrary covariance dependence.

%First, we formulate a GBS-calibrated version of the MRD procedure for
%simultaneous testing under arbitrary known covariance dependence. The
%resulting methodology replaces the model-specific critical-value
%selection employed in the original MRD framework by a systematic
%stagewise calibration strategy, while preserving the covariance-aware
%residual structure of the procedure. 

%Second, we derive alternative computational representations of the
%stagewise residual statistics. Under the original formulation, direct
%implementation of MRD requires computing a separate leave-one-out
%submatrix inverse for each active coordinate at each stage, leading to
%\(m+(m-1)+\cdots+1=n(n+1)/2\) matrix inversions in the worst case, $n$ being the total number of multiple testing problems under consideration. The
%alternative representation developed here shows that all residual
%statistics at a given stage can be obtained from a single active
%covariance inverse. Consequently, the number of required matrix
%inversions is reduced from $m(m+1)/2$ to at most \(m\), yielding a
%substantial computational simplification. This converts the original
%leave-one-out implementation into a stagewise precision-matrix
%representation in which all active residual statistics are obtained
%simultaneously from a single matrix inverse. The resulting
%simplification substantially improves computational scalability
%while preserving the exact MRD residual statistics.

Second, we derive alternative computational representations of the stagewise residual statistics. Under the original formulation, direct implementation of the MRD procedure requires computing a separate leave-one-out sub-matrix inverse for each active coordinate at each stage, leading to $m(m+1)/2$ matrix inversions in the worst case, $m$ being the total number of null hypotheses to be tested. The alternative representation developed here shows that all residual statistics at a given stage can be obtained from a single active covariance-matrix inverse. Consequently, the number of required matrix inversions is reduced from $m(m+1)/2$ to at most \(m\), yielding a substantial computational simplification. The resulting
simplification substantially improves computational scalability while preserving the exact MRD residual statistics. More importantly, the new representation reveals that the residual-based testing mechanism may be viewed directly through the geometry of the active precision matrix. Thus, beyond providing a computationally efficient implementation, the resulting formulation offers a more transparent structural interpretation of the MRD framework itself.

Third, we conduct an extensive simulation study under a broad collection of dependence structures, including equicorrelation, factor, Toeplitz, fractional Gaussian noise, sparse precision-matrix, and heterogeneous block covariance models. The numerical results demonstrate that the proposed procedure frequently achieves the smallest normalized misclassification risks in sparse and moderately sparse regimes while simultaneously exhibiting remarkably strong signal-recovery behavior. Across several dependence structures, the procedure attains false discovery rates near the nominal level, false non-discovery rates approaching zero, powers close to one, and average numbers of rejections that closely track the expected number of true signals, providing compelling empirical evidence of near-support-recovery behavior. The simulations further reveal two notable phenomena. First, the relative performance of the original MRD procedure and its GBS-calibrated counterpart depends strongly on the underlying sparsity regime. While the proposed calibration frequently achieves the smallest normalized misclassification risks in sparse settings, the original MRD procedure often becomes increasingly competitive and may eventually outperform the calibrated version as the signal configuration becomes denser. Second, the magnitude of the gains varies substantially across covariance structures, suggesting that covariance geometry may strongly influence how effectively dependence information can be exploited through residual-based testing procedures. Collectively, these findings provide new empirical insights into the interaction among dependence, calibration, information propagation, and signal recovery in covariance-adaptive multiple-testing procedures.

We emphasize that the present paper does not claim formal FDR control or asymptotic
support recovery for the calibrated MRD procedure. Rather, our goal is methodological:
to study how GBS-type calibration interacts with residual-based covariance adaptation,
to provide efficient computational tools for implementation, and to examine the resulting
operating characteristics under diverse dependence structures. The numerical findings
suggest that the structure of the covariance matrix may play an important role in
determining how information propagates through sequential residual-based testing
procedures. More broadly, the numerical findings raise the possibility that the effective difficulty of a large-scale multiple testing problem under dependence may depend not only on the nominal number of hypotheses but also on structural features of the underlying covariance matrix and on the extent to which dependence information can be exploited through the testing mechanism itself. The methodology developed here therefore serves both as a practical
calibration strategy for residual-based multiple testing under dependence and as an empirical investigation of how covariance geometry influences large-scale testing performance. Taken together, these findings suggest that covariance-adaptive residualization and stagewise calibration interact in ways that are not captured by traditional marginal testing procedures, thereby motivating a broader theoretical investigation of optimal multiple testing under dependence.

The remainder of the paper is organized as follows.
Section~\ref{SECTION_PROBLEM_MRD_DEFNITION} introduces the testing framework and reviews the residual
statistics underlying the MRD procedure.
Section~\ref{SECTION_MRD_GBS_DEFINITION} presents the proposed GBS-calibrated MRD methodology and
establishes its admissibility properties.
Section~\ref{SECTION_MRD_STAT_ALTERNATIVE} develops alternative representations of the MRD residual
statistics together with substantial computational simplifications.
Section~\ref{SECTION_SIMULATIONS} contains an extensive simulation study under a variety of
dependence structures.
Section~\ref{SECTION_DISCUSSION} concludes with a discussion of the findings and several
directions for future research.
Appendix A contains proofs of the technical results presented in the
main text, while Appendix B and Appendix C report additional simulation results and supplementary numerical summaries.

\section{Problem Formulation and the MRD Procedure}
\label{SECTION_PROBLEM_MRD_DEFNITION}

In this paper, we consider the problem of simultaneous testing for means of a set of jointly normal variables. Towards that, let us assume that we observe a random vector $\boldsymbol{X}=(X_1,\dots,X_{m})$ (obtained through some suitable transformation, if necessary) such that $\boldsymbol{X} \sim N_{m}(\boldsymbol{\theta},\boldsymbol{\Sigma})$ where  $\boldsymbol{\theta}=(\theta_1,\dots,\theta_{m})$ is the vector of unknown means and $\boldsymbol{\Sigma}=((\sigma_{ij}))$ is an $m \times m$ known positive definite matrix with an arbitrary covariance structure. We are interested in testing simultaneously 
\begin{equation}\label{TESTING_PROBLEM}
	H_{0i}:\theta_{i}=0 \mbox{ against } H_{Ai}:\theta_{i}\neq 0, \mbox{ for } i=1,\dots,m.
\end{equation}

Note that since $\boldsymbol{\Sigma}$ is known, without loss of generality, one may assume $\boldsymbol{\Sigma}$ to be the correlation matrix of the $X_i$'s so that $X_i \sim N(\theta_{i},1)$ for each $i=1,\dots,m$. This is so since if
\begin{eqnarray}\label{DIAG_MATRIX}
	\boldsymbol{D}
	&=& \begin{pmatrix} \sigma_{11} & 0 & 0 & \dots & 0 \\ 0 & \sigma_{22} & 0 & \dots & 0 \\ \vdots \\ 0 & 0 & 0 & \dots & \sigma_{mm}\end{pmatrix},
\end{eqnarray}
then letting $\boldsymbol{U}=\boldsymbol{D}^{-1/2}\boldsymbol{X}$ we have $\boldsymbol{U}\sim N_{m}(\boldsymbol{\mu}, \boldsymbol{\Lambda})$, where $\boldsymbol{\mu} = \boldsymbol{D}^{-1/2} \boldsymbol{\theta}$ and $\boldsymbol{\Lambda}=\boldsymbol{D}^{-1/2}\boldsymbol{\Sigma} \boldsymbol{D}^{-1/2}$ is simply the correlation matrix of $\boldsymbol{X}$. Therefore, testing $H^{'}_{0i}:\mu_i=0\mbox{ against } H^{'}_{Ai}:\mu_i\neq 0$ simultaneously for $i=1,\dots,m$, is equivalent to the original testing problem (\ref{TESTING_PROBLEM}).

We now introduce the Maximum Residual Down (MRD) procedure proposed by \citet{COHEN_SACK_XU_2009}, which serves as the canonical residual-based
step-down procedure under arbitrary covariance dependence. For that, we adopt here similar convention of notations used in \citet{COHEN_SACK_XU_2009}. 

Let $\boldsymbol{X}^{(i_{1},\dots,i_{t})}$ be an $(m-t)\times1$ vector consisting of those components of $\boldsymbol{X}=(X_1,\dots,X_{m})$ with $(X_{i_{1}},\dots,X_{i_{t}})$ left out. Suppose $\boldsymbol{\Sigma}_{(i_{1},\dots,i_{t})} $ is the $(m-t)\times(m-t)$ sub-matrix obtained after eliminating the $i_{1},\dots,i_{t}$-th rows and the corresponding columns of $\boldsymbol{\Sigma}$. Let $\boldsymbol{\sigma}_{(j)}^{(i_{1},\dots,i_{t})} $ be the $(m-t-1)\times1$ vector obtained by eliminating the $i_{1},\dots,i_{t}$-th and $j$-th elements of the $j$-th column vector of $\boldsymbol{\Sigma}$. Define
\begin{eqnarray}
	\sigma_{j\cdot(i_{1},\dots,i_{t})} &=& \sigma_{jj} - {\boldsymbol{\sigma}_{(j)}^{(i_{1},\dots,i_{t})}}^{T}{\boldsymbol{\Sigma}^{-1}_{(i_1,\dots,i_t,j)}}{\boldsymbol{\sigma}_{(j)}^{(i_{1},\dots,i_{t})}},\nonumber
\end{eqnarray}
which denotes the conditional variance of $X_{j}$ given $\boldsymbol{X}^{(i_{1},\dots,i_{t})}$.

The MRD procedure is based on a collection of adaptively formed residual
statistics defined as
\begin{eqnarray}\label{MRD_STATISTICS}
	U_{tj}^{(i_1,\dots,i_{t-1})}(\boldsymbol{X}) &=& \frac{X_j - {\boldsymbol{\sigma}_{(j)}^{(i_1,\dots,i_{t-1})}}^{T}{\boldsymbol{\Sigma}^{-1}_{(i_1,\dots,i_{t-1},j)}}{\boldsymbol{X}^{(i_1,\dots,i_{t-1},j)}}}{\sigma^{\frac{1}{2}}_{j\cdot(i_1,\dots,i_{t-1})}},\nonumber
\end{eqnarray}
for $t,j=1,\dots,m$, $1\leqslant i_{1}\neq\dots\neq i_{t-1}\leqslant n$ and $i_{l}\neq j$ for all $l=1,\dots,t-1$.

For $1 \leqslant t \leqslant n$, we define the index $\widetilde{j}_t(\boldsymbol{X})$ as
\begin{align}\label{MRD_INDEX}
	\widetilde{j}_t(\boldsymbol{X})&=\argmax_{j \in \{1,\dots,m\}\setminus\{\widetilde{j}_1(\boldsymbol{X}),\dots,\widetilde{j}_{t-1}(\boldsymbol{X})\}} |U^{(\widetilde{j}_1(\boldsymbol{X}),\dots,\widetilde{j}_{t-1}(\boldsymbol{X}))}_{tj}(\boldsymbol{X})|.
\end{align}

Given a set of positive constants $C_1 \geqslant C_2 \geqslant \dots \geqslant C_{m}$, the MRD method works in a step-down manner as follows:
\begin{enumerate}
	\item At stage 1, consider the statistics $|U_{1j}(\boldsymbol{X})|$, where $j \in \{1,\dots,m\}$. If $|U_{1\widetilde{j}_1(\boldsymbol{X})}(\boldsymbol{X})| \leqslant C_1, $ stop and accept all $H_{0i}$'s. Otherwise reject $H_{0\widetilde{j}_1(\boldsymbol{X})}$ and continue to stage 2.
	
	\item At stage 2, consider the statistics $|U^{(\widetilde{j}_1(\boldsymbol{X}))}_{2\widetilde{j}_1(\boldsymbol{X})}(\boldsymbol{X})|$, where $j\in\{1,\dots,m \}\setminus\{\widetilde{j}_1(\boldsymbol{X})\}$. If $|U^{(\widetilde{j}_1(\boldsymbol{X}))}_{2\widetilde{j}_2(\boldsymbol{X})}(\boldsymbol{X})| \leqslant C_2, $ stop and accept all the remaining $H_{0i}$'s. Otherwise, reject $H_{0\widetilde{j}_2(\boldsymbol{X})}$ and continue to stage 3.
	
	\item In general, at stage $t$, consider the statistics $|U^{(\widetilde{j}_1(\boldsymbol{X}),\dots,\widetilde{j}_{t-1}(\boldsymbol{X}))}_{tj}(\boldsymbol{X})|$, where $j\in\{1,\dots,m\}\setminus\{\widetilde{j}_1(\boldsymbol{X}),\dots,\widetilde{j}_{t-1}(\boldsymbol{X})\}$. If $|U^{(\widetilde{j}_1(\boldsymbol{X}),\dots,\widetilde{j}_{t-1}(\boldsymbol{X}))}_{t\widetilde{j}_t(\boldsymbol{X})}(\boldsymbol{X})| \leqslant C_t$, stop and accept all the remaining $H_{0i}$'s. Otherwise, reject $H_{0\widetilde{j}_t(\boldsymbol{X})}$ and move to stage $(t+1)$.
	
	\item We continue in this fashion until an acceptance occurs or there are no more null hypotheses to be tested, in which case we must stop.
\end{enumerate}

The MRD procedure thus generates a sequence of adaptive residual
statistics that explicitly incorporates the covariance structure of the
observations into the testing process. Through its sequential
elimination mechanism, the procedure updates the evidence associated
with each active hypothesis as earlier rejections occur, thereby
allowing information contained in the dependence structure to influence
subsequent testing decisions. The performance of the procedure,
however, depends crucially on the choice of the stagewise critical
constants \(C_1,\ldots,C_{m}\). In the original formulation of
\citet{COHEN_SACK_XU_2009}, these constants were selected through
numerical considerations tailored to specific dependence settings. This
observation motivates the development of a more systematic calibration
strategy, which is the focus of the next section.

\section{GBS-Calibrated MRD Procedure}
\label{SECTION_MRD_GBS_DEFINITION}

As discussed in the Introduction, practical implementation of the
original MRD procedure requires specification of a sequence of
stagewise critical constants
\[
C_1 \ge C_2 \ge \cdots \ge C_{m} > 0.
\]
In the original work of \citet{COHEN_SACK_XU_2009}, these constants
were selected through numerical studies tailored to the particular
dependence structures under consideration. While such choices yielded
procedures with attractive decision-theoretic properties, they did not
provide a unified calibration principle applicable across arbitrary
covariance structures. Consequently, practical implementation of MRD
remained tied to problem-specific choices of critical constants.

In this paper, we investigate an alternative calibration strategy based
on the adaptive step-down critical constants proposed by
\citet{GBS2009}. The resulting procedure preserves the covariance-aware
residualization mechanism of MRD while replacing the original
model-dependent critical values by a systematic stagewise calibration
rule.

The GBS calibration is particularly well suited to the MRD framework because its stagewise critical values become progressively less conservative as the step-down procedure evolves. Since the MRD algorithm itself proceeds through a sequential elimination mechanism based on covariance-adaptive residual statistics, the GBS calibration provides a natural and principled mechanism for translating the accumulated residual evidence into sequential testing decisions.

\subsection{Definition of the Procedure}

For \(i=1,\ldots,m\), define the GBS critical values
\[
\alpha_i
=
\frac{i\alpha}
{m+1-i(1-\alpha)},
\]
where \(\alpha\in(0,1)\) denotes the nominal significance level.

%Since the MRD procedure is based on two-sided residual statistics,
%we convert these significance levels into critical values through

Since each MRD residual statistic is a standardized conditional residual, having a standard normal distribution under the corresponding null hypothesis, a two-sided significance level \(\alpha_i\) naturally translates into the critical value
\[
C_i^{GBS}
=
\Phi^{-1}
\!\left(
1-\frac{\alpha_i}{2}
\right),
\qquad i=1,\ldots,m,
\]
where \(\Phi\) denotes the standard normal distribution function.

\begin{proposition}
For given $0<\alpha<1$,	the sequence of positive constants $\{C_i^{GBS}\}_{i=1}^{m}$ is strictly decreasing, that is,
\[
C_1^{GBS}>C_2^{GBS}>\ldots>C_{m}^{GBS}>0.
\]
\end{proposition}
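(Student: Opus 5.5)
The plan is to reduce the claim to monotonicity of the GBS levels $\alpha_i$ and then transfer it through the strictly decreasing map $p\mapsto \Phi^{-1}(1-p/2)$. Treat $i$ as a continuous variable $x\in[1,m]$ and set
\[
f(x)=\frac{x\alpha}{m+1-x(1-\alpha)} .
\]
The denominator satisfies $m+1-x(1-\alpha)\ge m+1-m(1-\alpha)=1+m\alpha>0$ for $x\le m$. Hence $f$ is well defined and positive on $[1,m]$. Differentiating gives
\[
f'(x)=\frac{\alpha\,(m+1)}{\{m+1-x(1-\alpha)\}^2}>0 ,
\]
so $\alpha_1<\alpha_2<\cdots<\alpha_m$ strictly. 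One could avoid calculus altogether by cross-multiplying $\alpha_i<\alpha_{i+1}$. Since both denominators are positive, this reduces to $i(m+1-(i+1)(1-\alpha))<(i+1)(m+1-i(1-\alpha))$, which simplifies to $0<m+1$.

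Next I will pin down the range of the $\alpha_i$. Positivity of $\alpha_1$ is immediate. For the upper end,
\[
\alpha_m=\frac{m\alpha}{1+m\alpha}<1 .
\]
So every $\alpha_i\in(0,1)$, and therefore $1-\alpha_i/2\in(1/2,1)$.

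Finally, $\Phi^{-1}$ is strictly increasing on $(0,1)$ because $\Phi$ is continuous and strictly increasing. Consequently $p\mapsto\Phi^{-1}(1-p/2)$ is strictly decreasing on $(0,1)$. Composing with the strictly increasing sequence $\alpha_i$ gives $C_1^{GBS}>C_2^{GBS}>\cdots>C_m^{GBS}$. Positivity of the last constant follows from $1-\alpha_m/2>1/2$ and $\Phi^{-1}(1/2)=0$, so $C_m^{GBS}>0$.

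There is no real obstacle. The only point needing care is to check that the denominator stays positive and that $\alpha_i<1$, so that the two-sided level is meaningful and $C_m^{GBS}$ is strictly positive rather than merely nonnegative.
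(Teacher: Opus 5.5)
Your proof is correct and follows the same route as the paper's: $\alpha_i$ is strictly increasing in $i$, and composing with the strictly decreasing map $u\mapsto\Phi^{-1}(1-u/2)$ makes the $C_i^{GBS}$ strictly decreasing. You also check that the denominators are positive, that $\alpha_i\in(0,1)$ via $\alpha_m=m\alpha/(1+m\alpha)<1$, and that $C_m^{GBS}>0$; the paper's one-line proof needs these facts but leaves them implicit.
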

\begin{proof}
	The sequence \(\{\alpha_i\}_{i=1}^{m}\) is strictly increasing in \(i\), while
	\(\Phi^{-1}(1-u/2)\) is strictly decreasing in \(u\in(0,1)\).
	Therefore \(\{C_i^{GBS}\}\) is strictly decreasing.
\end{proof}

The GBS-calibrated MRD procedure is obtained by implementing
the MRD algorithm described in Section~\ref{SECTION_PROBLEM_MRD_DEFNITION} with the stagewise
critical constants

\[
C_1,C_2,\ldots,C_{m}
\]

replaced by the GBS critical values

\[
C^{GBS}_1,C^{GBS}_2,\ldots,C^{GBS}_{m}.
\]

Thus, at stage \(t\), the hypothesis corresponding to the largest
active residual statistic is rejected whenever its absolute residual
exceeds \(C_t^{GBS}\); otherwise the procedure terminates and all
remaining hypotheses are accepted.

Observe that
\[
C_1^{GBS}
>
C_2^{GBS}
>
\cdots
>
C_{m}^{GBS},
\]
so that the resulting critical values become progressively less
conservative as the step-down procedure evolves.

It is worth noting that under independence, where $\boldsymbol{\Sigma} = \boldsymbol{I}$, the residual statistics reduce to the original standardized observations and the covariance-adaptive residualization mechanism disappears. In this case, the proposed MRD--GBS procedure reduces exactly to the ordinary GBS step-down procedure of \citet{GBS2009} based on marginal test statistics. Thus, under independence, the proposed methodology coincides with a well-established procedure possessing false discovery rate control properties, while extending naturally to settings involving arbitrary covariance dependence. Thus, the proposed procedure may be viewed as a covariance-adaptive extension of the GBS step-down rule: it agrees with GBS under independence, but replaces marginal statistics by sequential residual statistics when dependence is present.

%At stage \(t\), let
%\[
%\widetilde{j}_t(X)
%=
%\argmax_{j\in \mathcal{A}_{t}}
%|U_{tj}(X)|,
%\]
%where \(\mathcal{A}_{t}\) denotes the set of active hypotheses at stage \(t\).
%
%The GBS-calibrated MRD procedure proceeds as follows.
%
%\begin{enumerate}
%	\item
%	At stage \(t\), compute
%	\[
%	|U_{t\widetilde{j}_t(X)}(X)|.
%	\]
%	
%	\item
%	If
%	\[
%	|U_{t\widetilde{j}_t(X)}(X)|
%	\le
%	C_t^{GBS},
%	\]
%	stop and accept all remaining null hypotheses.
%	
%	\item
%	Otherwise reject
%	\[
%	H_{0\widetilde{j}_t(X)}
%	\]
%	and continue to stage \(t+1\).
%	
%	\item
%	The procedure terminates when an acceptance occurs or when all
%	hypotheses have been rejected.
%\end{enumerate}

\subsection{Admissibility of the GBS-Calibrated MRD}

An important feature of the proposed calibration is that it preserves
the admissibility properties of the original MRD framework. The original
MRD procedure of \citet{COHEN_SACK_XU_2009} was shown to be admissible
with respect to a natural vector-valued loss function. More recently,
\citet{GC_ADMISSIBILITY_2026} established a broader structural result
showing that admissibility is a consequence of the underlying
residual-based step-down geometry rather than of any particular choice
of critical constants.

Recall that any multiple testing procedure
$\Phi(\boldsymbol{x})=(\phi_1(\boldsymbol{x}),\dots,\phi_m(\boldsymbol{x}))$
induces an individual test function $\phi_j(\boldsymbol{x})$ for testing
$H_{0j}$ against $H_{Aj}$, where $\phi_j(\boldsymbol{x})$ denotes the probability
of rejecting the $j$-th null hypothesis when the observation
$\boldsymbol{X}=\boldsymbol{x}$ is realized. We consider the standard $0-1$ loss function
corresponding to $\phi_j$, given by
\begin{equation}\label{INDIVIDUAL_LOSS}
	L_j\big(\phi_j(\boldsymbol{X}),\boldsymbol{\theta}\big)
	=
	I\{\theta_j=0\}\phi_j(\boldsymbol{X})
	+
	I\{\theta_j\neq0\}
	\big(1-\phi_j(\boldsymbol{X})\big),
\end{equation}
while the corresponding risk function is given by
\begin{equation}\label{INDIVIDUAL_RISK}
	R_j\big(\phi_j,\boldsymbol{\theta}\big)
	=
	I\{\theta_j=0\}
	E_{\boldsymbol{\theta}:\theta_j=0}\big(\phi_j(\boldsymbol{X})\big)
	+
	I\{\theta_j\neq0\}
	E_{\boldsymbol{\theta}:\theta_j\neq0}
	\big(1-\phi_j(\boldsymbol{X})\big).
	\nonumber
\end{equation}

We consider the overall loss function for the multiple testing procedure
$\Phi(\boldsymbol{X})$ to be the vector loss
\begin{equation}\label{VECTOR_LOSS}
	L\big(\Phi(\boldsymbol{X}),\boldsymbol{\theta}\big)
	=
	\big(
	L_1\big(\phi_1(\boldsymbol{X}),\boldsymbol{\theta}\big),
	\dots,
	L_{m}\big(\phi_m(\boldsymbol{X}),\boldsymbol{\theta}\big)
	\big),
\end{equation}
with corresponding vector risk function
\begin{equation}\label{VECTOR_RISK}
	R\big(\Phi,\boldsymbol{\theta}\big)
	=
	\big(
	R_1\big(\phi_1,\boldsymbol{\theta}\big),
	\dots,
	R_{n}\big(\phi_m,\boldsymbol{\theta}\big)
	\big).
\end{equation}

The corresponding total loss function is defined as
\begin{equation}\label{TOTAL_LOSS}
	L_T\big(\Phi(\boldsymbol{X}),\boldsymbol{\theta}\big)
	=
	\sum_{j=1}^{m}
	L_j\big(\phi_j(\boldsymbol{X}),\boldsymbol{\theta}\big),
\end{equation}
while the associated total risk function is given by
\begin{equation}\label{TOTAL_RISK}
	R_T\big(\Phi,\boldsymbol{\theta}\big)
	=
	\sum_{j=1}^{m}
	R_j\big(\phi_j,\boldsymbol{\theta}\big).
\end{equation}

Observe that the total loss function (\ref{TOTAL_LOSS}) simply counts
the total number of classification errors, namely the sum of the numbers
of type I and type II errors committed by the multiple testing procedure.
Consequently, the total risk function (\ref{TOTAL_RISK}) corresponds to
the expected total number of classification errors.

A multiple testing procedure $\Phi(\boldsymbol{X})$ is said to be inadmissible
with respect to the vector loss function (\ref{VECTOR_LOSS}) if there exists
another multiple testing procedure $\Phi^{*}(\boldsymbol{X})$ such that
$R_j\big(\phi^{*}_j,\boldsymbol{\theta}\big)
\leq
R_j\big(\phi_j,\boldsymbol{\theta}\big)$
for all $j=1,\dots,m$ and all $\boldsymbol{\theta}\in\mathbb{R}^{m}$,
with strict inequality holding for at least one $j$ and some
$\boldsymbol{\theta}\in\mathbb{R}^{m}$. On the other hand, a multiple testing procedure $\Phi(\boldsymbol{X})$ is said to be inadmissible
with respect to the total loss function (\ref{TOTAL_LOSS}) if there exists
another multiple testing procedure $\Phi^{*}(\boldsymbol{X})$ such that
$R_T\big(\Phi^{*},\boldsymbol{\theta}\big)
\leq
R_T\big(\Phi,\boldsymbol{\theta}\big)$
for all $\boldsymbol{\theta}\in\mathbb{R}^{m}$,
with strict inequality holding for some
$\boldsymbol{\theta}\in\mathbb{R}^{m}$. A multiple testing procedure with respect to a given loss is said to be
admissible if it is not inadmissible in the aforesaid sense with respect to the given loss. Since the total loss function (\ref{TOTAL_LOSS}) is obtained by summing
the component losses in \eqref{VECTOR_LOSS}, inadmissibility with respect to the vector loss
function (\ref{VECTOR_LOSS}) immediately implies inadmissibility with
respect to the total loss function (\ref{TOTAL_LOSS}).

In this context, it is worth recalling that
\citet{COHEN_KOL_SACK_2007},
\citet{COHEN_SACK_2005_B},
\citet{COHEN_SACK_2007} and
\citet{COHEN_SACK_2008}
showed that in many common applications involving dependent test statistics,
typical $p$-value based stepwise testing procedures, including the celebrated
BH method, are inadmissible with respect to the vector loss function
(\ref{VECTOR_LOSS}). Consequently, such procedures also become inadmissible with respect
to the total loss function (\ref{TOTAL_LOSS}). This reveals an undesirable feature of many
traditional stepwise multiple testing procedures under dependence.

The general admissibility theorem of \citet{GC_ADMISSIBILITY_2026}
now yields the following immediate consequence.

\begin{proposition}
Suppose $\boldsymbol{X}\sim N_{m}(\boldsymbol{\theta},\boldsymbol{\Sigma})$, where $\boldsymbol{\theta}\in\mathbb{R}^{m}$ is unknown, but fixed and $\boldsymbol{\Sigma}$ is an $m \times m$ arbitrary but known positive definite covariance matrix. Then, for the two sided multiple testing problem (\ref{TESTING_PROBLEM}), the GBS-calibrated MRD procedure is admissible with respect to the
vector loss function defined in \eqref{VECTOR_LOSS}.
\end{proposition}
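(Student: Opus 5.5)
The plan is to deduce the statement from the general admissibility theorem of \citet{GC_ADMISSIBILITY_2026}, so the work lies in checking its hypotheses for the GBS-calibrated MRD procedure. As recalled in Section~\ref{SECTION_MRD_GBS_DEFINITION}, that theorem covers every monotone residual-based step-down procedure under arbitrary known positive definite covariance. Such a procedure uses the adaptive residual statistics $U_{tj}^{(i_1,\dots,i_{t-1})}(\boldsymbol{X})$, selects the maximal active $|U|$ at each stage, and compares it against a nonincreasing sequence of positive critical constants. I will verify two things:
\begin{itemize}
\item the GBS-calibrated procedure is exactly the MRD algorithm of Section~\ref{SECTION_PROBLEM_MRD_DEFNITION}, with $C_t$ replaced by $C_t^{GBS}$;
\item the constants $C_t^{GBS}$ satisfy the monotonicity and positivity requirements of that class.
\end{itemize}

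First, I check that $\alpha_i\in(0,1)$ for every $i$, so that $C_i^{GBS}=\Phi^{-1}(1-\alpha_i/2)$ is well defined and positive. The denominator is $m+1-i(1-\alpha)\ge m+1-m(1-\alpha)=1+m\alpha>0$. Moreover, $\alpha_i<1$ is equivalent to $i\alpha<m+1-i+i\alpha$, that is, $i<m+1$, which holds. Hence $1-\alpha_i/2\in(1/2,1)$ and $C_i^{GBS}>0$.

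Second, I invoke the earlier proposition, which gives $C_1^{GBS}>C_2^{GBS}>\cdots>C_m^{GBS}>0$. In particular the sequence is nonincreasing, as the MRD definition (and presumably the class in \citet{GC_ADMISSIBILITY_2026}) requires. The constants depend only on $m$ and $\alpha$, not on $\boldsymbol{X}$, so the procedure is a nonrandomized step-down rule of the required form. Its acceptance regions at each stage are of the form $\{|U_{t\widetilde{j}_t}|\le C_t\}$, which are convex and symmetric in the relevant residual coordinate. This is the monotonicity or convexity structure on which Cohen--Sackrowitz-type admissibility arguments rest.

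Third, I note that the reduction to a correlation matrix in Section~\ref{SECTION_PROBLEM_MRD_DEFNITION} is harmless. Rescaling by $\boldsymbol{D}^{-1/2}$ leaves the residual statistics unchanged, since they are standardized conditional residuals and are therefore scale invariant. It also maps the two-sided testing problem (\ref{TESTING_PROBLEM}) onto an equivalent one, and it preserves the vector risk (\ref{VECTOR_RISK}) componentwise. The theorem can therefore be applied to either parametrization.

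The main obstacle is matching the exact hypotheses of the general theorem. If it demands more than a nonincreasing positive sequence, then the verification will concentrate there. Possible extra conditions are strict monotonicity, finiteness of all constants, or a nondegeneracy condition ensuring each stage's acceptance region has positive probability and non-null complement. Strict decrease is supplied by the earlier proposition, finiteness by $\alpha_i<1$, and nondegeneracy by $0<C_t^{GBS}<\infty$ together with the nonsingular normal distribution of the residuals. With the hypotheses confirmed, admissibility with respect to (\ref{VECTOR_LOSS}) follows immediately from \citet{GC_ADMISSIBILITY_2026}.
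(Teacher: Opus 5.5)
Your proposal is correct and follows the same route as the paper. The paper also observes that the GBS calibration changes only the critical constants, so the procedure remains a monotone residual-based step-down procedure, and then invokes Theorem~3.1 of \citet{GC_ADMISSIBILITY_2026}. Your extra checks are correct and are left implicit in the paper: that $0<\alpha_i<1$ (so each $C_i^{GBS}$ is finite and positive), and that the $\boldsymbol{D}^{-1/2}$ rescaling leaves the residuals invariant. Your remark about convex stagewise acceptance regions is heuristic and not needed for the argument.
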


\begin{proof}
	The GBS calibration modifies only the stagewise critical constants and
	does not alter the residual statistics \(U_{tj}\), the ordering rule
	used to select hypotheses for rejection, or the underlying step-down
	structure of the procedure. Consequently, the resulting procedure
	remains a monotone residual-based step-down procedure. Admissibility
	therefore follows immediately from Theorem~3.1 of
	\citet{GC_ADMISSIBILITY_2026}.
\end{proof}

The results of this section show that the proposed calibration strategy
provides a systematic and admissible implementation of the MRD
framework under arbitrary covariance dependence. However, practical
implementation of the procedure remains computationally demanding when
the number of hypotheses is large, since the residual statistics
appear to require repeated inversion of a large collection of
submatrices. In the next section, we develop alternative
representations of the MRD residual statistics that substantially
reduce the computational burden of the procedure.

\section{Alternative Representation and Computational Simplification}\label{SECTION_MRD_STAT_ALTERNATIVE}

In the original formulation of the MRD procedure, the residual statistic \eqref{MRD_STATISTICS} at stage
\(t\) is defined through a leave-one-out conditional residual. Consequently, at the
\(t\)-th stage, direct implementation appears to require computing, for each active
coordinate \(j\), the inverse of a separate submatrix
\[
\boldsymbol{\Sigma}_{(i_1,\ldots,i_{t-1},j)}.
\]
Thus, if the procedure continues until the final stage, a direct implementation
requires
\[
m+(m-1)+\cdots+1=\frac{m(m+1)}{2}
\]
sub-matrix inversions in the worst possible scenario. In this section, we show that this
computational burden can be substantially reduced. The key observation is that all
MRD residual statistics at a given stage can be computed from a single inverse of
the active covariance matrix.

Let
\[
\mathcal{A}_{t}=\{1,\ldots,m\}\setminus\{i_1,\ldots,i_{t-1}\}
\]
denote the active set of hypotheses at stage \(t\), and let
\[
\boldsymbol{\Sigma}_{\mathcal{A}_{t}}
=
\boldsymbol{\Sigma}_{(i_1,\ldots,i_{t-1})}
\]
be the corresponding active covariance matrix. Write
\[
\boldsymbol{B}_{\mathcal{A}_{t}}
=
\boldsymbol{\Sigma}_{\mathcal{A}_{t}}^{-1}
=
\big(\big(b^{\mathcal{A}_{t}}_{kl}\big)\big)_{k,l\in \mathcal{A}_{t}}.
\]
For \(j\in \mathcal{A}_{t}\), let \(\boldsymbol{b}^{\mathcal{A}_{t}}_{j}\) denote the column of \(\boldsymbol{B}_{\mathcal{A}_{t}}\)
corresponding to the coordinate \(j\), and let \(b^{\mathcal{A}_{t}}_{jj}\) denote the
corresponding diagonal entry.

We first record a standard block inverse identity that will be used to connect
the original MRD residual statistic with the active precision matrix representation.

\begin{lemma}\label{LEM_BLOCK_INVERSE_MRD}
	Let \(A\subset\{1,\ldots,m\}\), \(j\in A\), and write
	\[
	A_{-j}=A\setminus\{j\}.
	\]
	Partition the covariance matrix \(\boldsymbol{\Sigma}_A\) as
	\[
	\boldsymbol{\Sigma}_A
	=
	\begin{pmatrix}
		\sigma_{jj} & \boldsymbol{\sigma}_{j,A_{-j}}^{T}\\
		\boldsymbol{\sigma}_{j,A_{-j}} & \boldsymbol{\Sigma}_{A_{-j}}
	\end{pmatrix},
	\]
	where the coordinate \(j\) has been placed first. Let
	\[
	\boldsymbol{B}_A=\boldsymbol{\Sigma}_A^{-1}
	=
	\big(\big(b^{A}_{kl}\big)\big)_{k,l\in A}.
	\]
	Then
	\[
	b^{A}_{jj}
	=
	\left(
	\sigma_{jj}
	-
	\boldsymbol{\sigma}_{j,A_{-j}}^{T}
	\boldsymbol{\Sigma}_{A_{-j}}^{-1}
	\boldsymbol{\sigma}_{j,A_{-j}}
	\right)^{-1}.
	\]
	Moreover, the off-diagonal part of the \(j\)-th column of \(\boldsymbol{B}_A\) satisfies
	\[
	\boldsymbol{b}^{A}_{A_{-j},j}
	=
	-
	b^{A}_{jj}
	\boldsymbol{\Sigma}_{A_{-j}}^{-1}
	\boldsymbol{\sigma}_{j,A_{-j}}.
	\]
\end{lemma}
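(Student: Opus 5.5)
The plan is a direct block-inverse verification based on the Schur complement of \(\boldsymbol{\Sigma}_{A_{-j}}\) in \(\boldsymbol{\Sigma}_A\).

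First, place coordinate \(j\) first. This amounts to conjugating \(\boldsymbol{\Sigma}_A\) by a permutation matrix \(\boldsymbol{P}\). Since \((\boldsymbol{P}\boldsymbol{\Sigma}_A\boldsymbol{P}^T)^{-1}=\boldsymbol{P}\boldsymbol{\Sigma}_A^{-1}\boldsymbol{P}^T\), the entries \(b^A_{kl}\) are simply relabelled, so it suffices to work with the displayed partition.

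Next, \(\boldsymbol{\Sigma}_A\) is a principal submatrix of the positive definite \(\boldsymbol{\Sigma}\), so it is positive definite. Hence \(\boldsymbol{\Sigma}_{A_{-j}}\) is invertible, and the Schur complement
\[
s_j=\sigma_{jj}-\boldsymbol{\sigma}_{j,A_{-j}}^{T}\boldsymbol{\Sigma}_{A_{-j}}^{-1}\boldsymbol{\sigma}_{j,A_{-j}}
\]
is strictly positive. One way to see positivity is that \(s_j\) is the conditional variance of \(X_j\) given \(\boldsymbol{X}_{A_{-j}}\). Alternatively, \(\det\boldsymbol{\Sigma}_A=s_j\det\boldsymbol{\Sigma}_{A_{-j}}\) and both determinants are positive.

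Then I would write down the candidate first column of the inverse,
\[
\boldsymbol{v}=\begin{pmatrix} s_j^{-1}\\ -s_j^{-1}\boldsymbol{\Sigma}_{A_{-j}}^{-1}\boldsymbol{\sigma}_{j,A_{-j}}\end{pmatrix},
\]
and check that \(\boldsymbol{\Sigma}_A\boldsymbol{v}=\boldsymbol{e}_1\).

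For the top entry, the product is
\[
\sigma_{jj}s_j^{-1}-\boldsymbol{\sigma}_{j,A_{-j}}^T\boldsymbol{\Sigma}_{A_{-j}}^{-1}\boldsymbol{\sigma}_{j,A_{-j}}\,s_j^{-1}=s_j/s_j=1.
\]

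For the bottom block, the product is
\[
\boldsymbol{\sigma}_{j,A_{-j}}s_j^{-1}-\boldsymbol{\Sigma}_{A_{-j}}\boldsymbol{\Sigma}_{A_{-j}}^{-1}\boldsymbol{\sigma}_{j,A_{-j}}s_j^{-1}=\boldsymbol{0}.
\]

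Since \(\boldsymbol{\Sigma}_A\) is invertible, the solution of \(\boldsymbol{\Sigma}_A\boldsymbol{v}=\boldsymbol{e}_1\) is unique. Therefore \(\boldsymbol{v}\) is exactly the \(j\)-th column of \(\boldsymbol{B}_A\). Reading off its first entry gives \(b^A_{jj}=s_j^{-1}\), and its remaining entries give \(\boldsymbol{b}^A_{A_{-j},j}=-b^A_{jj}\boldsymbol{\Sigma}_{A_{-j}}^{-1}\boldsymbol{\sigma}_{j,A_{-j}}\). Both claims of the lemma follow.

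There is no real obstacle here. The only points needing care are that \(s_j\neq 0\), which follows from positive definiteness, and the bookkeeping under the permutation. The column-solving route avoids writing out the full four-block inverse.
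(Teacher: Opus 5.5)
Your proof is correct and is essentially the paper's argument. Both rest on the positive Schur complement $\sigma_{jj}-\boldsymbol{\sigma}_{j,A_{-j}}^{T}\boldsymbol{\Sigma}_{A_{-j}}^{-1}\boldsymbol{\sigma}_{j,A_{-j}}$ and read the $j$-th column of $\boldsymbol{B}_A$ off the partitioned inverse; the paper cites the full four-block inverse formula and extracts its upper-left and lower-left blocks, whereas you check $\boldsymbol{\Sigma}_A\boldsymbol{v}=\boldsymbol{e}_1$ directly and use uniqueness, which makes the argument self-contained.
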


\begin{proof}
See Appendix.
\end{proof}

We now derive the alternative representation of the MRD residual statistics.

\begin{theorem}\label{THM_MRD_PRECISION_REP}
	Let \(\mathcal{A}_{t}=\{1,\ldots,m\}\setminus\{i_1,\ldots,i_{t-1}\}\) be the active set at
	stage \(t\), and let
	\[
	\boldsymbol{B}_{\mathcal{A}_{t}}
	=
	\boldsymbol{\Sigma}_{\mathcal{A}_{t}}^{-1}
	=
	\big(\big(b^{\mathcal{A}_{t}}_{kl}\big)\big)_{k,l\in \mathcal{A}_{t}}.
	\]
	Then, for each \(j\in \mathcal{A}_{t}\), the MRD residual statistic satisfies
	\[
	U^{(i_1,\ldots,i_{t-1})}_{tj}(\boldsymbol{x})
	=
	\frac{
		\displaystyle
		\sum_{k\in \mathcal{A}_{t}}
		b^{\mathcal{A}_{t}}_{kj}x_k
	}{
		\sqrt{b^{\mathcal{A}_{t}}_{jj}}
	}.
	\]
	Consequently,
	\[
	\left|
	U^{(i_1,\ldots,i_{t-1})}_{tj}(\boldsymbol{x})
	\right|
	=
	\frac{
		\left|
		\displaystyle
		\sum_{k\in \mathcal{A}_{t}}
		b^{\mathcal{A}_{t}}_{kj}x_k
		\right|
	}{
		\sqrt{b^{\mathcal{A}_{t}}_{jj}}
	}.
	\]
\end{theorem}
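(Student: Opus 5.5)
The plan is to apply Lemma~\ref{LEM_BLOCK_INVERSE_MRD} with \(A=\mathcal{A}_{t}\) and then rewrite the numerator and denominator of \eqref{MRD_STATISTICS} directly in terms of the entries of \(\boldsymbol{B}_{\mathcal{A}_{t}}\).

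First, we match the notation. Fix \(j\in\mathcal{A}_{t}\) and set \(A_{-j}=\mathcal{A}_{t}\setminus\{j\}\). Then \(\boldsymbol{\Sigma}_{(i_1,\ldots,i_{t-1},j)}=\boldsymbol{\Sigma}_{A_{-j}}\) and \(\boldsymbol{\sigma}_{(j)}^{(i_1,\ldots,i_{t-1})}=\boldsymbol{\sigma}_{j,A_{-j}}\). Likewise \(\boldsymbol{x}^{(i_1,\ldots,i_{t-1},j)}=\boldsymbol{x}_{A_{-j}}\), the subvector of \(\boldsymbol{x}\) indexed by \(A_{-j}\). With this dictionary, the conditional variance in the denominator is
\[
\sigma_{j\cdot(i_1,\ldots,i_{t-1})}=\sigma_{jj}-\boldsymbol{\sigma}_{j,A_{-j}}^{T}\boldsymbol{\Sigma}_{A_{-j}}^{-1}\boldsymbol{\sigma}_{j,A_{-j}},
\]
which equals \(1/b^{\mathcal{A}_{t}}_{jj}\) by the first identity of the lemma.

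Second, for the numerator we use the second identity of the lemma, which gives
\[
\boldsymbol{\Sigma}_{A_{-j}}^{-1}\boldsymbol{\sigma}_{j,A_{-j}}=-\boldsymbol{b}^{\mathcal{A}_{t}}_{A_{-j},j}/b^{\mathcal{A}_{t}}_{jj}.
\]
Since \(\boldsymbol{\Sigma}_{A_{-j}}^{-1}\) is symmetric, the numerator becomes
\[
x_j+\frac{1}{b^{\mathcal{A}_{t}}_{jj}}\sum_{k\in A_{-j}}b^{\mathcal{A}_{t}}_{kj}x_k
=\frac{1}{b^{\mathcal{A}_{t}}_{jj}}\sum_{k\in\mathcal{A}_{t}}b^{\mathcal{A}_{t}}_{kj}x_k .
\]

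Third, we divide the numerator by \(\sigma^{1/2}_{j\cdot(i_1,\ldots,i_{t-1})}=(b^{\mathcal{A}_{t}}_{jj})^{-1/2}\). This yields \(\sum_{k}b^{\mathcal{A}_{t}}_{kj}x_k\big/\sqrt{b^{\mathcal{A}_{t}}_{jj}}\), and taking absolute values gives the second display. The square root is legitimate because \(b^{\mathcal{A}_{t}}_{jj}>0\): the principal submatrix \(\boldsymbol{\Sigma}_{\mathcal{A}_{t}}\) of the positive definite \(\boldsymbol{\Sigma}\) is positive definite, hence so is its inverse.

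The argument is essentially routine. The only step needing care is the bookkeeping: the vectors must be placed with coordinate \(j\) first, as in the lemma, and indices \(k\) must not be confused with positions inside \(A_{-j}\). Nothing here depends on the ordering convention, because the lemma is stated intrinsically in terms of index sets. A degenerate case arises when \(|\mathcal{A}_{t}|=1\). Then \(A_{-j}\) is empty, the residual reduces to \(x_j/\sqrt{\sigma_{jj}}\), and \(b_{jj}=1/\sigma_{jj}\), so the formula holds trivially; I would note this case separately.
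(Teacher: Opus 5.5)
Your proposal is correct and takes essentially the same route as the paper. Both apply Lemma~\ref{LEM_BLOCK_INVERSE_MRD} with \(A=\mathcal{A}_{t}\) to get \(\sigma_{j\cdot(i_1,\ldots,i_{t-1})}=1/b^{\mathcal{A}_{t}}_{jj}\) and \(\boldsymbol{\Sigma}_{A_{-j}}^{-1}\boldsymbol{\sigma}_{j,A_{-j}}=-\boldsymbol{b}^{\mathcal{A}_{t}}_{A_{-j},j}/b^{\mathcal{A}_{t}}_{jj}\); the only cosmetic difference is that the paper starts from \(\sum_{k}b^{\mathcal{A}_{t}}_{kj}x_k\) and factors out \(b^{\mathcal{A}_{t}}_{jj}\), whereas you start from the residual and collect terms, and your remarks on \(b^{\mathcal{A}_{t}}_{jj}>0\) and the case \(|\mathcal{A}_{t}|=1\) are harmless extras.
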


Theorem~\ref{THM_MRD_PRECISION_REP} provides an alternative representation of the MRD residual statistics in terms of the active precision matrix. The result shows that the residual statistic associated with a given active coordinate is determined entirely by the corresponding column of the active precision matrix, its diagonal entry, and the active observation vector. Consequently, all active residual statistics at a given stage may be computed simultaneously from a single inverse of the active covariance matrix. This observation forms the basis of the computational simplification developed below.

\begin{corollary}\label{COR_ONE_INVERSE_PER_STAGE}
	At each stage \(t\), all active MRD residual statistics
	\[
	\left\{
	U^{(i_1,\ldots,i_{t-1})}_{tj}(\boldsymbol{x}): j\in \mathcal{A}_{t}
	\right\}
	\]
	can be computed from the single active precision matrix
	\[
	\boldsymbol{B}_{\mathcal{A}_{t}}=\boldsymbol{\Sigma}_{\mathcal{A}_{t}}^{-1}.
	\]
\end{corollary}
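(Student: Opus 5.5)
The corollary is a direct reading of Theorem~\ref{THM_MRD_PRECISION_REP}, so the plan is simply to make explicit that every quantity appearing on the right-hand side of that representation is a function of \(\boldsymbol{B}_{\mathcal{A}_{t}}\) and the active data alone. Fix a stage \(t\) with active set \(\mathcal{A}_{t}=\{1,\ldots,m\}\setminus\{i_1,\ldots,i_{t-1}\}\), form \(\boldsymbol{\Sigma}_{\mathcal{A}_{t}}\) by deleting the rows and columns \(i_1,\ldots,i_{t-1}\) of \(\boldsymbol{\Sigma}\), and compute its inverse \(\boldsymbol{B}_{\mathcal{A}_{t}}\) once. This inverse exists because a principal submatrix of a positive definite matrix is positive definite.

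Next, for each \(j\in\mathcal{A}_{t}\), invoke Theorem~\ref{THM_MRD_PRECISION_REP}, which gives
\[
U^{(i_1,\ldots,i_{t-1})}_{tj}(\boldsymbol{x})
=\frac{\sum_{k\in\mathcal{A}_{t}} b^{\mathcal{A}_{t}}_{kj}x_k}{\sqrt{b^{\mathcal{A}_{t}}_{jj}}}.
\]
The numerator is the \(j\)-th entry of the vector \(\boldsymbol{B}_{\mathcal{A}_{t}}\boldsymbol{x}_{\mathcal{A}_{t}}\); this uses the symmetry of \(\boldsymbol{B}_{\mathcal{A}_{t}}\), so that \(b_{kj}=b_{jk}\). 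The denominator is the square root of the \(j\)-th diagonal entry of the same matrix, which is strictly positive since \(\boldsymbol{B}_{\mathcal{A}_{t}}\) is positive definite, or equivalently, by Lemma~\ref{LEM_BLOCK_INVERSE_MRD}, since it is the reciprocal of a positive conditional variance. The whole vector of active statistics can therefore be written as
\[
\big(\operatorname{diag}(\boldsymbol{B}_{\mathcal{A}_{t}})\big)^{-1/2}\boldsymbol{B}_{\mathcal{A}_{t}}\boldsymbol{x}_{\mathcal{A}_{t}},
\]
which requires no inversion beyond \(\boldsymbol{B}_{\mathcal{A}_{t}}\).

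Finally, this shows that no leave-one-out inverse \(\boldsymbol{\Sigma}^{-1}_{(i_1,\ldots,i_{t-1},j)}\) is needed for any \(j\). Consequently at most one inversion per stage, and hence at most \(m\) in total, is required. There is no genuine obstacle here. The only points to state carefully are the positivity of the diagonal entries, so that the square root and the division are well defined, and the symmetry used to identify the numerator with a matrix–vector product.
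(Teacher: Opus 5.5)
Your proposal is correct and follows essentially the same route as the paper, which also reads the corollary directly off Theorem~\ref{THM_MRD_PRECISION_REP}: each \(U^{(i_1,\ldots,i_{t-1})}_{tj}(\boldsymbol{x})\) uses only the \(j\)-th column and the \(j\)-th diagonal entry of \(\boldsymbol{B}_{\mathcal{A}_{t}}\), so no leave-one-out inverses are needed. Your extra points are valid refinements: invertibility of the principal submatrix, positivity of \(b^{\mathcal{A}_{t}}_{jj}\), and the vectorized form \((\mathrm{diag}(\boldsymbol{B}_{\mathcal{A}_{t}}))^{-1/2}\boldsymbol{B}_{\mathcal{A}_{t}}\boldsymbol{x}_{\mathcal{A}_{t}}\) obtained via symmetry; your closing count of at most \(m\) inversions is really the content of Corollary~\ref{COR_COMPLEXITY_REDUCTION}.
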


\begin{proof}
	By Theorem~\ref{THM_MRD_PRECISION_REP}, for each active coordinate \(j\in \mathcal{A}_{t}\),
	the statistic \(U^{(i_1,\ldots,i_{t-1})}_{tj}(\boldsymbol{x})\) is obtained from
	the \(j\)-th column of \(\boldsymbol{B}_{\mathcal{A}_{t}}\) and the diagonal element \(b^{\mathcal{A}_{t}}_{jj}\).
	Thus, once \(\boldsymbol{B}_{\mathcal{A}_{t}}\) has been computed, all active residual statistics at stage
	\(t\) are obtained directly from its columns, without computing any additional
	leave-one-out inverses.
\end{proof}

The preceding result leads immediately to the advertised computational
simplification.

\begin{corollary}\label{COR_COMPLEXITY_REDUCTION}
	In the worst possible scenario, implementation of the MRD procedure through the representation
	in Theorem~\ref{THM_MRD_PRECISION_REP} requires at most \(m\) matrix inversions,
	one at each stage of the procedure. In contrast, direct implementation from the
	original leave-one-out definition requires
	\[
	m+(m-1)+\cdots+1
	=
	\frac{m(m+1)}{2}
	\]
	matrix inversions in the worst case.
\end{corollary}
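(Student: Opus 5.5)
The plan is a direct counting argument built on Corollary~\ref{COR_ONE_INVERSE_PER_STAGE}. The step-down procedure runs through at most \(m\) stages, since each stage that does not stop rejects exactly one hypothesis, and there are only \(m\) hypotheses. The worst case is the one in which every hypothesis is rejected, so that all stages \(t=1,\ldots,m\) are reached.

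At stage \(t\), Corollary~\ref{COR_ONE_INVERSE_PER_STAGE}, which rests on Theorem~\ref{THM_MRD_PRECISION_REP}, shows that every residual statistic
\[
U^{(i_1,\ldots,i_{t-1})}_{tj}(\boldsymbol{x}), \qquad j\in\mathcal{A}_t,
\]
is a function only of \(\boldsymbol{B}_{\mathcal{A}_t}=\boldsymbol{\Sigma}_{\mathcal{A}_t}^{-1}\) and the active observations. From these statistics, the maximizing index \(\widetilde{j}_t\) and the comparison with \(C_t\) require no further inversion. Hence each stage costs exactly one inversion. Summing over at most \(m\) stages gives at most \(m\) inversions. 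At stage \(m\) the matrix is \(1\times1\), so the count is still at most \(m\).

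For the contrast, I would count the inversions required by the original definition \eqref{MRD_STATISTICS}. At stage \(t\) there are \(|\mathcal{A}_t|=m-t+1\) active coordinates \(j\). Each statistic \(U_{tj}\) uses its own matrix \(\boldsymbol{\Sigma}^{-1}_{(i_1,\ldots,i_{t-1},j)}\), and these are inverses of distinct leave-one-out submatrices, one for each \(j\). The same inverse also gives \(\sigma_{j\cdot(i_1,\ldots,i_{t-1})}\). So stage \(t\) needs \(m-t+1\) inversions. At the last stage the submatrix is empty, and it can still be counted as one trivial inversion to match the paper's convention. Summing over \(t=1,\ldots,m\) gives
\[
\sum_{t=1}^{m}(m-t+1)=\frac{m(m+1)}{2}.
\]

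The only delicate point is to be precise about what is being counted: the worst case for both implementations is the scenario in which the procedure reaches the final stage. I do not expect any real obstacle.
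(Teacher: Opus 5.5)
Your proposal is correct and follows the paper's own argument: at most one inversion per stage via Corollary~\ref{COR_ONE_INVERSE_PER_STAGE} over at most $m$ stages, versus $m-t+1$ separate leave-one-out inversions at stage $t$, summing to $m(m+1)/2$. Your remark that the stage-$m$ leave-one-out submatrix is empty, so the final term is a trivial inversion kept only by convention, is a fair refinement that the paper does not mention.
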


\begin{proof}
	At stage \(t\), the direct leave-one-out implementation computes a separate
	inverse for each active coordinate \(j\in \mathcal{A}_{t}\), namely the inverse of
	\(\boldsymbol{\Sigma}_{\mathcal{A}_{t}\setminus\{j\}}\). Since \(|\mathcal{A}_{t}|=m-t+1\), this requires
	\(m-t+1\) matrix inversions at stage \(t\). If the procedure continues until the
	last stage, the total number of inversions is therefore
	\[
	\sum_{t=1}^{m}(m-t+1)
	=
	m+(m-1)+\cdots+1
	=
	\frac{m(m+1)}{2}.
	\]
	
	On the other hand, by Corollary~\ref{COR_ONE_INVERSE_PER_STAGE}, all active MRD
	statistics at stage \(t\) can be computed from the single inverse
	\(\boldsymbol{\Sigma}_{\mathcal{A}_{t}}^{-1}\). Hence at most one matrix inversion is needed per stage.
	Since the procedure has at most \(m\) stages, the total number of matrix
	inversions is at most \(m\).
\end{proof}

This representation is the one used in our numerical implementation. At stage \(t\), after computing the active precision matrix
\(\boldsymbol{B}_{A_t}=\boldsymbol{\Sigma}_{A_t}^{-1}\), let \(\boldsymbol{b}_j^{A_t}\) denote its
\(j\)-th column vector of \(\boldsymbol{B}_{A_t}\) and \(b_{jj}^{A_t}\) its \(j\)-th diagonal
element. Then the active MRD statistics are obtained as
\[
\left|
U^{(i_1,\ldots,i_{t-1})}_{tj}(\boldsymbol{x})
\right|
=
\frac{
	\left|
	\big(\boldsymbol{b}^{\mathcal{A}_{t}}_{j}\big)^T\boldsymbol{x}_{\mathcal{A}_{t}}
	\right|
}{
	\sqrt{b^{\mathcal{A}_{t}}_{jj}}
},
\qquad j\in \mathcal{A}_{t}.
\]
The hypothesis corresponding to the largest of these quantities is then compared
with the stagewise critical value. This formulation avoids repeated leave-one-out
matrix inversions and is therefore particularly useful when the number of
hypotheses is moderately large or large.

Beyond its computational advantages, the representation in Theorem~\ref{THM_MRD_PRECISION_REP} reveals an interesting connection between residual-based multiple testing and precision-matrix geometry. The contribution of each active coordinate is determined by the corresponding column of the active precision matrix, suggesting that conditional dependence relationships play a direct role in the propagation of information through the sequential testing process. In particular, Theorem 1 shows that the MRD statistic associated
with an active hypothesis can be interpreted as a normalized projection of the active data vector onto a direction determined by the corresponding column of the active precision matrix.

The computational benefits can be even more substantial for particular covariance structures. For example, under equicorrelation dependence, the active covariance matrix retains its equicorrelation form throughout the sequential elimination process. Consequently, the corresponding active precision matrix admits an explicit closed-form representation at every stage, allowing all residual statistics to be computed without performing any numerical matrix inversion. Thus, in certain highly structured dependence settings, the precision-matrix formulation yields not only a reduction in computational complexity but an entirely inversion-free implementation of the MRD procedure.

\section{Simulations}\label{SECTION_SIMULATIONS}
The simulation study serves three complementary objectives. First, we investigate the operating characteristics of the proposed GBS-calibrated MRD procedure under a broad collection of covariance dependence structures. Second, we compare its performance with several widely used multiple testing procedures in terms of normalized misclassification risk, false discovery rate, false non-discovery rate, power, and average number of rejections. Third, we examine the extent to which covariance-adaptive residualization, when combined with stagewise GBS calibration, can facilitate accurate signal recovery under dependence. Particular emphasis is placed on understanding how the geometry of the underlying covariance structure influences the propagation of information through the sequential testing process and the resulting classification performance.

\subsection{Performance Measures}

For a given multiple testing procedure, let \(R\) denote the total number
of rejections, \(V\) the number of false rejections, \(S\) the number of
true rejections, and \(T\) the number of false non-rejections. Let \(m_1\)
denote the number of true alternatives. We evaluate competing procedures
using the following performance measures:
\[
\mathrm{FDR}
=
E\left(\frac{V}{R\vee 1}\right),
\]
\[
\mathrm{Power}
=
E\left(\frac{S}{m_1}\right),
\]
\[
\mathrm{FNR}
=
E\left(\frac{T}{(m-R)\vee 1}\right),
\]
and the normalized misclassification risk
\[
\mathrm{NMR}
=
\frac{E(V+T)}{m}.
\]
Here \(V+T\) is the total number of classification errors. Thus, the normalized misclassification risk is directly related to the total loss function discussed in Section~\ref{SECTION_MRD_GBS_DEFINITION} and may be interpreted as the expected proportion of incorrectly classified hypotheses. Since the proposed methodology is motivated by multiple decision theory and admissibility considerations, normalized misclassification risk serves as the primary performance criterion throughout the simulation study. The remaining measures provide additional insight into the trade-off between false discoveries, missed discoveries, and signal recovery.

\subsection{Data Generation Mechanism}

To assess the finite-sample performance of the competing procedures under a variety of dependence structures, we consider the Gaussian multiple testing model
\[
\boldsymbol{X}\sim N_{m}(\boldsymbol{\theta},\boldsymbol{\Sigma}),
\]
where  $\boldsymbol{\theta}=(\theta_1,\ldots,\theta_m)^T$ denotes the unknown mean vector and $\boldsymbol{\Sigma}$ is a known positive definite covariance matrix.

% The objective is to simultaneously test
%\[
%H_i:\theta_i=0
%\qquad \text{versus} \qquad
%K_i:\theta_i\neq 0,
%\quad i=1,\ldots,m.
%\]
%
% We take
%\(m=100\) and consider sparsity levels
%\[
%p\in\{0.01,0.03,0.05,0.075,0.10,0.15,0.20,0.25,0.30,0.35\}.
%\]

For each covariance structure, we generated data from the Gaussian multiple testing model described in Section~2. Throughout the simulation study, we consider dimensions $m=100$ and $m=200$ and sparsity levels
\[
p\in\{0.01,0.03,0.05,0.075,0.10,0.15,0.20,0.25,0.30,0.35\},
\]
and a nominal significance level \(\alpha=0.10\) for all competing procedures.

For each sparsity level $p$, signal locations are generated independently via
Bernoulli indicators $\nu_i \sim \mathrm{Bernoulli}(p)$, $i=1,\ldots,m$.
Whenever $\nu_i = 1$, the corresponding mean is assigned the value $+\mu$
or $-\mu$ with equal probability, while $\nu_i = 0$ implies $\theta_i = 0$.
Throughout the simulation study, we set
\[
\mu=\sqrt{2\log m}.
\]

This choice corresponds to the classical universal threshold
\(\sqrt{2\log m}\) introduced by \citet{DJ1994},
which plays a fundamental role in sparse high-dimensional
estimation, signal recovery, and multiple testing. Thresholds of this order play a central role in sparse high-dimensional inference and multiple testing, where they often characterize the boundary between detectable and undetectable signals; see, for example, \citet{ABDJ2006}, \citet{JS2004}, \citet{BCFG2011}, and \citet{HJ2010}. Consequently, the resulting simulation setting is neither trivially easy nor excessively difficult, thereby providing a meaningful benchmark for assessing the ability of competing procedures to recover sparse signals under dependence.

For each sparsity level and covariance structure, the performance measures are estimated using \(G=3000\)
Monte Carlo replications.

%For each value of $p$, the expected number of non-null signals is $mp$. For each coordinate \(i=1,\ldots,m\), the signal indicator is generated
%independently as
%\[
%\nu_i\sim \mathrm{Bernoulli}(p).
%\]
%Conditional on \(\nu_i=1\), the corresponding mean is set equal to
%\[
%\theta_i=\pm \mu,
%\qquad
%\mu=\sqrt{2\log m},
%\]
%with the sign chosen independently with equal probability. If
%\(\nu_i=0\), then \(\theta_i=0\). For each sparsity level and covariance
%structure, the performance measures are estimated using \(3000\)
%Monte Carlo replications.

To investigate the effect of problem dimension on the performance of the competing
procedures, simulations were conducted for both $m=100$ and $m=200$. The results
for $m=200$ are presented in the main text, while the corresponding results for
$m=100$ are reported in Appendix C. The larger dimension more clearly reveals
the finite-sample manifestation of the asymptotic behavior suggested by several
of the competing procedures. Consequently, our discussion focuses primarily on
the $m=200$ experiments.

%To investigate the effect of problem dimension on the performance of the competing procedures, simulations were conducted for both $m=100$ and $m=200$. The results for $m=200$ are presented in the main text, while the corresponding results for $m=100$ are reported in Appendix~C. Since the larger dimension more clearly reveals the empirical large-sample operating characteristics of the competing procedures, our discussion focuses primarily on the $m=200$ experiments.

\subsection{Dependence Structures}

To assess the robustness of the proposed methodology under a broad range
of dependence settings, we consider the following covariance models.

\paragraph{Equicorrelation Model.}
The covariance matrix is given by
\[
\Sigma_{ij}
=
\rho +(1-\rho)I(i=j) \ \textrm{ where } \rho =0.7.
\]
%where \(\rho\in\{0.1,0.3,0.5,0.7\}\).

\paragraph{Toeplitz Model.}
The covariance matrix is specified by
\[
\Sigma_{ij}
=
\rho^{|i-j|},  \ \textrm{ where } \rho =0.9.
\]
%where \(\rho\in\{0.3,0.5,0.7\}\).

\paragraph{Factor Model.}
Observations are generated according to
\[
X_i
=
\theta_i+\lambda_iF+\varepsilon_i,
\qquad
i=1,\ldots,m,
\]
where \(F\sim N(0,1)\) is a common factor,
\(\varepsilon_i\stackrel{\mathrm{ind}}{\sim}N(0,1)\),
and \(\lambda_i\) are independently generated from a
\(\mathrm{Uniform}(0.5,1)\) distribution.

\paragraph{Fractional Gaussian Noise Model.}
The covariance matrix is generated from a fractional Gaussian noise
process with Hurst parameter
\[
H = 0.9.
%\in\{0.6,0.7,0.8,0.9\}.
\]
This model exhibits long-range dependence whose strength increases with
\(H\).

\paragraph{Heterogeneous Block Model.}
The coordinates are partitioned into blocks of varying sizes. Within
each block, observations follow an equicorrelated covariance structure,
while observations belonging to different blocks are independent. The
within-block correlations vary across blocks.

\paragraph{Sparse Precision Model.}
The precision matrix
\[
\boldsymbol{\Omega}=\boldsymbol{\Sigma}^{-1}
\]
is generated to be sparse, with each coordinate directly connected to
only a small number of neighboring coordinates. The covariance matrix
is obtained by inverting \(\boldsymbol{\Omega}\).

Collectively, these covariance structures range from highly global forms of dependence to substantially more localized conditional dependence patterns. This diversity allows us to investigate how different covariance geometries influence the performance of residual-based multiple testing procedures and, in particular, the effectiveness of covariance-adaptive information propagation.
\subsection{Competing Procedures}

The proposed GBS-calibrated MRD procedure is compared with the following
multiple testing procedures.

\begin{enumerate}
	
	\item The original MRD procedure of
	\citet{COHEN_SACK_XU_2009}.
	
	\item The Benjamini--Hochberg (BH) procedure
	\citep{BH1995}.
	
	\item Storey's adaptive BH procedure
	\citep{STO_2002,STS_2004}.
	
	\item The Gavrilov-Benjamini--Sarkar (GBS) procedure
	\citep{GBS2009}.
	
\end{enumerate}

The inclusion of the original MRD procedure allows us to isolate the effect of the proposed calibration strategy from that of covariance-adaptive residualization itself. Consequently, comparisons between MRD and MRD-GBS reveal the impact of calibration, whereas comparisons with BH, Storey, and GBS assess the overall benefits of incorporating covariance information directly into the testing statistics. Since the latter procedures are based on marginal test statistics, these comparisons provide insight into the practical advantages of residual-based multiple testing under dependence.

%The inclusion of the original MRD procedure allows us to isolate the effect of the proposed calibration strategy from that of the underlying residual-based testing mechanism. Comparisons with BH, adaptive BH, and GBS provide benchmarks against widely used marginal \(p\)-value based multiple testing procedures. This distinction is important for interpreting the simulation results. The original MRD procedure serves as the natural residual-based benchmark, allowing the effect of calibration to be separated from that of covariance-adaptive residualization. In contrast, BH, Storey-type adaptive BH, and GBS represent the principal marginal testing competitors against which the practical gains of the proposed methodology should be assessed.

%The inclusion of the original MRD procedure allows us to isolate the effect of the proposed calibration strategy from that of the underlying residual-based testing mechanism. Comparisons with BH, adaptive BH, and
%GBS provide benchmarks against widely used marginal \(p\)-value based multiple testing procedures. This distinction is important for interpreting the simulation results. The original MRD procedure is included primarily as an internal residual-based benchmark, whereas BH, Storey-type adaptive BH, and GBS represent the principal marginal testing competitors against which the practical gains of the proposed method should be assessed.

\subsection{Simulation Results}

The simulation study reveals several remarkably consistent patterns across the six covariance structures considered in this paper. Since normalized misclassification risk is directly related to the underlying multiple-decision loss function, our primary focus is on comparing the overall classification accuracy of the competing procedures. Additional insight is obtained through separate analyses of false discovery rates, false non-discovery rates, power, and average numbers of rejections.

Throughout the discussion, particular emphasis is placed on the $m=200$ experiments reported in the main text. While the corresponding results for $m=100$ exhibit broadly similar trends, the larger dimension more clearly reveals the finite-sample manifestation of the large-sample operating characteristics of the competing procedures. Although the quantitative differences become substantially larger when $m=200$, the qualitative ranking of the competing procedures remains largely unchanged between the two dimensions, providing additional evidence for the robustness of the observed patterns. The $m=100$ results are therefore deferred to Appendix~C and serve primarily as supplementary evidence supporting the conclusions drawn from the larger-dimensional setting.

Several findings emerging from the simulation study were considerably stronger than we initially anticipated. Among the most noteworthy is the observation that the strongest overall signal-recovery behavior is often achieved by the proposed GBS-calibrated MRD procedure rather than by the original MRD
procedure of \citet{COHEN_SACK_XU_2009} itself. Since both procedures employ the same covariance-adaptive residualization mechanism, this finding suggests that calibration may play a more substantial role than previously anticipated in translating residualized information into accurate signal recovery.

While the proposed GBS-calibrated MRD procedure was motivated primarily by the goal of developing a systematic calibration strategy within the admissible residual-based testing framework, the resulting procedure consistently exhibited remarkably strong classification and signal-recovery performance across a broad collection of dependence structures. In particular, the simulations reveal that covariance-adaptive
residualization and stagewise GBS calibration interact in a highly favorable manner, often producing substantially improved classification accuracy and signal recovery not only relative to the competing marginal procedures, but also relative to the original MRD procedure itself.

\subsubsection{Normalized Misclassification Rate}

Normalized misclassification rate (NMR) is the primary performance criterion considered in this paper. Recall that the NMR is defined as 

$$\textrm{NMR}=\frac{E(V+T)}{m},$$
where \(V\) and \(T\) denote the numbers of false rejections and false non-rejections, respectively. Consequently, \(\textrm{NMR}\) measures the expected proportion of incorrectly classified hypotheses among all $m$ null hypotheses under consideration and is directly related to the total loss function discussed in Section~\ref{SECTION_MRD_GBS_DEFINITION}. From a practical perspective, a reduction in normalized misclassification risk corresponds directly to a reduction in the total number of incorrectly classified hypotheses and therefore provides a
natural measure of overall decision quality.

Figure~\ref{FIG_NMR_m200} provides a graphical summary of the NMR performance of all competing procedures across the six dependence structures considered in this study.

\begin{figure}[htbp]
	\centering
	
	\begin{subfigure}[b]{0.48\textwidth}
		\centering
		\includegraphics[width=\textwidth]{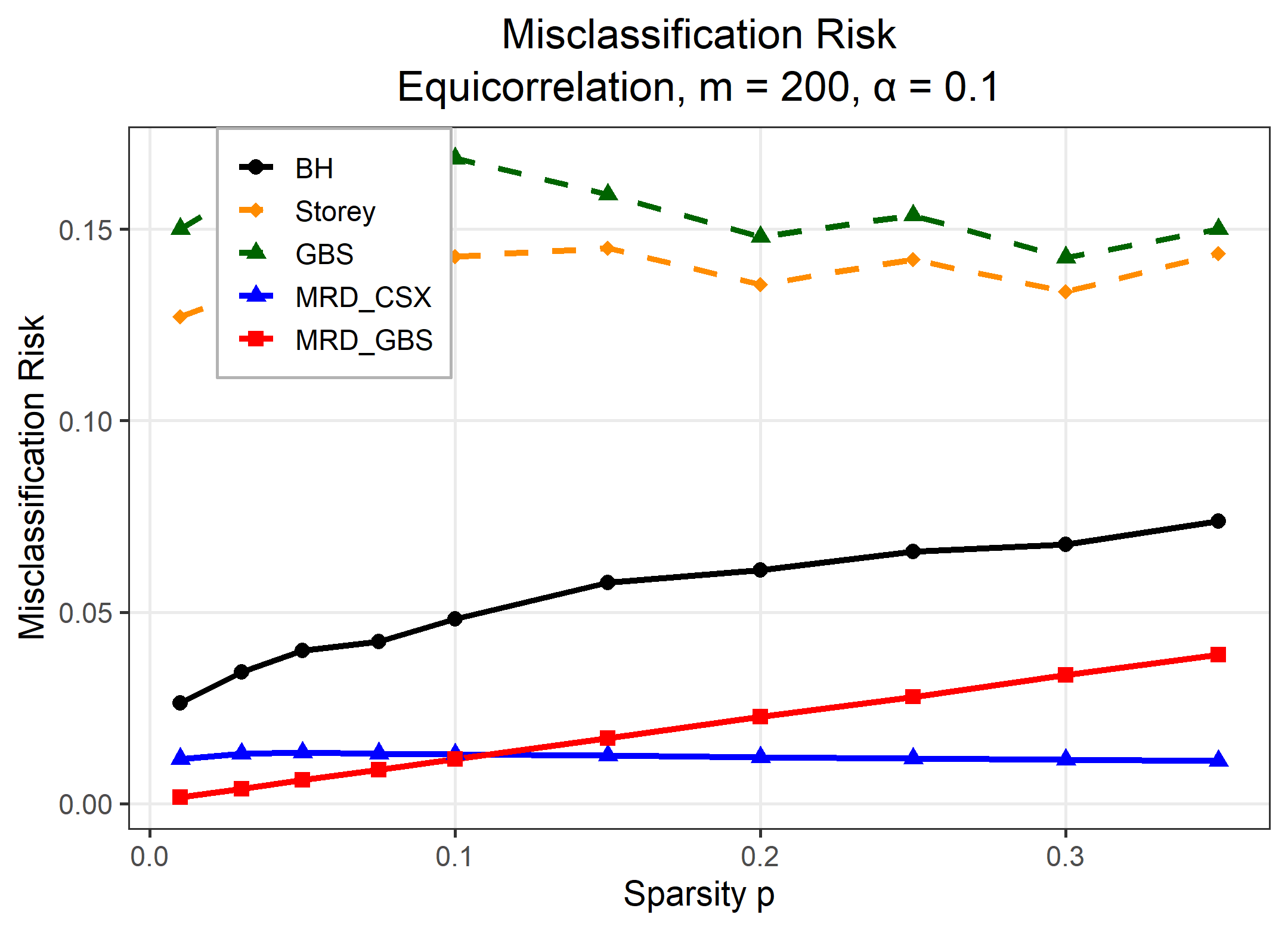}
		\caption{Equicorrelation (\(\rho=0.7\))}
	\end{subfigure}
	\hfill
	\begin{subfigure}[b]{0.48\textwidth}
		\centering
		\includegraphics[width=\textwidth]{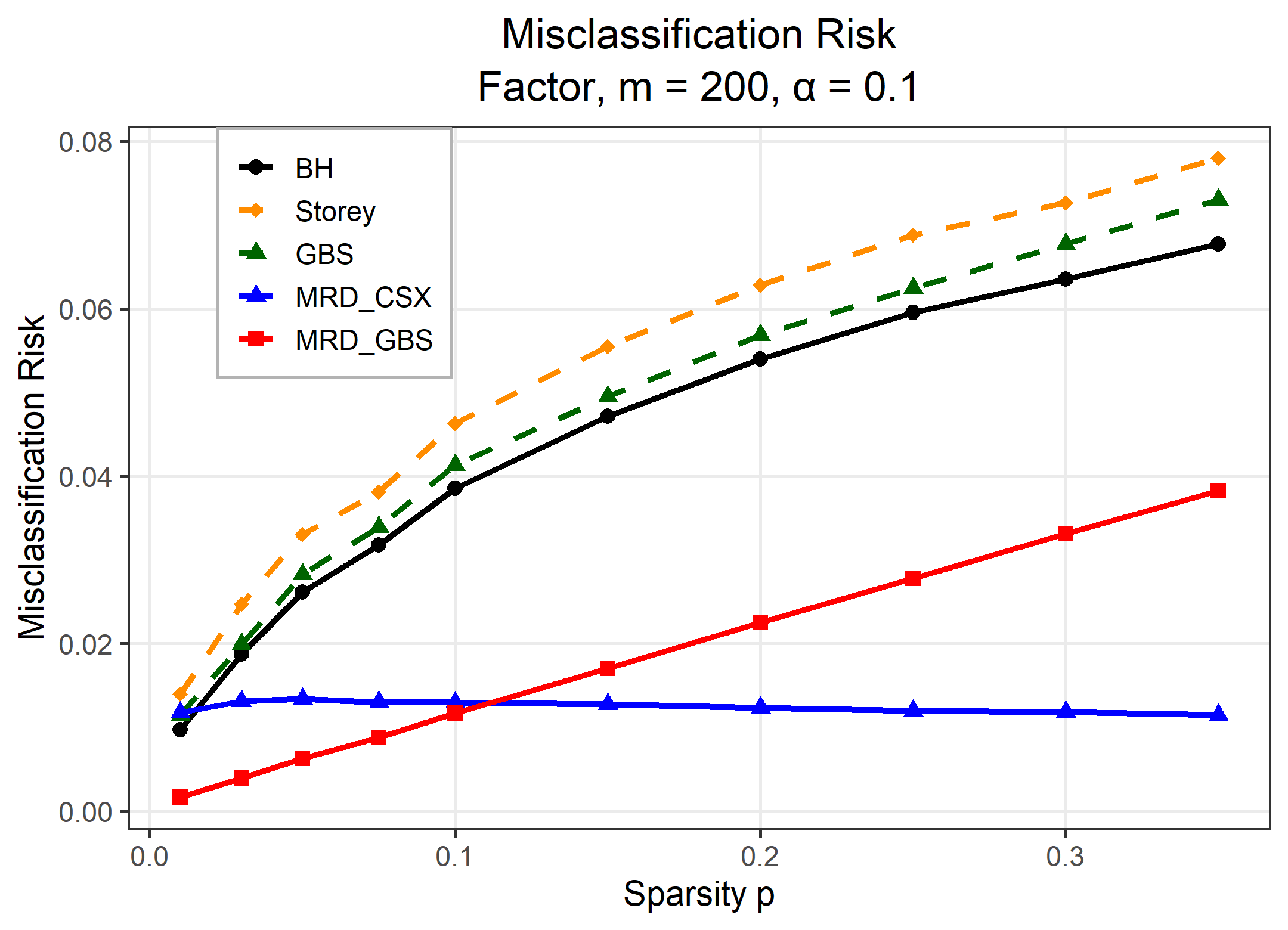}
		\caption{Factor model}
	\end{subfigure}
	
	\vspace{0.3cm}
	
	\begin{subfigure}[b]{0.48\textwidth}
		\centering
		\includegraphics[width=\textwidth]{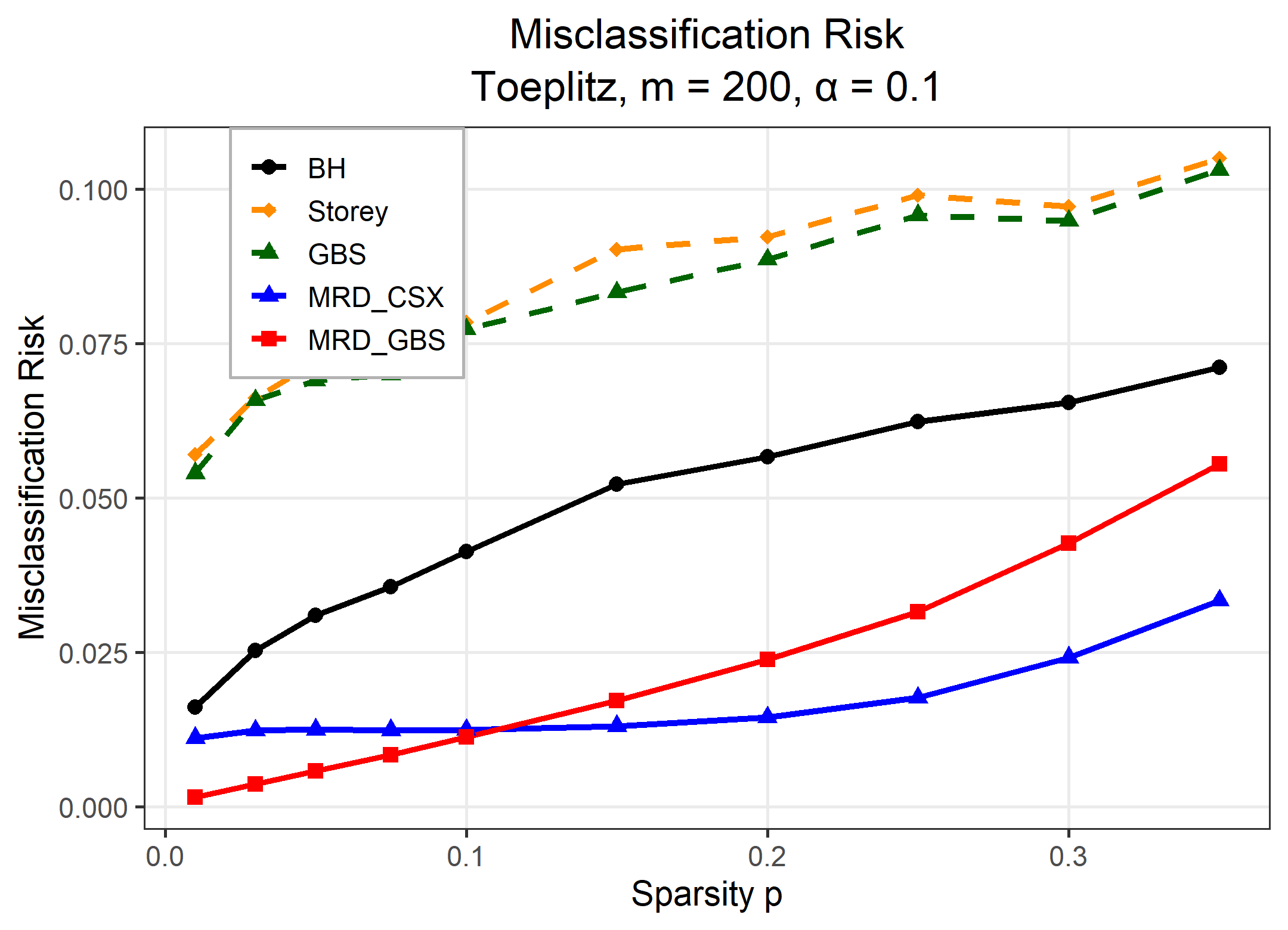}
		\caption{Toeplitz (\(\rho=0.9\))}
	\end{subfigure}
	\hfill
	\begin{subfigure}[b]{0.48\textwidth}
		\centering
		\includegraphics[width=\textwidth]{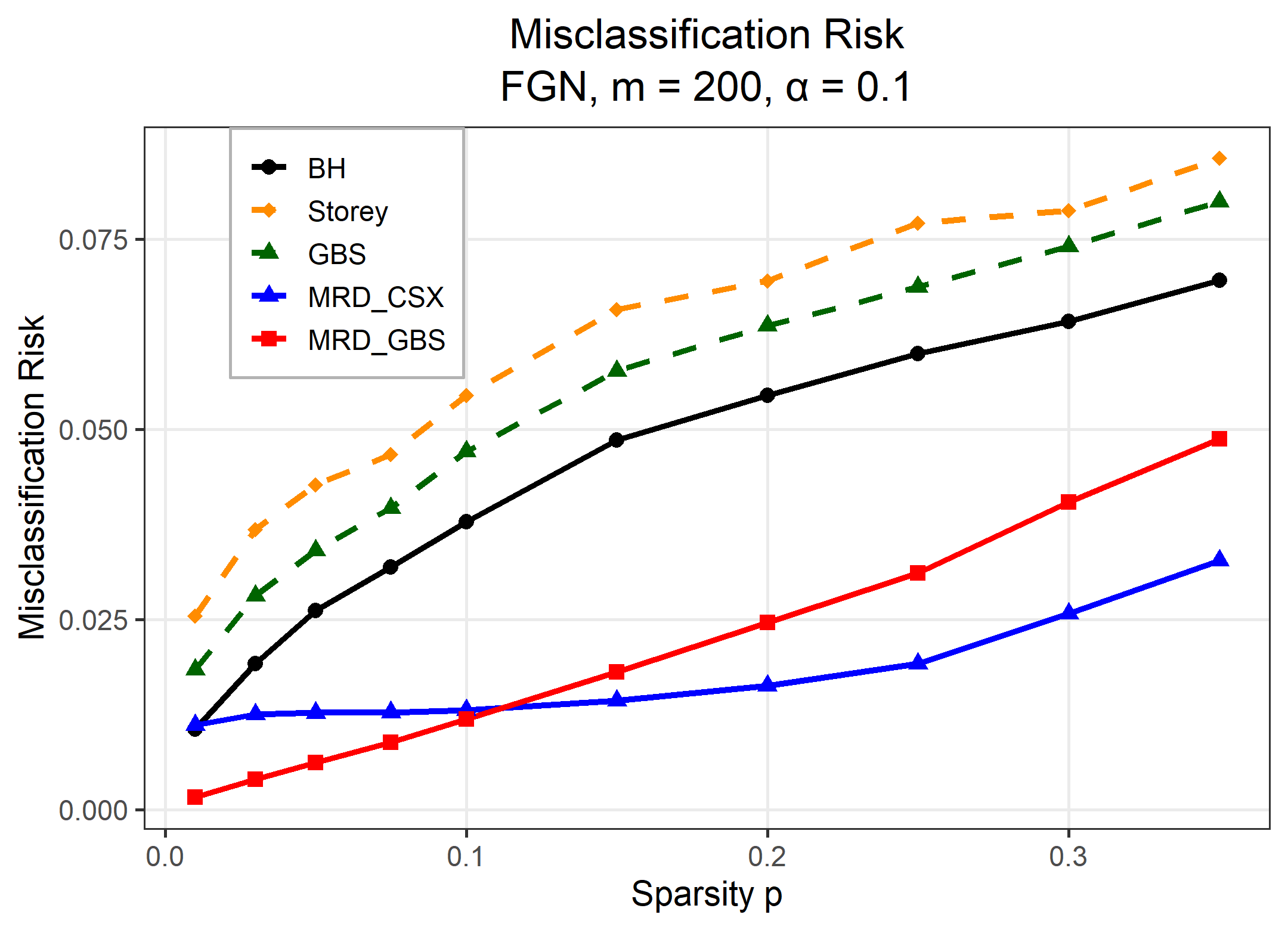}
		\caption{Fractional Gaussian Noise (\(H=0.9\))}
	\end{subfigure}
	
	\vspace{0.3cm}
	
	\begin{subfigure}[b]{0.48\textwidth}
		\centering
		\includegraphics[width=\textwidth]{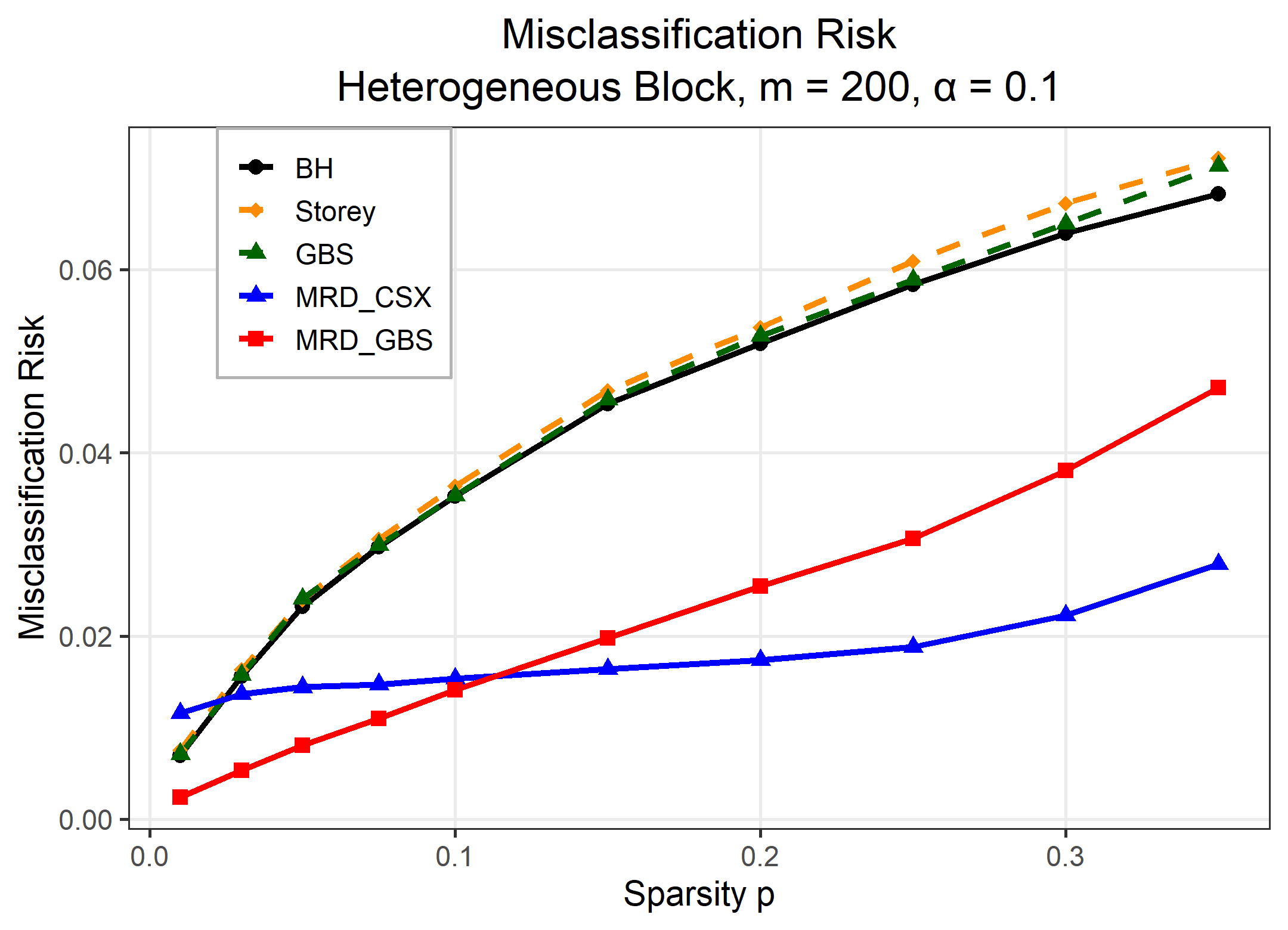}
		\caption{Heterogeneous Block}
	\end{subfigure}
	\hfill
	\begin{subfigure}[b]{0.48\textwidth}
		\centering
		\includegraphics[width=\textwidth]{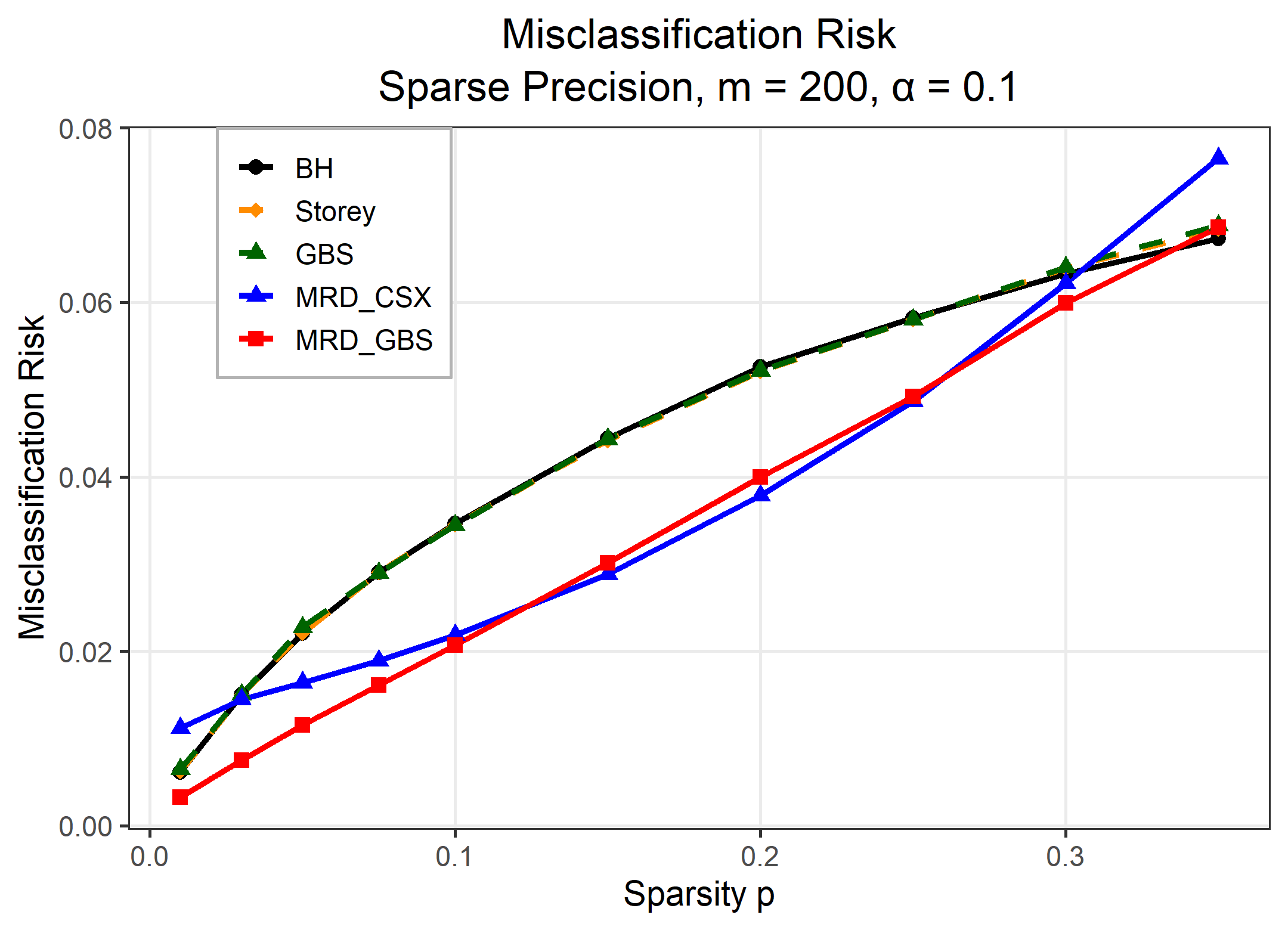}
		\caption{Sparse Precision Matrix}
	\end{subfigure}
	
	\caption{
	Normalized misclassification rates (NMR) for the competing multiple testing
	procedures under six representative dependence structures. The plots reveal
	a remarkably consistent pattern across several covariance models: the
		GBS-calibrated MRD procedure tends to achieve the smallest misclassification
		rates in sparse and moderately sparse regimes, while the original MRD
		procedure often becomes more competitive as the signal proportion increases.
		The sparse precision model exhibits noticeably different behavior,
		highlighting the potential influence of covariance geometry on classification
		performance.
	}
	\label{FIG_NMR_m200}
\end{figure}

Tables~\ref{tab:risk_equicorrelation_m200},
\ref{tab:risk_factor_m200},
\ref{tab:risk_fractional_gaussian_noise_m200},
\ref{tab:risk_toeplitz_m200},
\ref{tab:risk_heterogeneous_block_m200},
and
\ref{tab:risk_sparse_precision_m200}
summarize the normalized misclassification rates (NMR) achieved by the
competing procedures under the various dependence structures considered
in this study. 
%Recall that NMR is defined as
%\[
%\mathrm{NMR}
%=
%\frac{E(V+T)}{n},
%\]
%where \(V\) and \(T\) denote the numbers of false rejections and false
%non-rejections, respectively. Consequently, NMR measures the expected
%proportion of incorrectly classified hypotheses and is directly related
%to the total loss function discussed in Section~\ref{SECTION_MRD_GBS_DEFINITION}.

Several interesting patterns emerge from these results. First, when compared
with the marginal \(p\)-value based procedures, the GBS-calibrated MRD
procedure yields substantial reductions in normalized misclassification risk
over a broad range of sparsity levels. This comparison is particularly relevant
from a practical perspective, since BH, Storey-type adaptive BH, and the
original GBS procedure represent widely used marginal testing benchmarks.
The original MRD procedure of \citet{COHEN_SACK_XU_2009}
serves as an important residual-based benchmark, but its stagewise critical
constants are model-dependent and were selected through numerical
experimentation. Thus, the main empirical message is not merely that the
proposed calibration approximates the behavior of the original MRD procedure,
but that it provides a systematic covariance-aware alternative that can
substantially improve classification accuracy relative to standard marginal
testing methods.

A second and particularly noteworthy phenomenon concerns the relative
performance of the two MRD procedures. Across the extreme-to-moderately
sparse regimes, typically for signal proportions up to approximately
\(p=0.15\)--\(0.20\), the GBS-calibrated MRD procedure frequently
achieves the smallest normalized misclassification rates among all
competing methods. This behavior is remarkably consistent across
several dependence structures, including equicorrelated, factor,
Toeplitz, fractional Gaussian noise, and heterogeneous block covariance
models. These findings suggest that the GBS calibration successfully
balances false discoveries and missed discoveries in the sparse
settings for which large-scale multiple testing procedures are most
commonly intended.

Another noteworthy aspect of the results is that the superiority of the
GBS-calibrated MRD procedure is not confined to a narrow range of sparsity
levels. Across several dependence structures, the reductions in normalized
misclassification risk remain substantial over a broad portion of the sparse
and moderately sparse regimes, suggesting that the observed gains are both
stable and practically meaningful. The consistency of these improvements across several qualitatively distinct
covariance structures suggests that the observed gains are not tied to a
particular dependence model but instead reflect a more general advantage
of combining covariance-adaptive residualization with stagewise GBS
calibration.

%%%% Move to Section~\ref{SECTION_SIMULATIONS}.5.2

%An important feature emerging from the simulation study is the effect of increasing the problem dimension from $m=100$ to $m=200$. Although the qualitative conclusions remain largely unchanged across the two settings, the advantages of the proposed GBS-calibrated MRD procedure become considerably more pronounced in the larger-dimensional experiments. In particular, the reductions in normalized misclassification risk relative to the marginal procedures become more substantial under several dependence structures, while the superiority of the proposed method throughout sparse and moderately sparse regimes becomes increasingly stable. Thus, the $m=200$ experiments provide considerably stronger empirical evidence that the benefits of covariance-adaptive residualization and stagewise GBS calibration become more visible as the dimension of the testing problem increases.
%
%
%The larger-dimensional experiments reveal a phenomenon that is only weakly visible when $m=100$ but becomes strikingly apparent when $m=200$. Across several dependence
%structures, the proposed procedure simultaneously maintains false
%discovery rates close to the nominal level, extremely small false
%non-discovery rates, powers approaching one, and average numbers of
%rejections remarkably close to the expected number of true signals.
%Taken together, these findings provide substantially stronger empirical
%evidence of near-support-recovery behavior in the larger-dimensional
%setting.

%%%% Move to Section~\ref{SECTION_SIMULATIONS}.5.2

As the signal proportion increases, however, the advantage of the
GBS-calibrated procedure gradually diminishes and may eventually
reverse. In denser regimes, the original MRD procedure often becomes
more competitive and frequently attains smaller normalized
misclassification rates than its calibrated counterpart. This behavior
appears to reflect a different balance between type I and type II
errors. The GBS calibration substantially reduces false discoveries,
which is particularly beneficial in sparse settings. When the signal
proportion becomes larger, however, the more aggressive rejection
behavior of the original MRD procedure can lead to fewer missed
discoveries, thereby improving overall classification performance.

The magnitude of the observed reductions in normalized misclassification risk is sufficiently large that it cannot be explained solely through modest improvements in either type I or type II error rates. Rather, the results suggest that the proposed procedure is simultaneously reducing both types of classification errors over substantial portions of the sparsity range. This observation motivates the more detailed investigation of false discovery rates, false non-discovery rates, power, and average numbers of rejections presented in the next subsection.

An additional empirical pattern emerges across several dependence structures. The proposed MRD-GBS procedure often attains the smallest normalized misclassification rates in sparse regimes, whereas the original MRD-CSX procedure frequently becomes competitive and may eventually overtake as the signal proportion increases. This crossover behavior appears consistently across the factor, fractional Gaussian noise, Toeplitz, heterogeneous block, and sparse precision models, suggesting that the relative advantages of the two calibration schemes depend on the underlying sparsity level. In particular, the proposed calibration appears especially effective in highly sparse settings, while the original calibration may become increasingly competitive as the signal configuration becomes denser.

Taken together, the NMR results demonstrate that the proposed GBS-calibrated MRD procedure frequently achieves the smallest classification risk throughout sparse and moderately sparse regimes under a wide variety of dependence structures. The improvements are particularly substantial relative to the marginal procedures BH, adaptive BH, and GBS, highlighting the benefits of incorporating covariance information directly into the testing statistics. Although the advantage becomes less pronounced in denser regimes and under sparse precision dependence, the overall evidence indicates that covariance-adaptive residualization combined with stagewise GBS calibration can substantially improve classification accuracy in dependent multiple testing problems. To better understand the mechanisms underlying these risk reductions, we now examine the corresponding FDR, FNR, power, and average rejection characteristics.

\subsubsection{Signal Recovery Performance}

The normalized misclassification risk results reported in the previous subsection
suggest that the proposed GBS-calibrated MRD procedure achieves a particularly
effective balance between false discoveries and missed discoveries in sparse and
moderately sparse regimes. To understand the mechanism underlying these risk
reductions more directly, we now examine the corresponding signal-recovery
characteristics through the empirical false discovery rates (FDR), false
non-discovery rates (FNR), powers, and average numbers of rejections (ANR)
reported in Figures~\ref{FIG_FDR_m200}--\ref{FIG_ANR_m200} and Tables~\ref{tab:fdr_equicorrelation_m200}--\ref{tab:rej_sparse_precision_m200}.

Taken together, these quantities provide considerably more information than any
single performance measure alone. In particular, simultaneous occurrence of
false discovery rates near the nominal level, extremely small false
non-discovery rates, powers approaching one, and average numbers of rejections
close to the expected number of true signals provides strong empirical evidence
of near-support-recovery behavior. Remarkably, this phenomenon emerges under
several of the dependence structures considered in this study and is most
consistently exhibited by the proposed GBS-calibrated MRD procedure.

A particularly striking feature of the larger-dimensional experiments
($m=200$) is the strength of the observed signal-recovery behavior.
Under the equicorrelation, factor, Toeplitz, fractional Gaussian noise,
and heterogeneous block covariance models, the proposed procedure
frequently maintains false discovery rates close to the nominal level
while simultaneously producing extremely small false non-discovery rates.
These low FNR values translate directly into powers that are often close
to one over substantial portions of the sparsity range. Moreover, the
corresponding average numbers of rejections frequently remain remarkably
close to the expected number of true signals. Viewed collectively, these
findings suggest that the procedure is often able to recover the
underlying support of the signal vector with surprisingly high accuracy.

Perhaps the most noteworthy aspect of these findings is that the strongest
signal-recovery behavior is observed under the proposed GBS calibration
rather than under the original MRD procedure itself. Since both procedures
employ the same covariance-adaptive residualization mechanism, the observed
differences cannot be attributed solely to residualization. Instead, the
results suggest that stagewise calibration plays a crucial role in
determining how effectively information accumulated through sequential
residualization is translated into accurate recovery of the underlying
signal set. Viewed differently, residualization determines how information is constructed and propagated through the active set, whereas calibration determines how that accumulated evidence is converted into testing decisions.

\begin{figure}[p]
	\centering
	
	\begin{tabular}{cc}
		
		\includegraphics[width=0.47\textwidth]
		{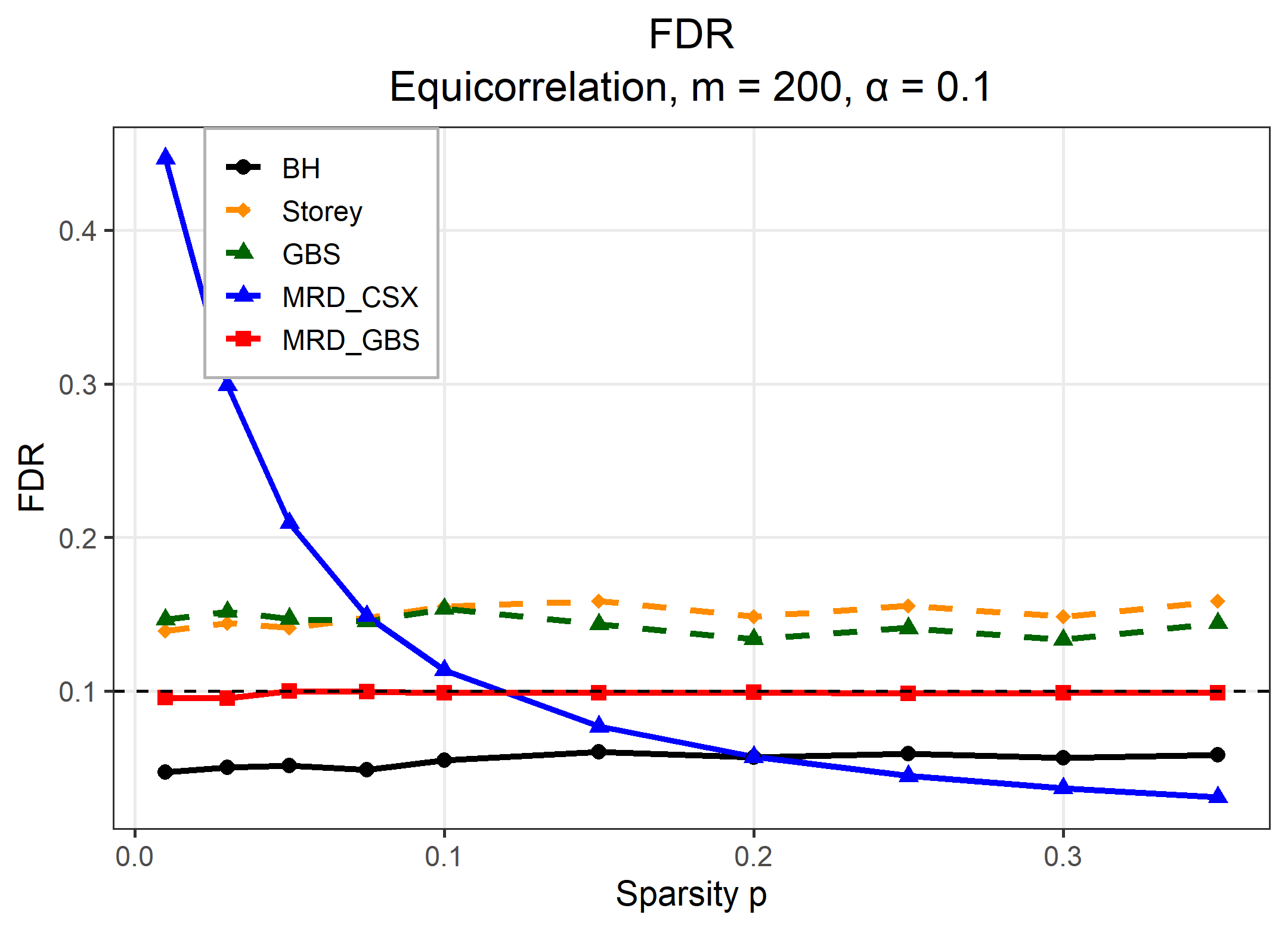}
		&
		\includegraphics[width=0.47\textwidth]
		{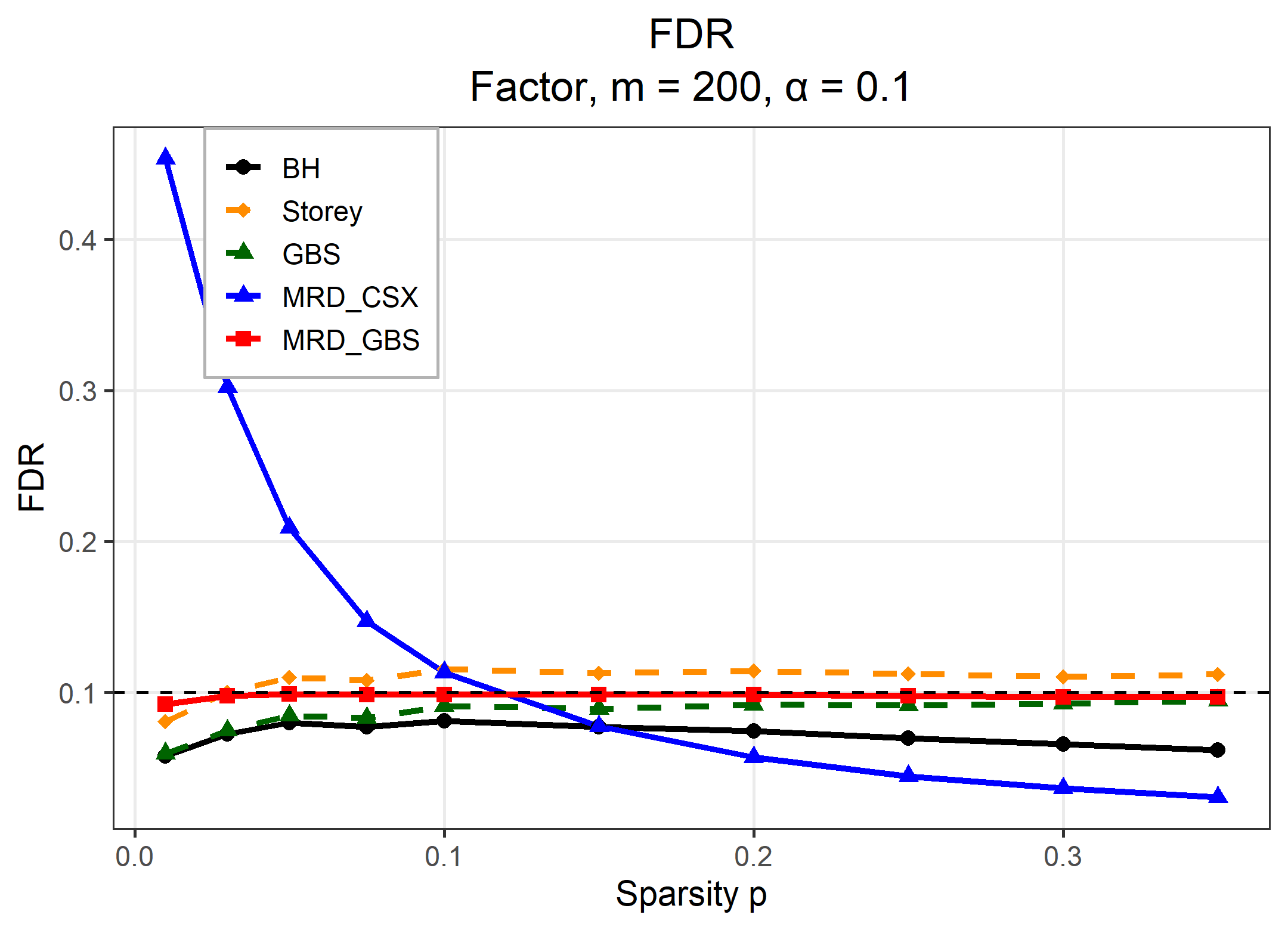}
		\\
		
		(a) Equicorrelation ($\rho=0.7$)
		&
		(b) Factor Model
		\\[0.3cm]
		
		\includegraphics[width=0.47\textwidth]
		{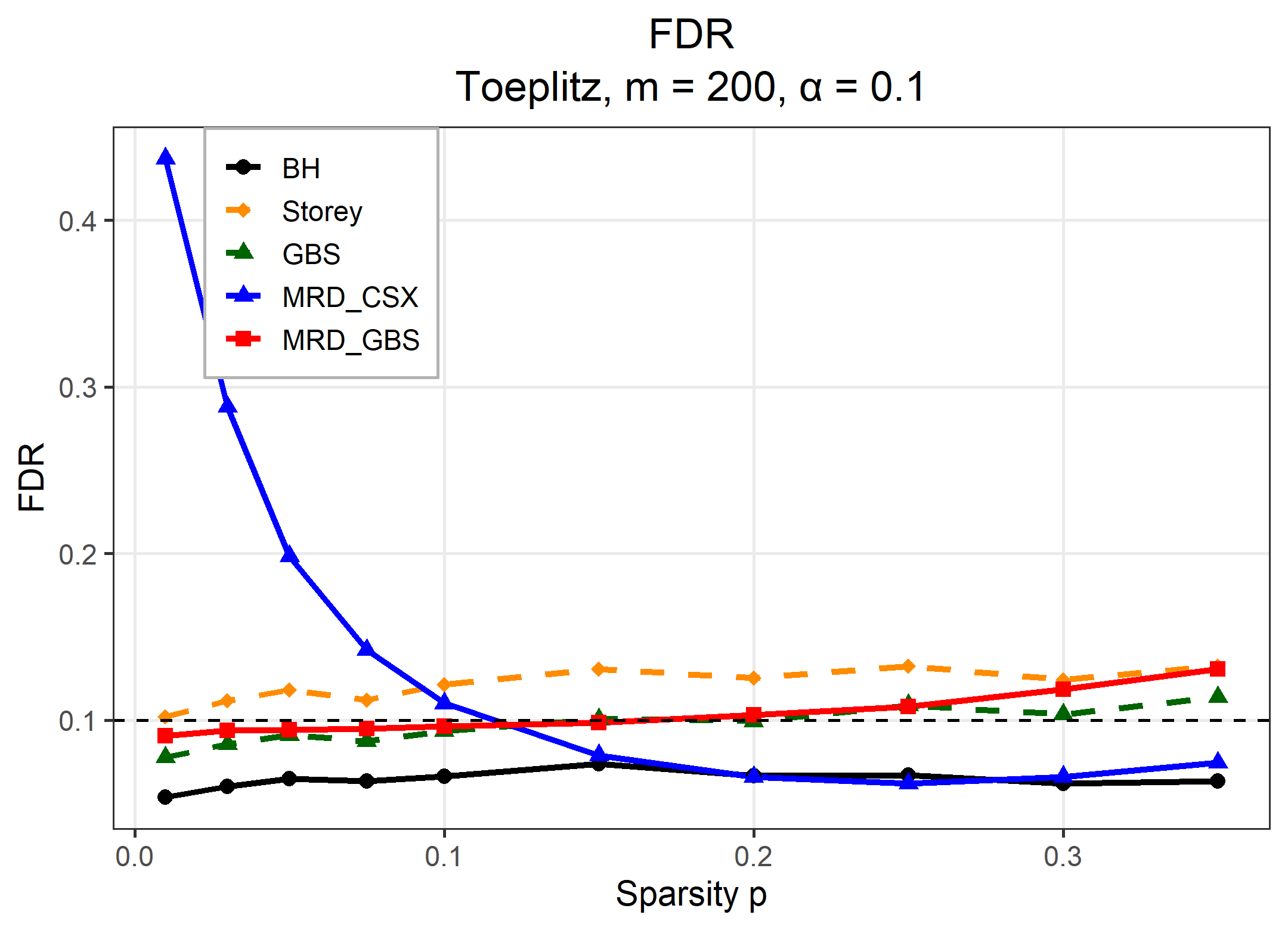}
		&
		\includegraphics[width=0.47\textwidth]
		{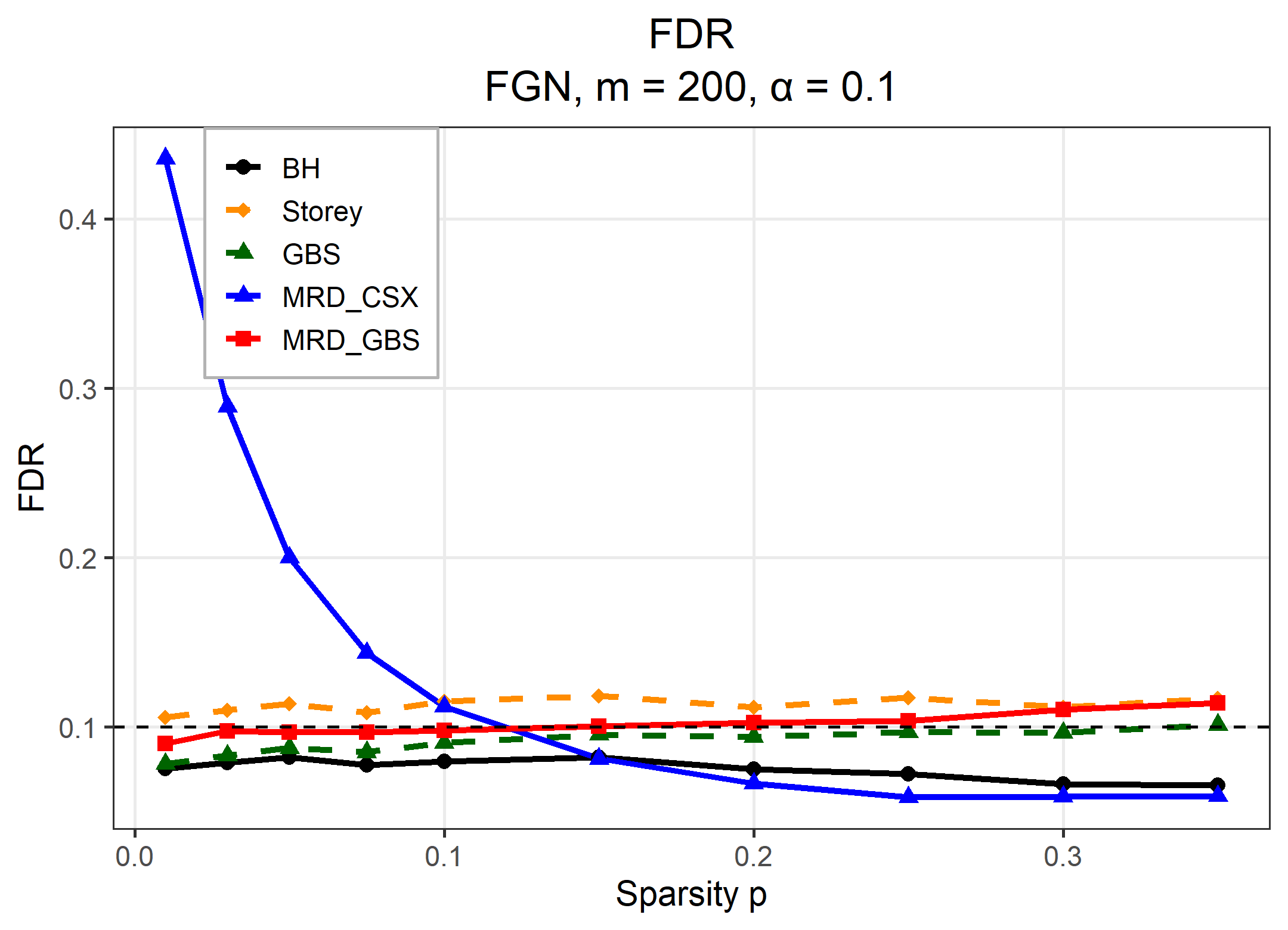}
		\\
		
		(c) Toeplitz ($\rho=0.9$)
		&
		(d) Fractional Gaussian Noise ($H=0.9$)
		\\[0.3cm]
		
		\includegraphics[width=0.47\textwidth]
		{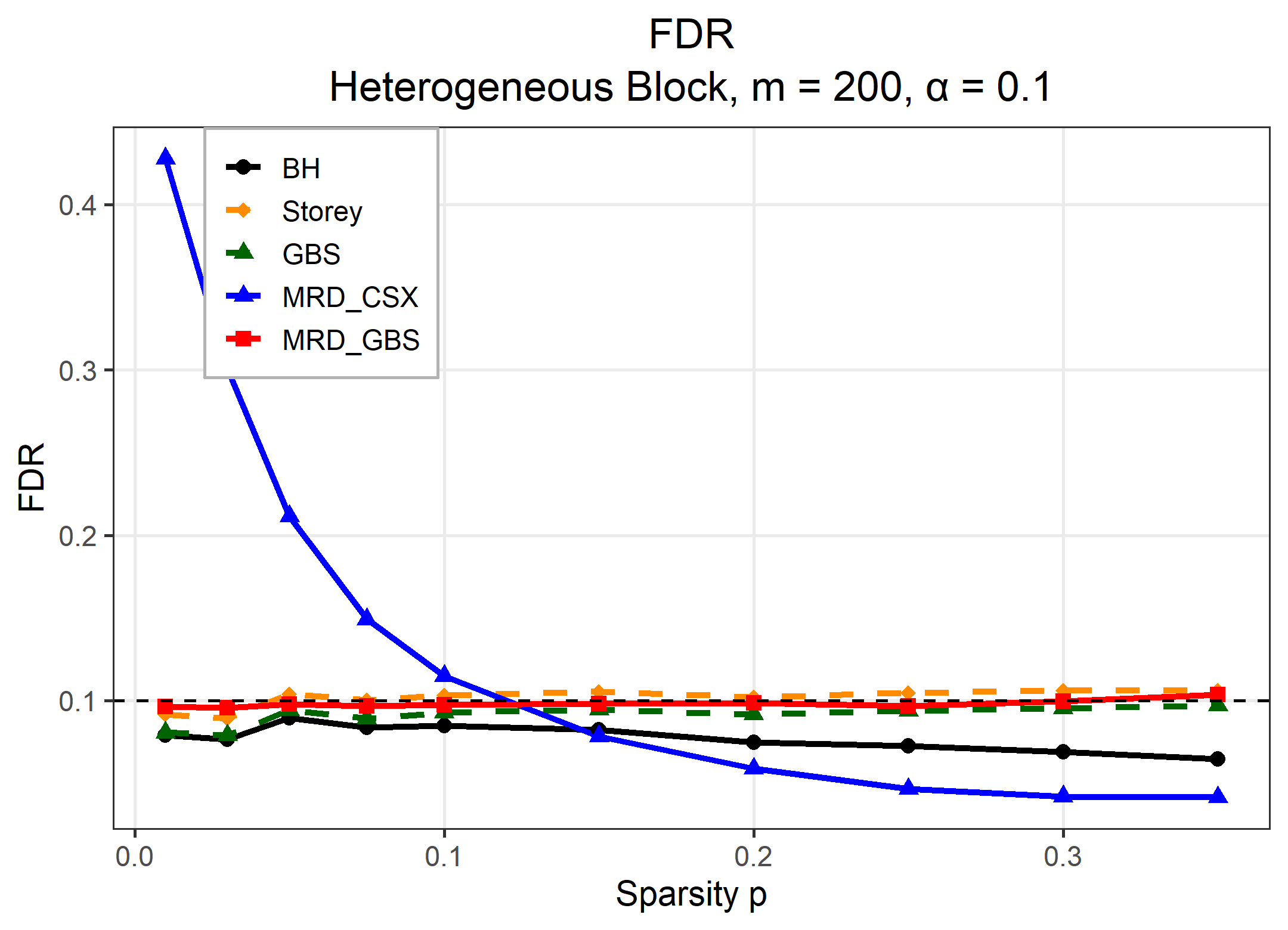}
		&
		\includegraphics[width=0.47\textwidth]
		{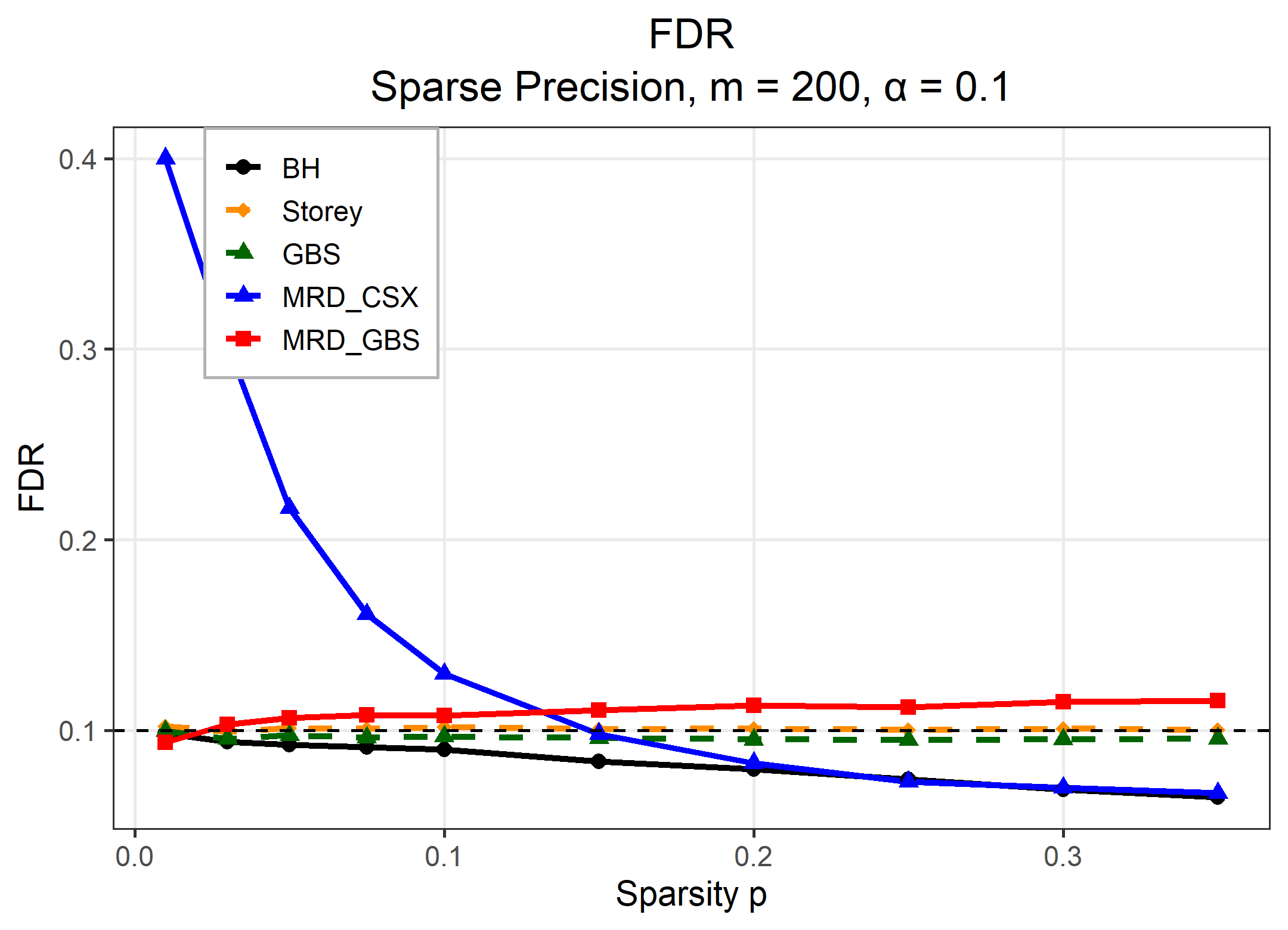}
		\\
		
		(e) Heterogeneous Block
		&
		(f) Sparse Precision Matrix
		
	\end{tabular}
	
	\caption{
		Empirical false discovery rates (FDR) for the competing
		multiple testing procedures under six representative dependence
		structures.
	}
	\label{FIG_FDR_m200}
	
\end{figure}

We begin with the false discovery rate results presented in Figure~\ref{FIG_FDR_m200} and
Tables~\ref{tab:fdr_equicorrelation_m200}--\ref{tab:fdr_sparse_precision_m200}. Across all six dependence structures, the proposed
GBS-calibrated MRD procedure exhibits substantially improved FDR behavior
relative to the original MRD procedure. In particular, the excessive FDR
inflation observed for the original MRD procedure in extremely sparse
settings is greatly reduced under the proposed calibration, with empirical
FDR values remaining close to the nominal level $\alpha=0.10$ across most sparsity levels
and dependence structures.

A particularly noteworthy feature of the results is the remarkable
stability of the empirical FDR values produced by the proposed procedure
across a broad range of covariance structures. Under the equicorrelation,
factor, Toeplitz, fractional Gaussian noise, and heterogeneous block
models, the empirical FDR curves remain close to the nominal level
$\alpha=0.1$ throughout most of the sparsity range. This behavior stands
in sharp contrast to the original MRD procedure, whose FDR values are
substantially inflated in extremely sparse regimes before gradually
decreasing as the proportion of signals increases. The results therefore
suggest that the proposed GBS calibration successfully moderates the
aggressive rejection behavior of the original MRD procedure while
preserving its ability to exploit covariance information.

An important feature of the larger-dimensional experiments is that several
dependence structures for which FDR control was less apparent when
$m=100$ exhibit substantially improved behavior when $m=200$. The
heterogeneous block model provides a particularly notable example. While
the corresponding $m=100$ results suggested mild departures from the
nominal level, the empirical FDR values move considerably closer to
$\alpha=0.1$ in the larger-dimensional setting. Similar improvements are also visible under several other dependence structures, suggesting that the benefits of the proposed calibration become increasingly apparent as the dimensionality of the testing problem grows.

The sparse precision model represents the primary exception to this
pattern. Although the proposed procedure continues to exhibit substantially
better FDR behavior than the original MRD procedure, the empirical FDR
values remain slightly above the nominal level for portions of the
sparsity range. This observation is consistent with the earlier
misclassification-risk results and again suggests that the effectiveness
of covariance-adaptive information propagation depends on the geometry of
the underlying dependence structure.

We next examine the false non-discovery rates reported in Figure~\ref{FIG_FNR_m200} and Tables~\ref{tab:fnr_equicorrelation_m200}–\ref{tab:fnr_sparse_precision_m200}. Across the equicorrelation, factor, Toeplitz, fractional Gaussian noise, and heterogeneous block models, the proposed GBS-calibrated MRD procedure exhibits extraordinarily small FNR values throughout much of the sparsity range. In several settings, the empirical FNR remains very close to zero, indicating that only a negligible proportion of true signals fail to be detected. The reductions are particularly striking when compared with the marginal procedures BH, Storey, and GBS, all of which exhibit substantially larger false non-discovery rates as sparsity increases.

\begin{figure}[p]
	\centering
	
	\begin{tabular}{cc}
		
		\includegraphics[width=0.47\textwidth]
		{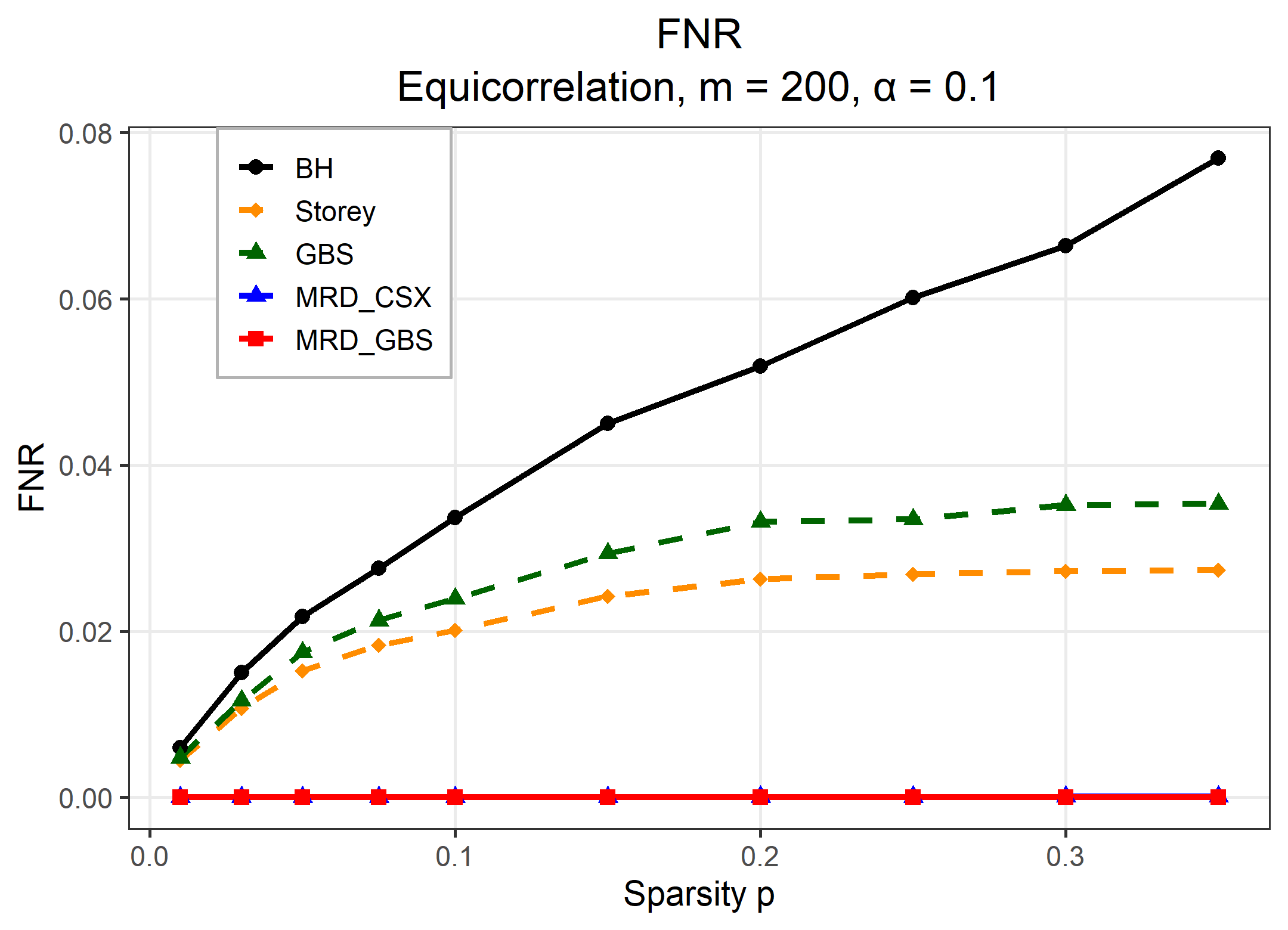}
		&
		\includegraphics[width=0.47\textwidth]
		{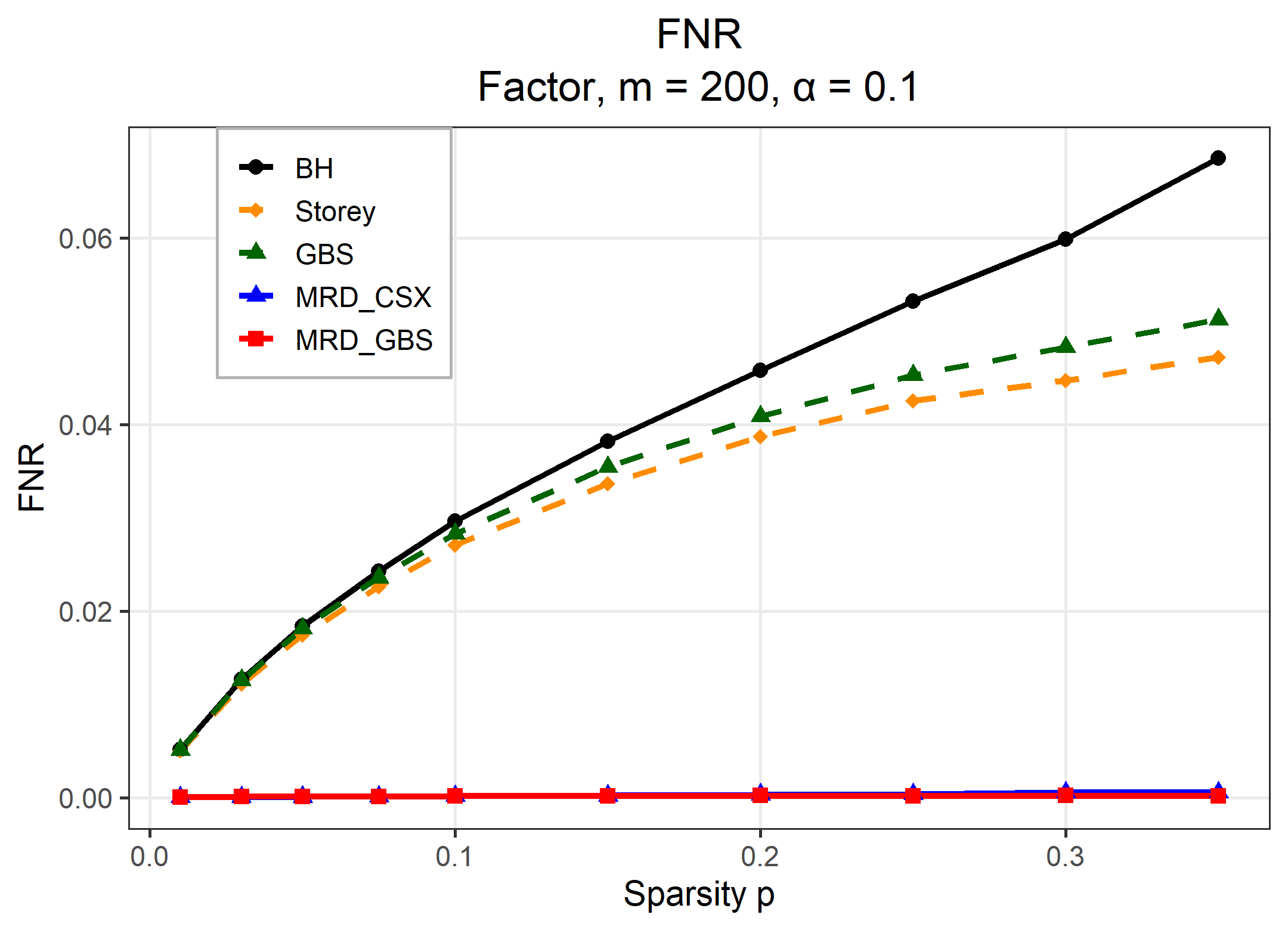}
		\\
		
		(a) Equicorrelation ($\rho=0.7$)
		&
		(b) Factor Model
		\\[0.3cm]
		
		\includegraphics[width=0.47\textwidth]
		{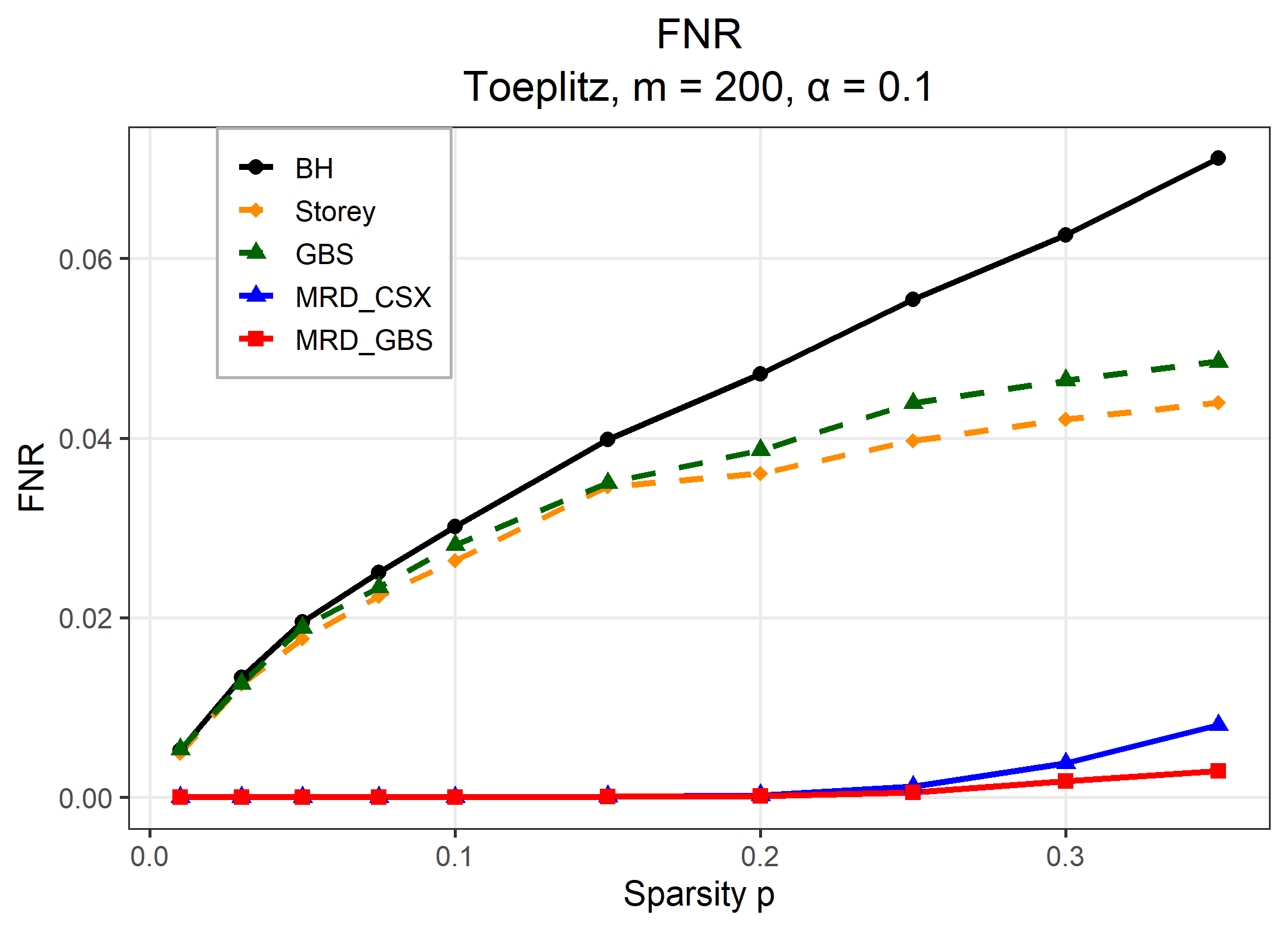}
		&
		\includegraphics[width=0.47\textwidth]
		{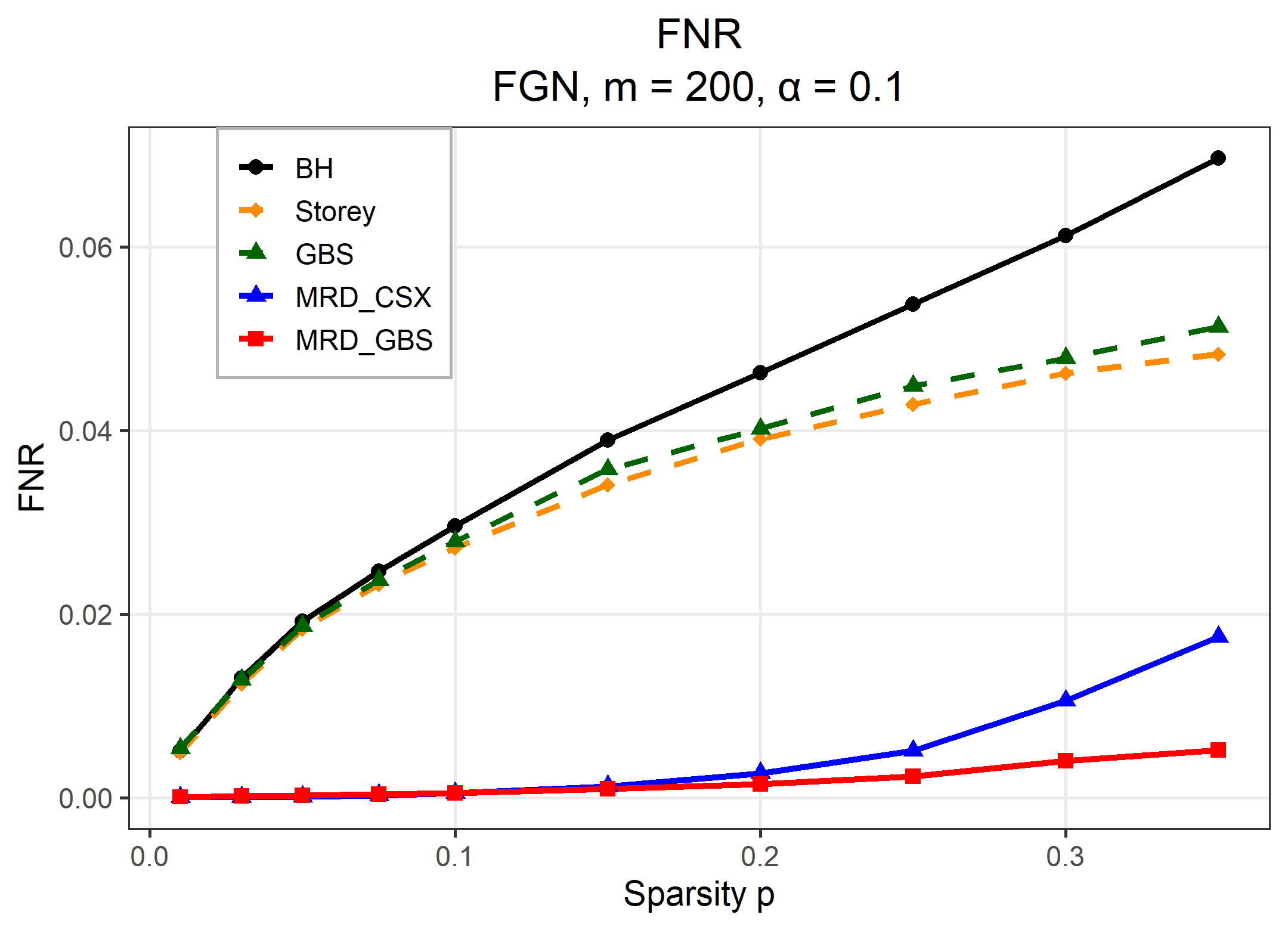}
		\\
		
		(c) Toeplitz ($\rho=0.9$)
		&
		(d) Fractional Gaussian Noise ($H=0.9$)
		\\[0.3cm]
		
		\includegraphics[width=0.47\textwidth]
		{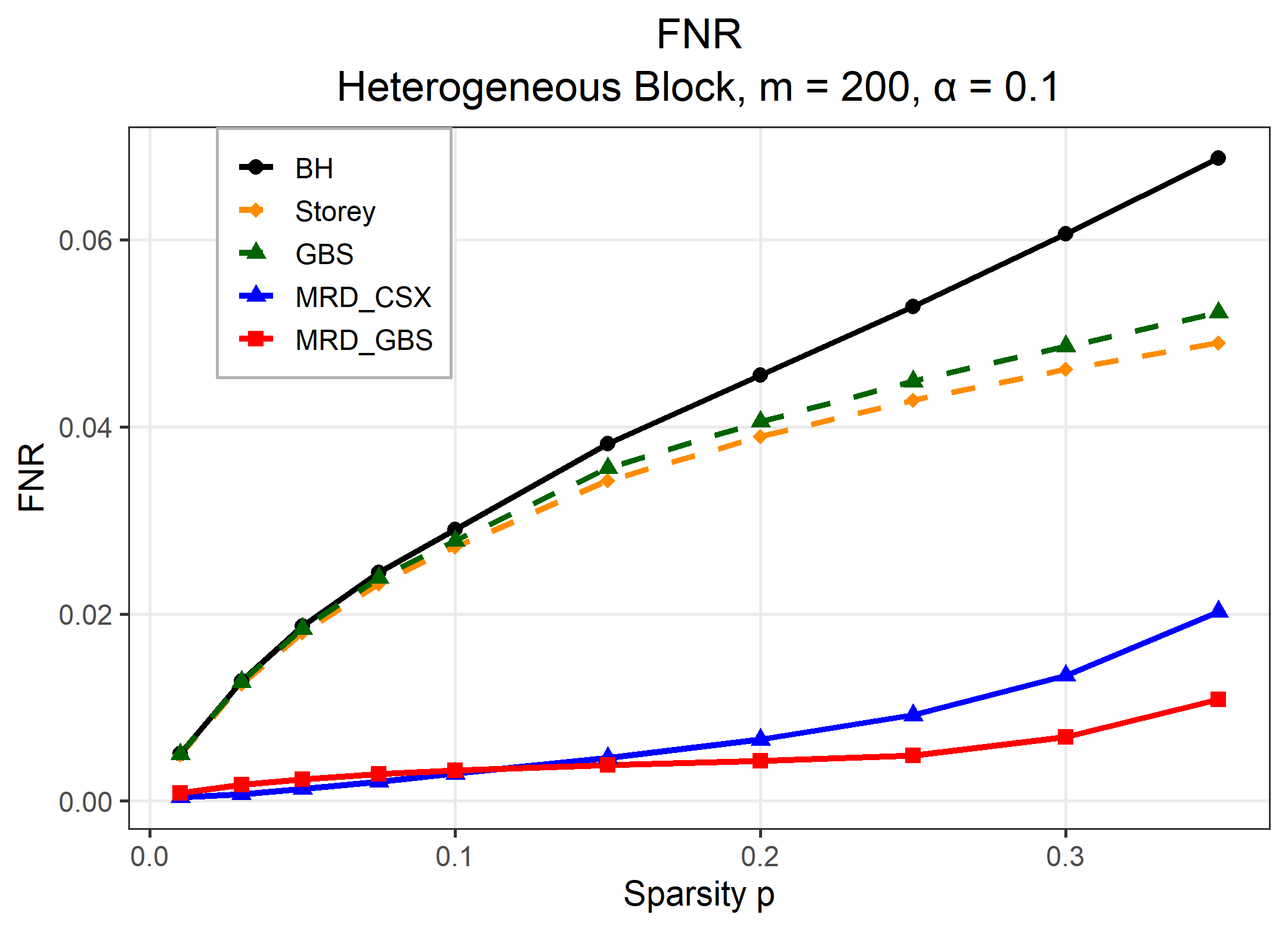}
		&
		\includegraphics[width=0.47\textwidth]
		{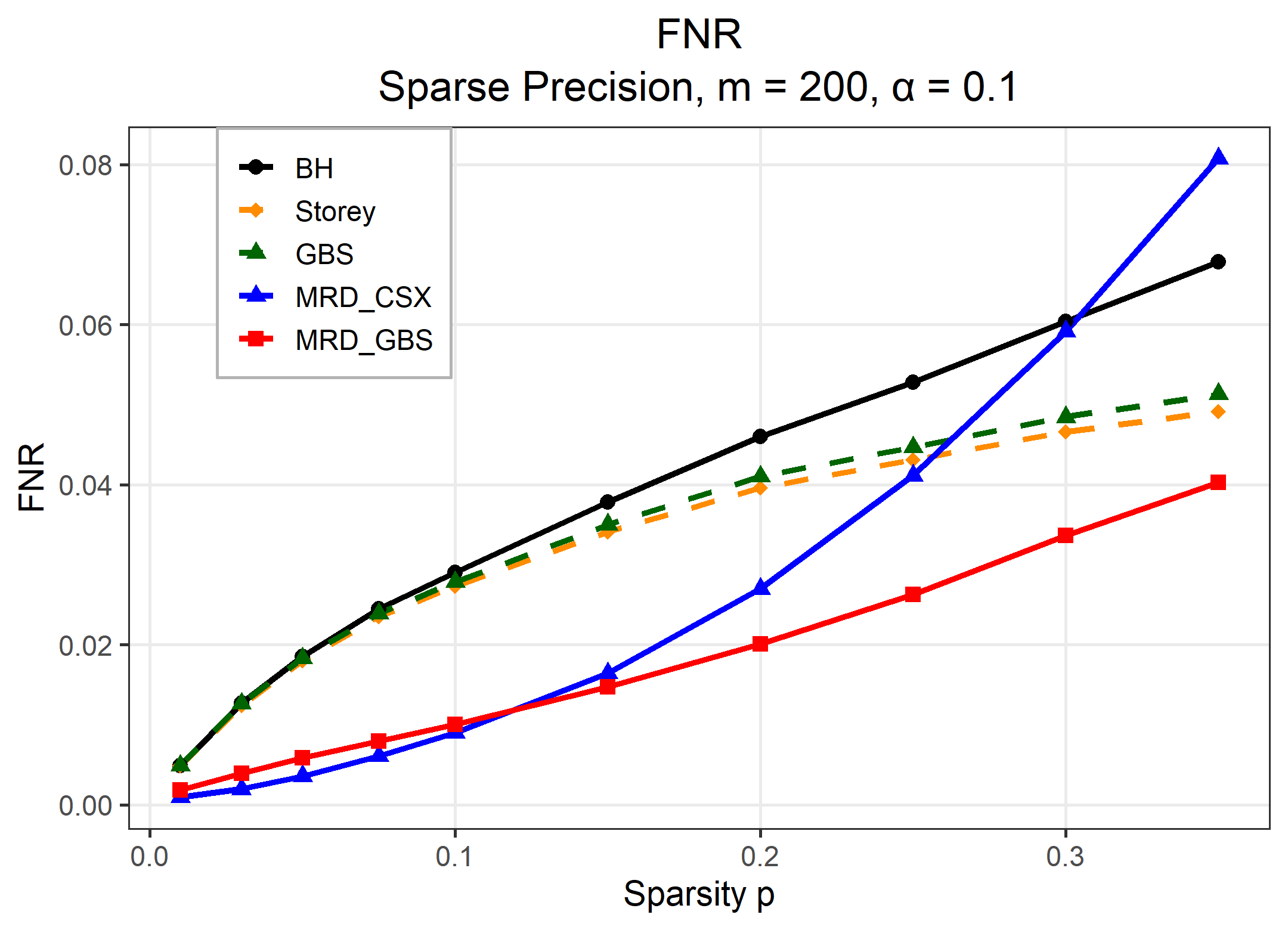}
		\\
		
		(e) Heterogeneous Block
		&
		(f) Sparse Precision Matrix
		
	\end{tabular}
	
	\caption{
		Empirical false non-discovery rates (FNR) for the competing
		multiple testing procedures under six representative dependence
		structures.
	}
	\label{FIG_FNR_m200}
	
\end{figure}

The original MRD procedure often achieves similarly small false non-discovery rates, particularly under the more strongly dependent covariance models. This behavior is expected, since both procedures employ the same covariance-adaptive residualization mechanism. Nevertheless, the proposed calibration frequently preserves these near-zero FNR values while simultaneously achieving substantially improved FDR performance, thereby producing a more favorable overall balance between false discoveries and missed discoveries.

A closer examination of the results reveals a more nuanced pattern. In several covariance settings, the original MRD procedure attains the smallest FNR values in extremely sparse regimes. However, the differences relative to the proposed GBS-calibrated MRD procedure are typically quite small. Moreover, under the fractional Gaussian noise, heterogeneous block, and sparse precision models, the proposed procedure eventually achieves smaller FNR values as the signal proportion increases. Thus, the improved classification performance of the GBS-calibrated procedure cannot be attributed solely to enhanced FDR behavior. Rather, the results suggest that the proposed calibration not only maintains comparable signal-recovery performance in sparse regimes, but may also improve signal recovery in moderately sparse settings while simultaneously providing substantially better false-discovery control.

The crossover phenomenon observed previously in the NMR results reappears in several of the signal-recovery summaries. While the proposed calibration often exhibits superior performance in sparse regimes, the original MRD procedure frequently becomes increasingly competitive as the signal proportion increases. This pattern is particularly visible in the FNR, power, and average-rejection summaries and suggests that the relative advantages of the two calibration schemes depend not only on the covariance structure but also on the underlying sparsity level.

The power curves as shown in Figure~\ref{FIG_POWER_m200} together with Tables~\ref{tab:power_equicorrelation_m200}-\ref{tab:power_sparse_precision_m200} provide further evidence of the strength of the proposed methodology. Across several covariance structures, including the equicorrelation, factor, Toeplitz, fractional Gaussian noise, and heterogeneous block models, the empirical powers are frequently close to one over substantial portions of the sparsity range. The near-perfect power observed under these dependence structures indicates that the overwhelming majority of active signals are successfully identified by the procedure.

\begin{figure}[p]
	\centering
	
	\begin{tabular}{cc}
		
		\includegraphics[width=0.47\textwidth]
		{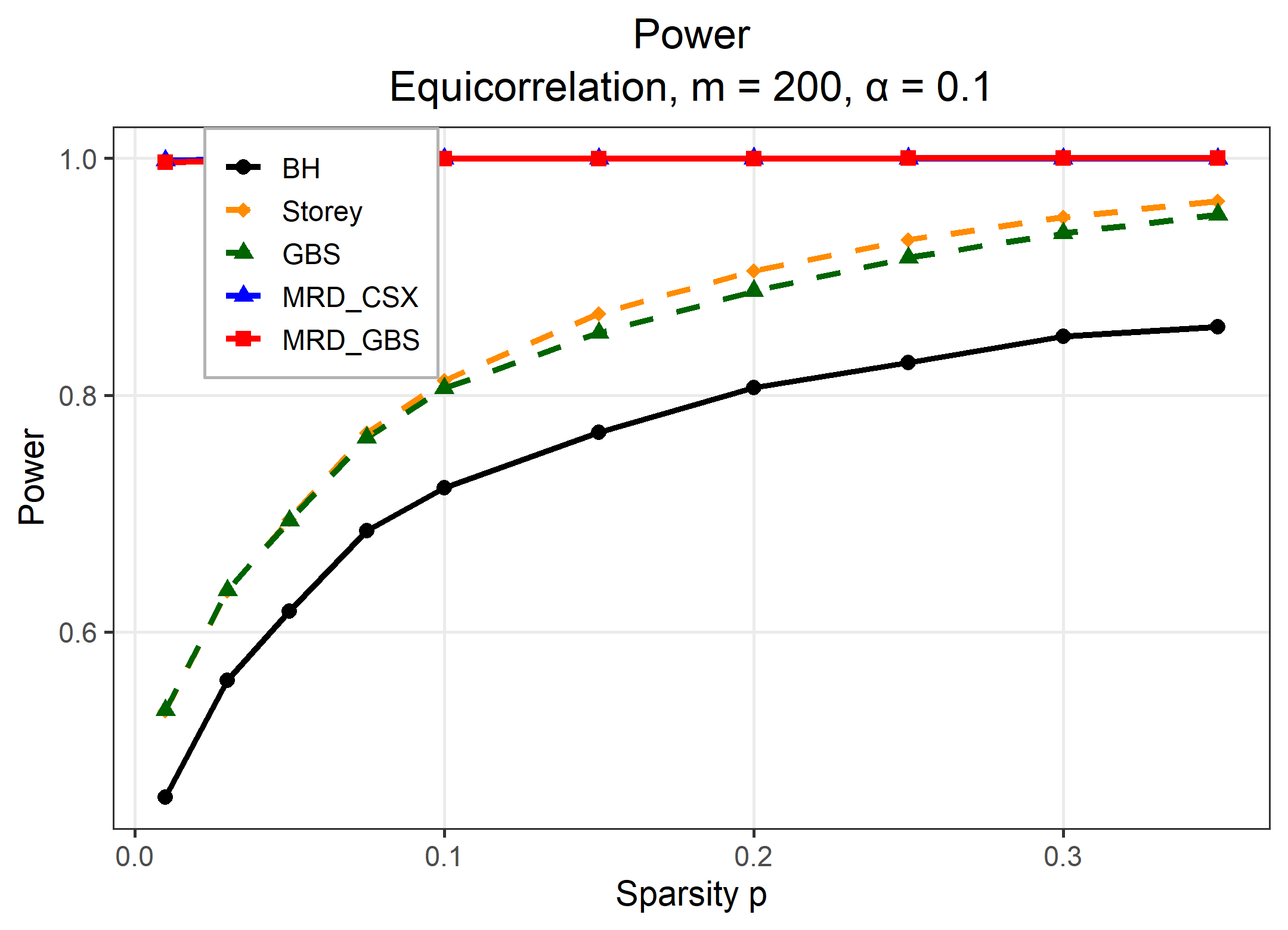}
		&
		\includegraphics[width=0.47\textwidth]
		{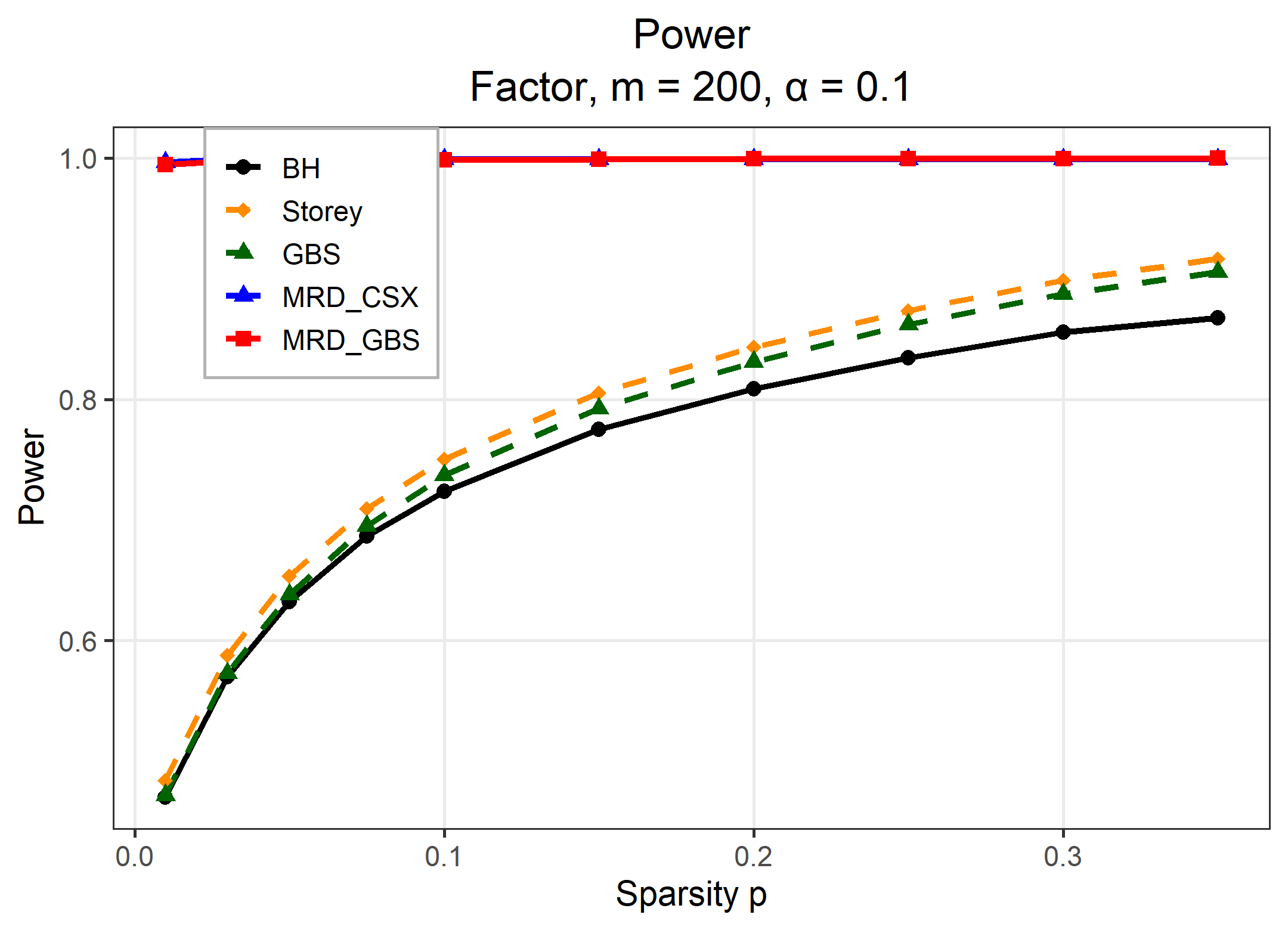}
		\\
		
		(a) Equicorrelation ($\rho=0.7$)
		&
		(b) Factor Model
		\\[0.3cm]
		
		\includegraphics[width=0.47\textwidth]
		{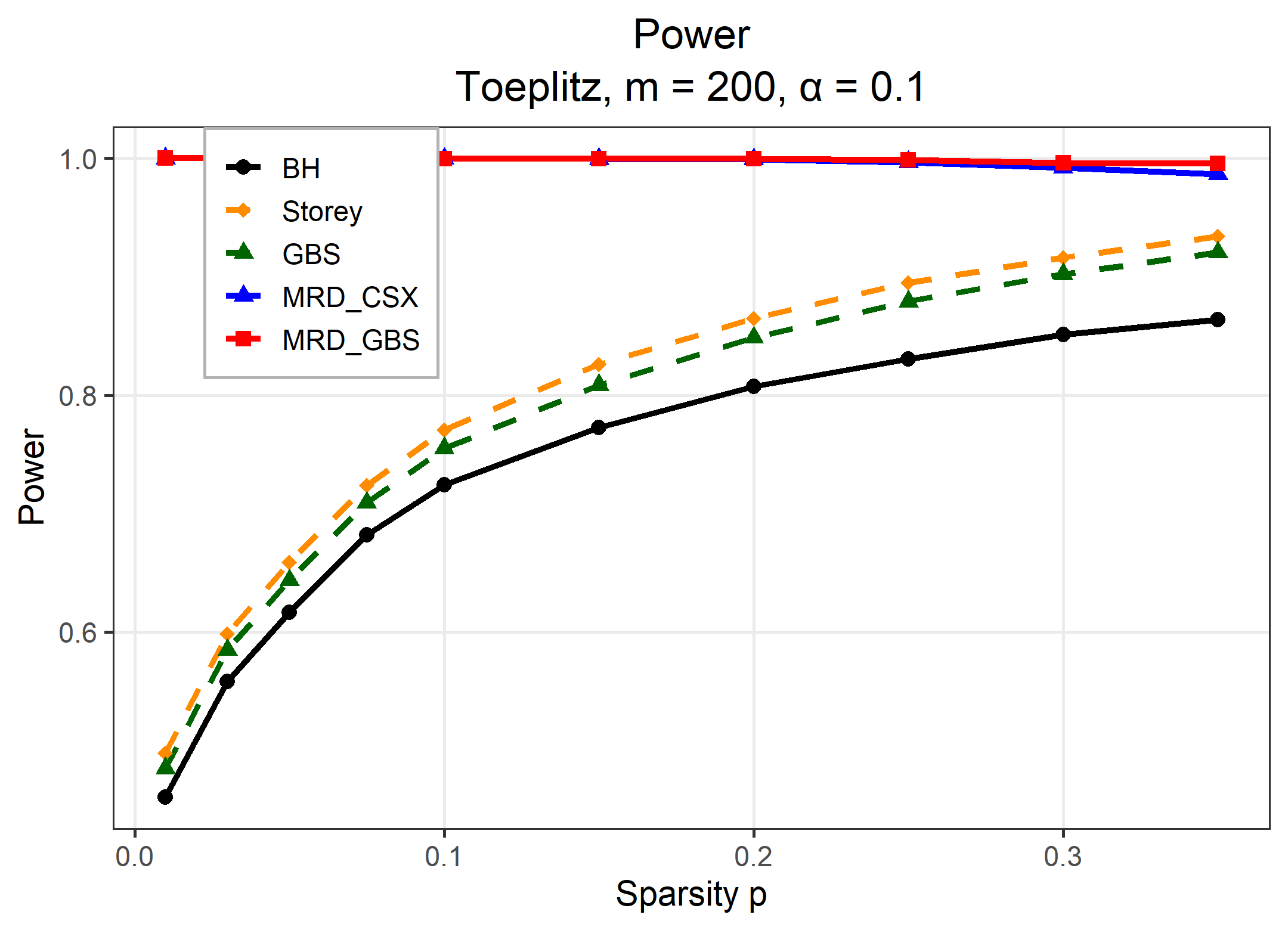}
		&
		\includegraphics[width=0.47\textwidth]
		{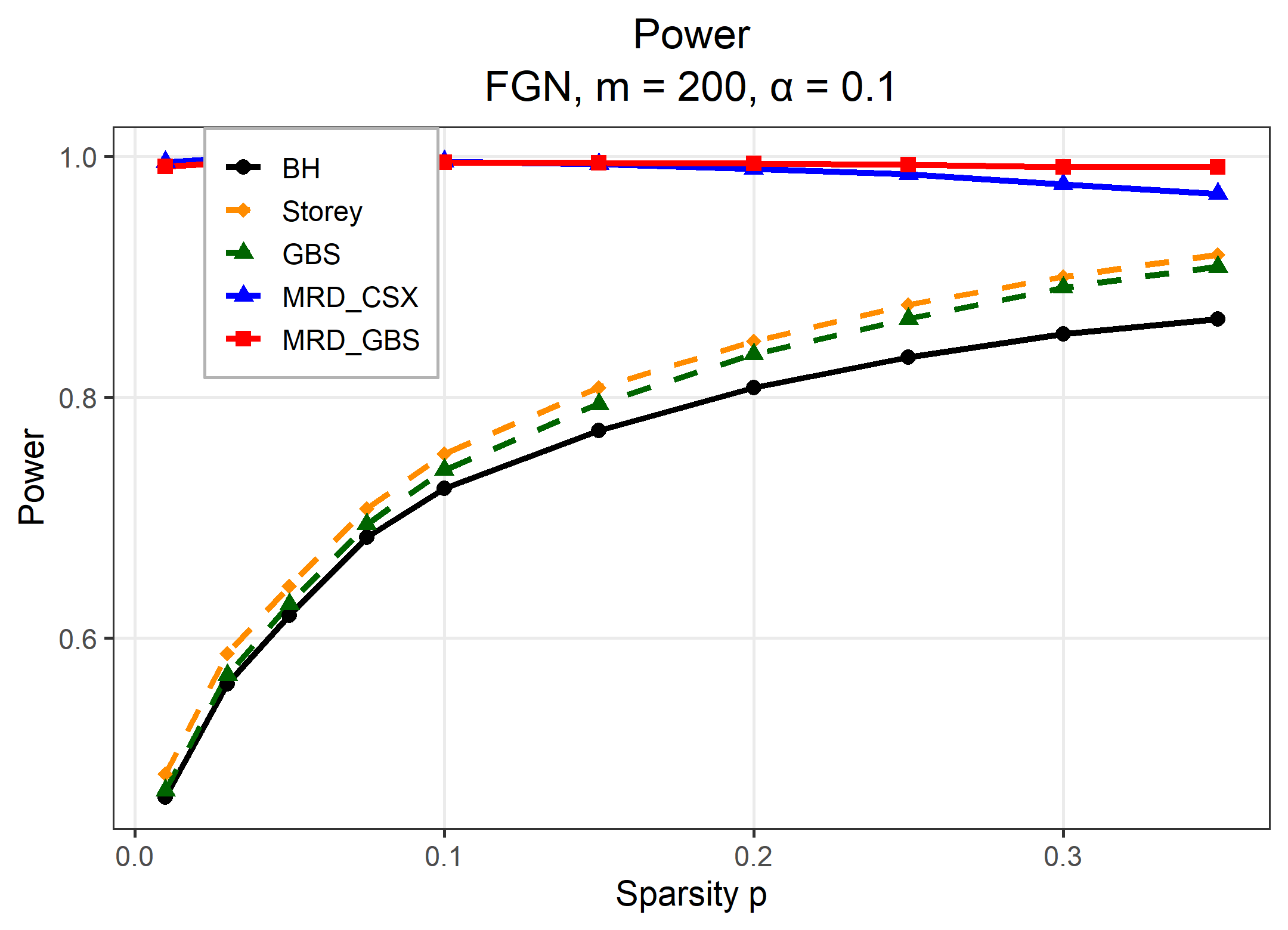}
		\\
		
		(c) Toeplitz ($\rho=0.9$)
		&
		(d) Fractional Gaussian Noise ($H=0.9$)
		\\[0.3cm]
		
		\includegraphics[width=0.47\textwidth]
		{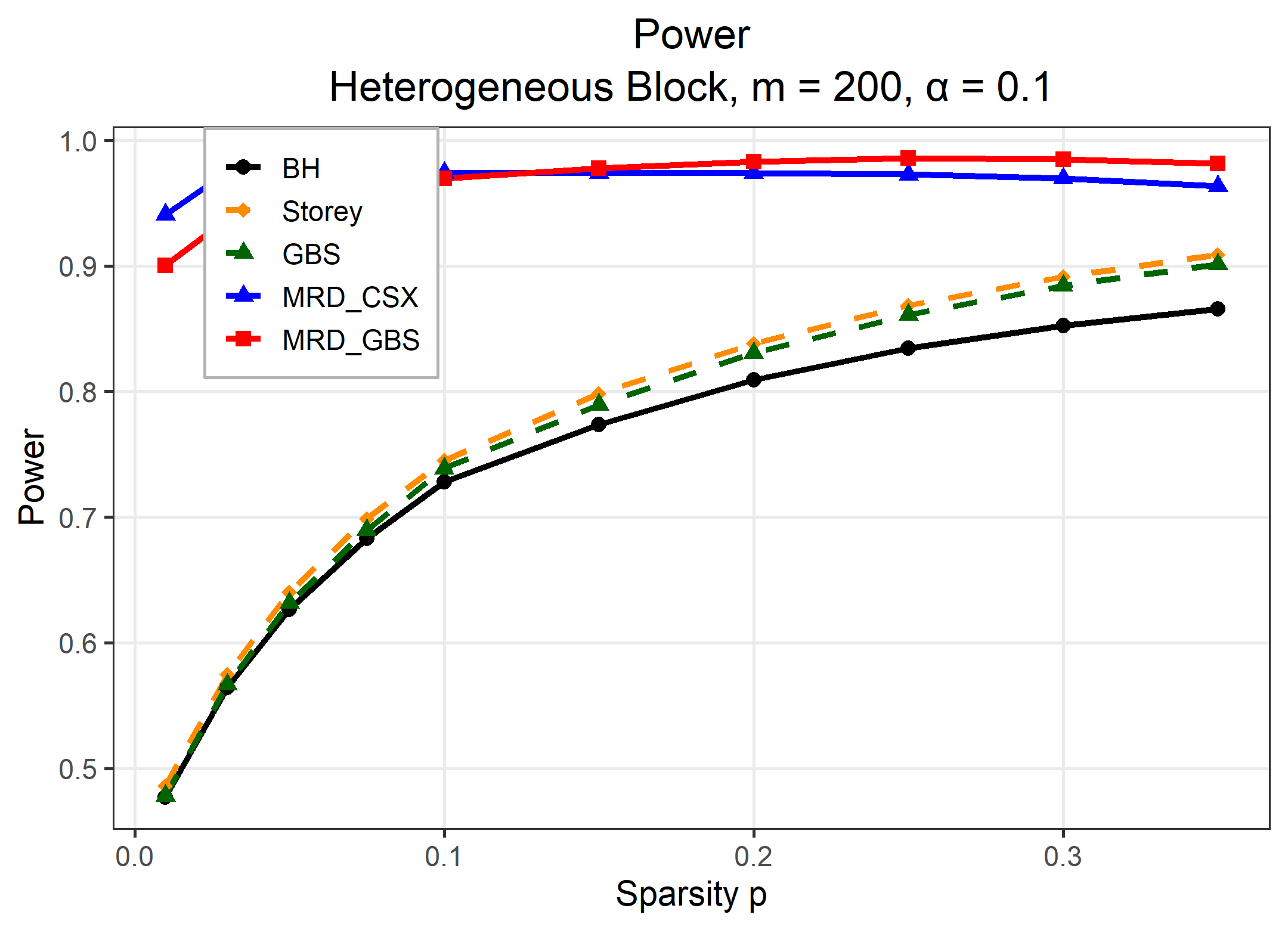}
		&
		\includegraphics[width=0.47\textwidth]
		{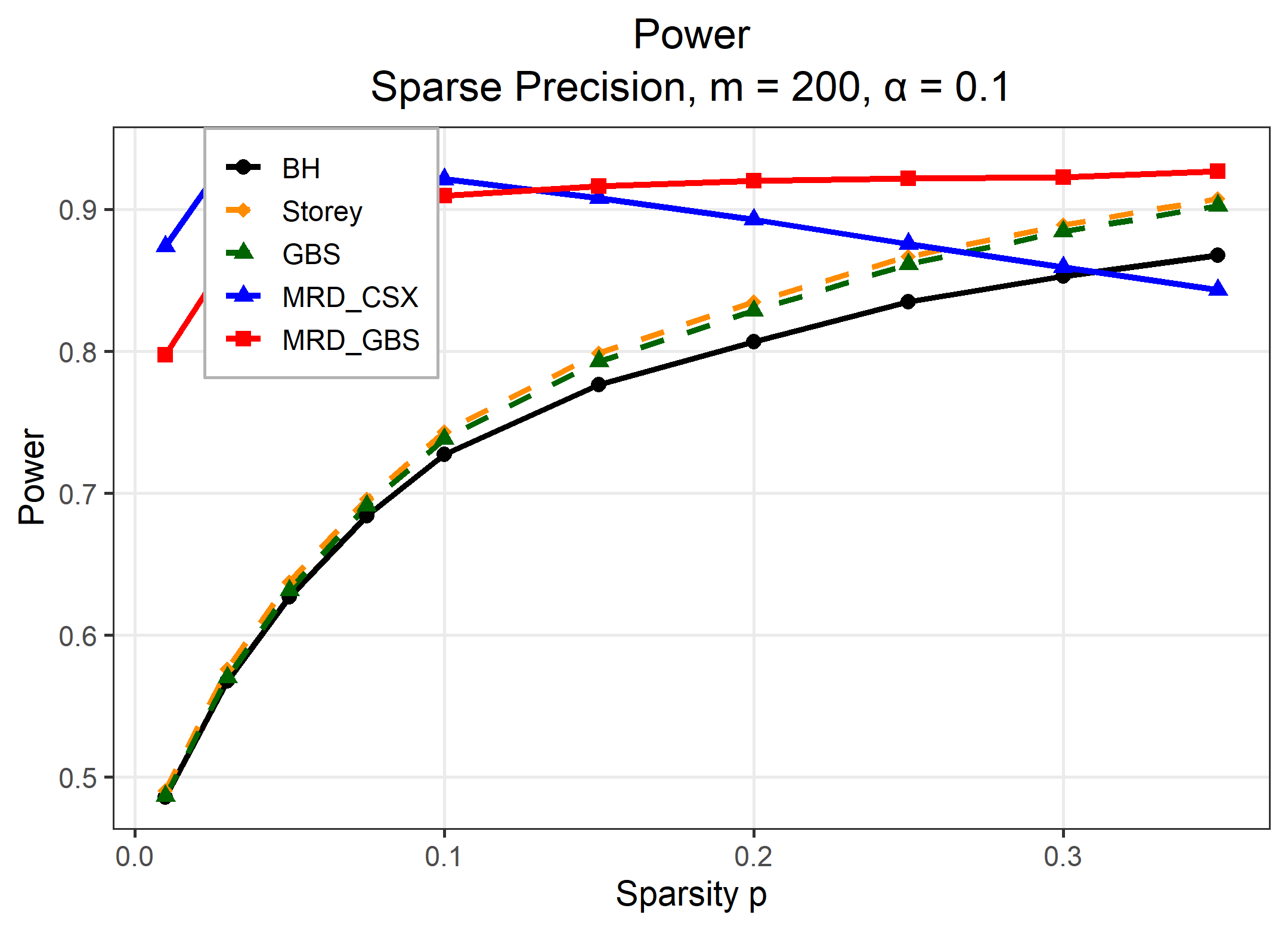}
		\\
		
		(e) Heterogeneous Block
		&
		(f) Sparse Precision Matrix
		
	\end{tabular}
	
	\caption{
		Empirical power of the competing multiple testing procedures under six
		representative dependence structures. Larger values indicate superior
		signal recovery performance. Across several dependence structures,
		the GBS-calibrated MRD procedure exhibits remarkably strong power,
		particularly in sparse and moderately sparse regimes.
	}
	\label{FIG_POWER_m200}
	
\end{figure}

The power results are particularly noteworthy when viewed together with the corresponding FDR behavior. High power alone is not necessarily indicative of effective signal recovery, since a procedure may achieve large power simply by rejecting too aggressively. However, the proposed GBS-calibrated MRD procedure frequently attains powers very close to one while simultaneously maintaining empirical FDR values near the nominal level. This combination is observed repeatedly across the equicorrelation, factor, Toeplitz, fractional Gaussian noise, and heterogeneous block models. The results therefore suggest that the covariance-adaptive residualization mechanism is able to identify a large proportion of the active signals without incurring the substantial false-discovery penalties often associated with highly aggressive testing procedures.

Additional insight is obtained from the average numbers of rejections reported in Figure~\ref{FIG_ANR_m200} and Tables~\ref{tab:rej_equicorrelation_m200}-\ref{tab:rej_sparse_precision_m200}. Across several covariance structures, the average number of rejections produced by the proposed procedure closely tracks the expected number of true signals mp. This behavior is particularly noteworthy when viewed together with the FDR and FNR results. Since false discoveries remain limited while false non-discoveries are nearly absent, the close agreement between the average number of rejections and the expected number of true signals provides strong empirical evidence that the procedure is recovering the underlying support of the signal vector with remarkably high accuracy.

\begin{figure}[p]
	\centering
	
	\begin{tabular}{cc}
		
		\includegraphics[width=0.47\textwidth]
		{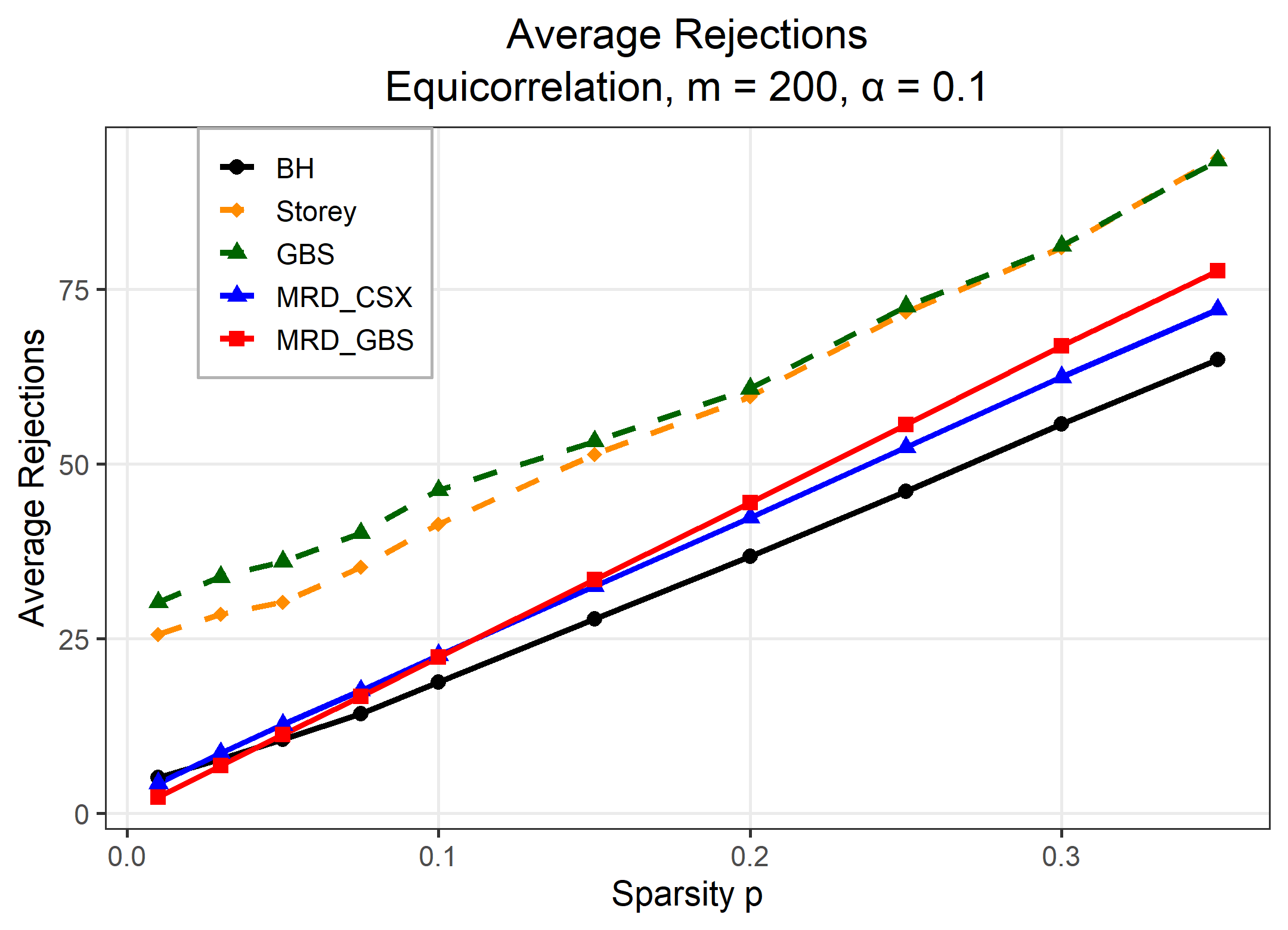}
		&
		\includegraphics[width=0.47\textwidth]
		{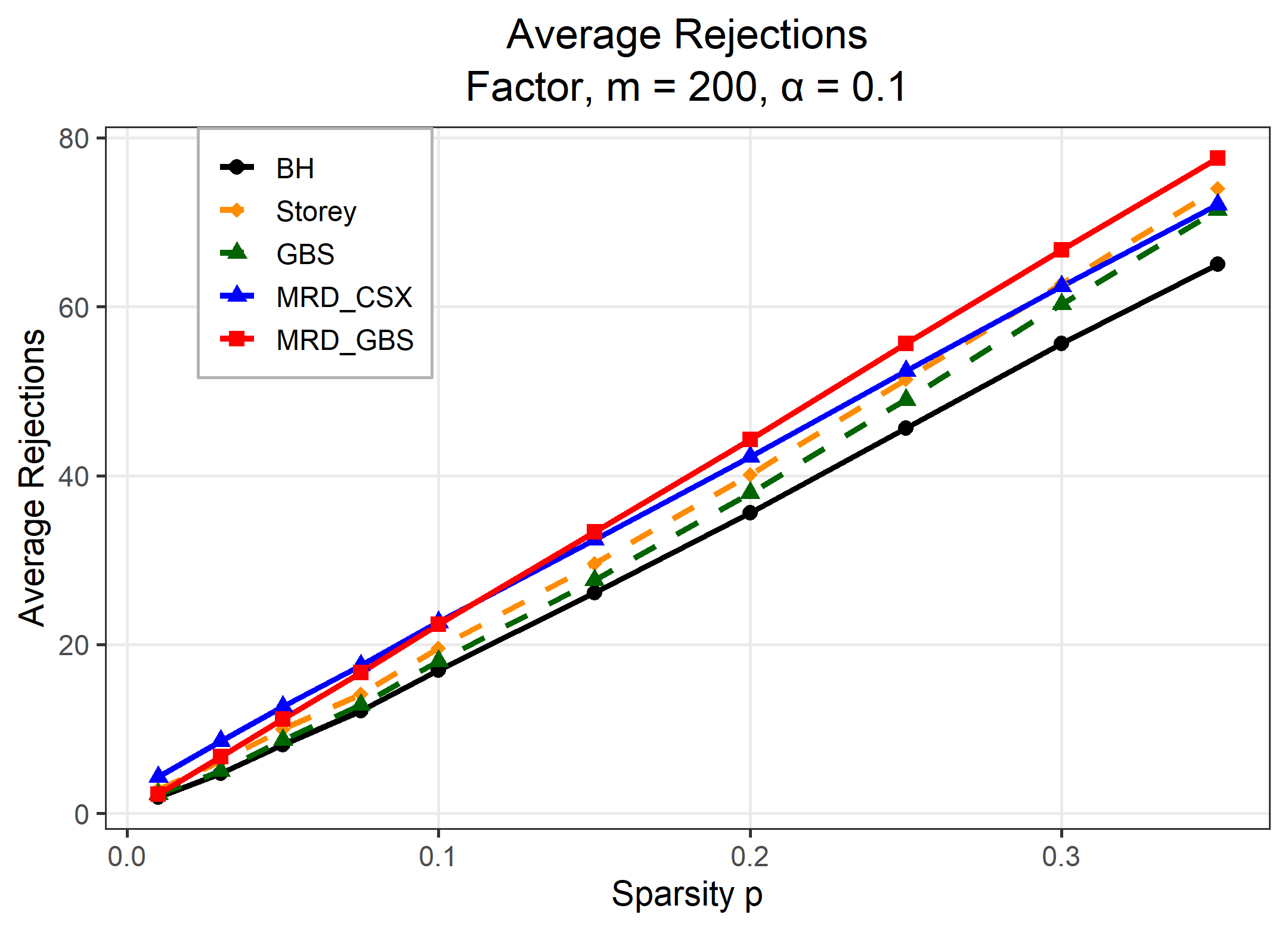}
		\\
		
		(a) Equicorrelation ($\rho=0.7$)
		&
		(b) Factor Model
		\\[0.3cm]
		
		\includegraphics[width=0.47\textwidth]
		{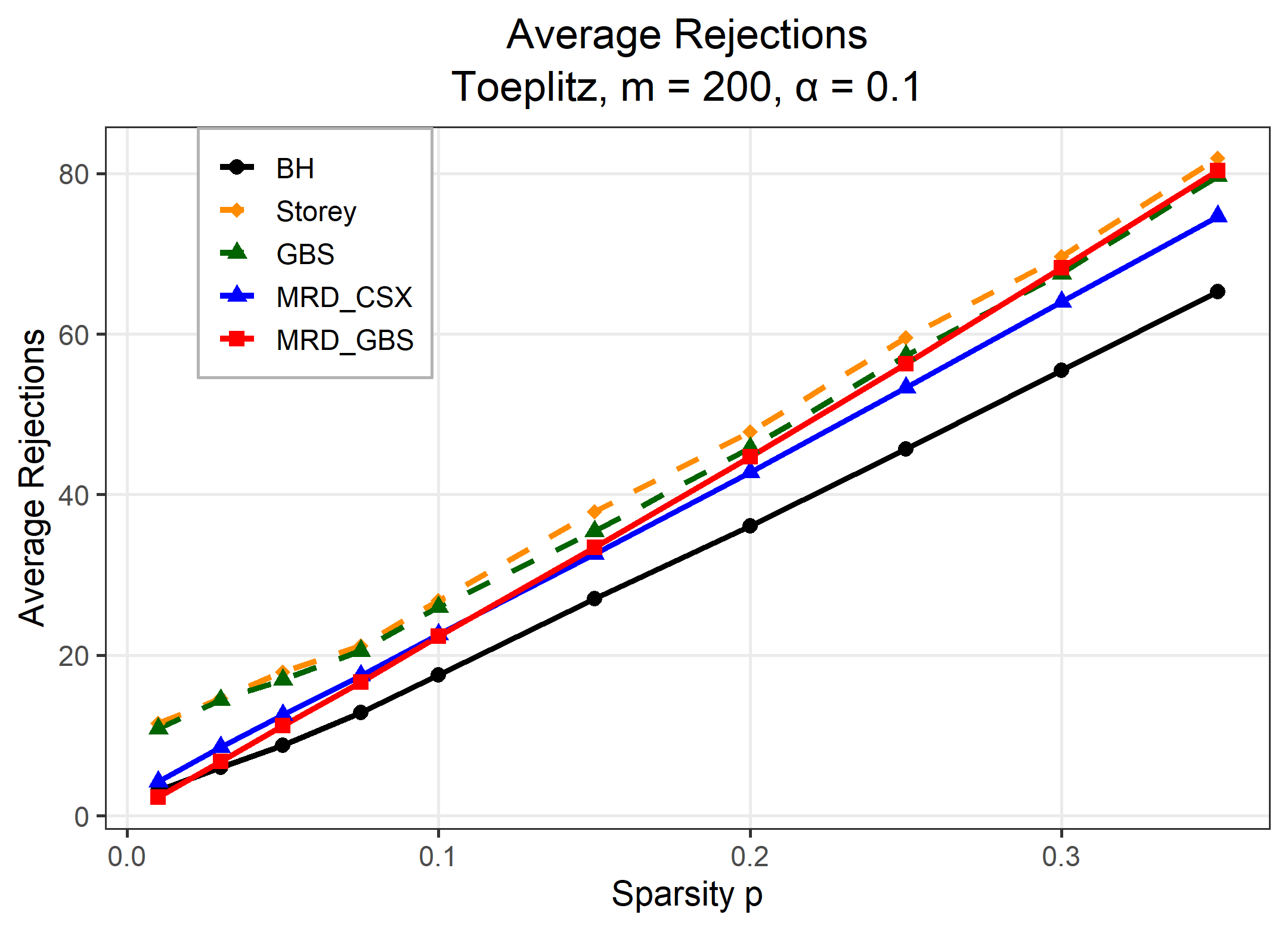}
		&
		\includegraphics[width=0.47\textwidth]
		{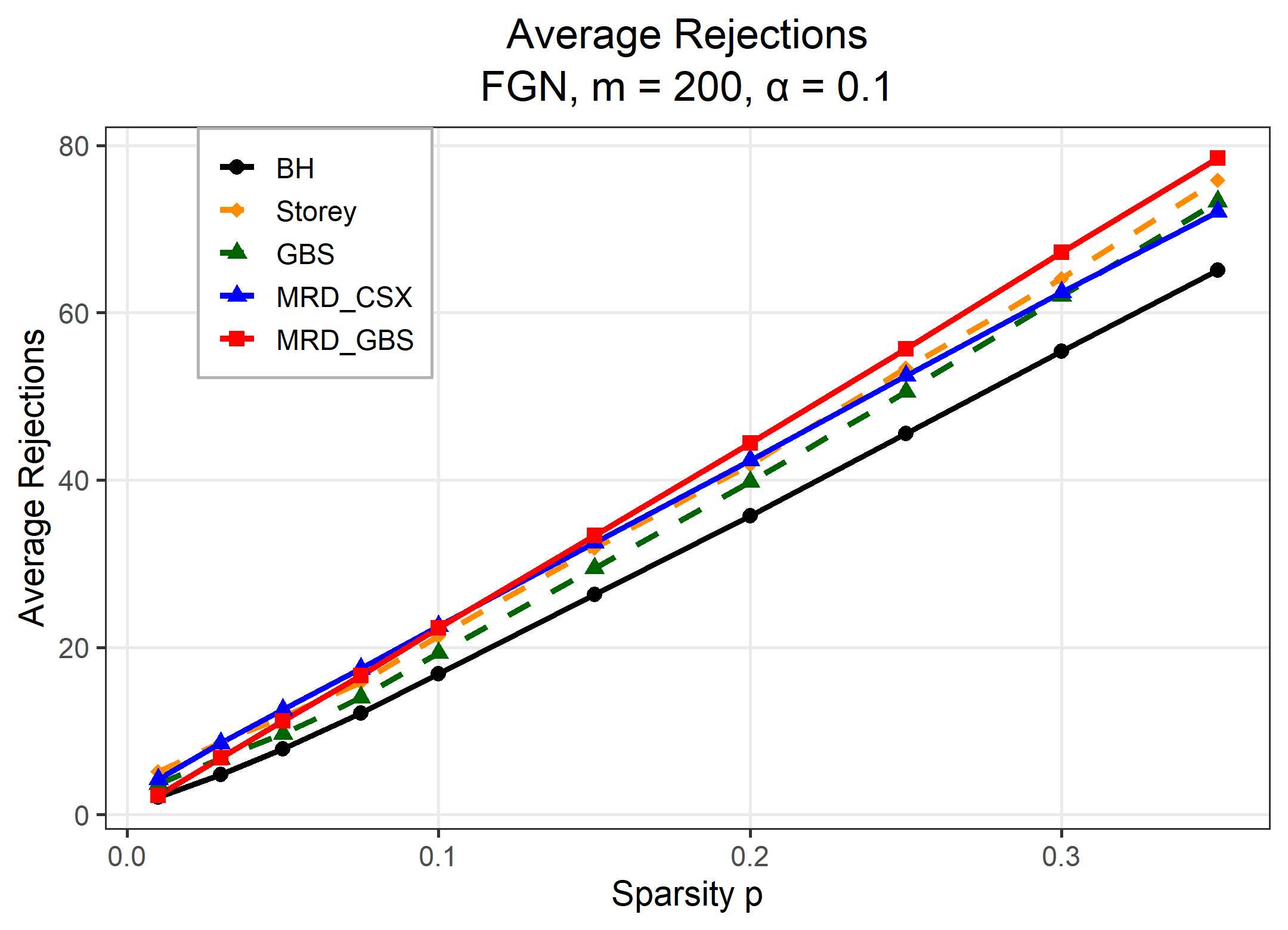}
		\\
		
		(c) Toeplitz ($\rho=0.9$)
		&
		(d) Fractional Gaussian Noise ($H=0.9$)
		\\[0.3cm]
		
		\includegraphics[width=0.47\textwidth]
		{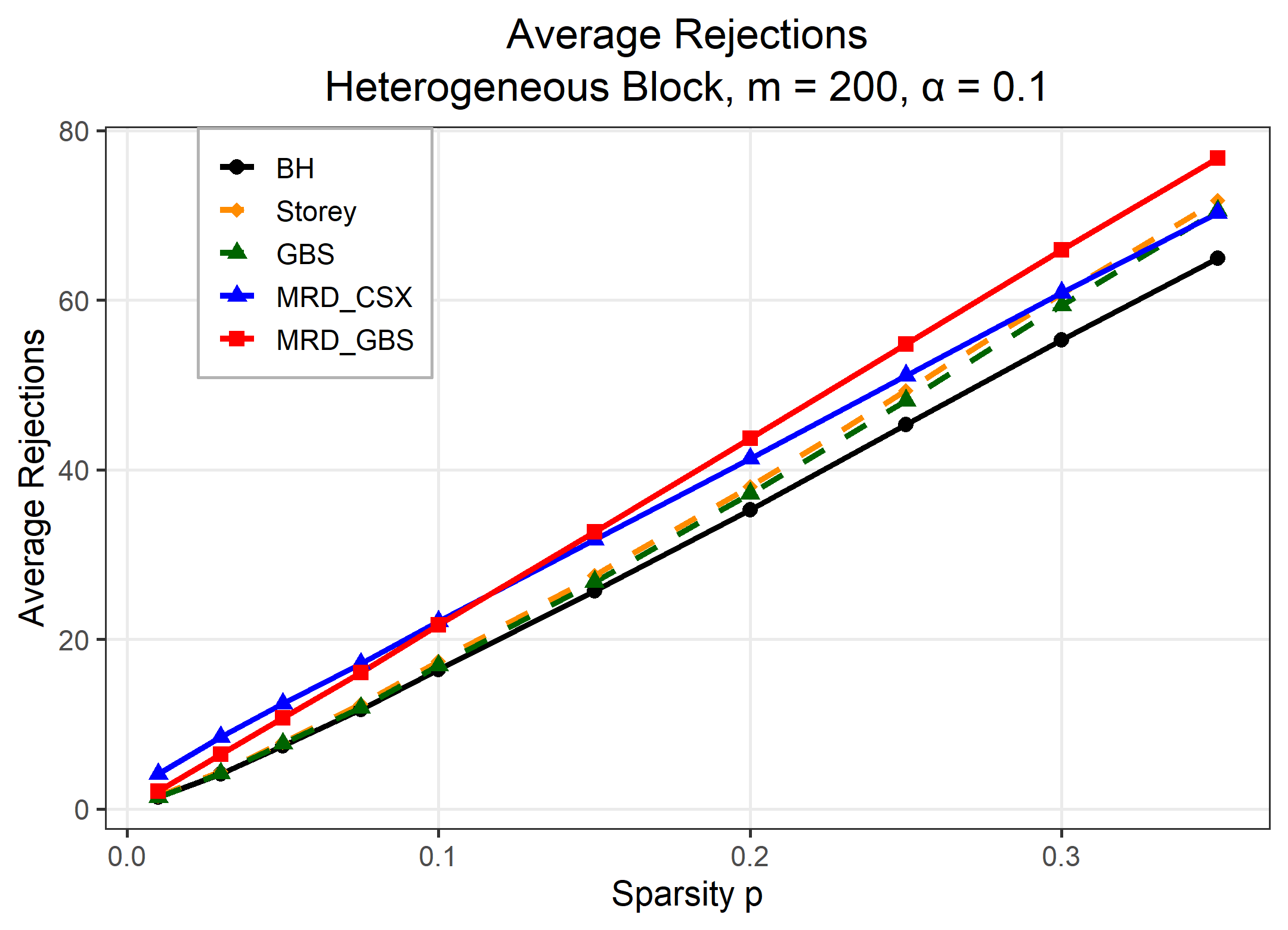}
		&
		\includegraphics[width=0.47\textwidth]
		{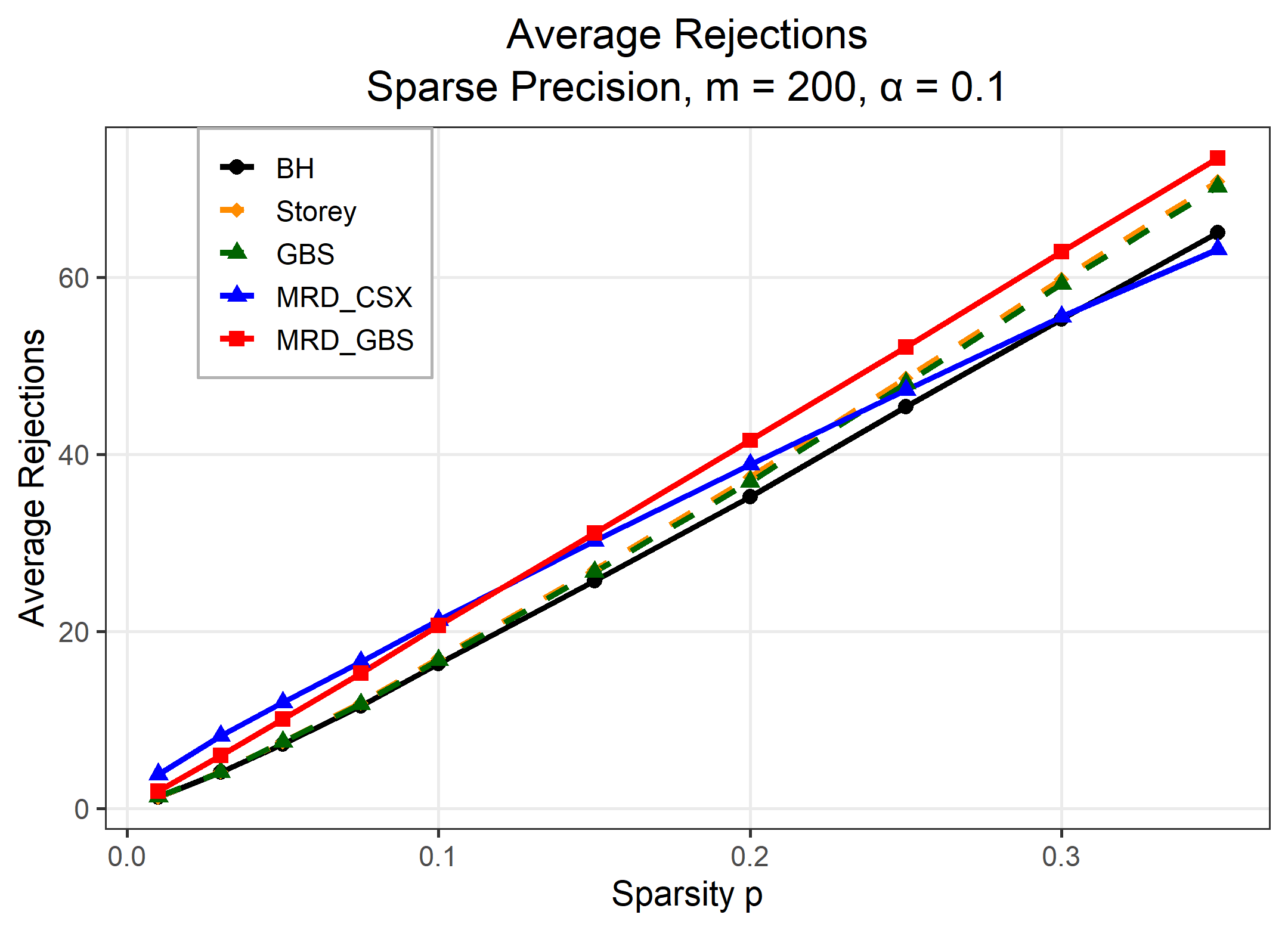}
		\\
		
		(e) Heterogeneous Block
		&
		(f) Sparse Precision Matrix
		
	\end{tabular}
	
	\caption{
		Average numbers of rejections (ANR) produced by the competing multiple
		testing procedures under six representative dependence structures.
		This metric provides additional insight into the aggressiveness of each
		procedure and helps explain the observed trade-offs among false
		discoveries, missed discoveries, and overall misclassification rates.
	}
	\label{FIG_ANR_m200}
	
\end{figure}

The average-rejection results are especially revealing when interpreted together with the known sparsity levels used in the simulations. Since the expected number of non-null hypotheses equals mp, the close agreement between the empirical average numbers of rejections and $mp$ indicates that the procedure is selecting approximately the correct number of signals. This behavior is observed repeatedly across the equicorrelation, factor, Toeplitz, fractional Gaussian noise, and heterogeneous block models. Combined with the simultaneously favorable FDR and FNR characteristics, these findings provide particularly strong empirical evidence that the procedure is recovering the underlying support of the signal vector with high accuracy.

Taken together, the FDR, FNR, power, and average-rejection results provide remarkably consistent evidence of near-support recovery. Across the equicorrelation, factor, Toeplitz, fractional Gaussian noise, and heterogeneous block models, the proposed procedure simultaneously achieves false discovery rates close to the nominal level, false non-discovery rates approaching zero, powers close to one, and average numbers of rejections that closely track the expected number of true signals. This combination of accurate false-discovery control, near-perfect signal recovery, and rejection counts closely matching the expected number of true signals provides a compelling explanation for the substantial reductions in normalized misclassification risk reported in the previous subsection. Such a combination of properties is rarely observed in large-scale multiple testing problems. These findings suggest that covariance-adaptive residualization combined with stagewise GBS calibration can recover the underlying signals with remarkably high accuracy across a broad range of dependent testing environments.

\section{Discussion}\label{SECTION_DISCUSSION}

In this paper, we investigated a new calibration strategy for the Maximum Residual Down (MRD) procedure under arbitrary covariance dependence. The proposed methodology combines two complementary components: covariance-adaptive residualization through the MRD framework of \citet{COHEN_SACK_XU_2009} and stagewise calibration through the critical constants of \citet{GBS2009}. The resulting procedure preserves the dependence-aware residual structure of MRD while providing a simple, systematic, and computationally efficient calibration mechanism that can be applied across a broad range of covariance models. Taken together, these features yield a practically implementable covariance-adaptive multiple-testing procedure that substantially improves empirical operating characteristics under several dependence structures.

%In this paper, we investigated a new calibration strategy for the
%Maximum Residual Down (MRD) procedure under arbitrary covariance
%dependence. The proposed methodology combines two complementary
%ideas: covariance-adaptive residualization through the MRD framework \citet{COHEN_SACK_XU_2009}
%and stagewise calibration through the critical constants of \citet{GBS2009}. The resulting procedure preserves the dependence-aware residual structure of MRD while providing a simple and systematic calibration mechanism that is applicable across a broad range of covariance models. Importantly, the proposed calibration remains within the class of admissible residual-based step-down procedures studied by \citet{GC_ADMISSIBILITY_2026}, so that the resulting methodology retains a formal decision-theoretic justification rather than constituting an ad hoc recalibration of the original MRD framework.

From a theoretical perspective, the proposed procedure inherits the admissibility properties of the residual-based step-down framework under the vector loss function \ref{VECTOR_LOSS}. This follows directly from the general theory developed by \citet{GC_ADMISSIBILITY_2026}, since the proposed calibration modifies only the critical constants and leaves the geometric structure underlying the admissibility of residual-based step-down procedures unchanged.

A second contribution of the paper concerns computation. We derived an alternative representation of the MRD residual statistics that expresses all active residuals at a given stage in terms of a single active precision matrix. This representation reduces the worst-case number of
matrix inversions from $m(m+1)/2$ to at most $m$ and thereby
substantially simplifies practical implementation of the procedure. Importantly, this reduction is achieved without approximation: the resulting statistics are algebraically identical to the original MRD residuals. Thus, the computational simplification preserves the exact
testing procedure while substantially reducing the burden of
implementation. 

%More broadly, the alternative representation reveals that the entire residual-based testing mechanism can be viewed through the geometry of the active precision matrix, thereby providing both a computationally efficient implementation and a more transparent structural interpretation of the MRD framework.

The alternative representation also provides a useful geometric interpretation of the MRD statistics. At each stage, the statistic associated with an active hypothesis may be viewed as a normalized projection of the active data vector onto a direction determined by a column of the active precision matrix. From this perspective, the sequential residualization mechanism operates through a collection of evolving precision-matrix directions, making explicit the role of covariance geometry in both the construction of the testing statistics and the propagation of information across hypotheses.

%The simulation study reveals several interesting empirical findings.
%Across a variety of covariance structures, including equicorrelation,
%factor, Toeplitz, fractional Gaussian noise, sparse precision-matrix and heterogeneous block
%models, the GBS-calibrated MRD procedure frequently achieves the
%smallest normalized misclassification risk in sparse and moderately sparse regimes. The gains arise from a favorable balance between false discoveries and missed discoveries, yielding simultaneously strong power, low false non-discovery rates, and numbers of rejections that often closely track the expected number of true signals.

The simulation study reveals several important empirical findings. Across a variety of covariance structures, including equicorrelation, factor, Toeplitz, fractional Gaussian noise, heterogeneous block, and sparse precision-matrix models, the GBS-calibrated MRD procedure frequently achieves the smallest normalized misclassification risk in sparse and moderately sparse regimes. These gains arise from a favorable balance between false discoveries and missed discoveries, with the procedure often exhibiting strong power, low false non-discovery rates, and average numbers of rejections close to the expected number of true signals.

%An additional observation concerns the distinction between the original MRD procedure and its GBS-calibrated counterpart. Although both procedures employ the same covariance-adaptive residualization mechanism, their empirical behavior differs substantially across sparsity regimes. The proposed calibration often yields stronger overall signal-recovery behavior under structured dependence models, indicating that the empirical near-support-recovery phenomenon cannot be attributed solely to residualization. Rather, the simulation results suggest that calibration plays an important role in determining how information accumulated through sequential residualization is translated into testing decisions. In particular, the findings suggest that near-support recovery may arise from a nontrivial interaction among covariance-adaptive residualization, stagewise calibration, sparsity, and the geometry of the underlying dependence structure. Understanding this interaction theoretically remains an important open problem.

An important conclusion emerging from the simulation study is that the empirical operating characteristics of covariance-adaptive multiple testing procedures cannot be attributed solely to the residualization mechanism. Although the original MRD procedure and its GBS-calibrated counterpart employ identical covariance-adaptive residual statistics, their empirical behavior differs substantially across sparsity regimes and dependence structures. The proposed GBS calibration often yields stronger overall signal-recovery performance under structured dependence models, indicating that stagewise calibration itself plays an essential role in determining the overall balance between false discoveries and missed discoveries. More broadly, the simulation results suggest that the empirical near-support-recovery phenomenon arises from a nontrivial interaction among covariance-adaptive residualization, stagewise calibration, sparsity, and the geometry of the underlying dependence structure. Developing a rigorous theoretical understanding of this interaction remains an important direction for future research.

An interesting development that emerged after completion of the present work is our recent Bayesian investigation of covariance-aware multiple testing under one-factor dependence \citep{GhoshChakrabartiBSD2026}. In that work, the proposed MRD--GBS procedure was found to attain empirical Bayes risks remarkably close to the corresponding Bayes Oracle benchmark across a broad range of problem dimensions and sparsity levels. Since the Bayes Oracle represents the fundamental decision-theoretic benchmark for the underlying multiple testing problem, these findings provide strong independent empirical evidence that covariance-adaptive residualization combined with GBS stagewise calibration can operate remarkably close to the optimal Bayesian procedure under structured dependence.

These empirical observations are particularly intriguing when viewed alongside recent theoretical developments concerning the GBS procedure itself. Under independence, the GBS calibration has recently been shown to attain asymptotic Bayes optimality under sparsity \citep{GhoshChakrabarti2026GBSABOS} and, under suitable sparsity conditions, sharp asymptotic minimaxity in sparse Gaussian multiple testing problems \citep{Ghosh2026GBSMinimaxity}. Although these results do not directly extend to dependent settings, taken together with the near-Bayes-Oracle behavior observed under one-factor dependence, they suggest that covariance-adaptive residualization combined with principled stagewise calibration may possess substantially stronger optimality properties than are presently understood.

%An additional observation concerns the distinction between the original MRD procedure and its GBS-calibrated counterpart. Although both procedures employ the same covariance-adaptive residualization mechanism, the strongest signal-recovery behavior under the structured dependence models is observed under the GBS calibration. This suggests that the empirical support-recovery phenomenon cannot be attributed solely to residualization. Rather, the simulation results indicate that calibration itself may play an important role in determining how effectively information accumulated through the sequential residualization process is translated into accurate recovery of the underlying signal set. In particular, the findings suggest that support recovery may arise from a nontrivial interaction between covariance-adaptive residualization, stagewise calibration, and the geometry of the underlying dependence structure. Understanding this interaction theoretically remains an important open problem.

From a broader multiple-testing perspective, the most important empirical comparison is not between the calibrated MRD procedure and the original MRD procedure, but between covariance-aware residual-based procedures and the widely used class of marginal \(p\)-value based methods represented by BH and its variants. Across several dependence structures, the proposed methodology substantially improves classification accuracy relative to these procedures. This suggests that explicitly incorporating covariance information into the construction of the testing statistics may provide benefits beyond those obtainable through dependence-adjusted calibration of marginal \(p\)-values alone. More broadly, the numerical results indicate that dependence should not merely be viewed as a complication requiring adjustment, but as a potential source of information that can be exploited to improve multiple-testing performance. The observed gains suggest that appropriately incorporating covariance structure into the testing statistics themselves may substantially enhance classification accuracy and signal recovery beyond what can be achieved through marginal procedures alone.

An important practical advantage of the proposed methodology is that it replaces the model-dependent critical values required by the original MRD procedure with a simple and systematic stagewise calibration rule. While the original MRD procedure may occasionally achieve smaller misclassification risks in certain denser signal regimes, its implementation requires threshold constants that depend on the underlying covariance model and were originally selected through numerical experimentation. In contrast, the proposed GBS calibration is fully automatic and requires no model-specific tuning. The strong empirical performance of the calibrated procedure across a broad collection of dependence structures therefore suggests that substantial gains in operating characteristics can be achieved without sacrificing simplicity or generality of implementation.

%An important practical advantage of the proposed calibration is that it replaces the model-dependent critical values required by the original MRD procedure with a simple and systematic stagewise calibration rule. Although the original MRD procedure may occasionally achieve smaller misclassification risks in certain denser signal regimes, its implementation requires specification of threshold constants that depend on the underlying covariance model and were originally selected through numerical experimentation. In contrast, the proposed GBS calibration is fully automatic and requires no model-specific tuning. Viewed from this perspective, the strong empirical performance of the calibrated procedure relative to BH, adaptive BH, and GBS is particularly encouraging, since it is achieved using a unified calibration mechanism applicable across a broad collection of dependence structures.

One of the most intriguing observations concerns the apparent role of covariance geometry. The benefits of residual-based calibration are most pronounced under dependence structures exhibiting global, long-range, or block-wise correlation, whereas substantially weaker gains are observed under sparse precision-matrix models. These findings suggest that dependence may influence multiple testing performance through mechanisms extending beyond traditional error-rate considerations. In particular, the results raise the possibility that
the effectiveness of residual-based procedures depends on the extent to which information can propagate through the underlying dependence structure.

A particularly interesting theoretical question is whether the empirical near-support-recovery behavior observed in the simulations can be justified asymptotically and, if so, to characterize classes of covariance matrices \(\boldsymbol{\Sigma}\) under which the proposed MRD--GBS procedure achieves asymptotically exact support recovery as \(m\to\infty\). The simulation results suggest that such behavior may occur under covariance structures exhibiting substantial global, long-range, or block-wise dependence, whereas considerably weaker signal recovery is observed under sparse precision-matrix models. This raises the possibility that covariance geometry may fundamentally influence the difficulty of support recovery through its effect on information propagation in sequential residual-based testing procedures.

More specifically, one may ask whether there exist structural conditions on \(\boldsymbol{\Sigma}\) together with suitable sparsity and signal-strength
assumptions under which

\[P(\widehat S = S_0)\to 1,\]
where $\widehat S$ denotes the set of rejected hypotheses and $S_0$ denotes the true signal set. Understanding whether such phenomena arise primarily from the residualization mechanism, from the stagewise GBS calibration, or from an interaction between the two remains an important open problem.

%A particularly interesting theoretical question is to characterize
%classes of covariance matrices \(\boldsymbol{\Sigma}\) under which residual-based
%step-down procedures achieve asymptotically exact support recovery
%as $m \to \infty$. More specifically, one may ask whether there exist
%structural conditions on \(\boldsymbol{\Sigma}\) together with suitable sparsity and
%signal-strength assumptions under which the probability of recovering
%the true signal set converges to one. Understanding how covariance
%geometry interacts with sparsity and signal strength in such regimes
%appears to be an important open problem.

%A particularly interesting theoretical question is to characterize classes of
%covariance matrices \(\boldsymbol{\Sigma}\) under which residual-based step-down procedures
%can recover the true signal set with high probability in sparse asymptotic regimes.
%Such a theory would require understanding how the geometry of \(\boldsymbol{\Sigma}\), the
%signal strength, and the sparsity level jointly determine the propagation of
%information through the sequential residualization mechanism.

The influence of covariance geometry may extend beyond statistical performance alone. The alternative precision-matrix representation developed in Section~\ref{SECTION_MRD_STAT_ALTERNATIVE} suggests that dependence structure can also have important computational consequences. Certain covariance models admit particularly simple updates of the active precision matrix throughout the sequential elimination process, leading to substantial reductions in computational complexity. More broadly, these observations suggest that covariance structure may simultaneously govern both the statistical difficulty of a multiple-testing problem and the computational complexity of implementing covariance-adaptive testing procedures.

The present work is primarily methodological, and several important questions remain open. First, although the proposed calibration is motivated by the generalized step-down critical constants originally developed under independence, no formal false discovery rate control result is established here under arbitrary dependence. An important theoretical question is whether the proposed MRD--GBS procedure possesses asymptotic FDR control properties under suitable sparsity and dependence regimes. More broadly, recent work has established that the GBS procedure attains asymptotic Bayes optimality under sparsity \citep{GhoshChakrabarti2026GBSABOS} and, under suitable sparsity conditions, sharp asymptotic minimaxity in sparse Gaussian multiple testing problems \citep{Ghosh2026GBSMinimaxity}. These developments naturally raise the question of whether analogous optimality properties continue to hold after incorporating covariance-adaptive residualization. The encouraging empirical performance observed throughout the present work suggests that such theoretical guarantees may indeed be attainable, but establishing them remains an important challenge.

A second important direction concerns the development of a deeper theoretical understanding of the empirical phenomena observed in this paper. The simulation results suggest a close relationship among covariance structure, stagewise calibration, information propagation, signal recovery, and overall classification accuracy. In particular, the crossover behavior observed between the original MRD procedure and its GBS-calibrated counterpart indicates that the relative merits of different calibration strategies may depend in subtle ways on the underlying sparsity regime and the geometry of the covariance structure. Developing a theoretical framework capable of explaining these interactions would provide valuable insight into the mechanisms through which dependence information influences sequential multiple testing procedures.

The present work also suggests several methodological extensions. It would be worthwhile to investigate alternative covariance-adaptive calibration schemes that incorporate structural characteristics of the covariance or precision matrix directly into the choice of stagewise critical values. Furthermore, the proposed methodology assumes that the covariance structure is known. While this assumption is standard throughout much of the literature on multiple testing under dependence, extending covariance-adaptive residual-based procedures to settings with unknown covariance structures remains an important methodological challenge. A substantial literature now exists on high-dimensional covariance and precision-matrix estimation, including shrinkage estimators, thresholding methods, and sparse graphical modeling approaches \citep{LedoitWolf2004,BickelLevina2008,CaiLiu2011,FriedmanHastieTibshirani2008}. Understanding how covariance-estimation uncertainty propagates through the sequential residualization mechanism and influences calibration, admissibility, and overall operating characteristics constitutes an important direction for future research.

From a practical perspective, it would be of considerable interest to investigate the performance of the proposed methodology on large-scale scientific datasets arising in areas such as genomics, neuroimaging, and other high-dimensional applications where complex dependence structures naturally arise. Such empirical investigations would complement the extensive simulation study presented here while also providing insight into computational scalability, software implementation, and the practical advantages of covariance-adaptive residualization under realistic dependence patterns.

Finally, it would be of considerable interest to extend the present framework beyond the Gaussian setting and investigate whether analogous residual-based representations and covariance-adaptive calibration principles continue to hold under more general classes of dependence models. We hope that the perspective developed in this paper provides a useful foundation for these and related future investigations.

Taken together, the theoretical and empirical findings of the present work suggest that covariance-adaptive residualization and stagewise calibration play fundamentally different but complementary roles in dependent multiple testing. Residualization determines how dependence information is accumulated and propagated through the sequential testing process, whereas stagewise calibration determines how that accumulated information is translated into testing decisions. Beyond providing a practical calibration strategy for the MRD framework, the present work raises broader questions concerning dependence, information propagation, and statistical complexity in large-scale multiple testing problems. The substantial reductions in misclassification risk achieved by covariance-adaptive residualization relative to several widely used marginal testing procedures suggest that the effective complexity of a multiple testing problem may depend not only on its nominal multiplicity but also on the extent to which dependence information can be exploited by the testing procedure itself. A deeper understanding of this relationship may ultimately lead to new covariance-adaptive calibration principles that extend beyond traditional multiplicity adjustments.

\appendix

\section*{Appendix A: Proofs of Technical Results}\label{SECTION_APPENDIX_PROOFS}

This appendix contains the proofs of the technical results presented in Section~\ref{SECTION_MRD_STAT_ALTERNATIVE}. We first establish a useful block-inverse identity connecting the active covariance matrix and its corresponding precision matrix. This identity is then used to derive the alternative precision-matrix representation of the MRD residual statistics developed in Theorem~\ref{THM_MRD_PRECISION_REP}. The resulting representation forms the basis of the computational simplifications described in Corollaries~\ref{COR_ONE_INVERSE_PER_STAGE} and \ref{COR_COMPLEXITY_REDUCTION}.

%Proofs of some of the results of this paper make use of the following important results from theory of matrices. Of them the first one is the celebrated Sherman$-$Morrison$-$Woodbury (SMW) identity, while the other one provides an important formula for obtaining the inverse of a $2\times 2$ partitioned matrix. See, for example, \cite{SANT_WILL_NOTZ} among many other sources.\newline
%% % % of this thesis made use of the following celebrated result from matrix algebra, popularly known as The .\newline
%
%\begin{lem}\label{SWM_IDENTITY}
%	Suppose that B is any $m \times m$ nonsingular matrix, C is a $r \times r$ non-singular matrix, and A is an arbitrary $n \times r$ matrix such that $(A^{T}B^{-1}A+C)^{-1}$ is nonsingular. Then $(B+A^{T}C^{-1}A)$ is $m \times m$ non-singular with inverse given by,
%	\begin{equation}
%		(B+A^{T}C^{-1}A)^{-1} = B^{-1} - B^{-1}A(A^{T}B^{-1}A+C)^{-1}A^{T}B^{-1}.\nonumber
%	\end{equation}   
%\end{lem}
%
%\begin{lem}\label{INV_PART_MAT}
%	Suppose that $B$ is a $m \times m$ non-singular matrix and
%	\begin{eqnarray}
%		T
%		&=&
%		\begin{pmatrix}
%			D & A^{T}\\ 
%			A & B
%		\end{pmatrix},\nonumber
%	\end{eqnarray}
%	where $D$ is $m\times m$ and $A$ is $n\times m$. Then $T$ is non-singular if and only if
%	\begin{equation}
%		Q=D-A^{T}B^{-1}A\nonumber
%	\end{equation}
%	is non-singular. In this case, $T^{-1}$ is given by
%	\begin{eqnarray}
%		T^{-1}
%		&=&
%		\begin{pmatrix}
%			Q^{-1} & -Q^{-1}A^{T}B^{-1}\\ 
%			-B^{-1}AQ^{-1} & B^{-1}+B^{-1}AQ^{-1}A^{T}B^{-1}
%		\end{pmatrix}.\nonumber
%	\end{eqnarray}
%\end{lem} 

\begin{flushleft}
	\textbf{Proof of Lemma \ref{LEM_BLOCK_INVERSE_MRD}:}
\end{flushleft}

\begin{proof}
Without loss of generality, we order the coordinates in the index set \(A\) so that the coordinate \(j\) appears first. Thus, we may write
\[
\boldsymbol{\Sigma}_A =
\begin{pmatrix}
	\sigma_{jj} & \boldsymbol{\sigma}_{j,A_{-j}}^{T} \\
	\boldsymbol{\sigma}_{j,A_{-j}} & \boldsymbol{\Sigma}_{A_{-j}}
\end{pmatrix}.
\]	
	
Let
\[
	S_j=\sigma_{jj}-\boldsymbol{\sigma}_{j,A_{-j}}^{T}\boldsymbol{\Sigma}_{A_{-j}}^{-1}\boldsymbol{\sigma}_{j,A_{-j}}
	\]
	denote the Schur complement of \(\boldsymbol{\Sigma}_{A_{-j}}\) in \(\boldsymbol{\Sigma}_A\). Since
	\(\boldsymbol{\Sigma}_A\) is positive definite, \(S_j>0\). By the standard inverse formula for a partitioned matrix (or equivalently, the Schur complement formula), we obtain
	\[
	\boldsymbol{\Sigma}_A^{-1}
	=
	\begin{pmatrix}
		S_j^{-1}
		&
		- S_j^{-1}\boldsymbol{\sigma}_{j,A_{-j}}^{T}\boldsymbol{\Sigma}_{A_{-j}}^{-1}
		\\[4pt]
		-\boldsymbol{\Sigma}_{A_{-j}}^{-1}\boldsymbol{\sigma}_{j,A_{-j}} S_j^{-1}
		&
		\boldsymbol{\Sigma}_{A_{-j}}^{-1}
		+
		\boldsymbol{\Sigma}_{A_{-j}}^{-1}\boldsymbol{\sigma}_{j,A_{-j}}
		S_j^{-1}
		\boldsymbol{\sigma}_{j,A_{-j}}^{T}\boldsymbol{\Sigma}_{A_{-j}}^{-1}
	\end{pmatrix}.
	\]
	Since \(\boldsymbol{B}_A=\boldsymbol{\Sigma}_A^{-1}\), the upper-left entry of this matrix gives
	\[
	b^A_{jj}=S_j^{-1}
	=
	\left(
	\sigma_{jj}
	-
	\boldsymbol{\sigma}_{j,A_{-j}}^{T}\boldsymbol{\Sigma}_{A_{-j}}^{-1}\boldsymbol{\sigma}_{j,A_{-j}}
	\right)^{-1}.
	\]
	Similarly, the lower-left block gives
	\[
	\boldsymbol{b}^A_{A_{-j},j}
	=
	-\boldsymbol{\Sigma}_{A_{-j}}^{-1}\boldsymbol{\sigma}_{j,A_{-j}}S_j^{-1}
	=
	-b^A_{jj}\boldsymbol{\Sigma}_{A_{-j}}^{-1}\boldsymbol{\sigma}_{j,A_{-j}},
	\]
	which proves the second identity. This completes the proof of Lemma \ref{LEM_BLOCK_INVERSE_MRD}.
\end{proof}

%\begin{proof}
%The result follows from the standard inverse formula for a partitioned matrix.
%	Indeed, the Schur complement of \(\Sigma_{A_{-j}}\) in \(\Sigma_A\) is
%	\[
%	\sigma_{jj}
%	-
%	\sigma_{j,A_{-j}}^{T}
%	\Sigma_{A_{-j}}^{-1}
%	\sigma_{j,A_{-j}}.
%	\]
%	The \((j,j)\)-entry of \(\Sigma_A^{-1}\) is the inverse of this Schur
%	complement, which gives the first identity. The expression for
%	\(b^{A}_{A_{-j},j}\) follows from the corresponding off-diagonal block in the
%	partitioned inverse formula.
%\end{proof}

\begin{flushleft}
	\textbf{Proof of Theorem \ref{THM_MRD_PRECISION_REP}:}
\end{flushleft}

\begin{proof}
	Fix a stage \(t\) and an active coordinate \(j\in \mathcal{A}_{t}\). For notational
	simplicity, write
	\[
	A=\mathcal{A}_{t},
	\qquad
	A_{-j}=A\setminus\{j\}.
	\]
	The original MRD residual statistic is
	\[
	U^{(i_1,\ldots,i_{t-1})}_{tj}(\boldsymbol{x})
	=
	\frac{
		x_j
		-
		\boldsymbol{\sigma}_{j,A_{-j}}^{T}
		\boldsymbol{\Sigma}_{A_{-j}}^{-1}
		\boldsymbol{x}_{A_{-j}}
	}{
		\sigma_{j\cdot A_{-j}}^{1/2}
	},
	\]
	where
	\[
	\sigma_{j\cdot A_{-j}}
	=
	\sigma_{jj}
	-
	\boldsymbol{\sigma}_{j,A_{-j}}^{T}
	\boldsymbol{\Sigma}_{A_{-j}}^{-1}
	\boldsymbol{\sigma}_{j,A_{-j}}
	\]
	is the conditional variance of \(X_j\) given \(\boldsymbol{X}_{A_{-j}}\).
	
	By Lemma~\ref{LEM_BLOCK_INVERSE_MRD},
	\[
	b^A_{jj}
	=
	\frac{1}{\sigma_{j\cdot A_{-j}}},
	\qquad
	\boldsymbol{b}^A_{A_{-j},j}
	=
	-
	b^A_{jj}
	\boldsymbol{\Sigma}_{A_{-j}}^{-1}
	\boldsymbol{\sigma}_{j,A_{-j}}.
	\]
	Now the inner product of the \(j\)-th column of \(\boldsymbol{B}_A=\boldsymbol{\Sigma}_A^{-1}\) with the active
	data vector \(\boldsymbol{x}_A\) can be decomposed as
	\[
	\sum_{k\in A} b^A_{kj}x_k
	=
	b^A_{jj}x_j
	+
	\big(\boldsymbol{b}^A_{A_{-j},j}\big)^T
	\boldsymbol{x}_{A_{-j}}.
	\]
	Substituting the expression for \(\boldsymbol{b}^A_{A_{-j},j}\), we obtain
	\[
	\sum_{k\in A} b^A_{kj}x_k
	=
	b^A_{jj}
	\left(
	x_j
	-
	\boldsymbol{\sigma}_{j,A_{-j}}^{T}
	\boldsymbol{\Sigma}_{A_{-j}}^{-1}
	\boldsymbol{x}_{A_{-j}}
	\right).
	\]
	Since \(b^A_{jj}=1/\sigma_{j\cdot A_{-j}}\), it follows that
	\[
	\frac{
		\sum_{k\in A} b^A_{kj}x_k
	}{
		\sqrt{b^A_{jj}}
	}
	=
	\sqrt{b^A_{jj}}
	\left(
	x_j
	-
	\boldsymbol{\sigma}_{j,A_{-j}}^{T}
	\boldsymbol{\Sigma}_{A_{-j}}^{-1}
	\boldsymbol{x}_{A_{-j}}
	\right)
	=
	\frac{
		x_j
		-
		\boldsymbol{\sigma}_{j,A_{-j}}^{T}
		\boldsymbol{\Sigma}_{A_{-j}}^{-1}
		\boldsymbol{x}_{A_{-j}}
	}{
		\sigma_{j\cdot A_{-j}}^{1/2}
	}.
	\]
	The final expression is precisely
	\[
	U^{(i_1,\ldots,i_{t-1})}_{tj}(\boldsymbol{x}).
	\]
	Therefore,
	\[
	U^{(i_1,\ldots,i_{t-1})}_{tj}(\boldsymbol{x})
	=
	\frac{
		\sum_{k\in A} b^A_{kj}x_k
	}{
		\sqrt{b^A_{jj}}
	},
	\]
	as claimed. This completes the proof of Theorem \ref{THM_MRD_PRECISION_REP}.
\end{proof}

\section*{Appendix B: Detailed Simulation Results for $m=200$}
\label{APPENDIX_SIMULATION_m200}

This Appendix Contains supplementary numerical results from the simulation study described in Section~\ref{SECTION_SIMULATIONS}. The tables provide detailed performance summaries for all competing procedures across the dependence structures considered in this paper and allow a more complete assessment of their classification and signal-recovery behavior. These results complement and reinforce the principal findings reported in the main text.

\subsection*{Normalized misclassification rates ($m=200$)}

The following tables provide the numerical summaries corresponding to the $m=200$ simulations discussed in Section~\ref{SECTION_SIMULATIONS}. For the normalized misclassification-rate tables, smaller values indicate superior overall classification performance, and boldfaced entries mark the smallest value in
each row.

\begin{table}[!htbp]
	\caption{Normalized misclassification rates under the equicorrelation model with $m=200$, $\alpha=0.1$, and 3000 Monte Carlo replications. Smaller values indicate superior classification performance. Boldfaced entries correspond to the smallest normalized misclassification rates.}
	\label{tab:risk_equicorrelation_m200}
	\centering
	\small
	\begin{tabular}{lccccc}
		\toprule
		$p$ & BH & Storey-BH & GBS & MRD-CSX & MRD-GBS \\
		\midrule
		0.01  & 0.0262 & 0.1272 & 0.1501 & 0.0117 & \textbf{0.0016} \\
		0.03  & 0.0343 & 0.1331 & 0.1599 & 0.0131 & \textbf{0.0039} \\
		0.05  & 0.0400 & 0.1301 & 0.1596 & 0.0134 & \textbf{0.0062} \\
		0.075 & 0.0423 & 0.1348 & 0.1597 & 0.0131 & \textbf{0.0089} \\
		0.10  & 0.0483 & 0.1428 & 0.1685 & 0.0129 & \textbf{0.0117} \\
		0.15  & 0.0577 & 0.1451 & 0.1591 & \textbf{0.0126} & 0.0171 \\
		0.20  & 0.0609 & 0.1355 & 0.1480 & \textbf{0.0121} & 0.0226 \\
		0.25  & 0.0658 & 0.1421 & 0.1536 & \textbf{0.0118} & 0.0279 \\
		0.30  & 0.0677 & 0.1337 & 0.1426 & \textbf{0.0115} & 0.0336 \\
		0.35  & 0.0737 & 0.1437 & 0.1501 & \textbf{0.0112} & 0.0389 \\
		\bottomrule
	\end{tabular}
\end{table}

\begin{table}[!htbp]
	\caption{Normalized misclassification rates under the factor model with $m=200$, $\alpha=0.1$, and 3000 Monte Carlo replications. Smaller values indicate superior classification performance. Boldfaced entries correspond to the smallest normalized misclassification rates.}
	\label{tab:risk_factor_m200}
	\centering
	\small
	\begin{tabular}{lccccc}
		\toprule
		$p$ & BH & Storey-BH & GBS & MRD-CSX & MRD-GBS \\
		\midrule
		0.01  & 0.0097 & 0.0140 & 0.0114 & 0.0118 & \textbf{0.0016} \\
		0.03  & 0.0187 & 0.0247 & 0.0199 & 0.0131 & \textbf{0.0039} \\
		0.05  & 0.0261 & 0.0330 & 0.0283 & 0.0134 & \textbf{0.0063} \\
		0.075 & 0.0317 & 0.0381 & 0.0339 & 0.0130 & \textbf{0.0088} \\
		0.10  & 0.0385 & 0.0463 & 0.0413 & 0.0130 & \textbf{0.0117} \\
		0.15  & 0.0472 & 0.0555 & 0.0495 & \textbf{0.0127} & 0.0171 \\
		0.20  & 0.0540 & 0.0628 & 0.0568 & \textbf{0.0124} & 0.0225 \\
		0.25  & 0.0595 & 0.0688 & 0.0625 & \textbf{0.0120} & 0.0278 \\
		0.30  & 0.0635 & 0.0727 & 0.0677 & \textbf{0.0118} & 0.0331 \\
		0.35  & 0.0677 & 0.0780 & 0.0730 & \textbf{0.0114} & 0.0383 \\
		\bottomrule
	\end{tabular}
\end{table}

\begin{table}[!htbp]
	\caption{Normalized misclassification rates under the fractional Gaussian noise model with $m=200$, $\alpha=0.1$, and 3000 Monte Carlo replications. Smaller values indicate superior classification performance. Boldfaced entries correspond to the smallest normalized misclassification rates.}
	\label{tab:risk_fractional_gaussian_noise_m200}
	\centering
	\small
	\begin{tabular}{lccccc}
		\toprule
		$p$ & BH & Storey-BH & GBS & MRD-CSX & MRD-GBS \\
		\midrule
		0.01  & 0.0106 & 0.0255 & 0.0184 & 0.0111 & \textbf{0.0016} \\
		0.03  & 0.0192 & 0.0368 & 0.0281 & 0.0125 & \textbf{0.0040} \\
		0.05  & 0.0262 & 0.0427 & 0.0341 & 0.0127 & \textbf{0.0062} \\
		0.075 & 0.0319 & 0.0467 & 0.0396 & 0.0128 & \textbf{0.0089} \\
		0.10  & 0.0378 & 0.0545 & 0.0471 & 0.0131 & \textbf{0.0119} \\
		0.15  & 0.0486 & 0.0657 & 0.0576 & \textbf{0.0143} & 0.0181 \\
		0.20  & 0.0544 & 0.0695 & 0.0636 & \textbf{0.0163} & 0.0246 \\
		0.25  & 0.0600 & 0.0771 & 0.0687 & \textbf{0.0192} & 0.0311 \\
		0.30  & 0.0642 & 0.0787 & 0.0740 & \textbf{0.0257} & 0.0405 \\
		0.35  & 0.0696 & 0.0856 & 0.0799 & \textbf{0.0327} & 0.0487 \\
		\bottomrule
	\end{tabular}
\end{table}

\begin{table}[!htbp]
	\caption{Normalized misclassification rates under the Toeplitz model with $m=200$, $\alpha=0.1$, and 3000 Monte Carlo replications. Smaller values indicate superior classification performance. Boldfaced entries correspond to the smallest normalized misclassification rates.}
	\label{tab:risk_toeplitz_m200}
	\centering
	\small
	\begin{tabular}{lccccc}
		\toprule
		$p$ & BH & Storey-BH & GBS & MRD-CSX & MRD-GBS \\
		\midrule
		0.01  & 0.0161 & 0.0570 & 0.0540 & 0.0112 & \textbf{0.0016} \\
		0.03  & 0.0253 & 0.0663 & 0.0659 & 0.0124 & \textbf{0.0037} \\
		0.05  & 0.0310 & 0.0720 & 0.0690 & 0.0125 & \textbf{0.0058} \\
		0.075 & 0.0357 & 0.0709 & 0.0700 & 0.0124 & \textbf{0.0084} \\
		0.10  & 0.0414 & 0.0785 & 0.0773 & 0.0125 & \textbf{0.0113} \\
		0.15  & 0.0523 & 0.0903 & 0.0833 & \textbf{0.0131} & 0.0172 \\
		0.20  & 0.0567 & 0.0923 & 0.0886 & \textbf{0.0145} & 0.0239 \\
		0.25  & 0.0624 & 0.0990 & 0.0958 & \textbf{0.0177} & 0.0316 \\
		0.30  & 0.0655 & 0.0972 & 0.0949 & \textbf{0.0242} & 0.0427 \\
		0.35  & 0.0712 & 0.1050 & 0.1031 & \textbf{0.0334} & 0.0555 \\
		\bottomrule
	\end{tabular}
\end{table}

\begin{table}[!htbp]
	\caption{Normalized misclassification rates under the heterogeneous block model with $m=200$, $\alpha=0.1$, and 3000 Monte Carlo replications. Smaller values indicate superior classification performance. Boldfaced entries correspond to the smallest normalized misclassification rates.}
	\label{tab:risk_heterogeneous_block_m200}
	\centering
	\small
	\begin{tabular}{lccccc}
		\toprule
		$p$ & BH & Storey-BH & GBS & MRD-CSX & MRD-GBS \\
		\midrule
		0.01  & 0.0070 & 0.0075 & 0.0071 & 0.0116 & \textbf{0.0024} \\
		0.03  & 0.0156 & 0.0163 & 0.0157 & 0.0137 & \textbf{0.0054} \\
		0.05  & 0.0233 & 0.0239 & 0.0241 & 0.0144 & \textbf{0.0081} \\
		0.075 & 0.0297 & 0.0306 & 0.0299 & 0.0147 & \textbf{0.0110} \\
		0.10  & 0.0353 & 0.0364 & 0.0353 & 0.0153 & \textbf{0.0141} \\
		0.15  & 0.0453 & 0.0468 & 0.0458 & \textbf{0.0164} & 0.0198 \\
		0.20  & 0.0519 & 0.0537 & 0.0528 & \textbf{0.0174} & 0.0255 \\
		0.25  & 0.0584 & 0.0609 & 0.0589 & \textbf{0.0188} & 0.0306 \\
		0.30  & 0.0639 & 0.0672 & 0.0650 & \textbf{0.0222} & 0.0380 \\
		0.35  & 0.0682 & 0.0721 & 0.0713 & \textbf{0.0278} & 0.0471 \\
		\bottomrule
	\end{tabular}
\end{table}

\begin{table}[!htbp]
	\caption{Normalized misclassification rates under the sparse precision model with $m=200$, $\alpha=0.1$, and 3000 Monte Carlo replications. Smaller values indicate superior classification performance. Boldfaced entries correspond to the smallest normalized misclassification rates.}
	\label{tab:risk_sparse_precision_m200}
	\centering
	\small
	\begin{tabular}{lccccc}
		\toprule
		$p$ & BH & Storey-BH & GBS & MRD-CSX & MRD-GBS \\
		\midrule
		0.01  & 0.0062 & 0.0062 & 0.0065 & 0.0112 & \textbf{0.0033} \\
		0.03  & 0.0150 & 0.0150 & 0.0150 & 0.0145 & \textbf{0.0075} \\
		0.05  & 0.0221 & 0.0221 & 0.0227 & 0.0164 & \textbf{0.0115} \\
		0.075 & 0.0290 & 0.0290 & 0.0290 & 0.0189 & \textbf{0.0161} \\
		0.10  & 0.0346 & 0.0345 & 0.0344 & 0.0219 & \textbf{0.0207} \\
		0.15  & 0.0444 & 0.0441 & 0.0443 & \textbf{0.0288} & 0.0301 \\
		0.20  & 0.0526 & 0.0521 & 0.0522 & \textbf{0.0379} & 0.0400 \\
		0.25  & 0.0582 & 0.0580 & 0.0580 & \textbf{0.0487} & 0.0492 \\
		0.30  & 0.0633 & 0.0639 & 0.0640 & 0.0622 & \textbf{0.0599} \\
		0.35  & \textbf{0.0673} & 0.0684 & 0.0688 & 0.0765 & 0.0686 \\
		\bottomrule
	\end{tabular}
\end{table}

\clearpage

\subsection*{False discovery rates ($m=200$)}

The next set of tables reports empirical false discovery rates. Since an
overly conservative procedure may have very small FDR at the expense of missed
signals, boldfaced entries identify the values closest to the nominal level
$\alpha=0.10$, rather than the smallest values.

\begin{table}[!htbp]
	\caption{False discovery rates under the equicorrelation model with $m=200$, $\alpha=0.1$, and 3000 Monte Carlo replications. Values close to the nominal level $\alpha=0.10$ indicate favorable false-discovery behavior. Boldfaced entries correspond to the values closest to the nominal level.}
	\label{tab:fdr_equicorrelation_m200}
	\centering
	\small
	\begin{tabular}{lccccc}
		\toprule
		$p$ & BH & Storey-BH & GBS & MRD-CSX & MRD-GBS \\
		\midrule
		0.01  & 0.0472 & 0.1393 & 0.1465 & 0.4468 & \textbf{0.0956} \\
		0.03  & 0.0502 & 0.1441 & 0.1514 & 0.2990 & \textbf{0.0954} \\
		0.05  & 0.0515 & 0.1408 & 0.1470 & 0.2093 & \textbf{0.0998} \\
		0.075 & 0.0487 & 0.1475 & 0.1455 & 0.1487 & \textbf{0.0993} \\
		0.10  & 0.0548 & 0.1550 & 0.1534 & 0.1135 & \textbf{0.0987} \\
		0.15  & 0.0603 & 0.1584 & 0.1435 & 0.0768 & \textbf{0.0987} \\
		0.20  & 0.0568 & 0.1484 & 0.1335 & 0.0570 & \textbf{0.0991} \\
		0.25  & 0.0592 & 0.1554 & 0.1408 & 0.0445 & \textbf{0.0984} \\
		0.30  & 0.0564 & 0.1485 & 0.1331 & 0.0365 & \textbf{0.0986} \\
		0.35  & 0.0581 & 0.1585 & 0.1442 & 0.0307 & \textbf{0.0985} \\
		\bottomrule
	\end{tabular}
\end{table}

\begin{table}[!htbp]
	\caption{False discovery rates under the factor model with $m=200$, $\alpha=0.1$, and 3000 Monte Carlo replications. Values close to the nominal level $\alpha=0.10$ indicate favorable false-discovery behavior. Boldfaced entries correspond to the values closest to the nominal level.}
	\label{tab:fdr_factor_m200}
	\centering
	\small
	\begin{tabular}{lccccc}
		\toprule
		$p$ & BH & Storey-BH & GBS & MRD-CSX & MRD-GBS \\
		\midrule
		0.01  & 0.0580 & 0.0806 & 0.0597 & 0.4536 & \textbf{0.0923} \\
		0.03  & 0.0724 & \textbf{0.1003} & 0.0748 & 0.3021 & 0.0977 \\
		0.05  & 0.0802 & 0.1098 & 0.0848 & 0.2089 & \textbf{0.0991} \\
		0.075 & 0.0774 & 0.1081 & 0.0831 & 0.1472 & \textbf{0.0984} \\
		0.10  & 0.0814 & 0.1156 & 0.0911 & 0.1133 & \textbf{0.0984} \\
		0.15  & 0.0772 & 0.1126 & 0.0892 & 0.0776 & \textbf{0.0986} \\
		0.20  & 0.0746 & 0.1143 & 0.0917 & 0.0572 & \textbf{0.0984} \\
		0.25  & 0.0697 & 0.1122 & 0.0916 & 0.0446 & \textbf{0.0978} \\
		0.30  & 0.0658 & 0.1103 & 0.0926 & 0.0366 & \textbf{0.0970} \\
		0.35  & 0.0618 & 0.1120 & 0.0948 & 0.0306 & \textbf{0.0969} \\
		\bottomrule
	\end{tabular}
\end{table}

\begin{table}[!htbp]
	\caption{False discovery rates under the fractional Gaussian noise model with $m=200$, $\alpha=0.1$, and 3000 Monte Carlo replications. Values close to the nominal level $\alpha=0.10$ indicate favorable false-discovery behavior. Boldfaced entries correspond to the values closest to the nominal level.}
	\label{tab:fdr_fractional_gaussian_noise_m200}
	\centering
	\small
	\begin{tabular}{lccccc}
		\toprule
		$p$ & BH & Storey-BH & GBS & MRD-CSX & MRD-GBS \\
		\midrule
		0.01 & 0.0754 & \textbf{0.1057} & 0.0784 & 0.4358 & 0.0903 \\
		0.03  & 0.0791 & 0.1101 & 0.0832 & 0.2892 & \textbf{0.0975} \\
		0.05  & 0.0821 & 0.1138 & 0.0876 & 0.2000 & \textbf{0.0968} \\
		0.075 & 0.0777 & 0.1085 & 0.0852 & 0.1437 & \textbf{0.0970} \\
		0.10  & 0.0797 & 0.1152 & 0.0905 & 0.1121 & \textbf{0.0980} \\
		0.15  & 0.0820 & 0.1182 & 0.0955 & 0.0813 & \textbf{0.1005} \\
		0.20  & 0.0750 & 0.1118 & 0.0941 & 0.0666 & \textbf{0.1027} \\
		0.25 & 0.0721 & 0.1172 & \textbf{0.0968} & 0.0585 & 0.1037 \\
		0.30  & 0.0664 & 0.1116 & \textbf{0.0966} & 0.0588 & 0.1104 \\
		0.35  & 0.0657 & 0.1170 & \textbf{0.1013} & 0.0593 & 0.1143 \\
		\bottomrule
	\end{tabular}
\end{table}

\begin{table}[!htbp]
	\caption{False discovery rates under the Toeplitz model with $m=200$, $\alpha=0.1$, and 3000 Monte Carlo replications. Values close to the nominal level $\alpha=0.10$ indicate favorable false-discovery behavior. Boldfaced entries correspond to the values closest to the nominal level.}
	\label{tab:fdr_toeplitz_m200}
	\centering
	\small
	\begin{tabular}{lccccc}
		\toprule
		$p$ & BH & Storey-BH & GBS & MRD-CSX & MRD-GBS \\
		\midrule
		0.01  & 0.0538 & \textbf{0.1023} & 0.0777 & 0.4372 & 0.0907 \\
		0.03  & 0.0602 & 0.1119 & 0.0855 & 0.2880 & \textbf{0.0939} \\
		0.05  & 0.0651 & 0.1182 & 0.0910 & 0.1985 & \textbf{0.0944} \\
		0.075 & 0.0636 & 0.1122 & 0.0875 & 0.1422 & \textbf{0.0949} \\
		0.10  & 0.0665 & 0.1215 & 0.0936 & 0.1102 & \textbf{0.0964} \\
		0.15  & 0.0738 & 0.1307 & \textbf{0.1007} & 0.0790 & 0.0987 \\
		0.20  & 0.0667 & 0.1255 & \textbf{0.0995} & 0.0661 & 0.1031 \\
		0.25 & 0.0672 & 0.1324 & 0.1091 & 0.0620 & \textbf{0.1083} \\
		0.30  & 0.0620 & 0.1242 & \textbf{0.1037} & 0.0661 & 0.1186 \\
		0.35  & 0.0634 & 0.1324 & \textbf{0.1140} & 0.0745 & 0.1306 \\
		\bottomrule
	\end{tabular}
\end{table}

\begin{table}[!htbp]
	\caption{False discovery rates under the heterogeneous block model with $m=200$, $\alpha=0.1$, and 3000 Monte Carlo replications. Values close to the nominal level $\alpha=0.10$ indicate favorable false-discovery behavior. Boldfaced entries correspond to the values closest to the nominal level.}
	\label{tab:fdr_heterogeneous_block_m200}
	\centering
	\small
	\begin{tabular}{lccccc}
		\toprule
		$p$ & BH & Storey-BH & GBS & MRD-CSX & MRD-GBS \\
		\midrule
		0.01  & 0.0792 & 0.0918 & 0.0807 & 0.4279 & \textbf{0.0963} \\
		0.03  & 0.0765 & 0.0890 & 0.0789 & 0.3008 & \textbf{0.0955} \\
		0.05  & 0.0897 & 0.1039 & 0.0941 & 0.2115 & \textbf{0.0978} \\
		0.075 & 0.0837 & \textbf{0.1002} & 0.0893 & 0.1492 & 0.0967 \\
		0.10  & 0.0849 & 0.1030 & 0.0926 & 0.1149 & \textbf{0.0975} \\
		0.15  & 0.0822 & 0.1053 & 0.0947 & 0.0784 & \textbf{0.0981} \\
		0.20  & 0.0747 & 0.1023 & 0.0915 & 0.0588 & \textbf{0.0984} \\
		0.25  & 0.0725 & 0.1046 & 0.0937 & 0.0467 & \textbf{0.0969} \\
		0.30  & 0.0690 & 0.1063 & 0.0954 & 0.0421 & \textbf{0.0995} \\
		0.35  & 0.0648 & 0.1061 & \textbf{0.0970} & 0.0415 & 0.1037 \\
		\bottomrule
	\end{tabular}
\end{table}

\begin{table}[!htbp]
	\caption{False discovery rates under the sparse precision-matrix model with $m=200$, $\alpha=0.1$, and 3000 Monte Carlo replications. Values close to the nominal level $\alpha=0.10$ indicate favorable false-discovery behavior. Boldfaced entries correspond to the values closest to the nominal level.}
	\label{tab:fdr_sparse_precision_m200}
	\centering
	\small
	\begin{tabular}{lccccc}
		\toprule
		$p$ & BH & Storey-BH & GBS & MRD-CSX & MRD-GBS \\
		\midrule
		0.01  & 0.0985 & 0.1021 & \textbf{0.0996} & 0.4001 & 0.0936 \\
		0.03  & 0.0940 & \textbf{0.0994} & 0.0957 & 0.3026 & 0.1031 \\
		0.05  & 0.0923 & \textbf{0.1013} & 0.0975 & 0.2165 & 0.1066 \\
		0.075 & 0.0911 & \textbf{0.1009} & 0.0961 & 0.1608 & 0.1080 \\
		0.10  & 0.0899 & \textbf{0.1017} & 0.0968 & 0.1296 & 0.1079 \\
		0.15  & 0.0838 & \textbf{0.1007} & 0.0958 & 0.0982 & 0.1105 \\
		0.20  & 0.0797 & \textbf{0.1007} & 0.0951 & 0.0829 & 0.1130 \\
		0.25  & 0.0743 & \textbf{0.1004} & 0.0950 & 0.0731 & 0.1121 \\
		0.30  & 0.0689 & \textbf{0.1008} & 0.0953 & 0.0699 & 0.1150 \\
		0.35  & 0.0649 & \textbf{0.1004} & 0.0957 & 0.0671 & 0.1156 \\
		\bottomrule
	\end{tabular}
\end{table}

\clearpage

\subsection*{False non-discovery rates ($m=200$)}

The following tables summarize empirical false non-discovery rates. Smaller
values indicate fewer missed signals, and boldfaced entries mark the smallest
values in each row, with ties retained when they occur at the displayed
precision.

\begin{table}[!htbp]
	\caption{False non-discovery rates under the equicorrelation model with $m=200$, $\alpha=0.1$, and 3000 Monte Carlo replications. Smaller values indicate superior signal-recovery performance. Boldfaced entries correspond to the smallest false non-discovery rates.}
	\label{tab:fnr_equicorrelation_m200}
	\centering
	\small
	\begin{tabular}{lccccc}
		\toprule
		$p$ & BH & Storey-BH & GBS & MRD-CSX & MRD-GBS \\
		\midrule
		0.01  & 0.0060 & 0.0045 & 0.0048 & \textbf{0.0000} & \textbf{0.0000} \\
		0.03  & 0.0150 & 0.0107 & 0.0116 & \textbf{0.0000} & \textbf{0.0000} \\
		0.05  & 0.0217 & 0.0153 & 0.0175 & \textbf{0.0000} & \textbf{0.0000} \\
		0.075 & 0.0276 & 0.0183 & 0.0213 & \textbf{0.0000} & \textbf{0.0000} \\
		0.10  & 0.0337 & 0.0201 & 0.0239 & \textbf{0.0000} & 0.0001 \\
		0.15  & 0.0450 & 0.0242 & 0.0293 & \textbf{0.0000} & \textbf{0.0000} \\
		0.20  & 0.0519 & 0.0263 & 0.0332 & \textbf{0.0000} & \textbf{0.0000} \\
		0.25  & 0.0601 & 0.0269 & 0.0334 & 0.0001 & \textbf{0.0000} \\
		0.30  & 0.0664 & 0.0272 & 0.0352 & 0.0001 & \textbf{0.0000} \\
		0.35  & 0.0769 & 0.0274 & 0.0353 & 0.0001 & \textbf{0.0000} \\
		\bottomrule
	\end{tabular}
\end{table}

\begin{table}[!htbp]
	\caption{False non-discovery rates under the factor model with $m=200$, $\alpha=0.1$, and 3000 Monte Carlo replications. Smaller values indicate superior signal-recovery performance. Boldfaced entries correspond to the smallest false non-discovery rates.}
	\label{tab:fnr_factor_m200}
	\centering
	\small
	\begin{tabular}{lccccc}
		\toprule
		$p$ & BH & Storey-BH & GBS & MRD-CSX & MRD-GBS \\
		\midrule
		0.01  & 0.0051 & 0.0050 & 0.0051 & \textbf{0.0000} & \textbf{0.0000} \\
		0.03  & 0.0127 & 0.0122 & 0.0126 & \textbf{0.0000} & 0.0001 \\
		0.05  & 0.0184 & 0.0174 & 0.0181 & \textbf{0.0001} & \textbf{0.0001} \\
		0.075 & 0.0243 & 0.0226 & 0.0236 & \textbf{0.0001} & \textbf{0.0001} \\
		0.10  & 0.0297 & 0.0271 & 0.0283 & \textbf{0.0001} & 0.0002 \\
		0.15  & 0.0382 & 0.0336 & 0.0355 & \textbf{0.0002} & \textbf{0.0002} \\
		0.20  & 0.0458 & 0.0387 & 0.0409 & 0.0003 & \textbf{0.0002} \\
		0.25  & 0.0532 & 0.0425 & 0.0453 & 0.0004 & \textbf{0.0002} \\
		0.30  & 0.0599 & 0.0447 & 0.0483 & 0.0005 & \textbf{0.0002} \\
		0.35  & 0.0685 & 0.0472 & 0.0512 & 0.0006 & \textbf{0.0002} \\
		\bottomrule
	\end{tabular}
\end{table}

\begin{table}[!htbp]
	\caption{False non-discovery rates under the fractional Gaussian noise model with $m=200$, $\alpha=0.1$, and 3000 Monte Carlo replications. Smaller values indicate superior signal-recovery performance. Boldfaced entries correspond to the smallest false non-discovery rates.}
	\label{tab:fnr_fractional_gaussian_noise_m200}
	\centering
	\small
	\begin{tabular}{lccccc}
		\toprule
		$p$ & BH & Storey-BH & GBS & MRD-CSX & MRD-GBS \\
		\midrule
		0.01  & 0.0051 & 0.0049 & 0.0054 & \textbf{0.0000} & 0.0001 \\
		0.03  & 0.0130 & 0.0124 & 0.0128 & \textbf{0.0000} & 0.0001 \\
		0.05  & 0.0192 & 0.0184 & 0.0187 & \textbf{0.0001} & 0.0002 \\
		0.075 & 0.0246 & 0.0232 & 0.0237 & \textbf{0.0003} & 0.0004 \\
		0.10  & 0.0296 & 0.0272 & 0.0279 & \textbf{0.0005} & \textbf{0.0005} \\
		0.15  & 0.0389 & 0.0341 & 0.0358 & 0.0012 & \textbf{0.0009} \\
		0.20  & 0.0463 & 0.0391 & 0.0402 & 0.0027 & \textbf{0.0015} \\
		0.25  & 0.0537 & 0.0429 & 0.0449 & 0.0051 & \textbf{0.0023} \\
		0.30  & 0.0612 & 0.0462 & 0.0479 & 0.0106 & \textbf{0.0040} \\
		0.35  & 0.0696 & 0.0483 & 0.0513 & 0.0175 & \textbf{0.0051} \\
		\bottomrule
	\end{tabular}
\end{table}

\begin{table}[!htbp]
	\caption{False non-discovery rates under the Toeplitz model with $m=200$, $\alpha=0.1$, and 3000 Monte Carlo replications. Smaller values indicate superior signal-recovery performance. Boldfaced entries correspond to the smallest false non-discovery rates.}
	\label{tab:fnr_toeplitz_m200}
	\centering
	\small
	\begin{tabular}{lccccc}
		\toprule
		$p$ & BH & Storey-BH & GBS & MRD-CSX & MRD-GBS \\
		\midrule
		0.01  & 0.0053 & 0.0049 & 0.0053 & \textbf{0.0000} & \textbf{0.0000} \\
		0.03  & 0.0133 & 0.0126 & 0.0127 & \textbf{0.0000} & \textbf{0.0000} \\
		0.05  & 0.0195 & 0.0177 & 0.0189 & \textbf{0.0000} & \textbf{0.0000} \\
		0.075 & 0.0251 & 0.0224 & 0.0234 & \textbf{0.0000} & \textbf{0.0000} \\
		0.10  & 0.0301 & 0.0264 & 0.0281 & \textbf{0.0000} & \textbf{0.0000} \\
		0.15  & 0.0398 & 0.0346 & 0.0350 & \textbf{0.0001} & \textbf{0.0001} \\
		0.20  & 0.0472 & 0.0360 & 0.0387 & 0.0002 & \textbf{0.0001} \\
		0.25  & 0.0555 & 0.0397 & 0.0439 & 0.0012 & \textbf{0.0005} \\
		0.30  & 0.0626 & 0.0421 & 0.0464 & 0.0038 & \textbf{0.0018} \\
		0.35  & 0.0711 & 0.0440 & 0.0485 & 0.0081 & \textbf{0.0029} \\
		\bottomrule
	\end{tabular}
\end{table}

\begin{table}[!htbp]
	\caption{False non-discovery rates under the heterogeneous block model with $m=200$, $\alpha=0.1$, and 3000 Monte Carlo replications. Smaller values indicate superior signal-recovery performance. Boldfaced entries correspond to the smallest false non-discovery rates.}
	\label{tab:fnr_heterogeneous_block_m200}
	\centering
	\small
	\begin{tabular}{lccccc}
		\toprule
		$p$ & BH & Storey-BH & GBS & MRD-CSX & MRD-GBS \\
		\midrule
		0.01  & 0.0050 & 0.0049 & 0.0050 & \textbf{0.0004} & 0.0009 \\
		0.03  & 0.0128 & 0.0125 & 0.0127 & \textbf{0.0007} & 0.0017 \\
		0.05  & 0.0187 & 0.0180 & 0.0184 & \textbf{0.0013} & 0.0023 \\
		0.075 & 0.0244 & 0.0233 & 0.0239 & \textbf{0.0021} & 0.0029 \\
		0.10  & 0.0290 & 0.0272 & 0.0278 & \textbf{0.0029} & 0.0033 \\
		0.15  & 0.0382 & 0.0342 & 0.0356 & 0.0046 & \textbf{0.0038} \\
		0.20  & 0.0455 & 0.0390 & 0.0406 & 0.0066 & \textbf{0.0043} \\
		0.25  & 0.0528 & 0.0428 & 0.0449 & 0.0092 & \textbf{0.0049} \\
		0.30  & 0.0606 & 0.0461 & 0.0486 & 0.0134 & \textbf{0.0068} \\
		0.35  & 0.0687 & 0.0490 & 0.0522 & 0.0203 & \textbf{0.0108} \\
		\bottomrule
	\end{tabular}
\end{table}

\begin{table}[!htbp]
	\caption{False non-discovery rates under the sparse precision model with $m=200$, $\alpha=0.1$, and 3000 Monte Carlo replications. Smaller values indicate superior signal-recovery performance. Boldfaced entries correspond to the smallest false non-discovery rates.}
	\label{tab:fnr_sparse_precision_m200}
	\centering
	\small
	\begin{tabular}{lccccc}
		\toprule
		$p$ & BH & Storey-BH & GBS & MRD-CSX & MRD-GBS \\
		\midrule
		0.01  & 0.0049 & 0.0049 & 0.0049 & \textbf{0.0009} & 0.0018 \\
		0.03  & 0.0127 & 0.0124 & 0.0126 & \textbf{0.0020} & 0.0039 \\
		0.05  & 0.0185 & 0.0180 & 0.0183 & \textbf{0.0036} & 0.0058 \\
		0.075 & 0.0244 & 0.0235 & 0.0239 & \textbf{0.0061} & 0.0080 \\
		0.10  & 0.0290 & 0.0273 & 0.0278 & \textbf{0.0090} & 0.0100 \\
		0.15  & 0.0378 & 0.0341 & 0.0350 & 0.0165 & \textbf{0.0147} \\
		0.20  & 0.0460 & 0.0396 & 0.0410 & 0.0270 & \textbf{0.0201} \\
		0.25  & 0.0528 & 0.0431 & 0.0447 & 0.0411 & \textbf{0.0263} \\
		0.30  & 0.0604 & 0.0466 & 0.0484 & 0.0591 & \textbf{0.0336} \\
		0.35  & 0.0678 & 0.0491 & 0.0513 & 0.0808 & \textbf{0.0402} \\
		\bottomrule
	\end{tabular}
\end{table}

\clearpage

\subsection*{Powers ($m=200$)}

The following tables report empirical powers. Larger values indicate stronger
signal-recovery performance, and boldfaced entries identify the largest powers
within each sparsity level.

\begin{table}[!htbp]
	\caption{Power under the equicorrelation model with $m=200$, $\alpha=0.1$, and 3000 Monte Carlo replications. Larger values indicate superior signal-recovery performance. Boldfaced entries correspond to the largest empirical powers.}
	\label{tab:power_equicorrelation_m200}
	\centering
	\small
	\begin{tabular}{lccccc}
		\toprule
		$p$ & BH & Storey-BH & GBS & MRD-CSX & MRD-GBS \\
		\midrule
		0.01 & 0.4603 & 0.5333 & 0.5341 & \textbf{0.9980} & 0.9967 \\
		0.03 & 0.5590 & 0.6345 & 0.6351 & \textbf{0.9994} & 0.9980 \\
		0.05 & 0.6174 & 0.6949 & 0.6942 & \textbf{0.9998} & 0.9994 \\
		0.075 & 0.6853 & 0.7682 & 0.7639 & \textbf{0.9997} & 0.9996 \\
		0.10 & 0.7216 & 0.8124 & 0.8058 & \textbf{0.9997} & 0.9995 \\
		0.15 & 0.7687 & 0.8689 & 0.8527 & \textbf{0.9998} & \textbf{0.9998} \\
		0.20 & 0.8063 & 0.9049 & 0.8881 & 0.9998 & \textbf{0.9999} \\
		0.25 & 0.8276 & 0.9314 & 0.9161 & 0.9998 & \textbf{1.0000} \\
		0.30 & 0.8498 & 0.9501 & 0.9368 & 0.9998 & \textbf{0.9999} \\
		0.35 & 0.8576 & 0.9641 & 0.9525 & 0.9998 & \textbf{1.0000} \\
		\bottomrule
	\end{tabular}
\end{table}

\begin{table}[!htbp]
	\caption{Power under the factor model with $m=200$, $\alpha=0.1$, and 3000 Monte Carlo replications. Larger values indicate superior signal-recovery performance. Boldfaced entries correspond to the largest empirical powers.}
	\label{tab:power_factor_m200}
	\centering
	\small
	\begin{tabular}{lccccc}
		\toprule
		$p$ & BH & Storey-BH & GBS & MRD-CSX & MRD-GBS \\
		\midrule
		0.01 & 0.4701 & 0.4842 & 0.4719 & \textbf{0.9965} & 0.9946 \\
		0.03 & 0.5701 & 0.5880 & 0.5728 & \textbf{0.9988} & 0.9972 \\
		0.05 & 0.6324 & 0.6534 & 0.6381 & \textbf{0.9989} & 0.9977 \\
		0.075 & 0.6864 & 0.7095 & 0.6950 & \textbf{0.9991} & 0.9988 \\
		0.10 & 0.7235 & 0.7504 & 0.7368 & \textbf{0.9988} & 0.9985 \\
		0.15 & 0.7751 & 0.8054 & 0.7922 & 0.9989 & \textbf{0.9991} \\
		0.20 & 0.8089 & 0.8430 & 0.8311 & 0.9988 & \textbf{0.9992} \\
		0.25 & 0.8344 & 0.8735 & 0.8621 & 0.9989 & \textbf{0.9995} \\
		0.30 & 0.8554 & 0.8985 & 0.8872 & 0.9987 & \textbf{0.9995} \\
		0.35 & 0.8673 & 0.9164 & 0.9056 & 0.9990 & \textbf{0.9997} \\
		\bottomrule
	\end{tabular}
\end{table}

\begin{table}[!htbp]
	\caption{Power under the fractional Gaussian noise model with $m=200$, $\alpha=0.1$, and 3000 Monte Carlo replications. Larger values indicate superior signal-recovery performance. Boldfaced entries correspond to the largest empirical powers.}
	\label{tab:power_fractional_gaussian_noise_m200}
	\centering
	\small
	\begin{tabular}{lccccc}
		\toprule
		$p$ & BH & Storey-BH & GBS & MRD-CSX & MRD-GBS \\
		\midrule
		0.01 & 0.4678 & 0.4874 & 0.4736 & \textbf{0.9954} & 0.9920 \\
		0.03 & 0.5621 & 0.5875 & 0.5695 & \textbf{0.9985} & 0.9949 \\
		0.05 & 0.6187 & 0.6434 & 0.6281 & \textbf{0.9981} & 0.9961 \\
		0.075 & 0.6836 & 0.7075 & 0.6947 & \textbf{0.9970} & 0.9957 \\
		0.10 & 0.7245 & 0.7533 & 0.7400 & \textbf{0.9960} & 0.9954 \\
		0.15 & 0.7725 & 0.8082 & 0.7945 & 0.9936 & \textbf{0.9950} \\
		0.20 & 0.8082 & 0.8466 & 0.8356 & 0.9900 & \textbf{0.9944} \\
		0.25 & 0.8334 & 0.8770 & 0.8653 & 0.9855 & \textbf{0.9933} \\
		0.30 & 0.8527 & 0.9003 & 0.8911 & 0.9768 & \textbf{0.9914} \\
		0.35 & 0.8650 & 0.9185 & 0.9084 & 0.9692 & \textbf{0.9913} \\
		\bottomrule
	\end{tabular}
\end{table}

\begin{table}[!htbp]
	\caption{Power under the Toeplitz model with $m=200$, $\alpha=0.1$, and 3000 Monte Carlo replications. Larger values indicate superior signal-recovery performance. Boldfaced entries correspond to the largest empirical powers.}
	\label{tab:power_toeplitz_m200}
	\centering
	\small
	\begin{tabular}{lccccc}
		\toprule
		$p$ & BH & Storey-BH & GBS & MRD-CSX & MRD-GBS \\
		\midrule
		0.01 & 0.4604 & 0.4975 & 0.4848 & \textbf{1.0000} & \textbf{1.0000} \\
		0.03 & 0.5583 & 0.5984 & 0.5849 & \textbf{1.0000} & \textbf{1.0000} \\
		0.05 & 0.6164 & 0.6590 & 0.6438 & \textbf{1.0000} & \textbf{1.0000} \\
		0.075 & 0.6822 & 0.7236 & 0.7091 & \textbf{0.9999} & \textbf{0.9999} \\
		0.10 & 0.7245 & 0.7705 & 0.7552 & \textbf{0.9998} & \textbf{0.9998} \\
		0.15 & 0.7727 & 0.8261 & 0.8086 & 0.9995 & \textbf{0.9996} \\
		0.20 & 0.8074 & 0.8648 & 0.8487 & 0.9992 & \textbf{0.9996} \\
		0.25 & 0.8304 & 0.8948 & 0.8792 & 0.9969 & \textbf{0.9986} \\
		0.30 & 0.8512 & 0.9159 & 0.9022 & 0.9921 & \textbf{0.9963} \\
		0.35 & 0.8637 & 0.9341 & 0.9207 & 0.9867 & \textbf{0.9955} \\
		\bottomrule
	\end{tabular}
\end{table}

\begin{table}[!htbp]
	\caption{Power under the heterogeneous block model with $m=200$, $\alpha=0.1$, and 3000 Monte Carlo replications. Larger values indicate superior signal-recovery performance. Boldfaced entries correspond to the largest empirical powers.}
	\label{tab:power_heterogeneous_block_m200}
	\centering
	\small
	\begin{tabular}{lccccc}
		\toprule
		$p$ & BH & Storey-BH & GBS & MRD-CSX & MRD-GBS \\
		\midrule
		0.01  & 0.4769 & 0.4861 & 0.4782 & \textbf{0.9409} & 0.9002 \\
		0.03  & 0.5645 & 0.5746 & 0.5668 & \textbf{0.9751} & 0.9376 \\
		0.05  & 0.6266 & 0.6403 & 0.6317 & \textbf{0.9766} & 0.9535 \\
		0.075 & 0.6831 & 0.6985 & 0.6898 & \textbf{0.9746} & 0.9628 \\
		0.10  & 0.7279 & 0.7451 & 0.7388 & \textbf{0.9741} & 0.9695 \\
		0.15  & 0.7739 & 0.7983 & 0.7895 & 0.9740 & \textbf{0.9778} \\
		0.20  & 0.8090 & 0.8384 & 0.8307 & 0.9740 & \textbf{0.9829} \\
		0.25  & 0.8345 & 0.8686 & 0.8612 & 0.9728 & \textbf{0.9856} \\
		0.30  & 0.8524 & 0.8915 & 0.8842 & 0.9695 & \textbf{0.9851} \\
		0.35  & 0.8659 & 0.9089 & 0.9013 & 0.9634 & \textbf{0.9817} \\
		\bottomrule
	\end{tabular}
\end{table}

\begin{table}[!htbp]
	\caption{Power under the sparse precision model with $m=200$, $\alpha=0.1$, and 3000 Monte Carlo replications. Larger values indicate superior signal-recovery performance. Boldfaced entries correspond to the largest empirical powers.}
	\label{tab:power_sparse_precision_m200}
	\centering
	\small
	\begin{tabular}{lccccc}
		\toprule
		$p$ & BH & Storey-BH & GBS & MRD-CSX & MRD-GBS \\
		\midrule
		0.01  & 0.4857 & 0.4905 & 0.4866 & \textbf{0.8740} & 0.7974 \\
		0.03  & 0.5677 & 0.5760 & 0.5701 & \textbf{0.9360} & 0.8647 \\
		0.05  & 0.6268 & 0.6375 & 0.6315 & \textbf{0.9340} & 0.8846 \\
		0.075 & 0.6843 & 0.6958 & 0.6911 & \textbf{0.9273} & 0.9002 \\
		0.10  & 0.7273 & 0.7432 & 0.7383 & \textbf{0.9212} & 0.9095 \\
		0.15  & 0.7765 & 0.7987 & 0.7930 & 0.9079 & \textbf{0.9164} \\
		0.20  & 0.8068 & 0.8350 & 0.8286 & 0.8928 & \textbf{0.9202} \\
		0.25  & 0.8347 & 0.8670 & 0.8614 & 0.8758 & \textbf{0.9217} \\
		0.30  & 0.8531 & 0.8893 & 0.8843 & 0.8591 & \textbf{0.9228} \\
		0.35  & 0.8677 & 0.9075 & 0.9026 & 0.8432 & \textbf{0.9269} \\
		\bottomrule
	\end{tabular}
\end{table}
\clearpage

\subsection*{Average number of rejections ($m=200$)}

The final set of tables reports average numbers of rejections. These values
should be interpreted relative to the expected number of true signals, $mp$,
and jointly with the corresponding FDR, FNR, and power results; consequently,
no entries are highlighted in boldface.

\begin{table}[!htbp]
	\caption{Average numbers of rejections under the equicorrelation model with $m=200$, $\alpha=0.1$, and 3000 Monte Carlo replications. Larger values indicate more aggressive rejection behavior. Since neither excessively large nor excessively small rejection counts are universally preferable, no entries are highlighted in boldface.}
	\label{tab:rej_equicorrelation_m200}
	\centering
	\small
	\begin{tabular}{lccccc}
		\toprule
		$p$ & BH & Storey-BH & GBS & MRD-CSX & MRD-GBS \\
		\midrule
		0.01 & 5.15 & 25.64 & 30.23 & 4.31 & 2.30 \\
		0.03 & 7.80 & 28.50 & 33.85 & 8.65 & 6.80 \\
		0.05 & 10.60 & 30.22 & 36.06 & 12.69 & 11.26 \\
		0.075 & 14.26 & 35.27 & 40.11 & 17.58 & 16.74 \\
		0.10 & 18.75 & 41.41 & 46.25 & 22.66 & 22.40 \\
		0.15 & 27.87 & 51.41 & 53.23 & 32.53 & 33.43 \\
		0.20 & 36.81 & 59.65 & 60.81 & 42.35 & 44.46 \\
		0.25 & 46.12 & 71.81 & 72.59 & 52.42 & 55.66 \\
		0.30 & 55.77 & 81.03 & 81.23 & 62.44 & 66.88 \\
		0.35 & 64.92 & 93.76 & 93.46 & 72.14 & 77.70 \\
		\bottomrule
	\end{tabular}
\end{table}

\begin{table}[!htbp]
\caption{Average numbers of rejections under the factor model with $m=200$, $\alpha=0.1$, and 3000 Monte Carlo replications. Larger values indicate more aggressive rejection behavior. Since neither excessively large nor excessively small rejection counts are universally preferable, no entries are highlighted in boldface.}
	\label{tab:rej_factor_m200}
	\centering
	\small
	\begin{tabular}{lccccc}
		\toprule
		$p$ & BH & Storey-BH & GBS & MRD-CSX & MRD-GBS \\
		\midrule
		0.01 & 1.92 & 2.83 & 2.27 & 4.35 & 2.32 \\
		0.03 & 4.77 & 6.18 & 5.04 & 8.59 & 6.73 \\
		0.05 & 8.14 & 9.96 & 8.70 & 12.65 & 11.21 \\
		0.075 & 12.18 & 14.15 & 12.88 & 17.53 & 16.68 \\
		0.1 & 16.94 & 19.59 & 18.06 & 22.65 & 22.39 \\
		0.15 & 26.11 & 29.61 & 27.63 & 32.43 & 33.31 \\
		0.2 & 35.57 & 40.09 & 37.95 & 42.20 & 44.26 \\
		0.25 & 45.58 & 51.37 & 48.97 & 52.38 & 55.61 \\
		0.3 & 55.60 & 62.67 & 60.31 & 62.39 & 66.74 \\
		0.35 & 65.01 & 73.98 & 71.48 & 72.08 & 77.55 \\
		\bottomrule
	\end{tabular}
\end{table}

\begin{table}[!htbp]
	\caption{Average numbers of rejections under the fractional gaussian noise model with $m=200$, $\alpha=0.1$, and 3000 Monte Carlo replications. Larger values indicate more aggressive rejection behavior. Since neither excessively large nor excessively small rejection counts are universally preferable, no entries are highlighted in boldface.}
	\label{tab:rej_fractional_gaussian_noise_m200}
	\centering
	\small
	\begin{tabular}{lccccc}
		\toprule
		$p$ & BH & Storey-BH & GBS & MRD-CSX & MRD-GBS \\
		\midrule
		0.01 & 2.07 & 5.13 & 3.66 & 4.20 & 2.28 \\
		0.03 & 4.81 & 8.63 & 6.68 & 8.53 & 6.78 \\
		0.05 & 7.88 & 11.70 & 9.66 & 12.53 & 11.18 \\
		0.075 & 12.10 & 15.81 & 14.01 & 17.44 & 16.62 \\
		0.1 & 16.82 & 21.34 & 19.33 & 22.54 & 22.29 \\
		0.15 & 26.29 & 31.90 & 29.45 & 32.48 & 33.33 \\
		0.2 & 35.70 & 41.82 & 39.78 & 42.37 & 44.41 \\
		0.25 & 45.55 & 53.39 & 50.55 & 52.43 & 55.63 \\
		0.3 & 55.41 & 64.08 & 62.03 & 62.44 & 67.20 \\
		0.35 & 65.10 & 75.83 & 73.29 & 72.04 & 78.44 \\
		\bottomrule
	\end{tabular}
\end{table}

\begin{table}[!htbp]
	\caption{Average numbers of rejections under the toeplitz model with $m=200$, $\alpha=0.1$, and 3000 Monte Carlo replications. Larger values indicate more aggressive rejection behavior. Since neither excessively large nor excessively small rejection counts are universally preferable, no entries are highlighted in boldface.}
\label{tab:rej_toeplitz_m200}	
	\centering
	\small
	\begin{tabular}{lccccc}
		\toprule
		$p$ & BH & Storey-BH & GBS & MRD-CSX & MRD-GBS \\
		\midrule
		0.01 & 3.15 & 11.48 & 10.82 & 4.22 & 2.29 \\
		0.03 & 5.99 & 14.66 & 14.41 & 8.52 & 6.79 \\
		0.05 & 8.79 & 17.87 & 16.95 & 12.53 & 11.19 \\
		0.075 & 12.84 & 21.15 & 20.54 & 17.46 & 16.65 \\
		0.1 & 17.51 & 26.84 & 25.98 & 22.58 & 22.35 \\
		0.15 & 27.00 & 37.87 & 35.42 & 32.61 & 33.44 \\
		0.2 & 36.10 & 47.84 & 45.82 & 42.77 & 44.68 \\
		0.25 & 45.71 & 59.54 & 57.34 & 53.29 & 56.26 \\
		0.3 & 55.49 & 69.65 & 67.54 & 64.00 & 68.24 \\
		0.35 & 65.24 & 81.88 & 79.65 & 74.66 & 80.37 \\
		\bottomrule
	\end{tabular}
\end{table}

\begin{table}[!htbp]
	\caption{Average numbers of rejections under the heterogeneous block model with $m=200$, $\alpha=0.1$, and 3000 Monte Carlo replications. Larger values indicate more aggressive rejection behavior. Since neither excessively large nor excessively small rejection counts are universally preferable, no entries are highlighted in boldface.}
\label{tab:rej_heterogeneous_block_m200}	
	\centering
	\small
	\begin{tabular}{lccccc}
		\toprule
		$p$ & BH & Storey-BH & GBS & MRD-CSX & MRD-GBS \\
		\midrule
		0.01 & 1.38 & 1.52 & 1.41 & 4.13 & 2.12 \\
		0.03 & 4.12 & 4.39 & 4.19 & 8.50 & 6.45 \\
		0.05 & 7.45 & 7.87 & 7.72 & 12.42 & 10.75 \\
		0.075 & 11.69 & 12.34 & 11.95 & 17.16 & 16.11 \\
		0.1 & 16.45 & 17.39 & 16.92 & 22.11 & 21.72 \\
		0.15 & 25.74 & 27.53 & 26.80 & 31.75 & 32.70 \\
		0.2 & 35.29 & 38.00 & 37.21 & 41.34 & 43.68 \\
		0.25 & 45.36 & 49.32 & 48.17 & 51.12 & 54.79 \\
		0.3 & 55.34 & 60.72 & 59.41 & 60.90 & 65.95 \\
		0.35 & 64.96 & 71.78 & 70.56 & 70.29 & 76.73 \\
		\bottomrule
	\end{tabular}
\end{table}

\begin{table}[!htbp]
	\caption{Average numbers of rejections under the sparse precision model with $m=200$, $\alpha=0.1$, and 3000 Monte Carlo replications. Larger values indicate more aggressive rejection behavior. Since neither excessively large nor excessively small rejection counts are universally preferable, no entries are highlighted in boldface.}
\label{tab:rej_sparse_precision_m200}	
	\centering
	\small
	\begin{tabular}{lccccc}
		\toprule
		$p$ & BH & Storey-BH & GBS & MRD-CSX & MRD-GBS \\
		\midrule
		0.01 & 1.25 & 1.28 & 1.32 & 3.85 & 1.91 \\
		0.03 & 4.06 & 4.17 & 4.10 & 8.19 & 6.01 \\
		0.05 & 7.26 & 7.49 & 7.50 & 11.96 & 10.10 \\
		0.075 & 11.54 & 11.91 & 11.76 & 16.53 & 15.25 \\
		0.1 & 16.32 & 16.94 & 16.74 & 21.24 & 20.63 \\
		0.15 & 25.69 & 26.99 & 26.69 & 30.21 & 31.07 \\
		0.2 & 35.23 & 37.40 & 36.91 & 38.83 & 41.56 \\
		0.25 & 45.34 & 48.57 & 48.02 & 47.27 & 52.12 \\
		0.3 & 55.27 & 59.78 & 59.21 & 55.51 & 62.88 \\
		0.35 & 65.01 & 70.83 & 70.24 & 63.13 & 73.42 \\
		\bottomrule
	\end{tabular}
\end{table}

\clearpage
%\section{Additional Simulation Results for $m=100$}

\section*{Appendix C: Additional Simulation Results for $m=100$}
\label{APPENDIX_SIMULATION_m100}

This appendix reports the simulation results corresponding to the lower-dimensional setting $m=100$. As discussed in Section~\ref{SECTION_SIMULATIONS}, our primary analysis focuses on the larger-dimensional experiments with $m=200$, since several of the key empirical phenomena become considerably more pronounced in that setting. Nevertheless, the $m=100$ results provide important supplementary evidence regarding the robustness of the conclusions across problem dimensions. Although the quantitative differences become substantially larger when $m=200$, the qualitative ranking of the competing procedures remains largely unchanged between the two dimensions.

Overall, the qualitative behavior observed for $m=100$ is broadly consistent with that reported in the main text. In particular, the proposed GBS-calibrated MRD procedure continues to exhibit favorable normalized misclassification risk, false discovery rate, false non-discovery rate, power, and average-rejection characteristics across a wide variety of dependence structures. The overall conclusions remain largely unchanged and provide additional evidence for the effectiveness of covariance-adaptive residualization combined with stagewise GBS calibration. In particular, several of the near-support-recovery phenomena observed for $m=200$—including FDR values close to the nominal level, extremely small FNR values, powers approaching one, and average numbers of rejections closely tracking the expected number of true signals—remain visible for $m=100$, although generally in a less pronounced form.

An additional feature of the $m=100$ experiments is that they reinforce one
of the most noteworthy findings of the simulation study. Although the
original MRD procedure and the proposed GBS-calibrated MRD procedure are
both built upon the same covariance-adaptive residualization mechanism,
their empirical operating characteristics differ substantially across a
number of dependence structures. As in the $m=200$ experiments, the
proposed calibration frequently exhibits stronger signal-recovery characteristics than the original MRD
procedure in sparse and moderately sparse regimes. This persistent difference suggests that covariance-adaptive
residualization alone cannot fully explain the observed signal-recovery
behavior. Rather, the results provide further evidence that stagewise
calibration plays a crucial role in determining how effectively
residualized information is translated into accurate recovery of the
underlying signal set.

The appendix figures and tables are organized in the same manner as the main-text simulations. Figure~\ref{FIG_NMR_m100} reports the normalized misclassification risk curves, Figures \ref{FIG_FDR_m100}–\ref{FIG_ANR_m100} report the FDR, FNR, power, and average-rejection characteristics, respectively, while Tables~\ref{TAB_NMR_EQ_m100}–\ref{TAB_REJ_SPARSE_m100} provide the corresponding numerical summaries. Together, these results allow a direct comparison with the $m=200$ experiments presented in Section~\ref{SECTION_SIMULATIONS}.

%This Appendix Contains additional simulation results supporting the
%findings reported in Section~\ref{SECTION_SIMULATIONS}. The tables presented below summarize
%the false discovery rates (FDR), false non-discovery rates (FNR),
%power, and average numbers of rejections obtained under the various
%dependence structures considered in the simulation study. The
%corresponding normalized misclassification rate (NMR) results are
%reported in the main text.

%\clearpage

\clearpage

\subsection*{Normalized misclassification rates ($m=100$)}

Figure~\ref{FIG_NMR_m100} summarizes the normalized misclassification
rates of the competing procedures under the six dependence structures
considered in this study. Consistent with the corresponding $m=200$
experiments, the proposed GBS-calibrated MRD procedure frequently
achieves the smallest misclassification rates across sparse and
moderately sparse regimes, providing further evidence of its strong
classification performance under dependence.

\begin{figure}[htbp]
	\centering
	
	\begin{subfigure}[b]{0.48\textwidth}
		\centering
		\includegraphics[width=\textwidth]{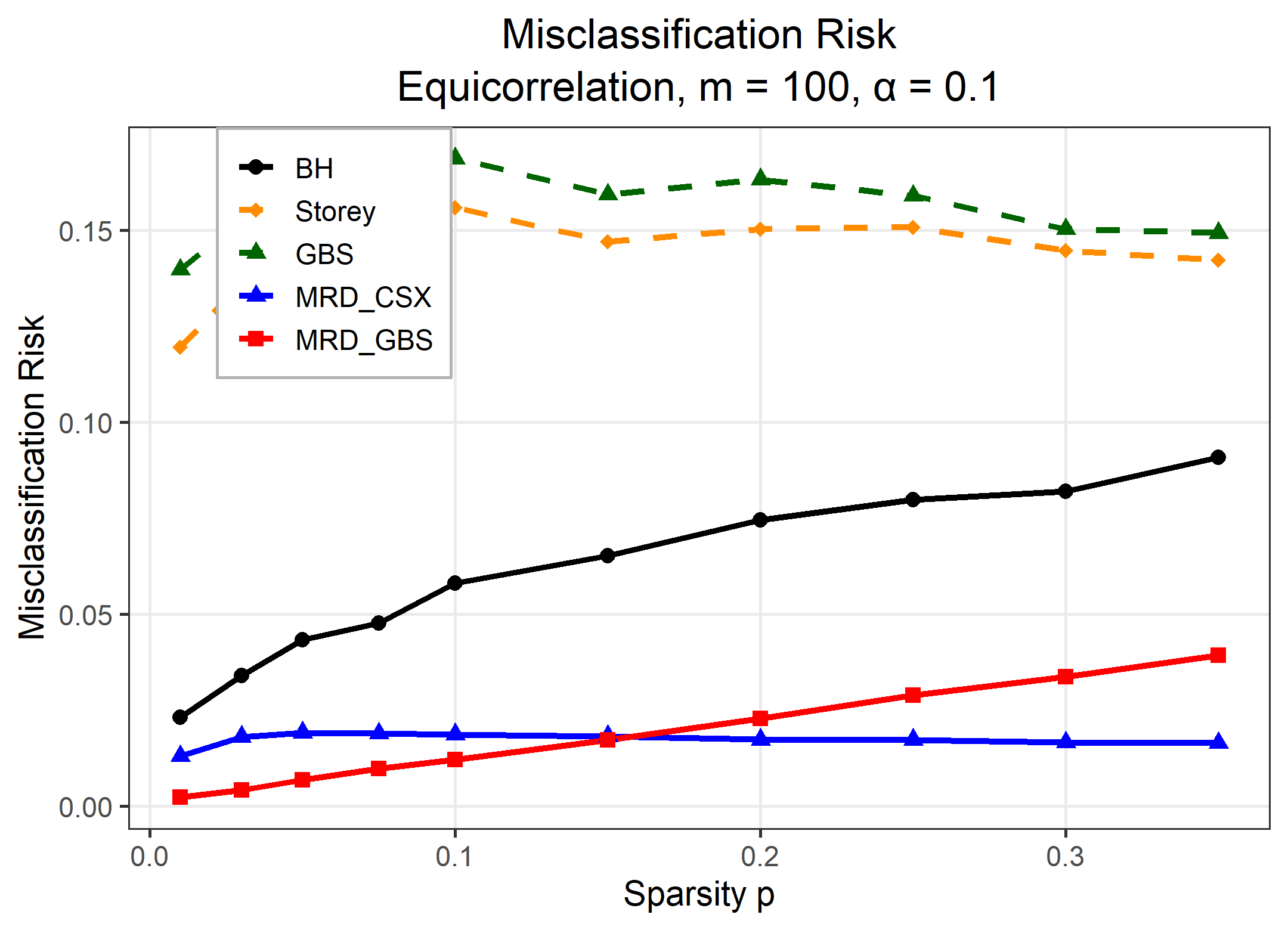}
		\caption{Equicorrelation (\(\rho=0.7\))}
	\end{subfigure}
	\hfill
	\begin{subfigure}[b]{0.48\textwidth}
		\centering
		\includegraphics[width=\textwidth]{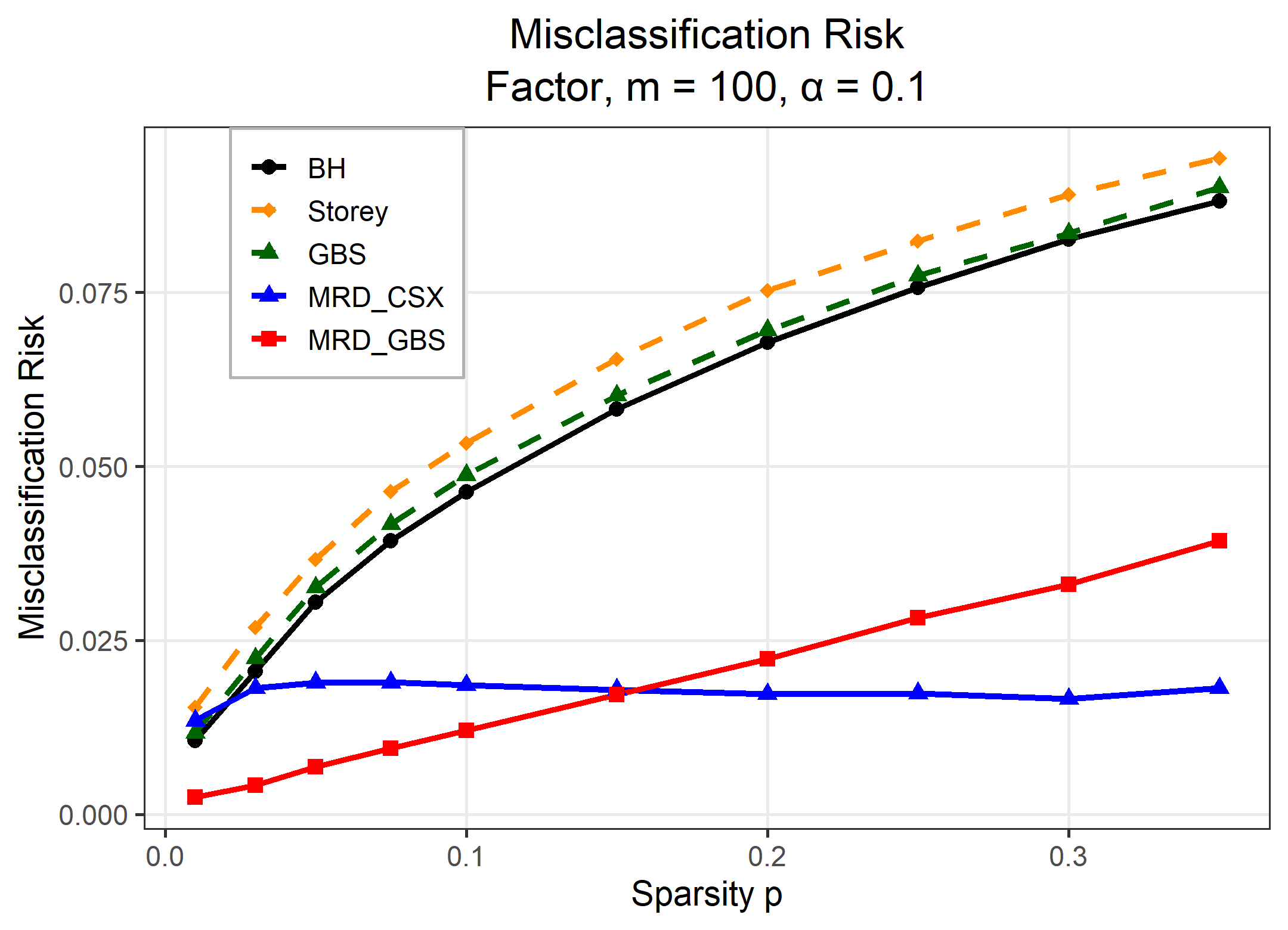}
		\caption{Factor model}
	\end{subfigure}
	
	\vspace{0.3cm}
	
	\begin{subfigure}[b]{0.48\textwidth}
		\centering
		\includegraphics[width=\textwidth]{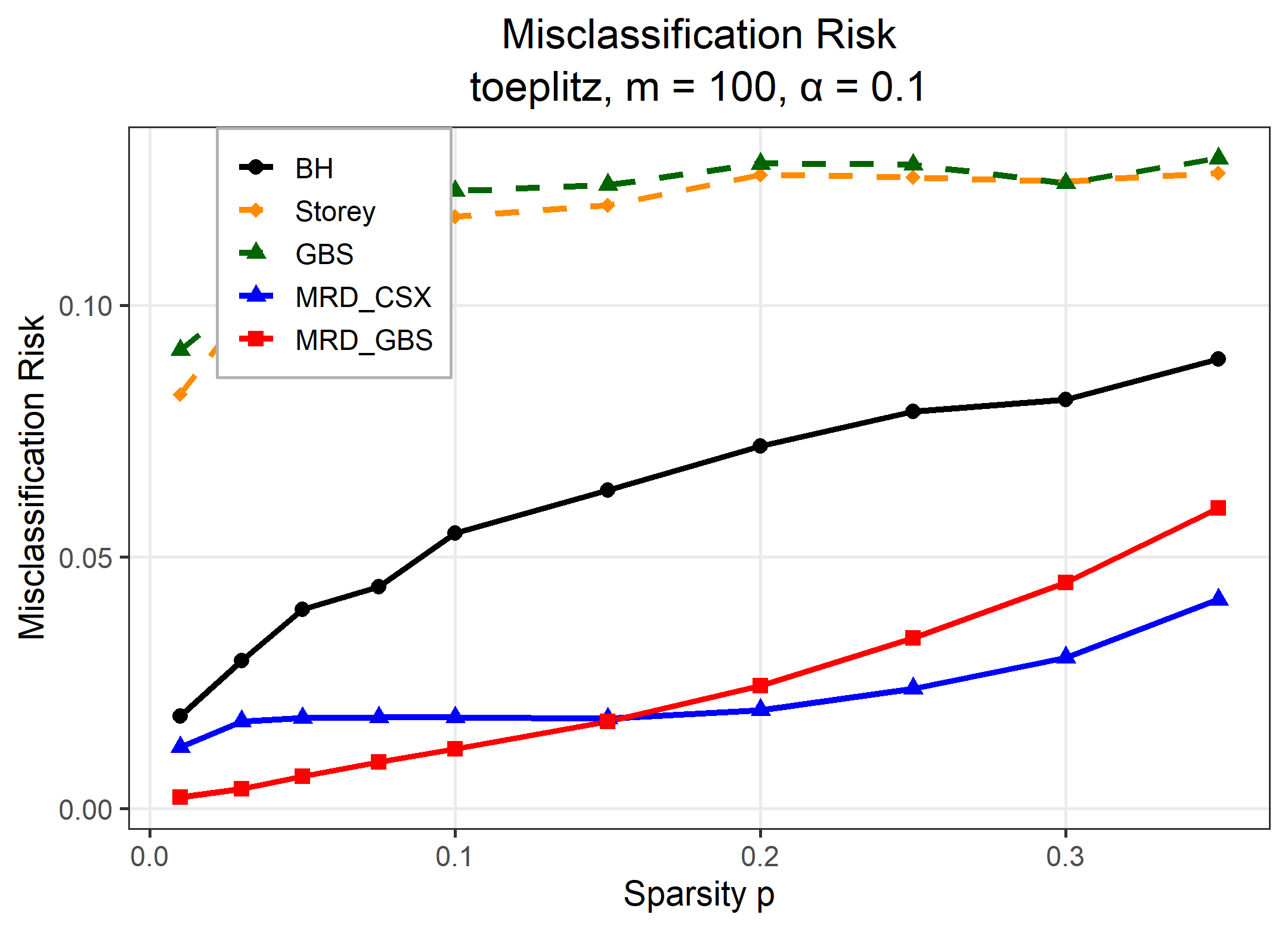}
		\caption{Toeplitz (\(\rho=0.9\))}
	\end{subfigure}
	\hfill
	\begin{subfigure}[b]{0.48\textwidth}
		\centering
		\includegraphics[width=\textwidth]{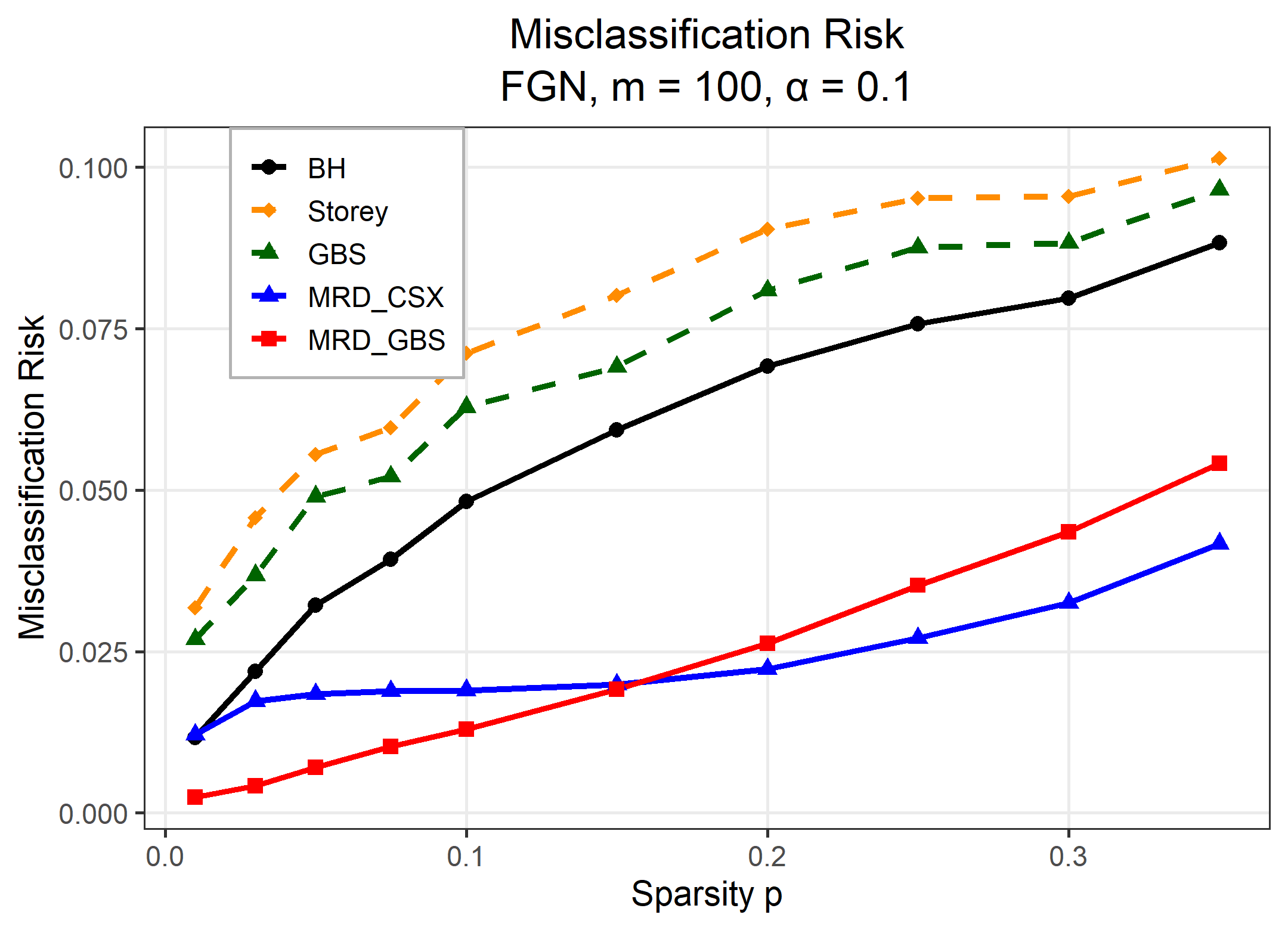}
		\caption{Fractional Gaussian Noise (\(H=0.9\))}
	\end{subfigure}
	
	\vspace{0.3cm}
	
	\begin{subfigure}[b]{0.48\textwidth}
		\centering
		\includegraphics[width=\textwidth]{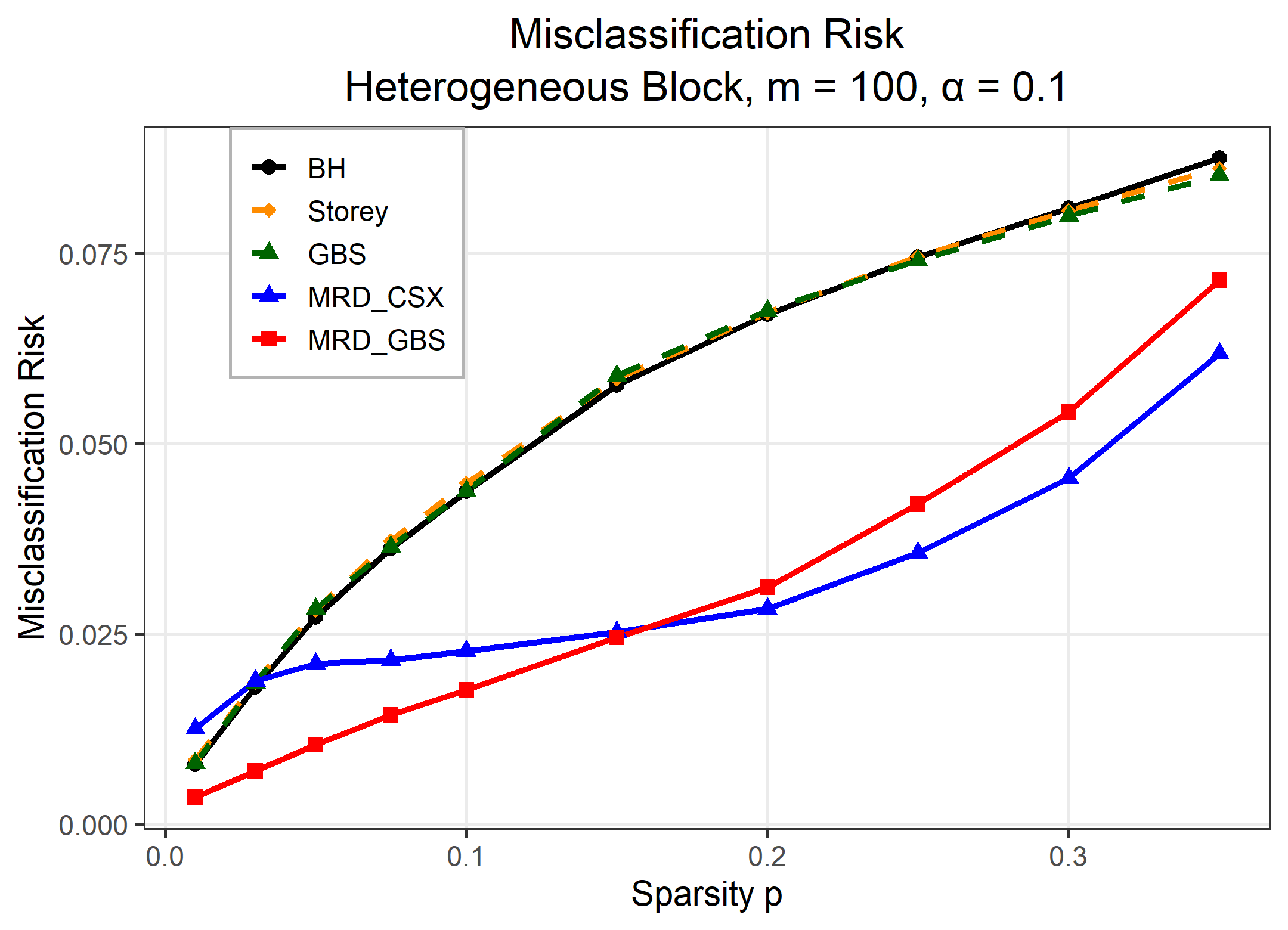}
		\caption{Heterogeneous Block}
	\end{subfigure}
	\hfill
	\begin{subfigure}[b]{0.48\textwidth}
		\centering
		\includegraphics[width=\textwidth]{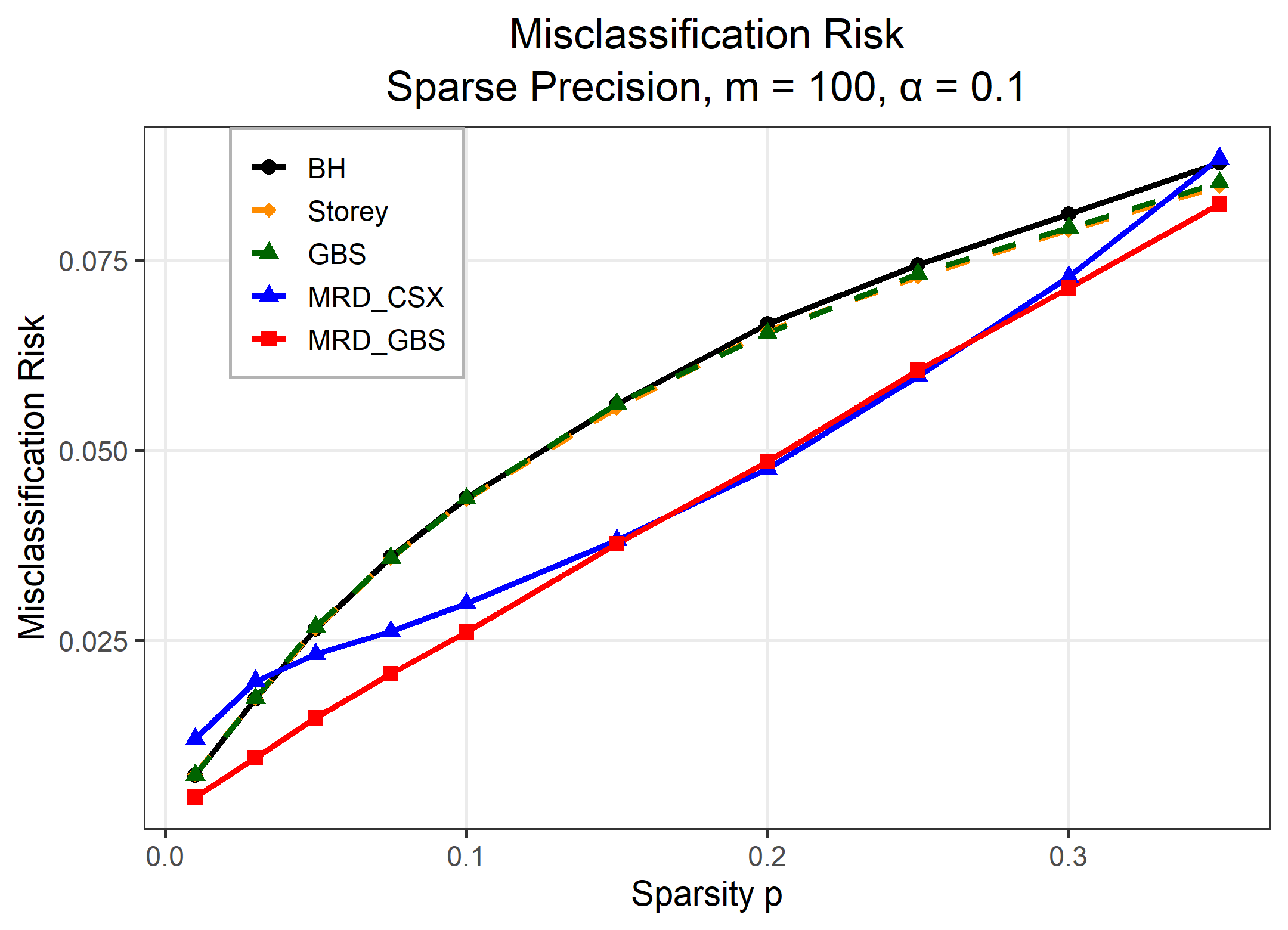}
		\caption{Sparse Precision Matrix}
	\end{subfigure}
	
\caption{
	Normalized misclassification rates (NMR) for the competing multiple testing
	procedures under six representative dependence structures when $m=100$.
	The overall qualitative behavior is broadly consistent with that observed
	for $m=200$, with the GBS-calibrated MRD procedure frequently achieving
	the smallest misclassification rates throughout sparse and moderately
	sparse regimes. However, several of the signal-recovery and classification
	advantages become substantially more pronounced in the larger-dimensional
	experiments reported in the main text.
}

	\label{FIG_NMR_m100}
\end{figure}

\begin{table}[htbp]
	\centering
	\caption{Normalized misclassification rates under the equicorrelation model
		(\(\rho=0.7\)). Smaller values indicate better performance.}
	\label{TAB_NMR_EQ_m100}
	\begin{tabular}{lccccc}
		\toprule
		$p$ & BH & Storey-BH & GBS & MRD-CSX & MRD-GBS \\
		\midrule
		0.01  & 0.0231 & 0.1196 & 0.1398 & 0.0130 & \textbf{0.0024} \\
		0.03  & 0.0341 & 0.1360 & 0.1528 & 0.0181 & \textbf{0.0042} \\
		0.05  & 0.0433 & 0.1456 & 0.1627 & 0.0192 & \textbf{0.0069} \\
		0.075 & 0.0478 & 0.1423 & 0.1583 & 0.0190 & \textbf{0.0098} \\
		0.10  & 0.0581 & 0.1559 & 0.1687 & 0.0187 & \textbf{0.0122} \\
		0.15  & 0.0652 & 0.1471 & 0.1593 & 0.0183 & \textbf{0.0173} \\
		0.20  & 0.0745 & 0.1503 & 0.1633 & \textbf{0.0174} & 0.0228 \\
		0.25  & 0.0799 & 0.1508 & 0.1590 & \textbf{0.0173} & 0.0289 \\
		0.30  & 0.0820 & 0.1447 & 0.1503 & \textbf{0.0166} & 0.0338 \\
		0.35  & 0.0909 & 0.1422 & 0.1493 & \textbf{0.0165} & 0.0393 \\
		\bottomrule
	\end{tabular}
\end{table}

\begin{table}[htbp]
	\centering
	\caption{Normalized misclassification rates under the factor model.
		Smaller values indicate better performance.}
	\label{TAB_NMR_FACTOR_m100}
	\begin{tabular}{lccccc}
		\toprule
		$p$ & BH & Storey-BH & GBS & MRD-CSX & MRD-GBS \\
		\midrule
		0.01  & 0.0106 & 0.0154 & 0.0117 & 0.0135 & \textbf{0.0024} \\
		0.03  & 0.0205 & 0.0268 & 0.0224 & 0.0182 & \textbf{0.0041} \\
		0.05  & 0.0305 & 0.0366 & 0.0326 & 0.0189 & \textbf{0.0068} \\
		0.075 & 0.0393 & 0.0464 & 0.0417 & 0.0190 & \textbf{0.0095} \\
		0.10  & 0.0464 & 0.0534 & 0.0488 & 0.0186 & \textbf{0.0120} \\
		0.15  & 0.0582 & 0.0654 & 0.0602 & 0.0179 & \textbf{0.0172} \\
		0.20  & 0.0679 & 0.0753 & 0.0696 & \textbf{0.0173} & 0.0223 \\
		0.25  & 0.0758 & 0.0824 & 0.0774 & \textbf{0.0174} & 0.0283 \\
		0.30  & 0.0827 & 0.0891 & 0.0835 & \textbf{0.0166} & 0.0331 \\
		0.35  & 0.0882 & 0.0943 & 0.0901 & \textbf{0.0182} & 0.0393 \\
		\bottomrule
	\end{tabular}
\end{table}

\begin{table}[htbp]
	\centering
	\caption{Normalized misclassification rates under the fractional Gaussian noise
		model (\(H=0.9\)). Smaller values indicate better performance.}
	\label{TAB_NMR_FGN_m100}
	\begin{tabular}{lccccc}
		\toprule
		$p$ & BH & Storey-BH & GBS & MRD-CSX & MRD-GBS \\
		\midrule
		0.01  & 0.0117 & 0.0318 & 0.0269 & 0.0121 & \textbf{0.0024} \\
		0.03  & 0.0219 & 0.0457 & 0.0368 & 0.0174 & \textbf{0.0042} \\
		0.05  & 0.0322 & 0.0556 & 0.0490 & 0.0184 & \textbf{0.0070} \\
		0.075 & 0.0393 & 0.0597 & 0.0521 & 0.0188 & \textbf{0.0103} \\
		0.10  & 0.0482 & 0.0712 & 0.0629 & 0.0190 & \textbf{0.0130} \\
		0.15  & 0.0594 & 0.0802 & 0.0691 & 0.0199 & \textbf{0.0191} \\
		0.20  & 0.0692 & 0.0905 & 0.0809 & \textbf{0.0223} & 0.0263 \\
		0.25  & 0.0757 & 0.0952 & 0.0875 & \textbf{0.0271} & 0.0353 \\
		0.30  & 0.0798 & 0.0955 & 0.0883 & \textbf{0.0326} & 0.0436 \\
		0.35  & 0.0883 & 0.1014 & 0.0965 & \textbf{0.0417} & 0.0541 \\
		\bottomrule
	\end{tabular}
\end{table}

\begin{table}[htbp]
	\centering
	\caption{Normalized misclassification rates under the Toeplitz model
		(\(\rho=0.9\)). Smaller values indicate better performance.}
	\label{TAB_NMR_TOE_m100}
	\begin{tabular}{lccccc}
		\toprule
		$p$ & BH & Storey-BH & GBS & MRD-CSX & MRD-GBS \\
		\midrule
		0.01  & 0.0183 & 0.0823 & 0.0911 & 0.0122 & \textbf{0.0023} \\
		0.03  & 0.0294 & 0.0979 & 0.1011 & 0.0174 & \textbf{0.0039} \\
		0.05  & 0.0396 & 0.1070 & 0.1118 & 0.0180 & \textbf{0.0065} \\
		0.075 & 0.0441 & 0.1069 & 0.1139 & 0.0181 & \textbf{0.0093} \\
		0.10  & 0.0548 & 0.1176 & 0.1227 & 0.0181 & \textbf{0.0119} \\
		0.15  & 0.0632 & 0.1199 & 0.1240 & 0.0179 & \textbf{0.0173} \\
		0.20  & 0.0720 & 0.1260 & 0.1281 & \textbf{0.0196} & 0.0244 \\
		0.25  & 0.0790 & 0.1255 & 0.1279 & \textbf{0.0239} & 0.0340 \\
		0.30  & 0.0813 & 0.1246 & 0.1243 & \textbf{0.0301} & 0.0449 \\
		0.35  & 0.0893 & 0.1262 & 0.1292 & \textbf{0.0416} & 0.0598 \\
		\bottomrule
	\end{tabular}
\end{table}

\begin{table}[htbp]
	\centering
	\caption{Normalized misclassification rates under the heterogeneous block
		covariance model. Smaller values indicate better performance.}
	\label{TAB_NMR_BLOCK_m100}
	\begin{tabular}{lccccc}
		\toprule
		$p$ & BH & Storey-BH & GBS & MRD-CSX & MRD-GBS \\
		\midrule
		0.01  & 0.0079 & 0.0085 & 0.0080 & 0.0126 & \textbf{0.0036} \\
		0.03  & 0.0181 & 0.0187 & 0.0186 & 0.0189 & \textbf{0.0070} \\
		0.05  & 0.0272 & 0.0282 & 0.0283 & 0.0211 & \textbf{0.0104} \\
		0.075 & 0.0362 & 0.0373 & 0.0365 & 0.0216 & \textbf{0.0144} \\
		0.10  & 0.0438 & 0.0448 & 0.0438 & 0.0227 & \textbf{0.0177} \\
		0.15  & 0.0577 & 0.0586 & 0.0590 & 0.0253 & \textbf{0.0245} \\
		0.20  & 0.0670 & 0.0673 & 0.0675 & \textbf{0.0283} & 0.0312 \\
		0.25  & 0.0746 & 0.0746 & 0.0741 & \textbf{0.0357} & 0.0421 \\
		0.30  & 0.0810 & 0.0807 & 0.0800 & \textbf{0.0455} & 0.0542 \\
		0.35  & 0.0875 & 0.0862 & 0.0853 & \textbf{0.0619} & 0.0715 \\
		\bottomrule
	\end{tabular}
\end{table}

\begin{table}[htbp]
	\centering
	\caption{Normalized misclassification rates under the sparse precision-matrix
		model. Smaller values indicate better performance.}
	\label{TAB_NMR_SPARSE_m100}
	\begin{tabular}{lccccc}
		\toprule
		$p$ & BH & Storey-BH & GBS & MRD-CSX & MRD-GBS \\
		\midrule
		0.01  & 0.0073 & 0.0074 & 0.0073 & 0.0121 & \textbf{0.0044} \\
		0.03  & 0.0174 & 0.0174 & 0.0174 & 0.0196 & \textbf{0.0096} \\
		0.05  & 0.0265 & 0.0266 & 0.0268 & 0.0232 & \textbf{0.0148} \\
		0.075 & 0.0360 & 0.0359 & 0.0359 & 0.0262 & \textbf{0.0206} \\
		0.10  & 0.0437 & 0.0436 & 0.0437 & 0.0299 & \textbf{0.0261} \\
		0.15  & 0.0560 & 0.0556 & 0.0561 & 0.0382 & \textbf{0.0377} \\
		0.20  & 0.0666 & 0.0657 & 0.0654 & \textbf{0.0476} & 0.0486 \\
		0.25  & 0.0745 & 0.0729 & 0.0733 & \textbf{0.0597} & 0.0605 \\
		0.30  & 0.0811 & 0.0790 & 0.0793 & 0.0728 & \textbf{0.0714} \\
		0.35  & 0.0878 & 0.0848 & 0.0853 & 0.0884 & \textbf{0.0824} \\
		\bottomrule
	\end{tabular}
\end{table}

\clearpage

\subsection*{False discovery rates ($m=100$)}

%Figure~\ref{FIG_FDR_m100} summarizes the empirical false discovery rates
%of the competing procedures under the six dependence structures
%considered in this study. As in the $m=200$ experiments, the proposed
%GBS-calibrated MRD procedure generally maintains FDR values close to the
%nominal level and substantially improves upon the original MRD procedure
%across a broad range of sparsity levels.

Figure~\ref{FIG_FDR_m100} summarizes the empirical false discovery rates
of the competing procedures under the six dependence structures
considered in this study. Consistent with the $m=200$ results, the
proposed GBS-calibrated MRD procedure generally maintains FDR values
close to the nominal level while substantially reducing the excessive
FDR inflation often exhibited by the original MRD procedure.

\begin{figure}[p]
	\centering
	
	\begin{tabular}{cc}
		
		\includegraphics[width=0.47\textwidth]
		{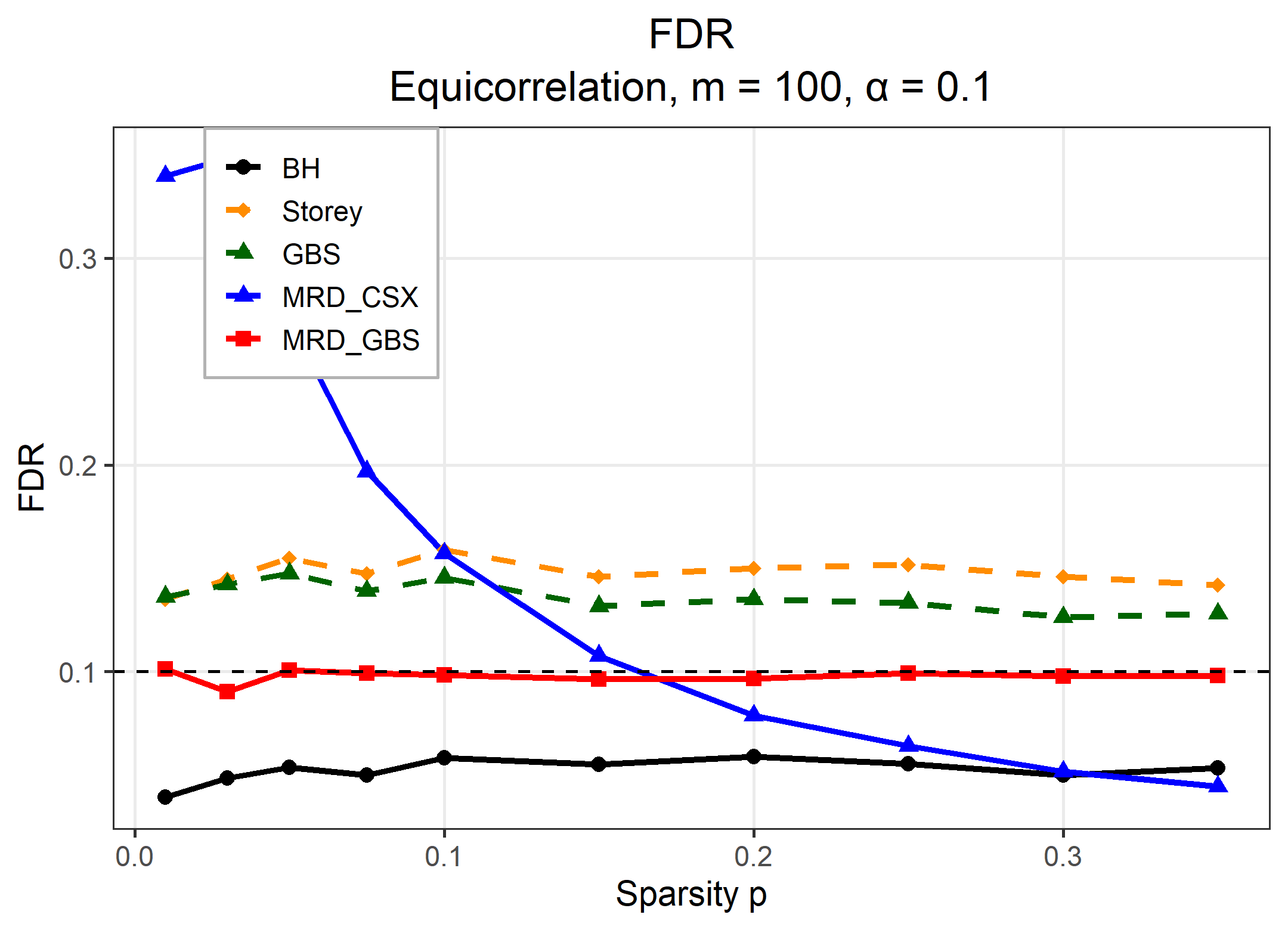}
		&
		\includegraphics[width=0.47\textwidth]
		{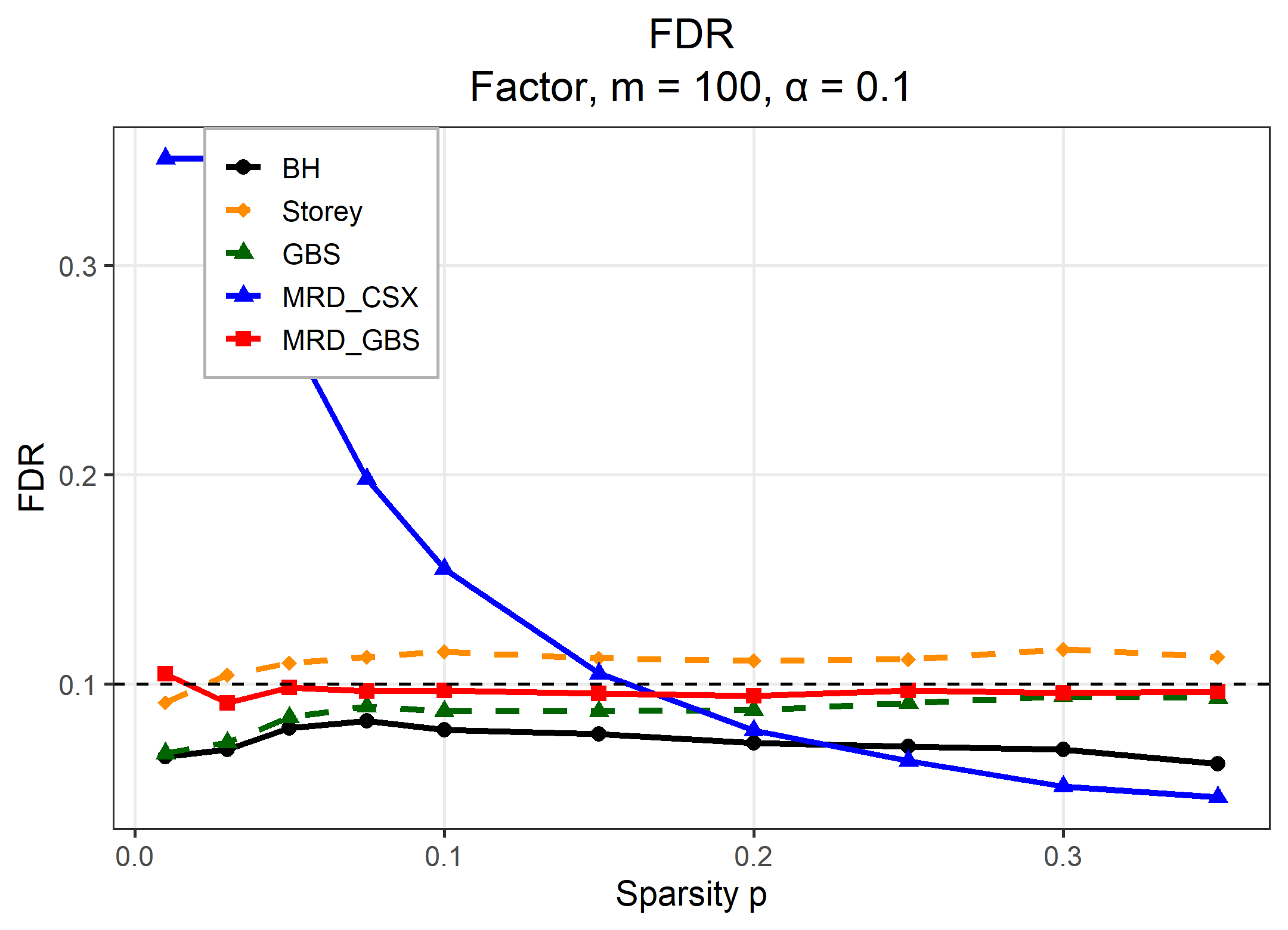}
		\\
		
		(a) Equicorrelation ($\rho=0.7$)
		&
		(b) Factor Model
		\\[0.3cm]
		
		\includegraphics[width=0.47\textwidth]
		{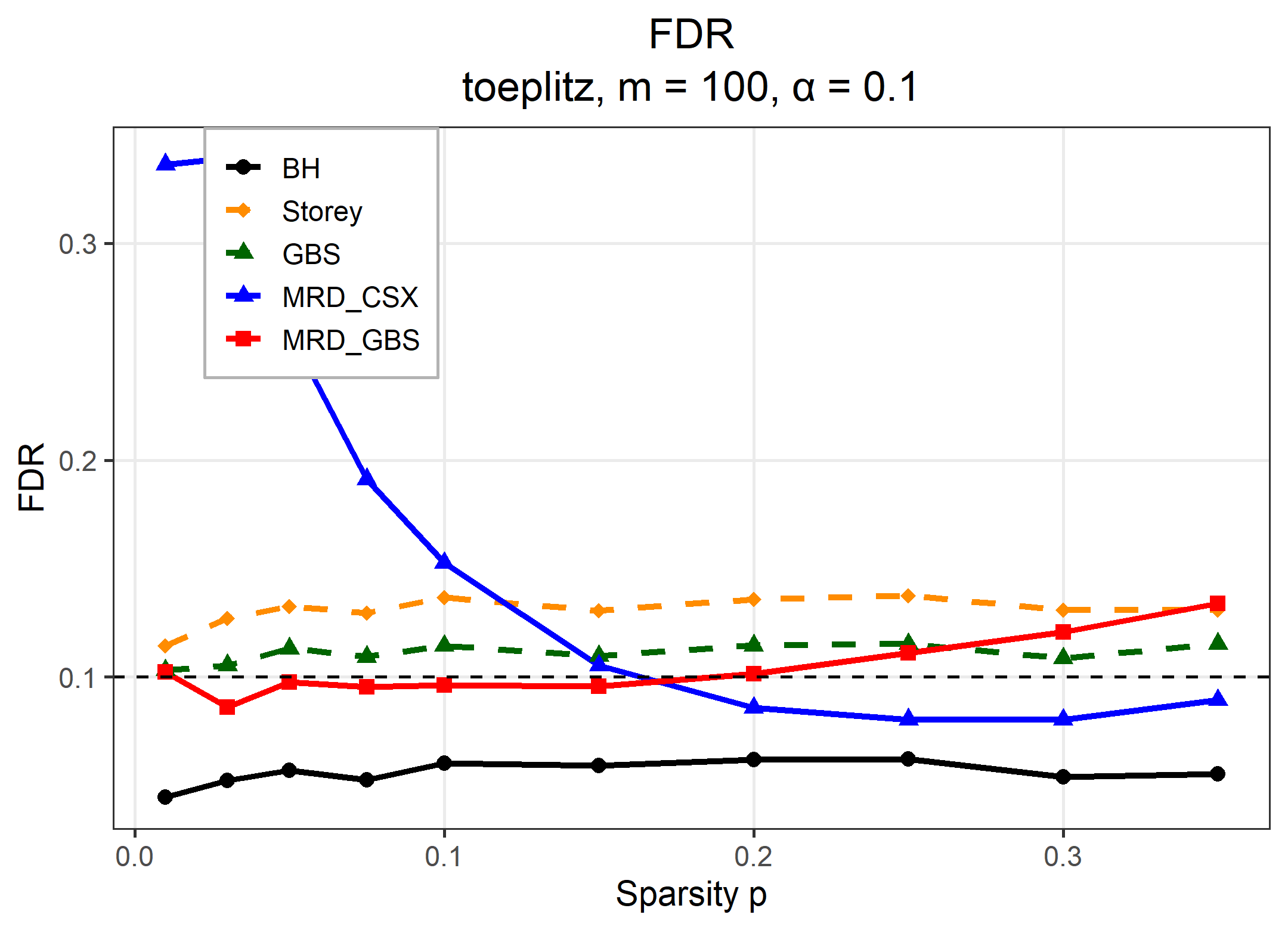}
		&
		\includegraphics[width=0.47\textwidth]
		{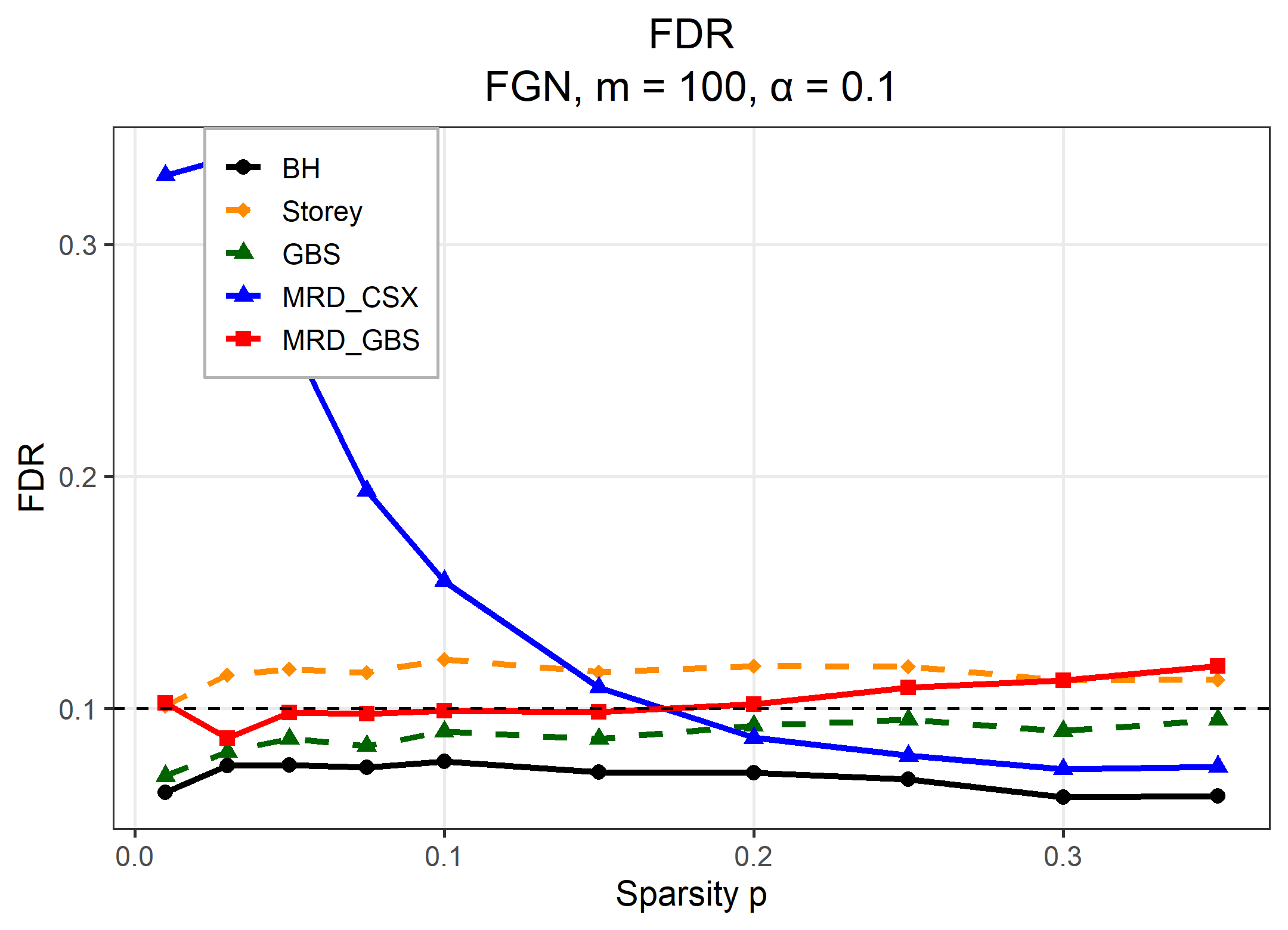}
		\\
		
		(c) Toeplitz ($\rho=0.9$)
		&
		(d) Fractional Gaussian Noise ($H=0.9$)
		\\[0.3cm]
		
		\includegraphics[width=0.47\textwidth]
		{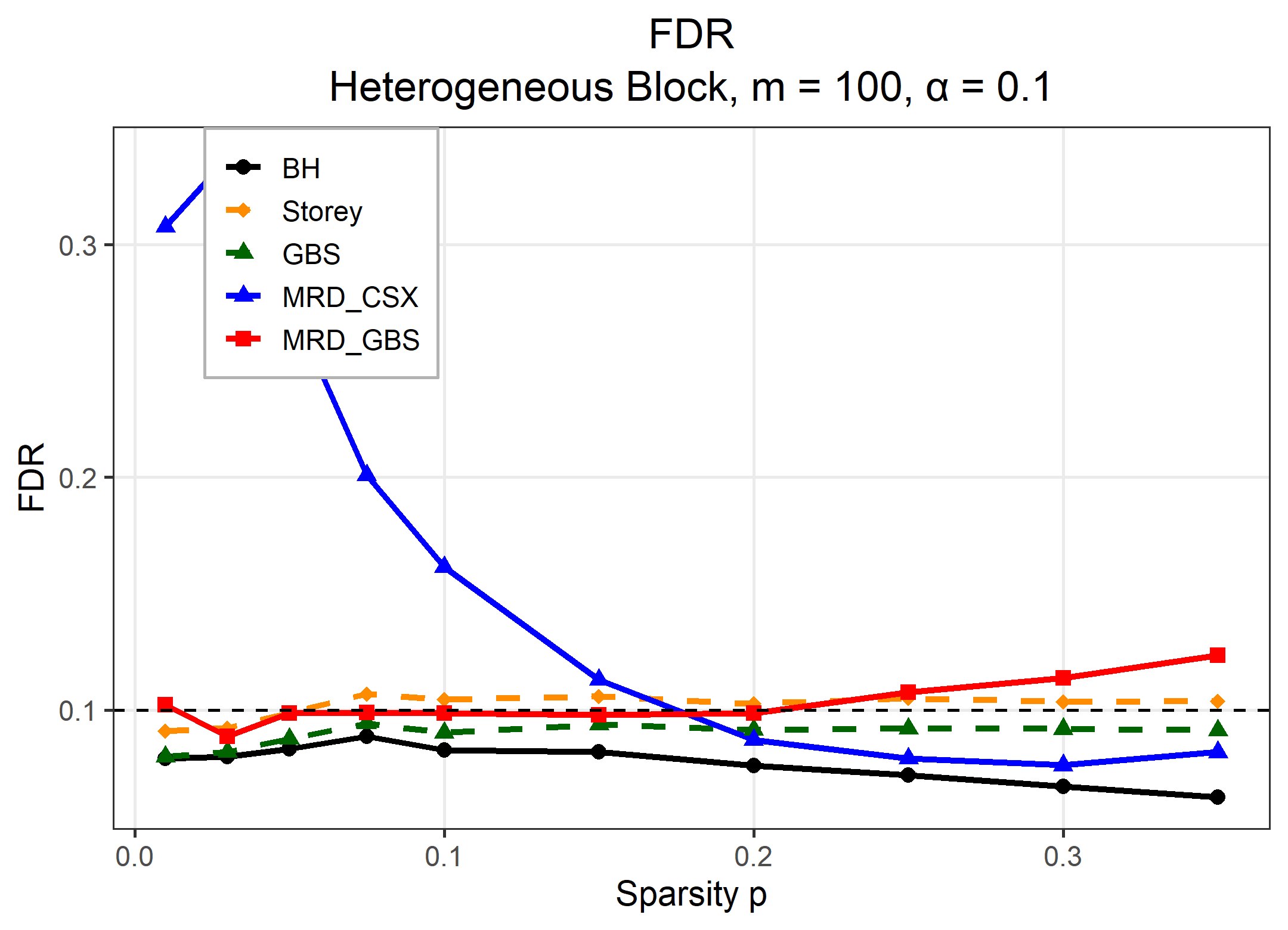}
		&
		\includegraphics[width=0.47\textwidth]
		{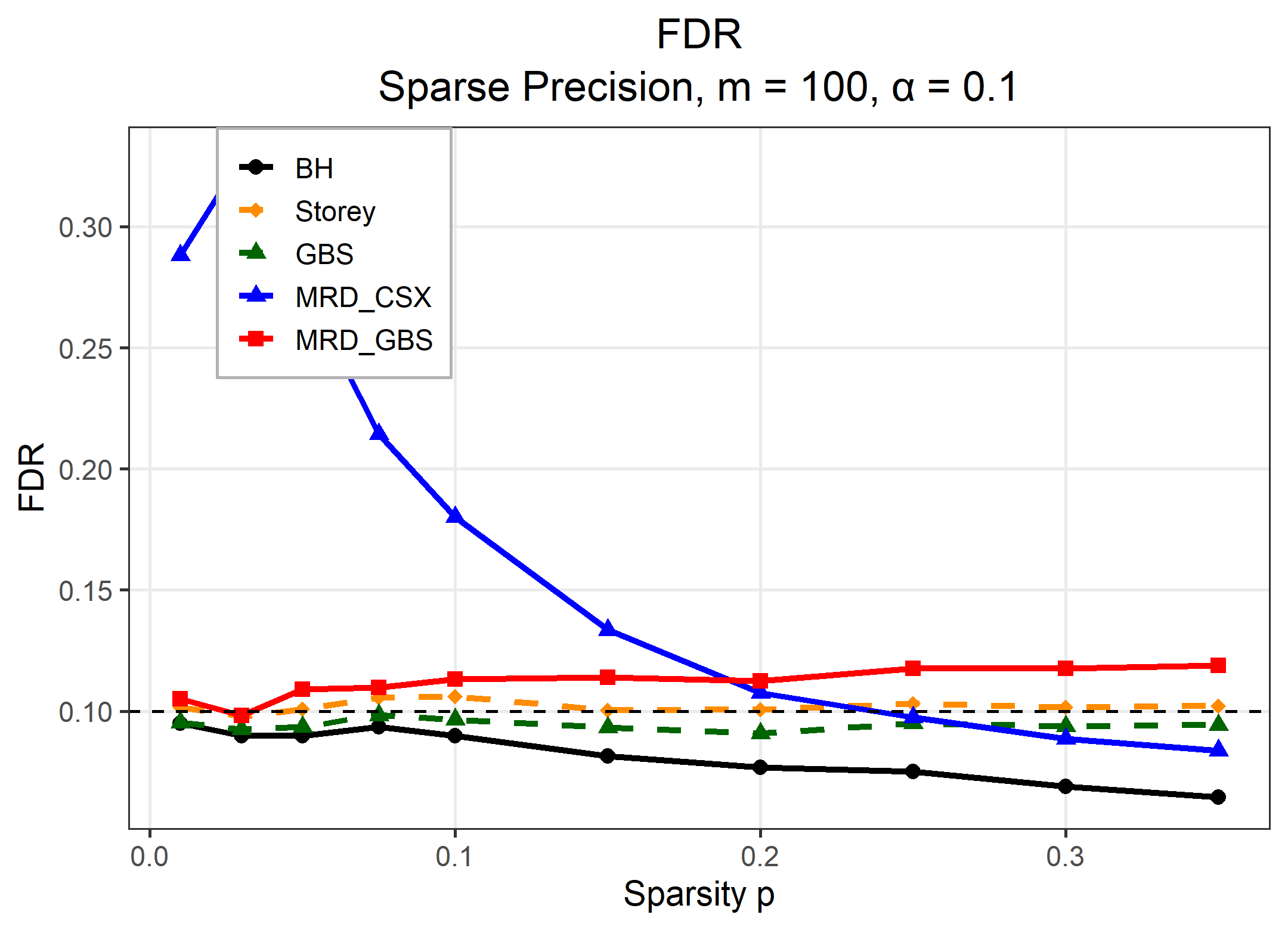}
		\\
		
		(e) Heterogeneous Block
		&
		(f) Sparse Precision Matrix
		
	\end{tabular}
	
\caption{
	Empirical false discovery rates (FDR) for the competing multiple testing
	procedures under six representative dependence structures when $m=100$.
	The overall trends are broadly consistent with those observed for
	$m=200$, with the proposed GBS-calibrated MRD procedure generally
	exhibiting substantially improved FDR behavior relative to the original
	MRD procedure. However, the benefits of the proposed calibration become
	considerably more pronounced in the larger-dimensional experiments
	reported in the main text.
}

	\label{FIG_FDR_m100}
	
\end{figure}

\begin{table}[htbp]
	\centering
	\caption{False discovery rates under the Equicorrelation model ($\rho=0.7$). Values close to the nominal level $\alpha=0.10$ indicate favorable false-discovery behavior. Boldfaced entries correspond to the values closest to the nominal level.}
	\label{TAB_FDR_EQ}
	\begin{tabular}{lccccc}
		\toprule
		$p$ & BH & Storey-BH & GBS & MRD-CSX & MRD-GBS \\
		\midrule
0.01 & 0.0392 & 0.1349 & 0.1363 & 0.3398 & \textbf{0.1013} \\
0.03 & 0.0485 & 0.1448 & 0.1422 & 0.3483 & \textbf{0.0902} \\
0.05 & 0.0538 & 0.1549 & 0.1475 & 0.2707 & \textbf{0.1006} \\
0.075 & 0.0498 & 0.1475 & 0.1390 & 0.1969 & \textbf{0.0992} \\
0.10 & 0.0582 & 0.1589 & 0.1455 & 0.1572 & \textbf{0.0984} \\
0.15 & 0.0550 & 0.1459 & 0.1317 & 0.1077 & \textbf{0.0962} \\
0.20 & 0.0588 & 0.1499 & 0.1351 & 0.0788 & \textbf{0.0965} \\
0.25 & 0.0554 & 0.1518 & 0.1334 & 0.0640 & \textbf{0.0992} \\
0.30 & 0.0500 & 0.1459 & 0.1262 & 0.0517 & \textbf{0.0979} \\
0.35 & 0.0533 & 0.1420 & 0.1280 & 0.0444 & \textbf{0.0982} \\
		\bottomrule
	\end{tabular}
\end{table}

\begin{table}[htbp]
	\centering
	\caption{False discovery rates under the factor model. Values close to the nominal level $\alpha=0.10$ indicate favorable
false-discovery behavior. Boldfaced entries correspond to the values
closest to the nominal level.}
	\label{TAB_FDR_FACTOR_m100}
	\begin{tabular}{lccccc}
		\toprule
		$p$ & BH & Storey-BH & GBS & MRD-CSX & MRD-GBS \\
		\midrule
0.01 & 0.0655 & 0.0912 & 0.0669 & 0.3513 & \textbf{0.1050} \\
0.03 & 0.0687 & \textbf{0.1045} & 0.0723 & 0.3510 & 0.0910 \\
0.05 & 0.0791 & 0.1100 & 0.0842 & 0.2675 & \textbf{0.0985} \\
0.075 & 0.0825 & 0.1129 & 0.0894 & 0.1981 & \textbf{0.0968} \\
0.10 & 0.0783 & 0.1155 & 0.0870 & 0.1550 & \textbf{0.0970} \\
0.15 & 0.0762 & 0.1125 & 0.0871 & 0.1051 & \textbf{0.0955} \\
0.20 & 0.0720 & 0.1113 & 0.0876 & 0.0779 & \textbf{0.0944} \\
0.25 & 0.0701 & 0.1117 & 0.0909 & 0.0635 & \textbf{0.0970} \\
0.30 & 0.0688 & 0.1165 & 0.0942 & 0.0511 & \textbf{0.0957} \\
0.35 & 0.0620 & 0.1129 & 0.0934 & 0.0460 & \textbf{0.0965} \\
		\bottomrule
	\end{tabular}
\end{table}

\begin{table}[htbp]
	\centering
	\caption{False discovery rates under the fractional Gaussian noise model ($H=0.9$). Values close to the nominal level $\alpha=0.10$ indicate favorable
false-discovery behavior. Boldfaced entries correspond to the values
closest to the nominal level.}
	\label{TAB_FDR_FGN_m100}
	\begin{tabular}{lccccc}
		\toprule
		$p$ & BH & Storey-BH & GBS & MRD-CSX & MRD-GBS \\
		\midrule
0.01 & 0.0640 & \textbf{0.1011} & 0.0710 & 0.3297 & 0.1025 \\
0.03 & 0.0755 & 0.1144 & 0.0813 & 0.3370 & \textbf{0.0874} \\
0.05 & 0.0758 & 0.1171 & 0.0871 & 0.2613 & \textbf{0.0984} \\
0.075 & 0.0747 & 0.1154 & 0.0840 & 0.1939 & \textbf{0.0977} \\
0.10 & 0.0773 & 0.1212 & 0.0900 & 0.1549 & \textbf{0.0991} \\
0.15 & 0.0727 & 0.1159 & 0.0869 & 0.1091 & \textbf{0.0985} \\
0.20 & 0.0724 & 0.1184 & 0.0926 & 0.0874 & \textbf{0.1020} \\
0.25 & 0.0696 & 0.1180 & \textbf{0.0952} & 0.0797 & 0.1090 \\
0.30 & 0.0618 & 0.1121 & \textbf{0.0903} & 0.0739 & 0.1121 \\
0.35 & 0.0624 & 0.1124 & \textbf{0.0951} & 0.0751 & 0.1184 \\
		\bottomrule
	\end{tabular}
\end{table}

\begin{table}[htbp]
	\centering
	\caption{False discovery rates under the Toeplitz model ($\rho=0.9$). Values close to the nominal level $\alpha=0.10$ indicate favorable
		false-discovery behavior. Boldfaced entries correspond to the values
		closest to the nominal level.}
	\label{TAB_FDR_TOE_m100}
	\begin{tabular}{lccccc}
		\toprule
		$p$ & BH & Storey-BH & GBS & MRD-CSX & MRD-GBS \\
		\midrule
		0.01 & 0.0446 & 0.1144 & 0.1031 & 0.3364 & \textbf{0.1023} \\
		0.03 & 0.0523 & 0.1269 & \textbf{0.1054} & 0.3392 & 0.0862 \\
		0.05 & 0.0570 & 0.1324 & 0.1132 & 0.2595 & \textbf{0.0977} \\
		0.075 & 0.0527 & 0.1296 & 0.1093 & 0.1912 & \textbf{0.0954} \\
		0.10 & 0.0603 & 0.1366 & 0.1144 & 0.1526 & \textbf{0.0964} \\
		0.15 & 0.0590 & 0.1307 & 0.1097 & 0.1053 & \textbf{0.0959} \\
		0.20 & 0.0618 & 0.1359 & 0.1146 & 0.0857 & \textbf{0.1015} \\
		0.25 & 0.0622 & 0.1375 & 0.1152 & 0.0804 & \textbf{0.1110} \\
		0.30 & 0.0539 & 0.1309 & \textbf{0.1086} & 0.0804 & 0.1208 \\
		0.35 & 0.0554 & 0.1309 & 0.1153 & \textbf{0.0893} & 0.1339 \\
		\bottomrule
	\end{tabular}
\end{table}

\begin{table}[htbp]
	\centering
	\caption{False discovery rates under the heterogeneous block covariance model. Values close to the nominal level $\alpha=0.10$ indicate favorable
		false-discovery behavior. Boldfaced entries correspond to the values
		closest to the nominal level.}
	\label{TAB_FDR_BLOCK_m100}
	\begin{tabular}{lccccc}
		\toprule
		$p$ & BH & Storey-BH & GBS & MRD-CSX & MRD-GBS \\
		\midrule
		0.01 & 0.0792 & 0.0908 & 0.0799 & 0.3077 & \textbf{0.1021} \\
		0.03 & 0.0799 & \textbf{0.0921} & 0.0821 & 0.3371 & 0.0888 \\
		0.05 & 0.0834 & \textbf{0.0990} & 0.0876 & 0.2757 & 0.0986 \\
		0.075 & 0.0888 & 0.1070 & 0.0942 & 0.2008 & \textbf{0.0989} \\
		0.10 & 0.0827 & 0.1046 & 0.0902 & 0.1615 & \textbf{0.0987} \\
		0.15 & 0.0820 & 0.1059 & 0.0939 & 0.1130 & \textbf{0.0978} \\
		0.20 & 0.0760 & 0.1027 & 0.0914 & 0.0871 & \textbf{0.0986} \\
		0.25 & 0.0720 & \textbf{0.1051} & 0.0919 & 0.0793 & 0.1075 \\
		0.30 & 0.0672 & \textbf{0.1036} & 0.0919 & 0.0763 & 0.1137 \\
		0.35 & 0.0625 & \textbf{0.1038} & 0.0912 & 0.0819 & 0.1234 \\
		\bottomrule
	\end{tabular}
\end{table}

\begin{table}[htbp]
	\centering
	\caption{False discovery rates under the sparse precision-matrix model. Values close to the nominal level $\alpha=0.10$ indicate favorable
		false-discovery behavior. Boldfaced entries correspond to the values
		closest to the nominal level.}
	\label{TAB_FDR_SPARSE_m100}
	\begin{tabular}{lccccc}
		\toprule
		$p$ & BH & Storey-BH & GBS & MRD-CSX & MRD-GBS \\
		\midrule
		0.01 & 0.0949 & \textbf{0.1015} & 0.0953 & 0.2881 & 0.1051 \\
		0.03 & 0.0897 & 0.0978 & 0.0925 & 0.3282 & \textbf{0.0982} \\
		0.05 & 0.0898 & \textbf{0.1009} & 0.0935 & 0.2764 & 0.1091 \\
		0.075 & 0.0935 & 0.1055 & \textbf{0.0983} & 0.2143 & 0.1098 \\
		0.10 & 0.0899 & 0.1060 & \textbf{0.0966} & 0.1801 & 0.1132 \\
		0.15 & 0.0814 & \textbf{0.1005} & 0.0933 & 0.1335 & 0.1139 \\
		0.20 & 0.0768 & \textbf{0.1007} & 0.0909 & 0.1076 & 0.1125 \\
		0.25 & 0.0750 & 0.1032 & 0.0951 & \textbf{0.0974} & 0.1177 \\
		0.30 & 0.0689 & \textbf{0.1017} & 0.0937 & 0.0886 & 0.1175 \\
		0.35 & 0.0644 & \textbf{0.1022} & 0.0942 & 0.0838 & 0.1188 \\
		\bottomrule
	\end{tabular}
\end{table}

\clearpage

\subsection*{False non-discovery rates ($m=100$)}

Figure~\ref{FIG_FNR_m100} summarizes the empirical false non-discovery rates
of the competing procedures under the six dependence structures
considered in this study. Consistent with the corresponding $m=200$
experiments, the proposed GBS-calibrated MRD procedure frequently
maintains very small FNR values across sparse and moderately sparse
regimes, providing further evidence of its strong signal-recovery
performance.

\begin{figure}[p]
	\centering
	
	\begin{tabular}{cc}
		
		\includegraphics[width=0.47\textwidth]
		{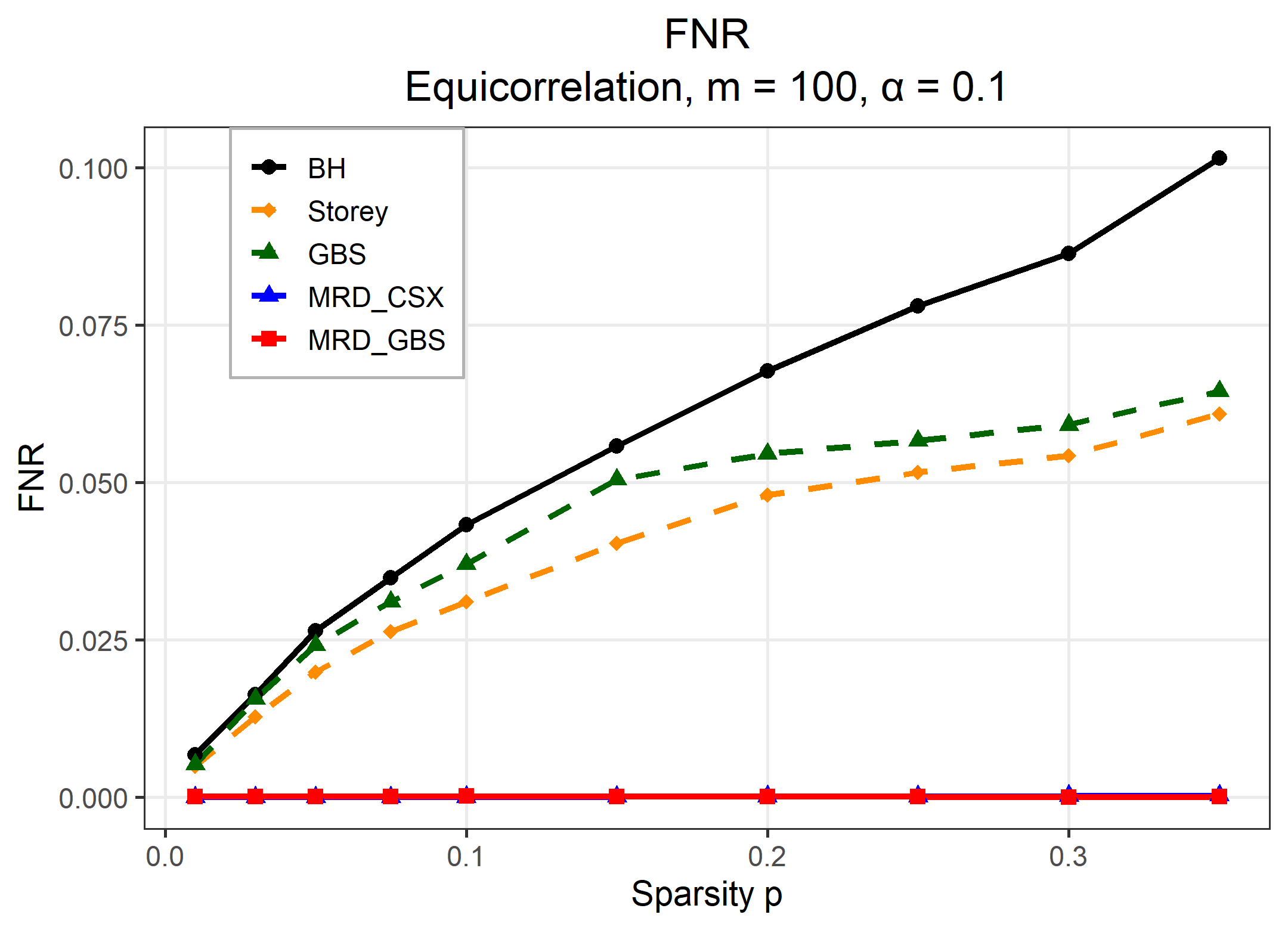}
		&
		\includegraphics[width=0.47\textwidth]
		{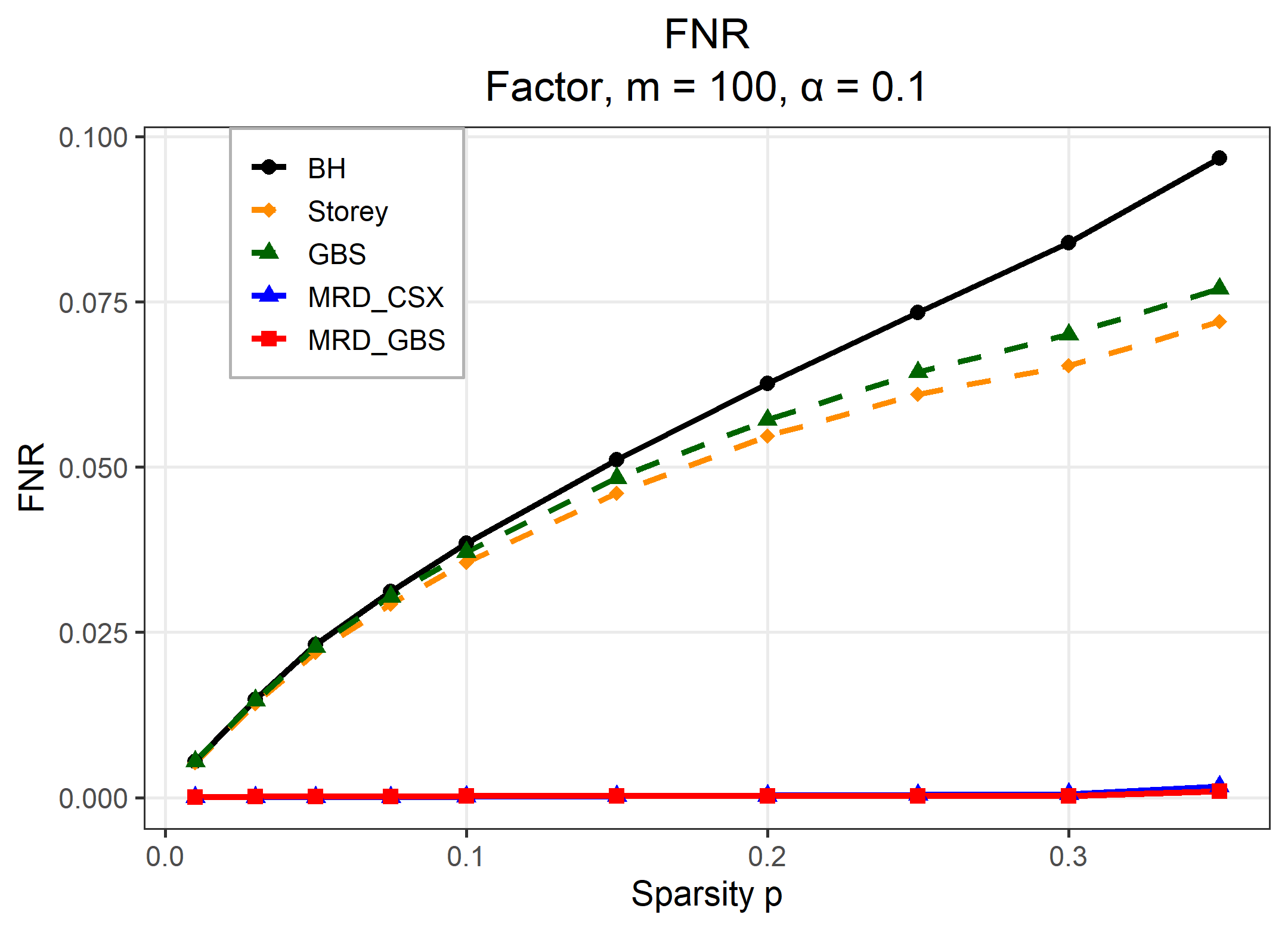}
		\\
		
		(a) Equicorrelation ($\rho=0.7$)
		&
		(b) Factor Model
		\\[0.3cm]
		
		\includegraphics[width=0.47\textwidth]
		{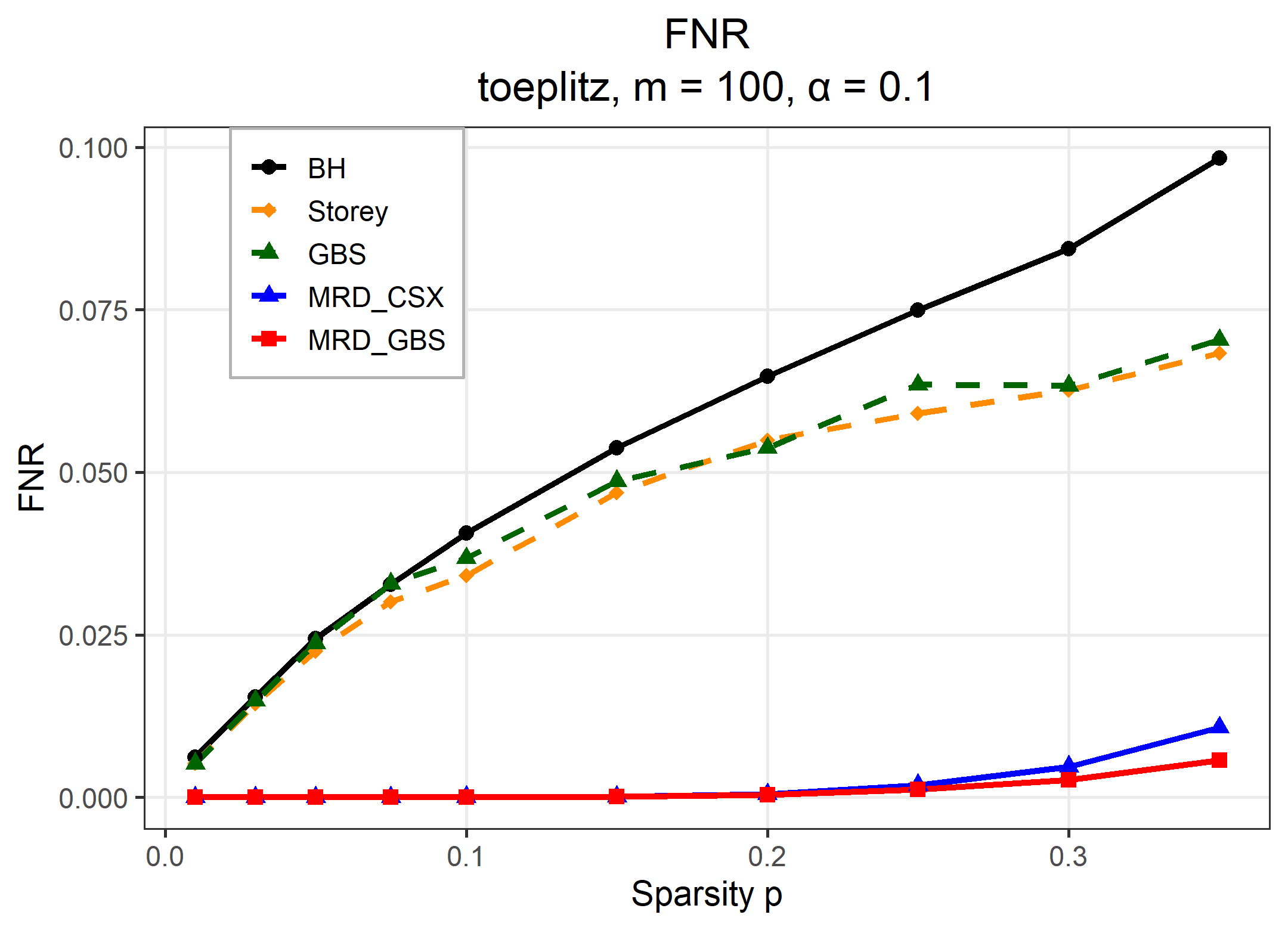}
		&
		\includegraphics[width=0.47\textwidth]
		{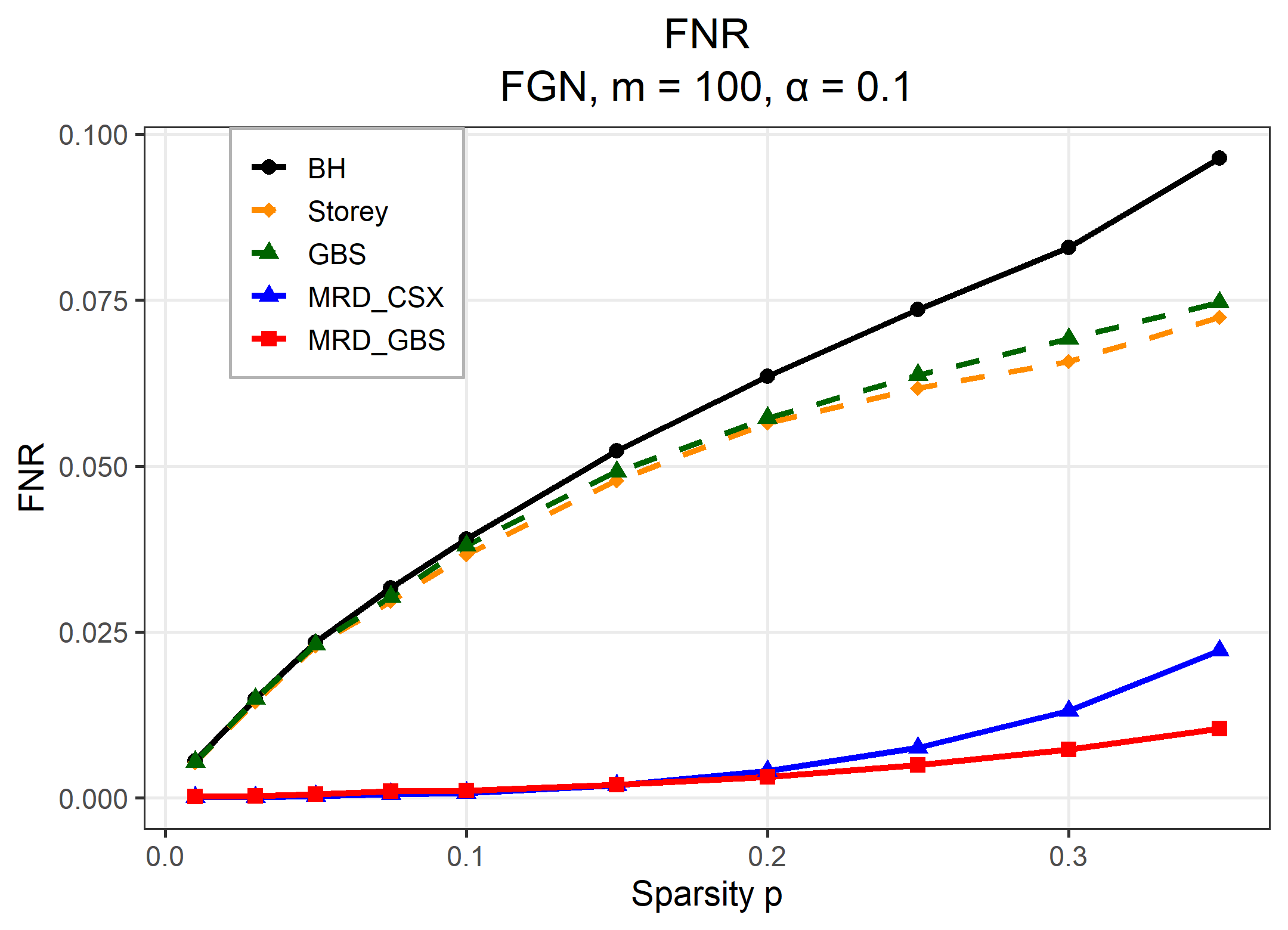}
		\\
		
		(c) Toeplitz ($\rho=0.9$)
		&
		(d) Fractional Gaussian Noise ($H=0.9$)
		\\[0.3cm]
		
		\includegraphics[width=0.47\textwidth]
		{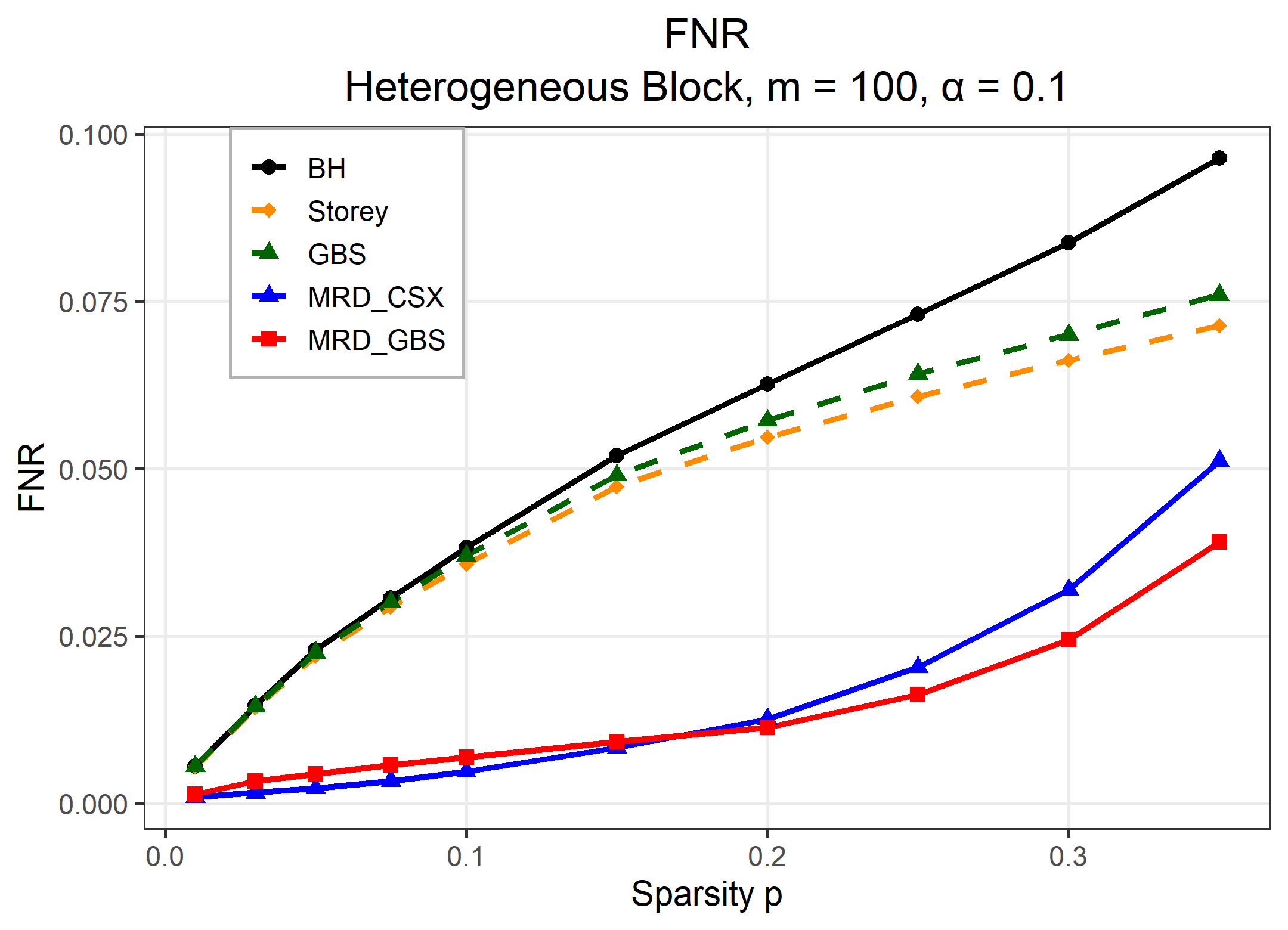}
		&
		\includegraphics[width=0.47\textwidth]
		{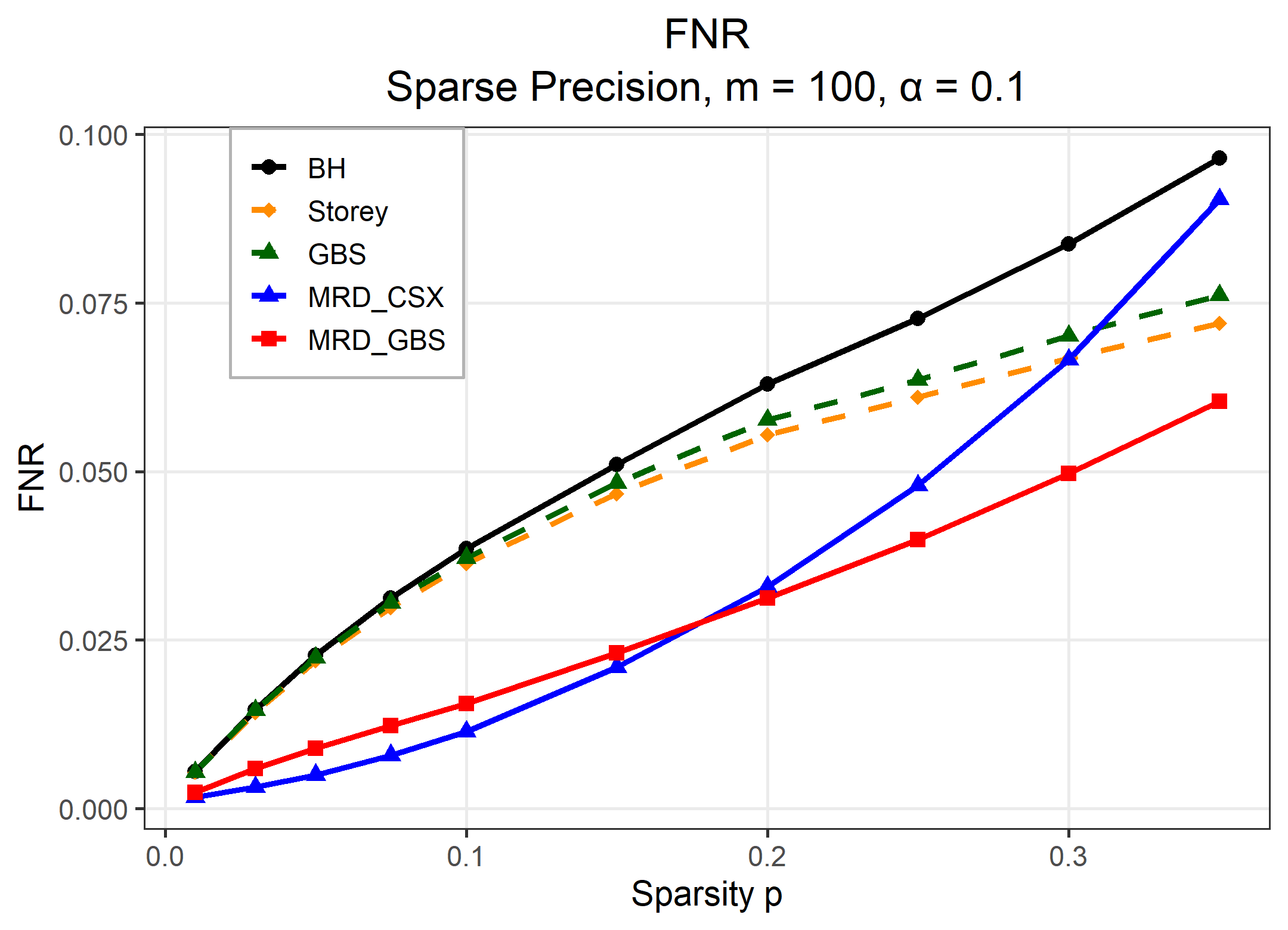}
		\\
		
		(e) Heterogeneous Block
		&
		(f) Sparse Precision Matrix
		
	\end{tabular}
	
\caption{
	Empirical false non-discovery rates (FNR) for the competing multiple
	testing procedures under six representative dependence structures when
	$m=100$. The overall behavior is qualitatively similar to that observed
	for $m=200$, with the proposed GBS-calibrated MRD procedure frequently
	exhibiting very small FNR values across a broad range of sparsity levels.
	As in the larger-dimensional experiments, the results suggest that the
	procedure is highly effective at identifying true signals while
	maintaining favorable false-discovery characteristics.
}

	\label{FIG_FNR_m100}
	
\end{figure}

\begin{table}[htbp]
	\centering
	\caption{False non-discovery rates under the Equicorrelation model ($\rho=0.7$). Smaller values indicate better performance.}
	\label{TAB_FNR_EQ_m100}
	\begin{tabular}{lccccc}
		\toprule
		$p$ & BH & Storey-BH & GBS & MRD-CSX & MRD-GBS \\
		\midrule
		0.01 & 0.0068 & 0.0050 & 0.0053 & \textbf{0.0001} & \textbf{0.0001} \\
		0.03 & 0.0164 & 0.0128 & 0.0156 & \textbf{0.0000} & 0.0001 \\
		0.05 & 0.0265 & 0.0200 & 0.0242 & \textbf{0.0000} & 0.0001 \\
		0.075 & 0.0349 & 0.0264 & 0.0311 & \textbf{0.0001} & 0.0002 \\
		0.10 & 0.0433 & 0.0311 & 0.0370 & \textbf{0.0001} & 0.0002 \\
		0.15 & 0.0558 & 0.0404 & 0.0505 & \textbf{0.0001} & 0.0002 \\
		0.20 & 0.0677 & 0.0481 & 0.0546 & \textbf{0.0002} & \textbf{0.0002} \\
		0.25 & 0.0781 & 0.0516 & 0.0567 & 0.0002 & \textbf{0.0001} \\
		0.30 & 0.0864 & 0.0543 & 0.0592 & 0.0003 & \textbf{0.0001} \\
		0.35 & 0.1015 & 0.0609 & 0.0645 & 0.0003 & \textbf{0.0001} \\
		\bottomrule
	\end{tabular}
\end{table}

\begin{table}[htbp]
	\centering
	\caption{False non-discovery rates under the factor model. Smaller values indicate better performance.}
	\label{TAB_FNR_FACTOR_m100}
	\begin{tabular}{lccccc}
		\toprule
		$p$ & BH & Storey-BH & GBS & MRD-CSX & MRD-GBS \\
		\midrule
		0.01 & 0.0055 & 0.0053 & 0.0055 & \textbf{0.0000} & 0.0001 \\
		0.03 & 0.0148 & 0.0142 & 0.0147 & \textbf{0.0000} & 0.0001 \\
		0.05 & 0.0231 & 0.0220 & 0.0228 & \textbf{0.0001} & 0.0002 \\
		0.075 & 0.0312 & 0.0293 & 0.0304 & \textbf{0.0001} & \textbf{0.0001} \\
		0.10 & 0.0385 & 0.0356 & 0.0371 & \textbf{0.0002} & 0.0003 \\
		0.15 & 0.0511 & 0.0461 & 0.0483 & \textbf{0.0002} & \textbf{0.0002} \\
		0.20 & 0.0627 & 0.0547 & 0.0572 & \textbf{0.0003} & \textbf{0.0003} \\
		0.25 & 0.0734 & 0.0610 & 0.0643 & 0.0004 & \textbf{0.0002} \\
		0.30 & 0.0839 & 0.0653 & 0.0700 & 0.0005 & \textbf{0.0002} \\
		0.35 & 0.0967 & 0.0720 & 0.0770 & 0.0017 & \textbf{0.0010} \\
		\bottomrule
	\end{tabular}
\end{table}

\begin{table}[htbp]
	\centering
	\caption{False non-discovery rates under the fractional Gaussian noise model ($H=0.9$). Smaller values indicate better performance.}
	\label{TAB_FNR_FGN_m100}
	\begin{tabular}{lccccc}
		\toprule
		$p$ & BH & Storey-BH & GBS & MRD-CSX & MRD-GBS \\
		\midrule
		0.01 & 0.0056 & 0.0053 & 0.0055 & \textbf{0.0001} & 0.0002 \\
		0.03 & 0.0149 & 0.0145 & 0.0149 & \textbf{0.0001} & 0.0003 \\
		0.05 & 0.0235 & 0.0229 & 0.0231 & \textbf{0.0003} & 0.0006 \\
		0.075 & 0.0316 & 0.0297 & 0.0303 & \textbf{0.0006} & 0.0010 \\
		0.10 & 0.0390 & 0.0367 & 0.0380 & \textbf{0.0008} & 0.0011 \\
		0.15 & 0.0523 & 0.0478 & 0.0491 & \textbf{0.0019} & 0.0020 \\
		0.20 & 0.0636 & 0.0565 & 0.0573 & 0.0040 & \textbf{0.0032} \\
		0.25 & 0.0736 & 0.0617 & 0.0637 & 0.0076 & \textbf{0.0050} \\
		0.30 & 0.0829 & 0.0658 & 0.0692 & 0.0132 & \textbf{0.0073} \\
		0.35 & 0.0963 & 0.0724 & 0.0746 & 0.0222 & \textbf{0.0104} \\
		\bottomrule
	\end{tabular}
\end{table}

\begin{table}[htbp]
	\centering
	\caption{False non-discovery rates under the Toeplitz model ($\rho=0.9$). Smaller values indicate better performance.}
	\label{TAB_FNR_TOE_m100}
	\begin{tabular}{lccccc}
		\toprule
		$p$ & BH & Storey-BH & GBS & MRD-CSX & MRD-GBS \\
		\midrule
		0.01 & 0.0062 & 0.0052 & 0.0052 & \textbf{0.0000} & \textbf{0.0000} \\
		0.03 & 0.0154 & 0.0144 & 0.0148 & \textbf{0.0000} & \textbf{0.0000} \\
		0.05 & 0.0244 & 0.0224 & 0.0237 & \textbf{0.0000} & \textbf{0.0000} \\
		0.075 & 0.0328 & 0.0301 & 0.0329 & \textbf{0.0000} & \textbf{0.0000} \\
		0.10 & 0.0406 & 0.0341 & 0.0367 & \textbf{0.0001} & \textbf{0.0001} \\
		0.15 & 0.0537 & 0.0469 & 0.0486 & \textbf{0.0001} & \textbf{0.0001} \\
		0.20 & 0.0648 & 0.0549 & 0.0537 & 0.0005 & \textbf{0.0003} \\
		0.25 & 0.0750 & 0.0590 & 0.0635 & 0.0019 & \textbf{0.0012} \\
		0.30 & 0.0844 & 0.0626 & 0.0633 & 0.0047 & \textbf{0.0026} \\
		0.35 & 0.0983 & 0.0684 & 0.0704 & 0.0108 & \textbf{0.0057} \\
		\bottomrule
	\end{tabular}
\end{table}

\begin{table}[htbp]
	\centering
	\caption{False non-discovery rates under the heterogeneous block covariance model. Smaller values indicate better performance.}
	\label{TAB_FNR_BLOCK_m100}
	\begin{tabular}{lccccc}
		\toprule
		$p$ & BH & Storey-BH & GBS & MRD-CSX & MRD-GBS \\
		\midrule
		0.01 & 0.0056 & 0.0054 & 0.0056 & \textbf{0.0009} & 0.0014 \\
		0.03 & 0.0146 & 0.0143 & 0.0145 & \textbf{0.0016} & 0.0033 \\
		0.05 & 0.0229 & 0.0221 & 0.0225 & \textbf{0.0022} & 0.0044 \\
		0.075 & 0.0307 & 0.0293 & 0.0300 & \textbf{0.0034} & 0.0058 \\
		0.10 & 0.0382 & 0.0358 & 0.0370 & \textbf{0.0048} & 0.0069 \\
		0.15 & 0.0519 & 0.0473 & 0.0490 & \textbf{0.0083} & 0.0092 \\
		0.20 & 0.0627 & 0.0547 & 0.0572 & 0.0126 & \textbf{0.0113} \\
		0.25 & 0.0731 & 0.0608 & 0.0642 & 0.0204 & \textbf{0.0162} \\
		0.30 & 0.0837 & 0.0663 & 0.0700 & 0.0319 & \textbf{0.0244} \\
		0.35 & 0.0964 & 0.0713 & 0.0760 & 0.0512 & \textbf{0.0390} \\
		\bottomrule
	\end{tabular}
\end{table}

\begin{table}[htbp]
	\centering
	\caption{False non-discovery rates under the sparse precision-matrix model. Smaller values indicate better performance.}
	\label{TAB_FNR_SPARSE_m100}
	\begin{tabular}{lccccc}
		\toprule
		$p$ & BH & Storey-BH & GBS & MRD-CSX & MRD-GBS \\
		\midrule
		0.01 & 0.0055 & 0.0054 & 0.0054 & \textbf{0.0017} & 0.0024 \\
		0.03 & 0.0147 & 0.0143 & 0.0146 & \textbf{0.0032} & 0.0059 \\
		0.05 & 0.0227 & 0.0220 & 0.0224 & \textbf{0.0050} & 0.0090 \\
		0.075 & 0.0312 & 0.0298 & 0.0305 & \textbf{0.0079} & 0.0123 \\
		0.10 & 0.0386 & 0.0363 & 0.0372 & \textbf{0.0114} & 0.0156 \\
		0.15 & 0.0511 & 0.0467 & 0.0483 & \textbf{0.0210} & 0.0231 \\
		0.20 & 0.0630 & 0.0555 & 0.0577 & 0.0329 & \textbf{0.0312} \\
		0.25 & 0.0727 & 0.0610 & 0.0636 & 0.0479 & \textbf{0.0399} \\
		0.30 & 0.0838 & 0.0667 & 0.0702 & 0.0666 & \textbf{0.0497} \\
		0.35 & 0.0965 & 0.0720 & 0.0761 & 0.0904 & \textbf{0.0604} \\
		\bottomrule
	\end{tabular}
\end{table}

\clearpage

\subsection*{Powers ($m=100$)}

Figure~\ref{FIG_POWER_m100} summarizes the empirical power of the competing
procedures under the six dependence structures considered in this study.
The qualitative behavior closely mirrors that observed for $m=200$,
with the proposed GBS-calibrated MRD procedure frequently attaining
powers very close to one under several dependence structures while
simultaneously maintaining favorable false-discovery characteristics.

\begin{figure}[p]
	\centering
	
	\begin{tabular}{cc}
		
		\includegraphics[width=0.47\textwidth]
		{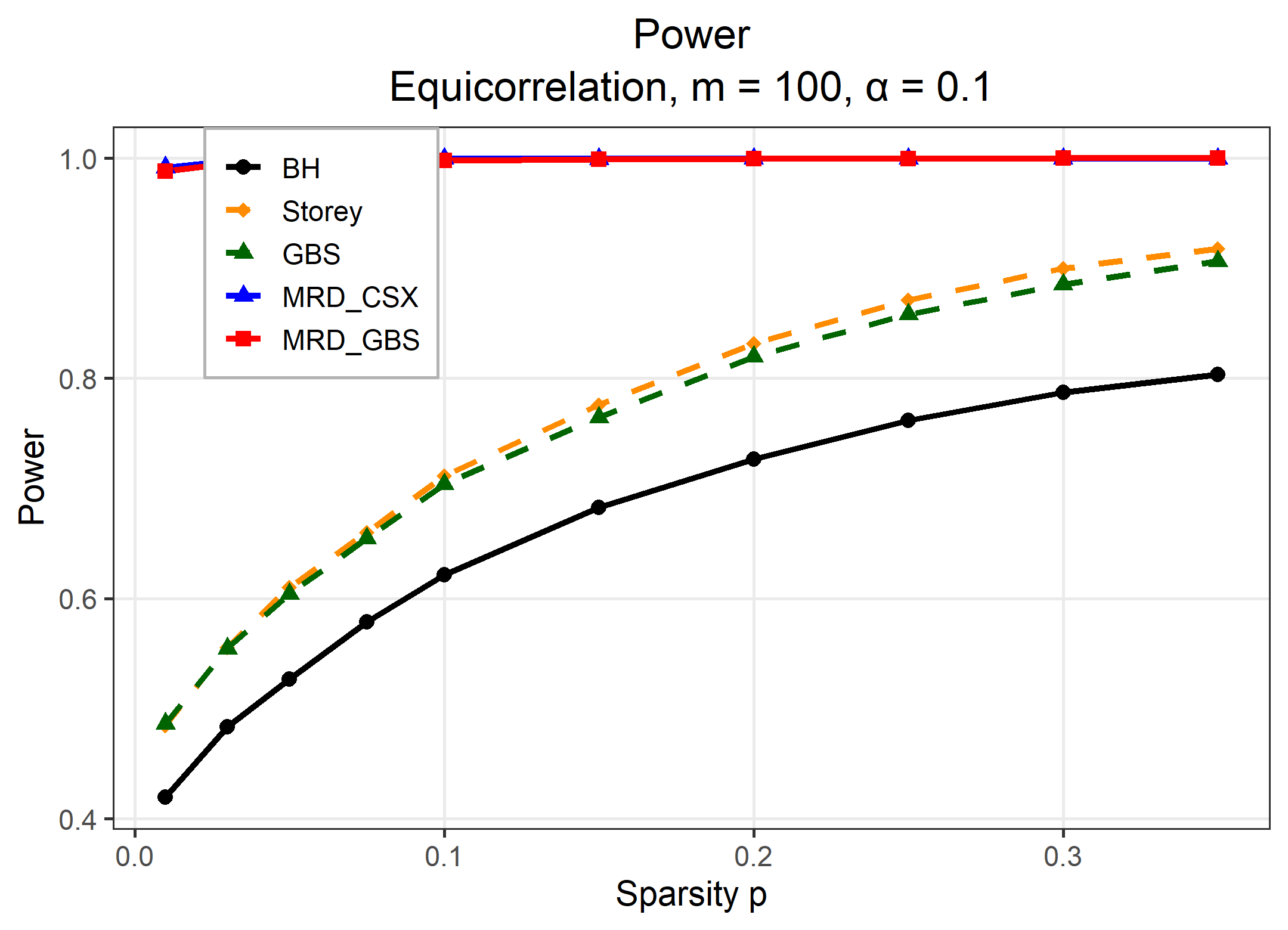}
		&
		\includegraphics[width=0.47\textwidth]
		{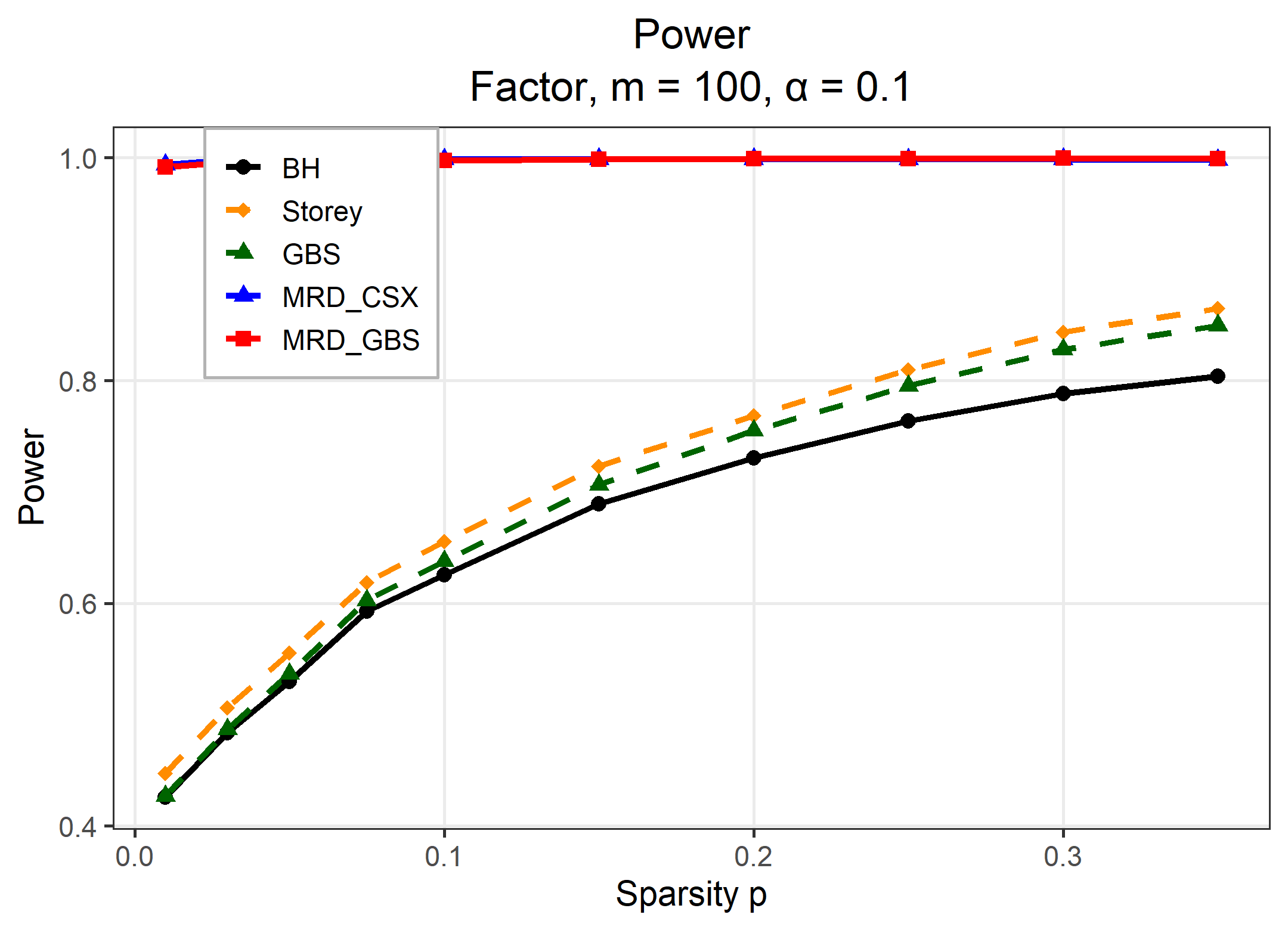}
		\\
		
		(a) Equicorrelation ($\rho=0.7$)
		&
		(b) Factor Model
		\\[0.3cm]
		
		\includegraphics[width=0.47\textwidth]
		{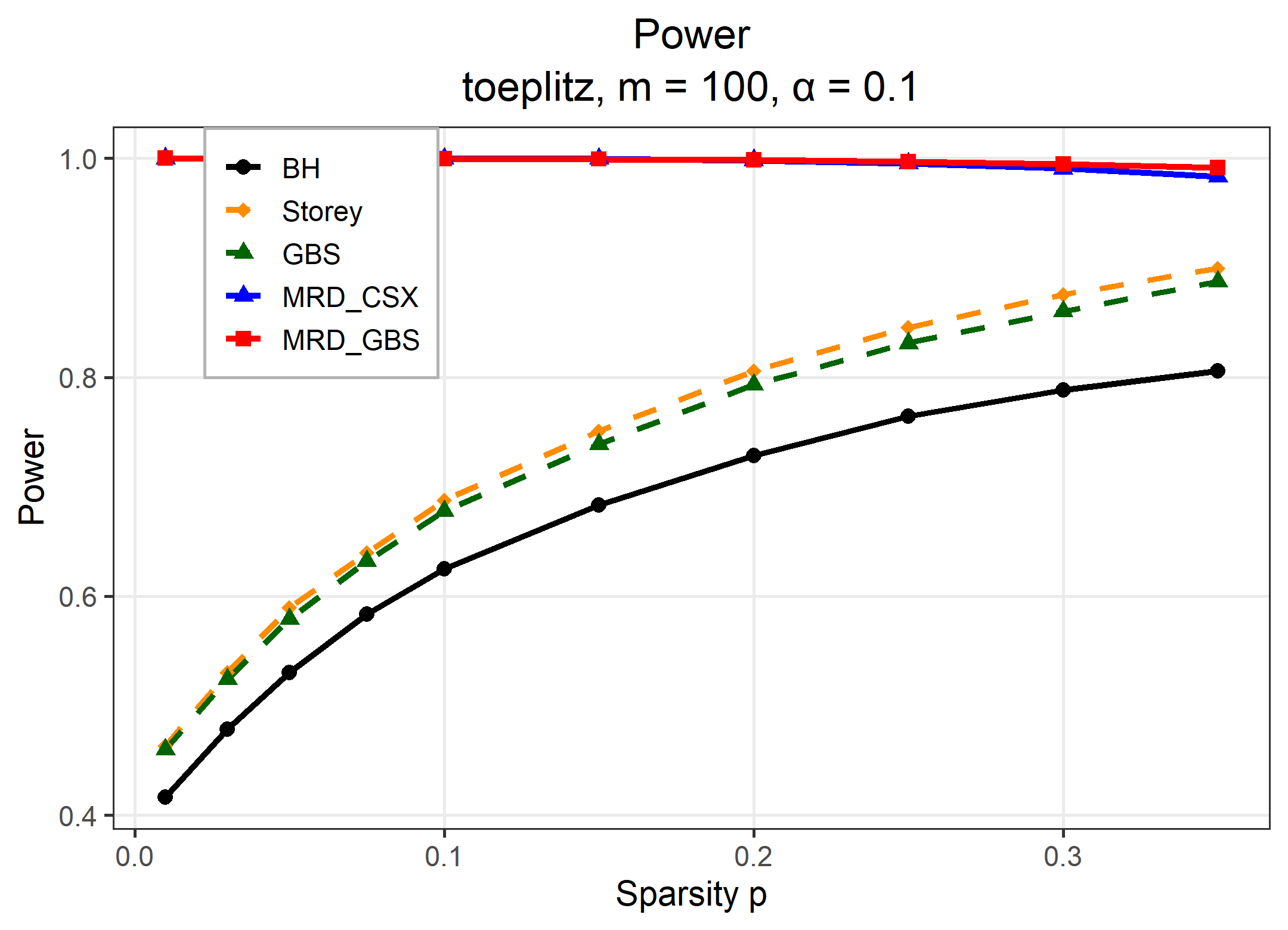}
		&
		\includegraphics[width=0.47\textwidth]
		{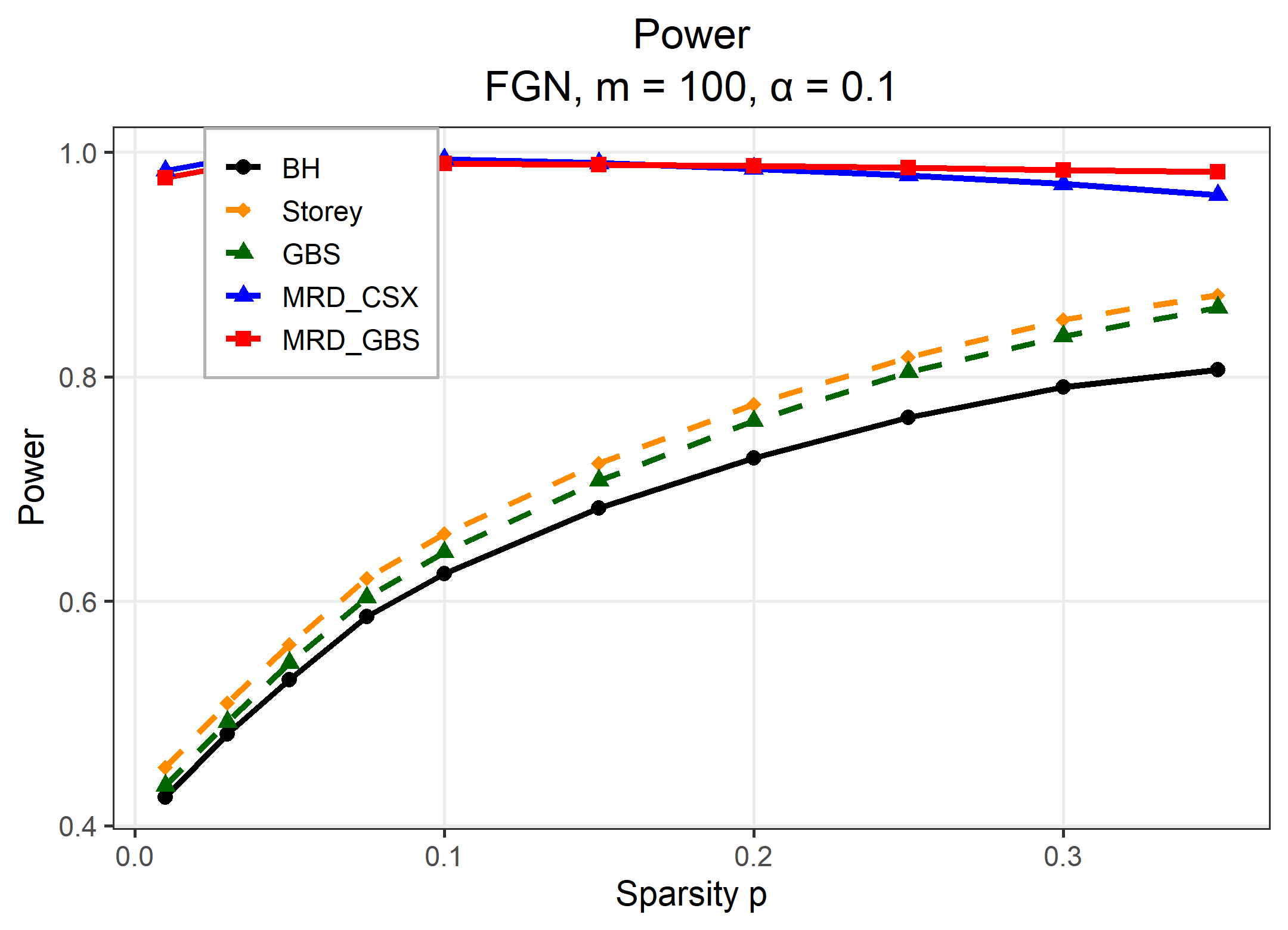}
		\\
		
		(c) Toeplitz ($\rho=0.9$)
		&
		(d) Fractional Gaussian Noise ($H=0.9$)
		\\[0.3cm]
		
		\includegraphics[width=0.47\textwidth]
		{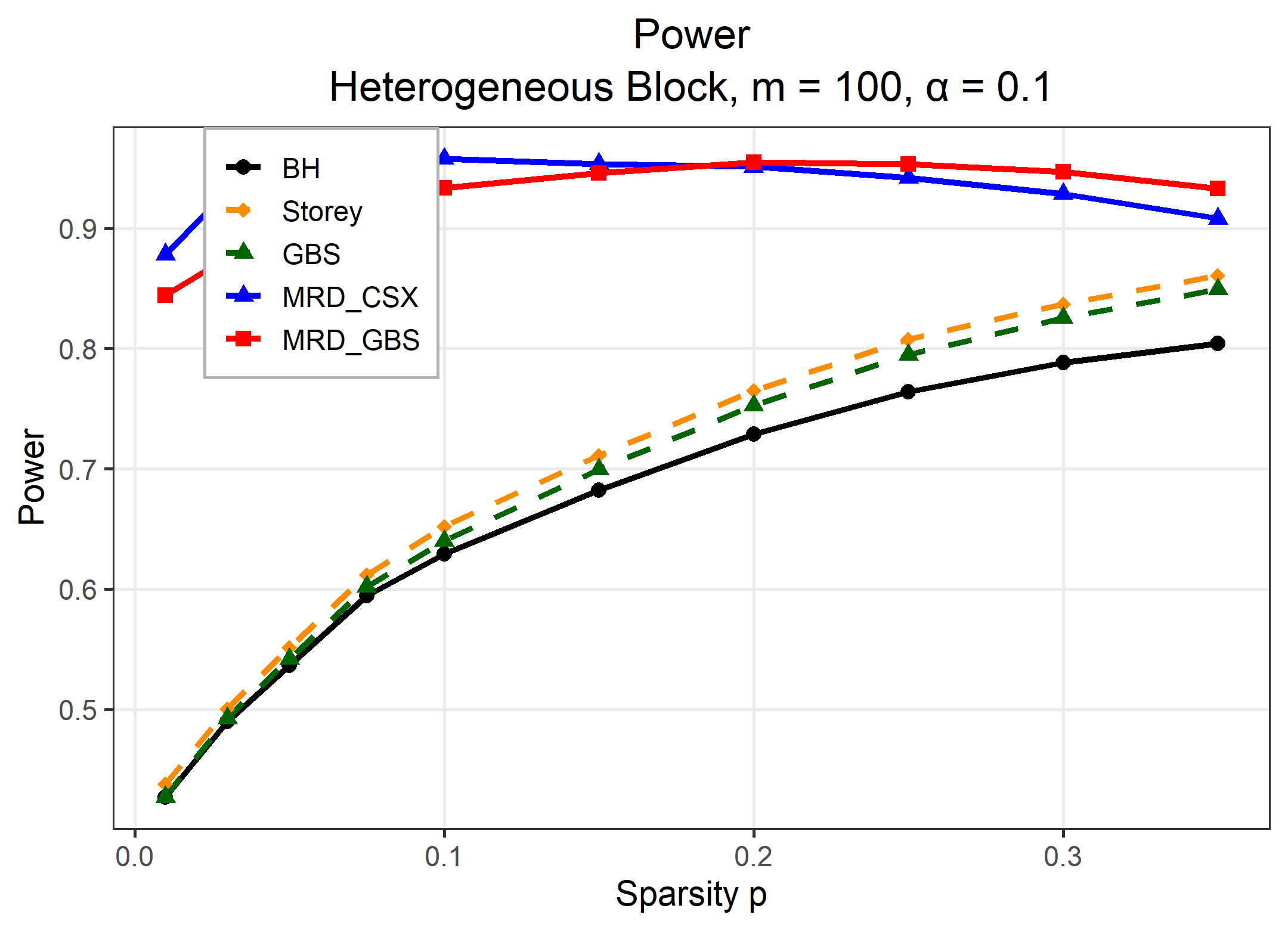}
		&
		\includegraphics[width=0.47\textwidth]
		{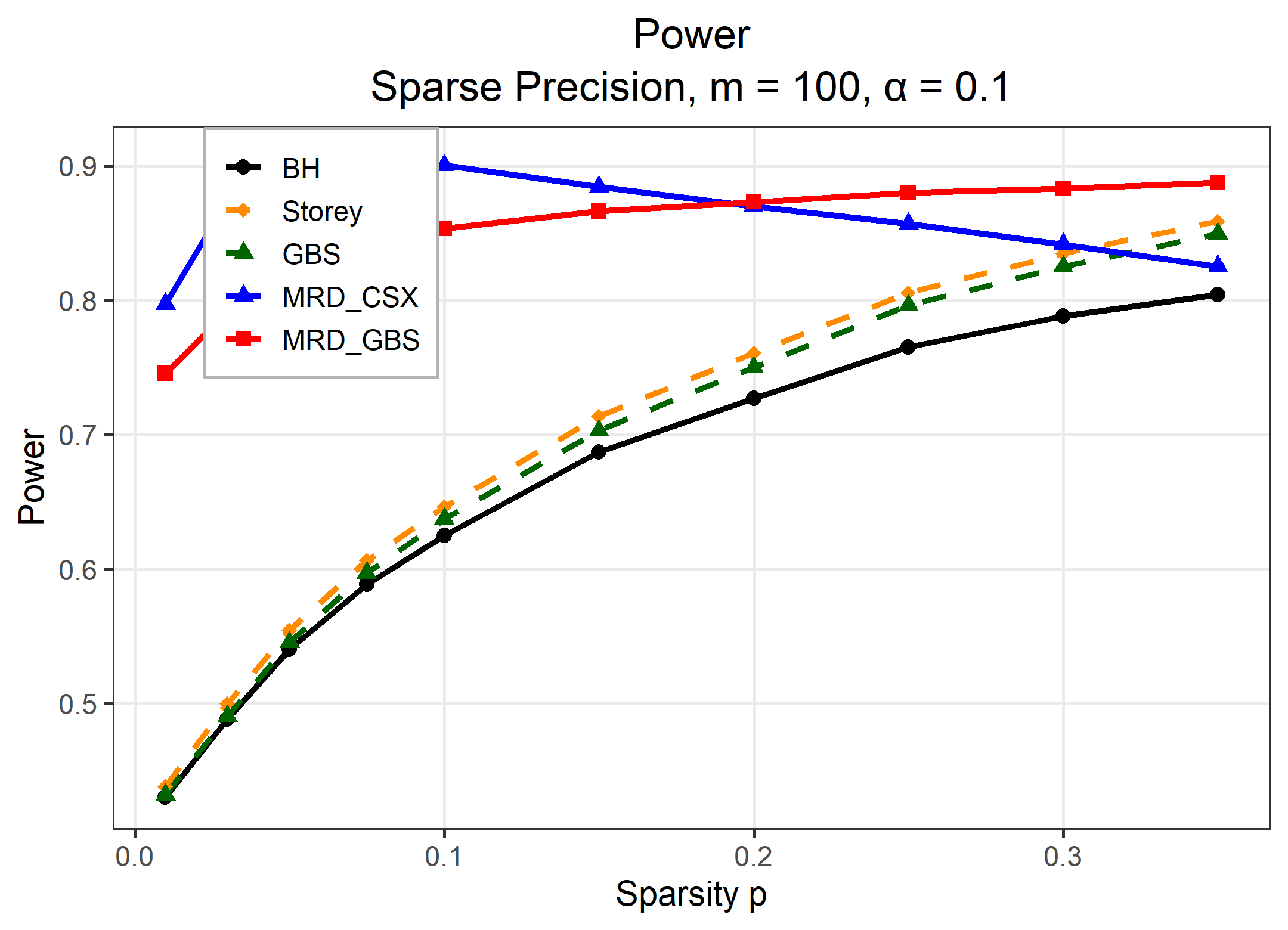}
		\\
		
		(e) Heterogeneous Block
		&
		(f) Sparse Precision Matrix
		
	\end{tabular}
	
\caption{
	Empirical power of the competing multiple testing procedures under six
	representative dependence structures when $m=100$. Larger values indicate
	superior signal-recovery performance. The qualitative behavior is broadly
	consistent with that observed for $m=200$, with the proposed
	GBS-calibrated MRD procedure frequently exhibiting remarkably strong
	power across sparse and moderately sparse regimes. The near-perfect power
	observed under several dependence structures provides further evidence of
	the effectiveness of covariance-adaptive residualization combined with
	stagewise GBS calibration in identifying active signals.
}

	\label{FIG_POWER_m100}
	
\end{figure}

\begin{table}[htbp]
	\centering
	\caption{Power under the Equicorrelation model ($\rho=0.7$). Larger values indicate better performance.}
	\label{TAB_POWER_EQ_m100}
	\begin{tabular}{lccccc}
		\toprule
		$p$ & BH & Storey-BH & GBS & MRD-CSX & MRD-GBS \\
		\midrule
		0.01 & 0.4196 & 0.4847 & 0.4864 & \textbf{0.9915} & 0.9880 \\
		0.03 & 0.4836 & 0.5557 & 0.5548 & \textbf{0.9975} & 0.9949 \\
		0.05 & 0.5267 & 0.6103 & 0.6044 & \textbf{0.9988} & 0.9967 \\
		0.075 & 0.5786 & 0.6598 & 0.6547 & \textbf{0.9993} & 0.9977 \\
		0.10 & 0.6218 & 0.7112 & 0.7038 & \textbf{0.9993} & 0.9980 \\
		0.15 & 0.6826 & 0.7758 & 0.7643 & \textbf{0.9992} & 0.9990 \\
		0.20 & 0.7266 & 0.8319 & 0.8198 & \textbf{0.9993} & \textbf{0.9993} \\
		0.25 & 0.7617 & 0.8713 & 0.8582 & 0.9994 & \textbf{0.9996} \\
		0.30 & 0.7873 & 0.9000 & 0.8853 & 0.9994 & \textbf{0.9998} \\
		0.35 & 0.8035 & 0.9179 & 0.9063 & 0.9994 & \textbf{0.9998} \\
		\bottomrule
	\end{tabular}
\end{table}

\begin{table}[htbp]
	\centering
	\caption{Power under the factor model. Larger values indicate better performance.}
	\label{TAB_POWER_FACTOR_m100}
	\begin{tabular}{lccccc}
		\toprule
		$p$ & BH & Storey-BH & GBS & MRD-CSX & MRD-GBS \\
		\midrule
		0.01 & 0.4259 & 0.4475 & 0.4270 & \textbf{0.9939} & 0.9917 \\
		0.03 & 0.4839 & 0.5060 & 0.4871 & \textbf{0.9978} & 0.9958 \\
		0.05 & 0.5299 & 0.5554 & 0.5367 & \textbf{0.9983} & 0.9955 \\
		0.075 & 0.5931 & 0.6184 & 0.6027 & \textbf{0.9993} & 0.9979 \\
		0.10 & 0.6256 & 0.6557 & 0.6379 & \textbf{0.9984} & 0.9974 \\
		0.15 & 0.6893 & 0.7229 & 0.7064 & \textbf{0.9988} & 0.9986 \\
		0.20 & 0.7303 & 0.7687 & 0.7550 & \textbf{0.9989} & \textbf{0.9989} \\
		0.25 & 0.7636 & 0.8097 & 0.7952 & 0.9987 & \textbf{0.9992} \\
		0.30 & 0.7885 & 0.8435 & 0.8277 & 0.9988 & \textbf{0.9994} \\
		0.35 & 0.8040 & 0.8648 & 0.8495 & 0.9979 & \textbf{0.9990} \\
		\bottomrule
	\end{tabular}
\end{table}

\begin{table}[htbp]
	\centering
	\caption{Power under the fractional Gaussian noise model ($H=0.9$). Larger values indicate better performance.}
	\label{TAB_POWER_FGN_m100}
	\begin{tabular}{lccccc}
		\toprule
		$p$ & BH & Storey-BH & GBS & MRD-CSX & MRD-GBS \\
		\midrule
		0.01 & 0.4254 & 0.4519 & 0.4358 & \textbf{0.9834} & 0.9773 \\
		0.03 & 0.4815 & 0.5092 & 0.4926 & \textbf{0.9938} & 0.9874 \\
		0.05 & 0.5302 & 0.5614 & 0.5451 & \textbf{0.9945} & 0.9888 \\
		0.075 & 0.5866 & 0.6201 & 0.6032 & \textbf{0.9938} & 0.9876 \\
		0.10 & 0.6244 & 0.6601 & 0.6435 & \textbf{0.9935} & 0.9901 \\
		0.15 & 0.6829 & 0.7230 & 0.7076 & \textbf{0.9902} & 0.9889 \\
		0.20 & 0.7278 & 0.7753 & 0.7605 & 0.9854 & \textbf{0.9880} \\
		0.25 & 0.7636 & 0.8175 & 0.8039 & 0.9794 & \textbf{0.9860} \\
		0.30 & 0.7906 & 0.8508 & 0.8362 & 0.9717 & \textbf{0.9843} \\
		0.35 & 0.8064 & 0.8724 & 0.8616 & 0.9619 & \textbf{0.9825} \\
		\bottomrule
	\end{tabular}
\end{table}

\begin{table}[htbp]
	\centering
	\caption{Power under the Toeplitz model ($\rho=0.9$). Larger values indicate better performance.}
	\label{TAB_POWER_TOE_m100}
	\begin{tabular}{lccccc}
		\toprule
		$p$ & BH & Storey-BH & GBS & MRD-CSX & MRD-GBS \\
		\midrule
		0.01 & 0.4164 & 0.4631 & 0.4599 & \textbf{1.0000} & \textbf{1.0000} \\
		0.03 & 0.4786 & 0.5304 & 0.5241 & \textbf{0.9999} & \textbf{0.9999} \\
		0.05 & 0.5303 & 0.5895 & 0.5791 & \textbf{1.0000} & 0.9997 \\
		0.075 & 0.5838 & 0.6400 & 0.6322 & \textbf{0.9998} & 0.9997 \\
		0.10 & 0.6252 & 0.6879 & 0.6778 & \textbf{0.9996} & \textbf{0.9996} \\
		0.15 & 0.6831 & 0.7510 & 0.7389 & \textbf{0.9995} & 0.9994 \\
		0.20 & 0.7283 & 0.8059 & 0.7934 & 0.9983 & \textbf{0.9988} \\
		0.25 & 0.7643 & 0.8458 & 0.8312 & 0.9954 & \textbf{0.9971} \\
		0.30 & 0.7883 & 0.8754 & 0.8601 & 0.9907 & \textbf{0.9950} \\
		0.35 & 0.8059 & 0.8995 & 0.8876 & 0.9832 & \textbf{0.9916} \\
		\bottomrule
	\end{tabular}
\end{table}

\begin{table}[htbp]
	\centering
	\caption{Power under the heterogeneous block covariance model. Larger values indicate better performance.}
	\label{TAB_POWER_BLOCK_m100}
	\begin{tabular}{lccccc}
		\toprule
		$p$ & BH & Storey-BH & GBS & MRD-CSX & MRD-GBS \\
		\midrule
		0.01 & 0.4269 & 0.4381 & 0.4272 & \textbf{0.8783} & 0.8442 \\
		0.03 & 0.4897 & 0.5005 & 0.4927 & \textbf{0.9311} & 0.8757 \\
		0.05 & 0.5364 & 0.5515 & 0.5420 & \textbf{0.9551} & 0.9080 \\
		0.075 & 0.5946 & 0.6121 & 0.6022 & \textbf{0.9582} & 0.9234 \\
		0.10 & 0.6293 & 0.6519 & 0.6401 & \textbf{0.9579} & 0.9338 \\
		0.15 & 0.6823 & 0.7111 & 0.6997 & \textbf{0.9536} & 0.9462 \\
		0.20 & 0.7288 & 0.7650 & 0.7528 & 0.9512 & \textbf{0.9547} \\
		0.25 & 0.7639 & 0.8075 & 0.7947 & 0.9418 & \textbf{0.9535} \\
		0.30 & 0.7881 & 0.8367 & 0.8254 & 0.9285 & \textbf{0.9468} \\
		0.35 & 0.8040 & 0.8609 & 0.8492 & 0.9081 & \textbf{0.9330} \\
		\bottomrule
	\end{tabular}
\end{table}

\begin{table}[htbp]
	\centering
	\caption{Power under the sparse precision-matrix model. Larger values indicate better performance.}
	\label{TAB_POWER_SPARSE_m100}
	\begin{tabular}{lccccc}
		\toprule
		$p$ & BH & Storey-BH & GBS & MRD-CSX & MRD-GBS \\
		\midrule
		0.01 & 0.4301 & 0.4383 & 0.4320 & \textbf{0.7972} & 0.7456 \\
		0.03 & 0.4881 & 0.4997 & 0.4906 & \textbf{0.8743} & 0.7915 \\
		0.05 & 0.5404 & 0.5543 & 0.5455 & \textbf{0.9016} & 0.8183 \\
		0.075 & 0.5883 & 0.6064 & 0.5971 & \textbf{0.9055} & 0.8414 \\
		0.10 & 0.6249 & 0.6463 & 0.6373 & \textbf{0.9002} & 0.8533 \\
		0.15 & 0.6869 & 0.7134 & 0.7030 & \textbf{0.8845} & 0.8663 \\
		0.20 & 0.7269 & 0.7605 & 0.7501 & 0.8697 & \textbf{0.8731} \\
		0.25 & 0.7651 & 0.8053 & 0.7960 & 0.8568 & \textbf{0.8801} \\
		0.30 & 0.7880 & 0.8348 & 0.8250 & 0.8416 & \textbf{0.8832} \\
		0.35 & 0.8043 & 0.8588 & 0.8493 & 0.8251 & \textbf{0.8875} \\
		\bottomrule
	\end{tabular}
\end{table}

\clearpage

\subsection*{Average number of rejections ($m=100$)}

Figure~\ref{FIG_ANR_m100} summarizes the average numbers of rejections
produced by the competing procedures under the six dependence
structures considered in this study. Consistent with the corresponding
$m=200$ experiments, the proposed GBS-calibrated MRD procedure frequently
produces average numbers of rejections that remain close to the expected
numbers of true signals across a broad range of sparsity levels. When
viewed together with the corresponding FDR, FNR, and power results,
this behavior provides additional evidence that the procedure is often
able to identify the underlying support of the signal vector with a
high degree of accuracy.

\begin{figure}[p]
	\centering
	
	\begin{tabular}{cc}
		
		\includegraphics[width=0.47\textwidth]
		{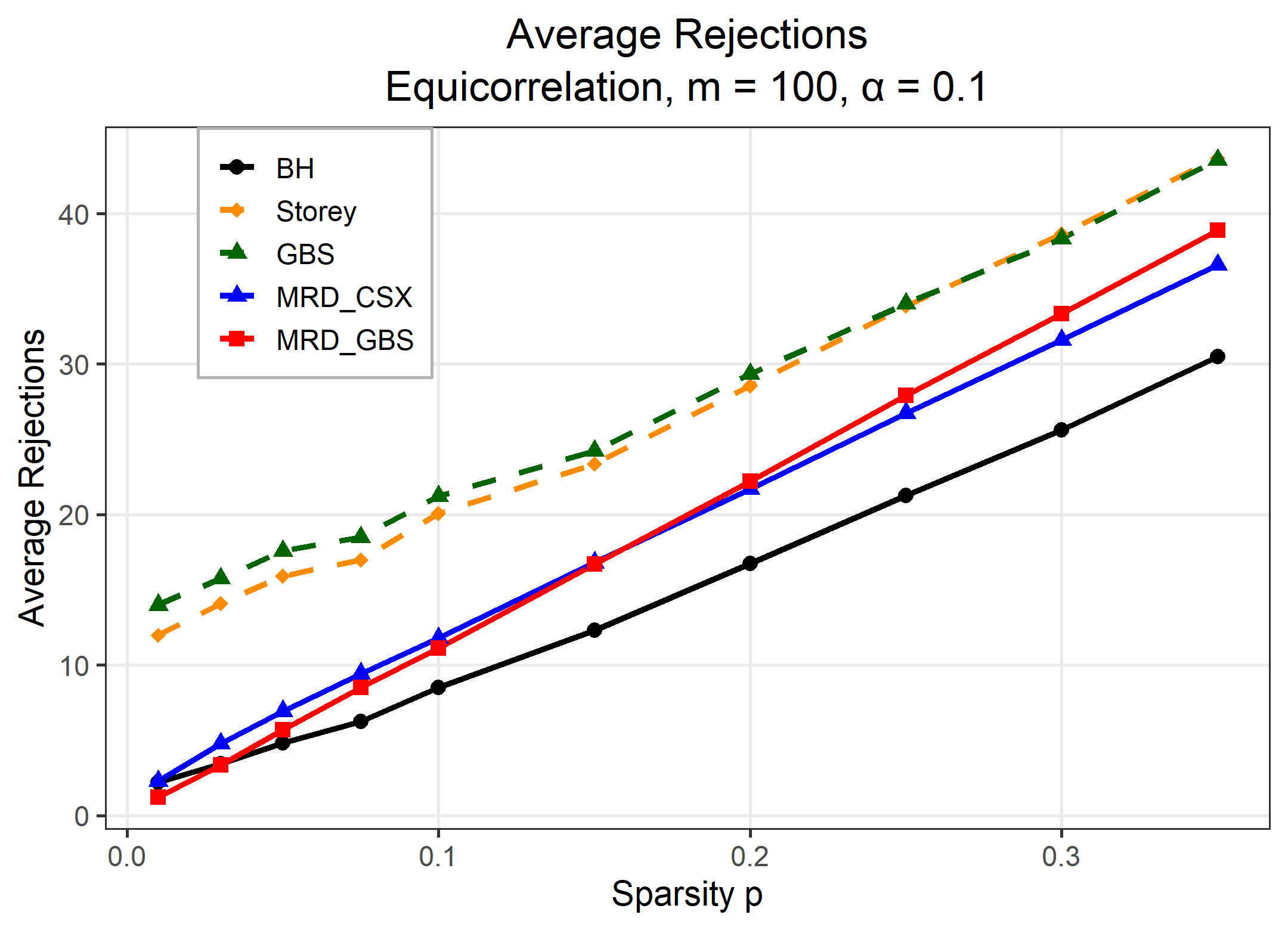}
		&
		\includegraphics[width=0.47\textwidth]
		{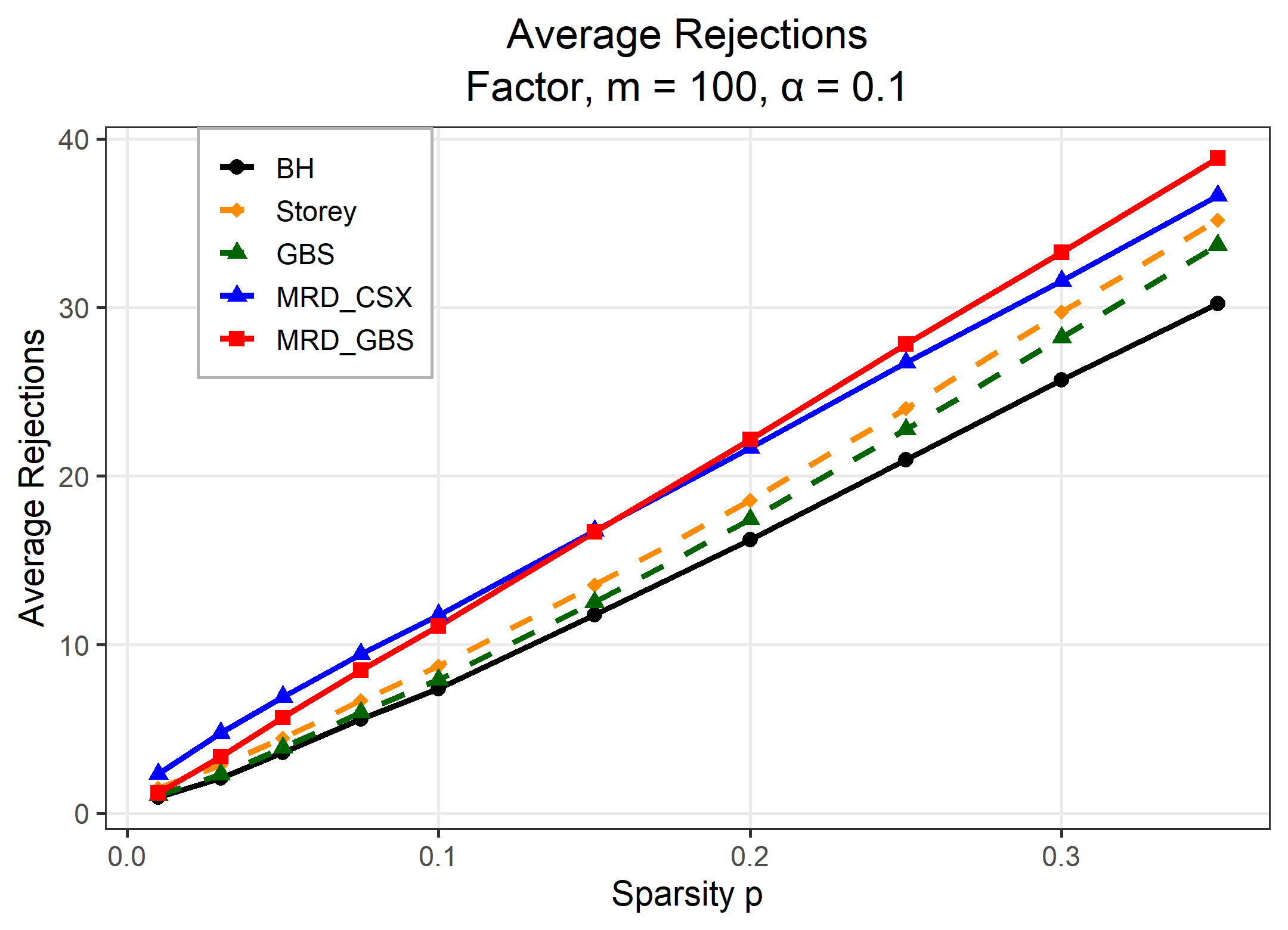}
		\\
		
		(a) Equicorrelation ($\rho=0.7$)
		&
		(b) Factor Model
		\\[0.3cm]
		
		\includegraphics[width=0.47\textwidth]
		{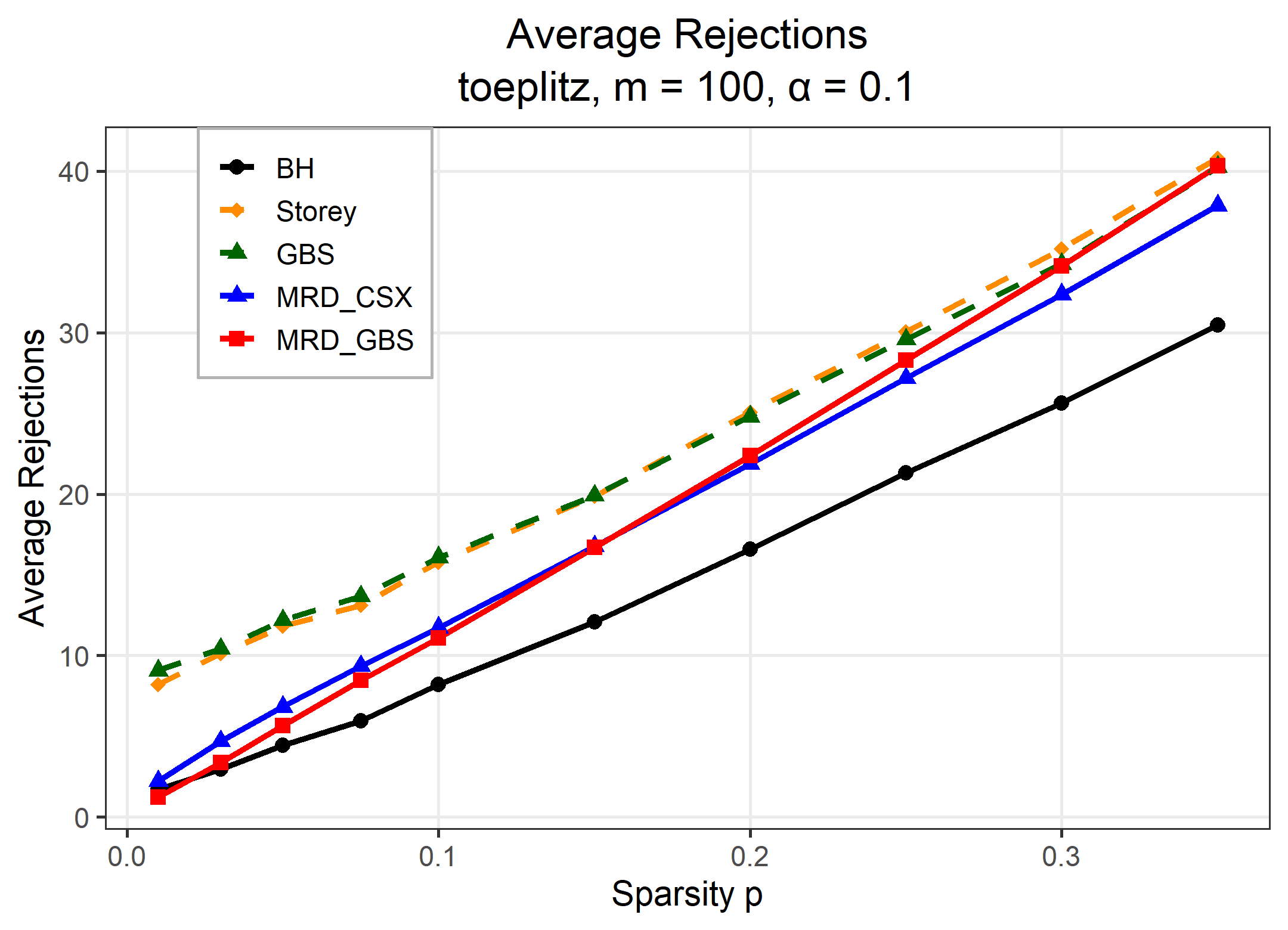}
		&
		\includegraphics[width=0.47\textwidth]
		{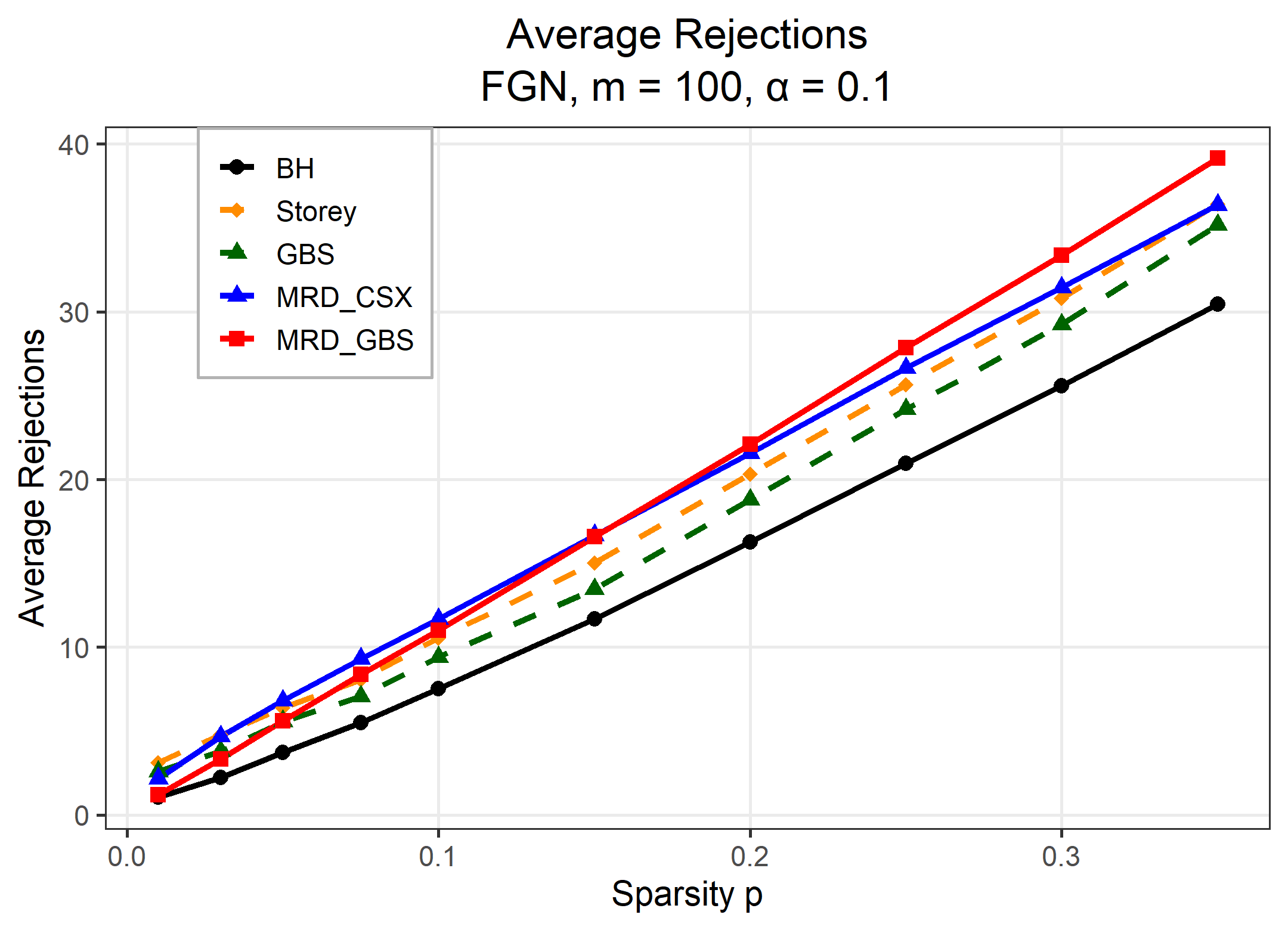}
		\\
		
		(c) Toeplitz ($\rho=0.9$)
		&
		(d) Fractional Gaussian Noise ($H=0.9$)
		\\[0.3cm]
		
		\includegraphics[width=0.47\textwidth]
		{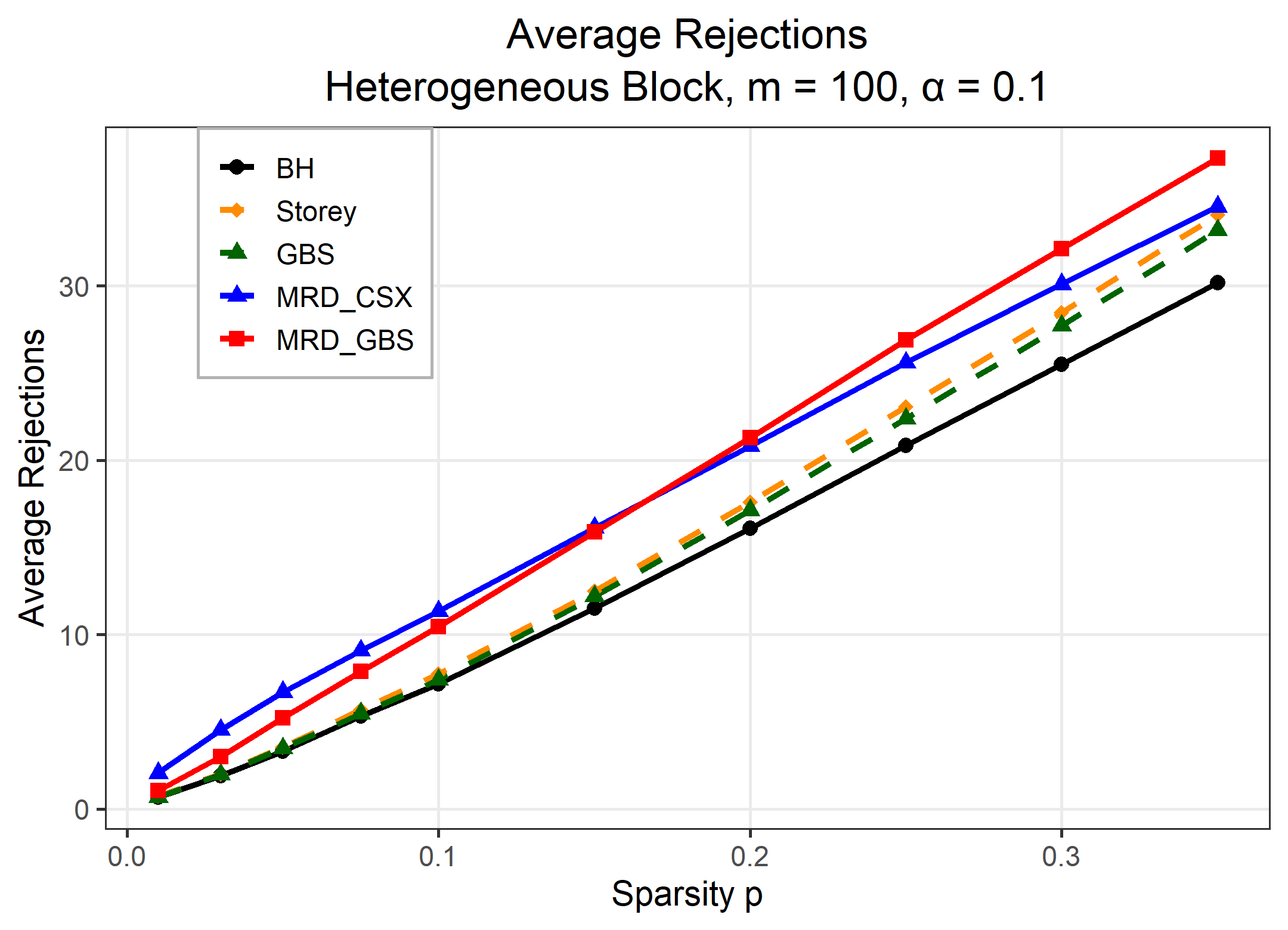}
		&
		\includegraphics[width=0.47\textwidth]
		{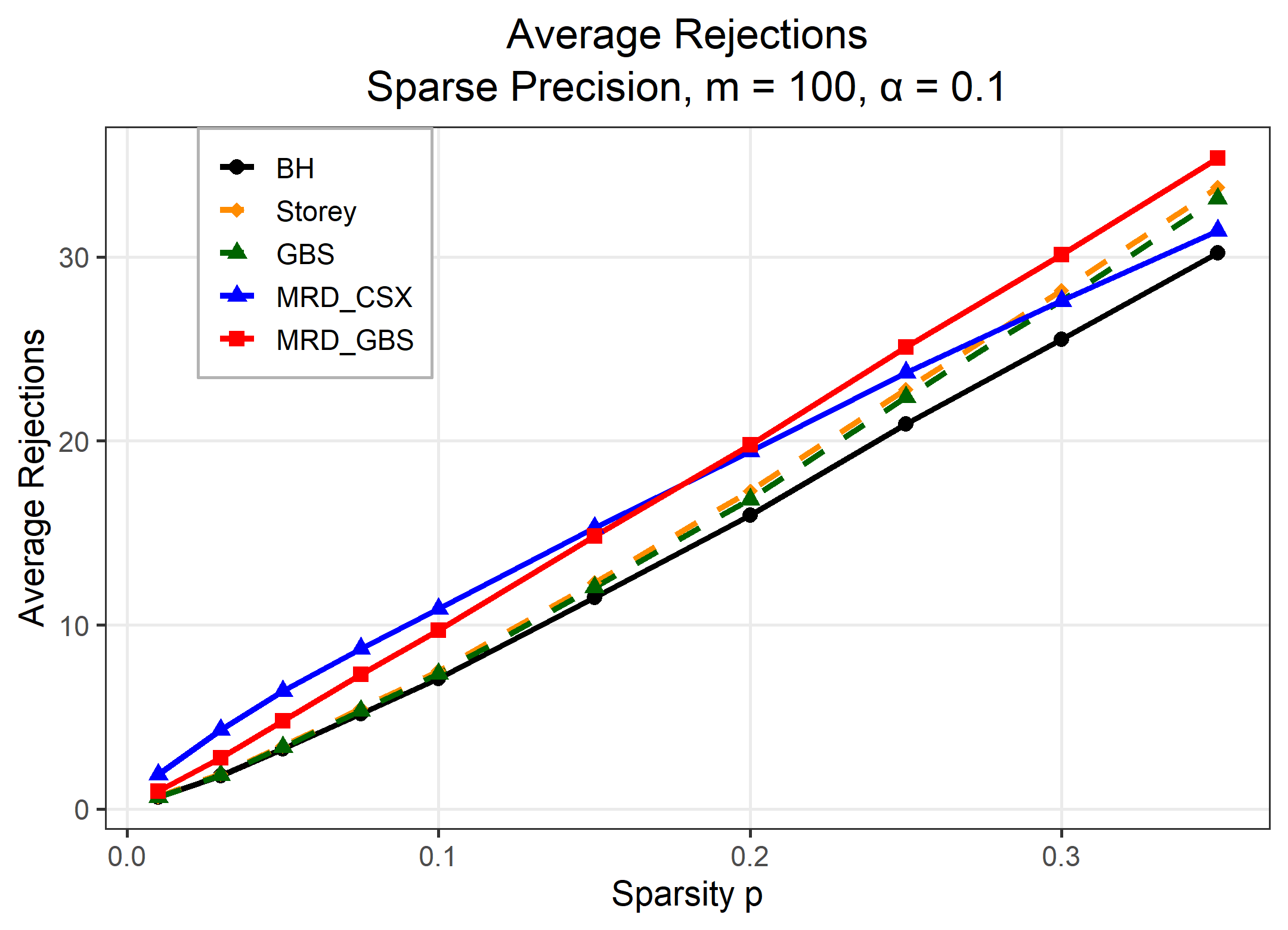}
		\\
		
		(e) Heterogeneous Block
		&
		(f) Sparse Precision Matrix
		
	\end{tabular}
	
\caption{
	Average numbers of rejections (ANR) produced by the competing multiple
	testing procedures under six representative dependence structures when
	$m=100$. This metric provides additional insight into the aggressiveness
	of the competing procedures and helps explain the observed trade-offs
	among false discoveries, missed discoveries, and overall classification
	accuracy. Consistent with the larger-dimensional experiments reported in
	the main text, the proposed GBS-calibrated MRD procedure frequently
	produces average numbers of rejections that closely track the expected
	numbers of true signals, providing further empirical evidence of its
	strong signal-recovery behavior.
}

	\label{FIG_ANR_m100}
	
\end{figure}

\begin{table}[htbp]
	\centering
	\caption{Average numbers of rejections under the Equicorrelation model ($\rho=0.7$). Larger values indicate more aggressive rejection behavior. Since neither excessively large nor excessively small rejection counts are universally preferable, no entries are highlighted in boldface.}
	\label{TAB_REJ_EQ_m100}
	\begin{tabular}{lccccc}
		\toprule
		$p$ & BH & Storey-BH & GBS & MRD-CSX & MRD-GBS \\
		\midrule
		0.01 & 2.19 & 11.97 & 13.99 & 2.28 & 1.21 \\
		0.03 & 3.45 & 14.09 & 15.75 & 4.77 & 3.36 \\
		0.05 & 4.81 & 15.93 & 17.57 & 6.93 & 5.68 \\
		0.075 & 6.24 & 16.98 & 18.48 & 9.43 & 8.49 \\
		0.10 & 8.51 & 20.08 & 21.22 & 11.78 & 11.10 \\
		0.15 & 12.29 & 23.34 & 24.22 & 16.81 & 16.71 \\
		0.20 & 16.76 & 28.57 & 29.35 & 21.68 & 22.23 \\
		0.25 & 21.28 & 33.88 & 34.03 & 26.75 & 27.91 \\
		0.30 & 25.61 & 38.64 & 38.34 & 31.59 & 33.34 \\
		0.35 & 30.50 & 43.67 & 43.58 & 36.59 & 38.90 \\
		\bottomrule
	\end{tabular}
\end{table}

\begin{table}[htbp]
	\centering
	\caption{Average numbers of rejections under the factor model. Larger values indicate more aggressive rejection behavior. Since neither excessively large nor excessively small rejection counts are universally preferable, no entries are highlighted in boldface.}
	\label{TAB_REJ_FACTOR_m100}
	\begin{tabular}{lccccc}
		\toprule
		$p$ & BH & Storey-BH & GBS & MRD-CSX & MRD-GBS \\
		\midrule
		0.01 & 0.95 & 1.48 & 1.06 & 2.32 & 1.21 \\
		0.03 & 2.09 & 2.86 & 2.31 & 4.75 & 3.34 \\
		0.05 & 3.60 & 4.47 & 3.88 & 6.91 & 5.67 \\
		0.075 & 5.57 & 6.68 & 5.97 & 9.43 & 8.47 \\
		0.10 & 7.40 & 8.72 & 7.91 & 11.75 & 11.08 \\
		0.15 & 11.78 & 13.55 & 12.52 & 16.77 & 16.69 \\
		0.20 & 16.22 & 18.55 & 17.43 & 21.66 & 22.17 \\
		0.25 & 20.98 & 24.00 & 22.77 & 26.72 & 27.84 \\
		0.30 & 25.71 & 29.71 & 28.20 & 31.56 & 33.25 \\
		0.35 & 30.24 & 35.16 & 33.68 & 36.63 & 38.84 \\
		\bottomrule
	\end{tabular}
\end{table}

\begin{table}[htbp]
	\centering
	\caption{Average numbers of rejections under the fractional Gaussian noise model ($H=0.9$). Larger values indicate more aggressive rejection behavior. Since neither excessively large nor excessively small rejection counts are universally preferable, no entries are highlighted in boldface.}
	\label{TAB_REJ_FGN_m100}
	\begin{tabular}{lccccc}
		\toprule
		$p$ & BH & Storey-BH & GBS & MRD-CSX & MRD-GBS \\
		\midrule
		0.01 & 1.06 & 3.12 & 2.60 & 2.18 & 1.20 \\
		0.03 & 2.24 & 4.79 & 3.81 & 4.67 & 3.32 \\
		0.05 & 3.74 & 6.41 & 5.59 & 6.81 & 5.61 \\
		0.075 & 5.50 & 8.07 & 7.07 & 9.32 & 8.38 \\
		0.10 & 7.54 & 10.55 & 9.41 & 11.68 & 11.02 \\
		0.15 & 11.70 & 15.03 & 13.47 & 16.68 & 16.58 \\
		0.20 & 16.27 & 20.34 & 18.80 & 21.58 & 22.11 \\
		0.25 & 20.96 & 25.64 & 24.20 & 26.65 & 27.85 \\
		0.30 & 25.58 & 30.80 & 29.23 & 31.44 & 33.36 \\
		0.35 & 30.43 & 36.43 & 35.18 & 36.37 & 39.14 \\
		\bottomrule
	\end{tabular}
\end{table}

\begin{table}[htbp]
	\centering
	\caption{Average numbers of rejections under the Toeplitz model ($\rho=0.9$). Larger values indicate more aggressive rejection behavior. Since neither excessively large nor excessively small rejection counts are universally preferable, no entries are highlighted in boldface.}
	\label{TAB_REJ_TOE_m100}
	\begin{tabular}{lccccc}
		\toprule
		$p$ & BH & Storey-BH & GBS & MRD-CSX & MRD-GBS \\
		\midrule
		0.01 & 1.71 & 8.20 & 9.07 & 2.22 & 1.22 \\
		0.03 & 2.97 & 10.14 & 10.41 & 4.70 & 3.36 \\
		0.05 & 4.45 & 11.82 & 12.20 & 6.82 & 5.66 \\
		0.075 & 5.94 & 13.10 & 13.68 & 9.35 & 8.47 \\
		0.10 & 8.20 & 15.76 & 16.08 & 11.72 & 11.10 \\
		0.15 & 12.08 & 19.86 & 19.90 & 16.78 & 16.71 \\
		0.20 & 16.59 & 25.09 & 24.80 & 21.86 & 22.37 \\
		0.25 & 21.32 & 30.08 & 29.57 & 27.17 & 28.28 \\
		0.30 & 25.62 & 35.19 & 34.25 & 32.37 & 34.14 \\
		0.35 & 30.49 & 40.79 & 40.27 & 37.87 & 40.34 \\
		\bottomrule
	\end{tabular}
\end{table}

\begin{table}[htbp]
	\centering
	\caption{Average numbers of rejections under the heterogeneous block covariance model. Larger values indicate more aggressive rejection behavior. Since neither excessively large nor excessively small rejection counts are universally preferable, no entries are highlighted in boldface.}
	\label{TAB_REJ_BLOCK_m100}
	\begin{tabular}{lccccc}
		\toprule
		$p$ & BH & Storey-BH & GBS & MRD-CSX & MRD-GBS \\
		\midrule
		0.01 & 0.68 & 0.76 & 0.69 & 2.07 & 1.07 \\
		0.03 & 1.90 & 2.03 & 1.97 & 4.54 & 3.02 \\
		0.05 & 3.30 & 3.56 & 3.48 & 6.71 & 5.22 \\
		0.075 & 5.33 & 5.71 & 5.48 & 9.09 & 7.91 \\
		0.10 & 7.16 & 7.75 & 7.41 & 11.34 & 10.44 \\
		0.15 & 11.53 & 12.51 & 12.21 & 16.13 & 15.90 \\
		0.20 & 16.09 & 17.61 & 17.14 & 20.82 & 21.30 \\
		0.25 & 20.85 & 23.07 & 22.38 & 25.60 & 26.89 \\
		0.30 & 25.51 & 28.46 & 27.71 & 30.09 & 32.12 \\
		0.35 & 30.18 & 34.09 & 33.18 & 34.54 & 37.31 \\
		\bottomrule
	\end{tabular}
\end{table}

\begin{table}[htbp]
	\centering
	\caption{Average numbers of rejections under the sparse precision-matrix model. Larger values indicate more aggressive rejection behavior. Since neither excessively large nor excessively small rejection counts are universally preferable, no entries are highlighted in boldface.}
	\label{TAB_REJ_SPARSE_m100}
	\begin{tabular}{lccccc}
		\toprule
		$p$ & BH & Storey-BH & GBS & MRD-CSX & MRD-GBS \\
		\midrule
		0.01 & 0.64 & 0.66 & 0.64 & 1.87 & 0.96 \\
		0.03 & 1.81 & 1.90 & 1.83 & 4.31 & 2.77 \\
		0.05 & 3.26 & 3.42 & 3.36 & 6.41 & 4.80 \\
		0.075 & 5.19 & 5.47 & 5.33 & 8.72 & 7.32 \\
		0.10 & 7.09 & 7.52 & 7.35 & 10.88 & 9.70 \\
		0.15 & 11.51 & 12.30 & 12.04 & 15.27 & 14.82 \\
		0.20 & 15.97 & 17.27 & 16.82 & 19.44 & 19.80 \\
		0.25 & 20.91 & 22.81 & 22.37 & 23.71 & 25.10 \\
		0.30 & 25.52 & 28.17 & 27.61 & 27.61 & 30.11 \\
		0.35 & 30.21 & 33.79 & 33.17 & 31.43 & 35.35 \\
		\bottomrule
	\end{tabular}
\end{table}

\clearpage

%\begin{enumerate}
%\item MRD is decision-theoretically attractive but difficult to calibrate.
%\item GBS provides a systematic calibration.
%\item Admissibility is preserved.
%\item Computation is dramatically simplified.
%\item Covariance geometry becomes visible through the precision-matrix representation.
%\item Relative to BH/Storey/GBS, substantial gains in NMR are observed.
%\item Strong signal-recovery behavior emerges empirically.
%\item Dependence may be a source of information, not merely a nuisance.
%
%\end{enumerate}

%\printbibliography
%\bibliographystyle{abbrvnat}

%\bibliographystyle{apalike}
\bibliography{MRD_GBS_References.bib}

\end{document}